
\documentclass{article} 
\usepackage{algorithmic}
\usepackage{algorithm}
\usepackage{times} 
\usepackage{amsfonts}
\usepackage{amsmath} 
\usepackage{latexsym} 
\usepackage{epsfig}
\usepackage{latexsym} 
\usepackage{verbatim}
\begin{document}
 
\newtheorem{theorem}{Theorem}
\newtheorem{corollary}[theorem]{Corollary}
\newtheorem{prop}[theorem]{Proposition} 
\newtheorem{problem}[theorem]{Problem}
\newtheorem{lemma}[theorem]{Lemma} 
\newtheorem{remark}[theorem]{Remark}
\newtheorem{observation}[theorem]{Observation}
\newtheorem{defin}{Definition} 
\newtheorem{example}{Example}
\newtheorem{conj}{Conjecture} 
\newenvironment{proof}{{\bf Proof:}}{\hfill$\Box$\linebreak\vskip2mm} 
\newcommand{\PR}{\noindent {\bf Proof:\ }} 
\def\EPR{\hfill $\Box$\linebreak\vskip3mm} 
 \def\EEX{\hfill $\diamond$}

\def\Pol{{\sf Pol}} 
\def\mPol{{\sf MPol}} 
\def\Polo{{\sf Pol}_1} 
\def\PPol{{\sf pPol\;}} 
\def\Inv{{\sf Inv}}
\def\mInv{{\sf MInv}} 
\def\Clo{{\sf Clo}\;} 
\def\Con{{\sf Con}} 
\def\concom{{\sf Concom}\;} 
\def\End{{\sf End}\;}
\def\Sub{{\sf Sub}\;} 
\def\Im{{\sf Im}} 
\def\Ker{{\sf Ker}\;} 
\def\H{{\sf H}}
\def\S{{\sf S}} 
\def\D{{\sf P}} 
\def\I{{\sf I}} 
\def\Var{{\sf var}} 
\def\PVar{{\sf pvar}} 
\def\fin#1{{#1}_{\rm fin}}
\def\P{{\sf P}} 
\def\Pfin{{\sf P_{\rm fin}}} 
\def\Id{{\sf Id}}
\def\R{{\rm R}} 
\def\F{{\rm F}} 
\def\Term{{\sf Term}}
\def\var#1{{\sf var}(#1)} 
\def\Sg#1{{\sf Sg}(#1)} 
\def\Sgg#1#2{{\sf Sg}_{#1}(#2)} 
\def\Cg#1{{\sf Cg}(#1)}
\def\Cgg#1#2{{\sf Cg}_{#1}(#2)}
\def\Cen{{\sf Cen}}
\def\tol{{\sf tol}} 
\def\lnk{{\sf lk}} 
\def\rbcomp#1{{\sf rbcomp}(#1)}
  
\let\cd=\cdot 
\let\eq=\equiv 
\let\op=\oplus 
\let\omn=\ominus
\let\meet=\wedge 
\let\join=\vee 
\let\tm=\times
\def\ldiv{\mathbin{\backslash}} 
\def\rdiv{\mathbin/}
  
\def\typ{{\sf typ}} 
\def\zz{{\un 0}} 
\def\zo{{\un 1}}
\def\one{{\bf1}} 
\def\two{{\bf2}} 
\def\three{{\bf3}}
\def\four{{\bf4}} 
\def\five{{\bf5}}
\def\pq#1{(\zz_{#1},\mu_{#1})}
  
\let\wh=\widehat 
\def\ox{\ov x} 
\def\oy{\ov y} 
\def\oz{\ov z}
\def\of{\ov f} 
\def\oa{\ov a} 
\def\ob{\ov b} 
\def\oc{\ov c}
\def\od{\ov d} 
\def\oob{\ov{\ov b}} 
\def\rx{{\rm x}}
\def\rf{{\rm f}} 
\def\rrm{{\rm m}} 
\let\un=\underline
\let\ov=\overline 
\let\cc=\circ 
\let\rb=\diamond 
\def\ta{{\tilde a}} 
\def\tz{{\tilde z}}
\let\td=\tilde
\let\dg=\dagger
\let\ddg=\ddagger
  
  
\def\zZ{{\mathbb Z}} 
\def\B{{\mathcal B}} 
\def\P{{\mathcal P}}
\def\zL{{\mathbb L}} 
\def\zD{{\mathbb D}}
 \def\zE{{\mathbb E}}
\def\zG{{\mathbb G}} 
\def\zA{{\mathbb A}} 
\def\zB{{\mathbb B}}
\def\zC{{\mathbb C}} 
\def\zM{{\mathbb M}} 
\def\zR{{\mathbb R}}
\def\zS{{\mathbb S}} 
\def\zT{{\mathbb T}} 
\def\zN{{\mathbb N}}
\def\zQ{{\mathbb Q}} 
\def\zW{{\mathbb W}} 
\def\bK{{\bf K}}
\def\C{{\bf C}} 
\def\M{{\bf M}} 
\def\E{{\bf E}} 
\def\N{{\bf N}}
\def\O{{\bf O}} 
\def\bN{{\bf N}} 
\def\bX{{\bf X}} 
\def\GF{{\rm GF}} 
\def\cC{{\mathcal C}} 
\def\cA{{\mathcal A}}
\def\cB{{\mathcal B}} 
\def\cD{{\mathcal D}} 
\def\cE{{\mathcal E}} 
\def\cF{{\mathcal F}} 
\def\cG{{\mathcal G}} 
\def\cH{{\mathcal H}}
\def\cI{{\mathcal I}} 
\def\cL{{\mathcal L}} 
\def\cP{{\mathcal P}} 
\def\cQ{{\mathcal Q}} 
\def\cR{{\mathcal R}} 
\def\cRY{{\mathcal RY}}
\def\cS{{\mathcal S}} 
\def\cT{{\mathcal T}} 
\def\cU{{\mathcal U}} 
\def\cV{{\mathcal V}} 
\def\cW{{\mathcal W}} 
\def\cZ{{\mathcal Z}} 
\def\oB{{\ov B}}
\def\oC{{\ov C}} 
\def\ooB{{\ov{\ov B}}} 
\def\ozB{{\ov{\zB}}}
\def\ozD{{\ov{\zD}}} 
\def\ozG{{\ov{\zG}}}
\def\tcA{{\widetilde\cA}} 
\def\tcC{{\widetilde\cC}}
\def\tcF{{\widetilde\cF}} 
\def\tcI{{\widetilde\cI}}
\def\tB{{\widetilde B}} 
\def\tC{{\widetilde C}}
\def\tD{{\widetilde D}} 
\def\ttB{{\widetilde{\widetilde B}}}
\def\ttC{{\widetilde{\widetilde C}}}
\def\tba{{\tilde\ba}} 
\def\ttba{{\tilde{\tilde\ba}}}
\def\tbb{{\tilde\bb}} 
\def\ttbb{{\tilde{\tilde\bb}}}
\def\tbc{{\tilde\bc}} 
\def\tbd{{\tilde\bd}}
\def\tbe{{\tilde\be}} 
\def\tbt{{\tilde\bt}}
\def\tbu{{\tilde\bu}} 
\def\tbv{{\tilde\bv}}
\def\tbw{{\tilde\bw}} 
\def\tdl{{\tilde\dl}} 
\def\ocP{{\ov\cP}}
\def\tzA{{\widetilde\zA}} 
\def\tzC{{\widetilde\zC}}
\def\new{{\mbox{\footnotesize new}}}
\def\old{{\mbox{\footnotesize old}}}
\def\prev{{\mbox{\footnotesize prev}}}
\def\oo{{\mbox{\sf\footnotesize o}}}
\def\pp{{\mbox{\sf\footnotesize p}}}
\def\nn{{\mbox{\sf\footnotesize n}}} 
\def\oR{{\ov R}}
  
  
\def\gA{{\mathfrak A}} 
\def\gV{{\mathfrak V}} 
\def\gS{{\mathfrak S}} 
\def\gK{{\mathfrak K}} 
\def\gH{{\mathfrak H}}
  
\def\ba{{\bf a}} 
\def\bb{{\bf b}} 
\def\bc{{\bf c}} 
\def\bd{{\bf d}} 
\def\be{{\bf e}} 
\def\bbf{{\bf f}} 
\def\bg{{\bf g}}
\def\bh{{\bf h}}
\def\bi{{\bf i}} 
\def\bm{{\bf m}} 
\def\bo{{\bf o}} 
\def\bp{{\bf p}} 
\def\bs{{\bf s}} 
\def\bu{{\bf u}} 
\def\bt{{\bf t}} 
\def\bv{{\bf v}} 
\def\bx{{\bf x}}
\def\by{{\bf y}} 
\def\bw{{\bf w}} 
\def\bz{{\bf z}}
\def\ga{{\mathfrak a}} 
\def\oal{{\ov\al}} 
\def\obeta{{\ov\beta}}
\def\ogm{{\ov\gm}} 
\def\oep{{\ov\varepsilon}}
\def\oeta{{\ov\eta}} 
\def\oth{{\ov\th}} 
\def\ovm{{\ov\mu}}
\def\ozero{{\ov0}}
\def\bB{{\bf B}} 
\def\bA{{\bf A}}

  
\def\CCSP{\hbox{\rm c-CSP}} 
\def\CSP{{\rm CSP}} 
\def\NCSP{{\rm \#CSP}} 
\def\mCSP{{\rm MCSP}} 
\def\FP{{\rm FP}} 
\def\PTIME{{\bf PTIME}} 
\def\GS{\hbox{($*$)}} 
\def\ry{\hbox{\rm r+y}}
\def\rb{\hbox{\rm r+b}} 
\def\Gr#1{{\mathrm{Gr}(#1)}}
\def\Grp#1{{\mathrm{Gr'}(#1)}} 
\def\Grpr#1{{\mathrm{Gr''}(#1)}}
\def\Scc#1{{\mathrm{Scc}(#1)}} 
\def\rel{R} 
\def\relo{Q}
\def\rela{S} 
\def\reli{T} 
\def\relp{P} 
\def\dep{\mathsf{dep}}
\def\Filt{\mathrm{Ft}}
\def\Filts{\mathrm{Fts}} 
\def\Agr{$\mathbb{A}$}
\def\Al{\mathrm{Alg}}
\def\Sig{\mathrm{Sig}}
\def\strat{\mathsf{strat}}
\def\relmax{\mathsf{relmax}}
\def\srelmax{\mathsf{srelmax}}
\def\Meet{\mathsf{Meet}}
\def\amax{\mathsf{amax}}
\def\umax{\mathsf{umax}}
\def\emin{\mathsf{Z}}
\def\as{\mathsf{as}}
\def\star{\hbox{$(*)$}}
\def\bmal{{\mathbf m}}
\def\Af{\mathsf{Af}}
\let\sqq=\sqsubseteq
\def\maj{\mathsf{maj}}
\def\razm{\mathsf{size}}
\def\Razm{\mathsf{MAX}}
\def\Centr{\mathsf{Center}}
\def\centr{\mathsf{center}}

  
\let\sse=\subseteq 
\def\ang#1{\langle #1 \rangle}
\def\angg#1{\left\langle #1 \right\rangle}
\def\dang#1{\ang{\ang{#1}}} 
\def\vc#1#2{#1 _1\zd #1 _{#2}}
\def\tms{\tm\dots\tm}
\def\zd{,\ldots,} 
\let\bks=\backslash 
\def\red#1{\vrule height7pt depth3pt width.4pt
\lower3pt\hbox{$\scriptstyle #1$}}
\def\fac#1{/\lower2pt\hbox{$\scriptstyle #1$}}
\def\me{\stackrel{\mu}{\eq}} 
\def\nme{\stackrel{\mu}{\not\eq}}
\def\eqc#1{\stackrel{#1}{\eq}} 
\def\cl#1#2{\arraycolsep0pt
\left(\begin{array}{c} #1\\ #2 \end{array}\right)}
\def\cll#1#2#3{\arraycolsep0pt \left(\begin{array}{c} #1\\ #2\\
#3 \end{array}\right)} 
\def\clll#1#2#3#4{\arraycolsep0pt
\left(\begin{array}{c} #1\\ #2\\ #3\\ #4 \end{array}\right)}
\def\cllll#1#2#3#4#5#6{ \left(\begin{array}{c} #1\\ #2\\ #3\\
#4\\ #5\\ #6 \end{array}\right)} 
\def\pr{{\rm pr}}
\let\upr=\uparrow 
\def\ua#1{\hskip-1.7mm\uparrow^{#1}}
\def\sua#1{\hskip-0.2mm\scriptsize\uparrow^{#1}} 
\def\lcm{{\rm lcm}} 
\def\perm#1#2#3{\left(\begin{array}{ccc} 1&2&3\\ #1&#2&#3
\end{array}\right)} 
\def\w{$\wedge$} 
\let\ex=\exists
\def\NS{{\sc (No-G-Set)}} 
\def\lev{{\sf lev}}
\let\rle=\sqsubseteq 
\def\ryle{\le_{ry}} 
\def\ryprec{\le_{ry}}
\def\os{\mbox{[}} 
\def\zs{\mbox{]}}
\def\link{{\sf link}}
\def\solv{\stackrel{s}{\sim}} 
\def\mal{\mathbf{m}}
\def\precs{\prec_{as}}

  
\def\lb{$\linebreak$}  
  
\def\ar{\hbox{ar}} 
\def\Im{{\sf Im}\;} 
\def\deg{{\sf deg}}
\def\id{{\rm id}}
  
\let\al=\alpha 
\let\gm=\gamma 
\let\dl=\delta 
\let\ve=\varepsilon
\let\ld=\lambda 
\let\om=\omega 
\let\vf=\varphi 
\let\vr=\varrho
\let\th=\theta 
\let\sg=\sigma 
\let\Gm=\Gamma 
\let\Dl=\Delta
\let\kp=\kappa
  
  
\font\tengoth=eufm10 scaled 1200 
\font\sixgoth=eufm6
\def\goth{\fam12} 
\textfont12=\tengoth 
\scriptfont12=\sixgoth
\scriptscriptfont12=\sixgoth 
\font\tenbur=msbm10
\font\eightbur=msbm8 
\def\bur{\fam13} 
\textfont11=\tenbur
\scriptfont11=\eightbur 
\scriptscriptfont11=\eightbur
\font\twelvebur=msbm10 scaled 1200 
\textfont13=\twelvebur
\scriptfont13=\tenbur 
\scriptscriptfont13=\eightbur
\mathchardef\nat="0B4E 
\mathchardef\eps="0D3F

\title{A dichotomy theorem for nonuniform CSPs simplified}
\author{Andrei A.\ Bulatov\\ 
} 
\date{} 
\maketitle

\begin{abstract}
In a non-uniform Constraint Satisfaction problem $\CSP(\Gm)$, where $\Gm$
is a set of relations on a finite set $A$, the goal is
to find an assignment of values to variables subject to constraints imposed
on specified sets of variables using the relations from $\Gm$.
The Dichotomy Conjecture for the non-uniform CSP states that for every 
constraint language $\Gm$ the problem  $\CSP(\Gm)$ is either solvable in 
polynomial time or is NP-complete. It was proposed by Feder and Vardi 
in their seminal 1993 paper. In this paper we confirm the Dichotomy 
Conjecture.
\end{abstract}

\section{Introduction}

In a Constraint Satisfaction Problem (CSP) the question is to decide whether 
or not it is possible to satisfy a given set of 
constraints. One of the standard ways to specify a constraint is to require that 
a combination of values of a certain set of variables belongs to a given relation. 
If the constraints allowed in a problem have to come from some set $\Gm$ of 
relations, such a restricted problem is referred 
to as a \emph{nonuniform CSP} and denoted $\CSP(\Gm)$. The set $\Gm$ is 
then called a \emph{constraint language}. Nonuniform CSPs not only provide 
a powerful
framework ubiquitous across a wide range of disciplines from theoretical 
computer science to computer vision, but also admit natural and elegant 
reformulations such as the homomorphism problem, and characterizations, 
in particular, as the class of problems equivalent to a logic class MMSNP.
Many different versions of the CSP have been studied across various fields. 
These include CSPs over infinite sets, counting CSPs (and related Holant 
problem and the problem of computing partition functions), several variants
of optimization CSPs, valued CSPs, quantified CSPs, and numerous related 
problems. The 
reader is referred to the recent book \cite{Krokhin17:constraint} for a survey 
of the state-of-the art in some of these areas. In this paper we, however, focus 
on the decision nonuniform CSP and its complexity.

A systematic study 
of the complexity of nonuniform CSPs was started by Schaefer in 1978
\cite{Schaefer78:complexity} who showed that for every constraint language
$\Gm$ over a 2-element set the problem $\CSP(\Gm)$ is either solvable in 
polynomial time or is NP-complete. Schaefer also asked about the complexity of 
$\CSP(\Gm)$ for languages over larger sets. The next step in the study of 
nonuniform CSPs was made in the 
seminal paper by Feder and Vardi \cite{Feder93:monotone,Feder98:monotone}, 
who apart from considering numerous aspects of the problem, posed the 
\emph{Dichotomy Conjecture} that states that for every finite constraint language 
$\Gm$ over a finite set the problem $\CSP(\Gm)$ is either solvable in polynomial 
time or is NP-complete. This conjecture has become a focal point of the CSP 
research and most of the effort in this area revolves to some extent around the 
Dichotomy Conjecture.

The complexity of the CSP in general and the Dichotomy Conjecture in particular
has been studied by several research communities using a variety of methods, 
each contributing an important aspect of the problem.
The CSP has been an established area in artificial intelligence for decades, and 
apart from developing efficient general methods of solving CSPs researchers 
tried to identify tractable fragments of the problem \cite{Dechter03:processing}.
A very important special case of the CSP, the (Di)Graph Homomorphism 
problem and the $H$-Coloring problem have been actively studied in the 
graph theory community, see, e.g.\ \cite{Hell90:h-coloring,Hell04:homomorphism}
and subsequent works by Hell, Feder, Bang-Jensen, Rafiey and others. 
Homomorphism duality introduced in these works has been very useful in
understanding the structure of constraint problems. The CSP
plays a major role and has been successfully studied in database theory, logic and 
model theory \cite{Kolaitis03:csp,Kolaitis00:game,Gottlob14:treewidth}, although 
the version of the problem mostly used there is not necessarily nonuniform. 
Logic games and strategies are now a standard tool in most of CSP algorithms. An 
interesting approach to the Dichotomy Conjecture through long codes was
suggested by Kun and Szegedy \cite{Kun16:new}. Brown-Cohen and Raghavendra 
proposed to study the conjecture using techniques based on decay of 
correlations \cite{Brown-Cohen16:correlation}. In this paper we use the 
algebraic structure of the CSP, which is briefly discussed next.

The most effective approach to the study of the CSP turned out to be the 
\emph{algebraic approach} that associates 
every constraint language with its (universal) algebra of polymorphisms. This 
approach was first developed in a series of papers by Jeavons and coauthors 
\cite{Jeavons97:closure,Jeavons98:algebraic,Jeavons98:consist} and then refined
by Bulatov, Krokhin, Barto, Kozik, Maroti, Zhuk and others 
\cite{Barto12:absorbing,Barto12:near,Barto14:local,%
Bulatov05:classifying,Bulatov04:graph,Bulatov08:recent,Maroti11:Malcev,%
Maroti10:tree,Zhuk14:key,Zhuk16:7-element}. While the 
complexity of $\CSP(\Gm)$ has been already solved for some interesting 
classes of structures such as graphs \cite{Hell90:h-coloring}, the algebraic approach
allowed the researchers to confirm the Dichotomy Conjecture in a number of more 
general  cases: for languages over a set of size up to 7 
\cite{Bulatov02:3-element,%
Bulatov06:3-element,Markovic11:4-element,Zhuk16:7-element}, 
so called conservative languages 
\cite{Bulatov03:conservative,Bulatov11:conservative,%
Bulatov16:conservative,Barto11:conservative}, and some classes of digraphs
\cite{Barto09:sources}. It also helped to design the main classes of CSP 
algorithms \cite{Barto14:local,Bulatov06:simple,Bulatov16:restricted,%
Berman10:varieties,Idziak10:few}, and to refine the exact complexity of the CSP 
\cite{Allender05:refining,Barto12:near,Dalmau08:majority,Larose07:first-order}.

In this paper we confirm the Dichotomy Conjecture for arbitrary languages 
over finite sets. More precisely we prove the following

\begin{theorem}\label{the:main}
For any finite constraint language $\Gm$ over a finite set the problem 
$\CSP(\Gm)$ is either solvable in polynomial time or is NP-complete.
\end{theorem}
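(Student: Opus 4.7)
The plan is to exploit the algebraic approach: the complexity of $\CSP(\Gm)$ is invariant under pp-definability, and hence depends only on the polymorphism clone $\Pol(\Gm)$, or equivalently on the algebra $\zA_\Gm=(A;\Pol(\Gm))$. The hardness half of the conjecture is already known: if the variety generated by $\zA_\Gm$ contains an algebra whose term operations are all essentially projections (equivalently, if $\zA_\Gm$ admits no Taylor, equivalently no weak near-unanimity, polymorphism), then $\CSP(\Gm)$ interprets 3-SAT and is NP-complete. The entire work therefore concentrates on the tractable half: given that $\zA_\Gm$ has a Taylor polymorphism, I must design a polynomial-time decision algorithm for $\CSP(\Gm)$.

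First I would move from a constraint language to the richer setting of \emph{instances with algebraic domains}: every variable $v$ of an instance $\cP$ is assigned a subuniverse $A_v$ of a finite power of $\zA_\Gm$, and every constraint is a subalgebra of a corresponding product. Enforcing $(2,3)$-consistency (a standard local-consistency procedure) is the first step; it shrinks domains without changing the solution set, and after enforcement any failure of satisfiability is traceable to \emph{algebraic} obstructions rather than combinatorial ones. The second step is to attach to each domain $A_v$ the \emph{colored edge structure} of $\zA_\Gm$: pairs $(a,b)\in A_v^2$ are classified according to the local tame-congruence type of the subalgebra they generate as \emph{semilattice}, \emph{majority}, or \emph{affine} edges. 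Existence of a Taylor term guarantees that types $\mathbf 1$ (unary) and $\mathbf 5$ (boolean) are absent, so only these three behaviours occur.

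The algorithm then looks, in each domain, for a \emph{center} or more general absorbing subuniverse coming from the semilattice/majority part: such an absorbing set $C\subseteq A_v$ has the crucial property, proved via the theory of absorption of Barto and Kozik, that the instance has a solution iff the instance obtained by restricting $v$ to $C$ has a solution. When some variable admits a proper center, the algorithm restricts and recurses on a strictly smaller instance. When no variable has a proper center, the structure theorem forces the remaining edges to be largely affine, and the instance can be lifted to a system of linear equations over a finite module (handled by Gaussian elimination in polynomial time) together with a bounded-width residue that is dispatched by local consistency. By an induction on the total size $\sum_v |A_v|$, either a solution is produced or non-solvability is certified.

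The main obstacle, and the true bulk of the proof, lies in showing that these reduction steps are \emph{compatible}: restricting one variable to a center must preserve $(2,3)$-consistency and the algebraic hypotheses on all other domains, and the affine-to-linear reduction must be shown to produce an equivalent system even in the presence of residual semilattice and majority edges. This requires a delicate structural analysis of how the three edge colours propagate across constraints, together with a sequence of lemmas on how centers, rectangularity, and linear quotients interact inside products of algebras with a Taylor term. Once this invariant--preserving machinery is in place, correctness and polynomial running time follow by a straightforward induction; establishing the machinery is where essentially all of the technical difficulty is concentrated.
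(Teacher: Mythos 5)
Your sketch correctly sets up the algebraic machinery (pp-definability invariance, reduction to idempotent algebras, NP-hardness absent a Taylor/weak-near-unanimity term, $(2,3)$-consistency preprocessing, and the semilattice/majority/affine edge classification), but the recursion engine you describe is not the one in this paper. You reduce by restricting a variable to a \emph{center} or absorbing subuniverse in the Barto--Kozik sense and, when none exists, factoring through a linear system over a module; that is essentially Zhuk's independent proof. This paper never invokes absorption. Its pivotal object is the \emph{centralizer} $(\al:\beta)$ of a prime congruence interval, a commutator-theoretic lattice operator, and the algorithm's master split (Section~\ref{sec:the-algorithm}) is on whether $(\zz_v:\mu_v)$ is full for the largest non-semilattice-free domains. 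When those centralizers are large, the instance is shrunk by Mar\'oti's retraction $\cP\cdot\vf$, multiplying every constraint relation by a fixed solution of a quotient instance (Lemma~\ref{lem:consistent-mapping}, Theorem~\ref{the:central}); when they are small, a bespoke consistency notion, \emph{block-minimality}, is enforced by recursion on domain size (Lemma~\ref{lem:to-block-minimality}), and Theorem~\ref{the:non-central} guarantees satisfiability of block-minimal instances. The technical heart --- separation of congruence intervals, chained relations, the Congruence Lemma~\ref{lem:affine-link}, and the compressed-instance invariant (S1)--(S6) of Sections~\ref{sec:compressed}--\ref{sec:non-affine} --- has no counterpart in your proposal, and conversely the paper uses no absorbing subuniverses. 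Both routes work; the paper's buys an inductive invariant phrased purely in congruence data, so that affine prime quotients ($\typ=\two$) and non-affine ones are handled uniformly inside one minimality notion rather than by a dichotomy between absorption restriction and Gaussian elimination. Note also that the compatibility concern you raise at the end is, in the paper's route, not about preserving $(2,3)$-consistency but about preserving block-minimality and (S1)--(S6) under the restriction; securing that is exactly what the Congruence Lemma and the collapsing-polynomial machinery are for.
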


The same result has been independently obtained by Zhuk 
\cite{Zhuk17:proof,Zhuk17:proof-arxiv,Zhuk18:modification}. 

The proved criterion matches the algebraic form of the Dichotomy Conjecture 
suggested in \cite{Bulatov05:classifying}. The hardness part of the conjecture 
has been known for long time. Therefore the main achievement of this paper 
is a polynomial time algorithm for problems satisfying the tractability 
condition from \cite{Bulatov05:classifying}. 

Using the algebraic language we can state the result in a stronger form.
Let $\zA$ be a finite idempotent algebra and let $\CSP(\zA)$ denote the
union of problems $\CSP(\Gm)$ such that every term operation of $\zA$
is a polymorphism of $\Gm$. Problem $\CSP(\zA)$ is no longer a nonuniform
CSP, and Theorem~\ref{the:main} allows for problems $\CSP(\Gm)\sse\CSP(\zA)$
to have different solution algorithms even when $\zA$ meets the tractability 
condition. We show that the solution algorithm only depends on the algebra
$\zA$.

\begin{theorem}\label{the:main2}
For a finite idempotent algebra that satisfies the conditions of the 
Dichotomy Conjecture there is a uniform solution algorithm for $\CSP(\zA)$.
\end{theorem}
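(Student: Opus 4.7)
The plan is to unpack the proof of Theorem~\ref{the:main} and argue that the polynomial-time procedure it produces depends only on $\zA$, not on the specific constraint language $\Gm$ used to present an instance. Concretely, any instance of $\CSP(\zA)$ can be presented as a triple consisting of a variable set, a list of constraints, and the relations themselves (given, say, by explicit lists of tuples); the algorithm is allowed to inspect these relations, and it already knows $\zA$. Since every relation in any $\Gm$ with $\CSP(\Gm)\sse\CSP(\zA)$ is by definition invariant under the term operations of $\zA$, the algorithm may freely apply these term operations to tuples it finds in the input, or to tuples produced in derived constraints, without ever having to ``know'' which $\Gm$ it is working with.

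First I would walk through the proof of Theorem~\ref{the:main} and classify its subroutines into three kinds: (i) local consistency and propagation steps that produce new constraint relations by intersections and pp-constructions; (ii) subdirect decompositions and recursive reductions to smaller subalgebras, quotients, and subpowers of $\zA$; and (iii) applications of specific term operations (Taylor/weak near-unanimity/Mal'cev-like operations) supplied by the algebraic tractability condition. For each kind I would verify that the step can be carried out with access only to $\Clo(\zA)$ and to the input relations, and that every derived relation it produces again lies in $\Inv(\zA)$. This latter closure is the algebraic invariant that makes uniformity work: the class of relations the algorithm manipulates is $\Inv(\zA)$, and this class is determined by $\zA$ alone.

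Next I would handle the recursion. Every subproblem generated during the run is again a CSP whose constraint relations are invariant under a subalgebra, quotient, or factor of a subpower of $\zA$. Since $\zA$ is finite, only finitely many such derived algebras $\zA'$ ever appear, and each inherits the tractability condition from $\zA$; hence the required term operations on each $\zA'$ can be precomputed once from $\zA$. Thus the recursion tree of the algorithm has a bounded description that is fixed by $\zA$, and each node executes a fixed sequence of $\Clo(\zA)$-based operations on whichever relations happen to be present in the input.

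The main obstacle will be auditing the structural subroutines of Theorem~\ref{the:main}, such as those involving centralizers, linked congruences, absorption witnesses, and rectangularity witnesses. In a $\Gm$-specific presentation these are often phrased in terms of polymorphisms of $\Gm$, which may strictly extend $\Clo(\zA)$. The work will be to show that in every such case the witnessing operation can in fact be chosen inside $\Clo(\zA)$: the relevant ``good'' operations are produced by the tractability condition on $\zA$ itself, and the structural properties (absorption, centrality, linkedness, etc.) are expressible as pp-definitions and so transfer automatically from $\zA$ to $\Pol(\Gm)$ whenever $\Gm\sse\Inv(\zA)$. Once this transfer is established uniformly across the recursion, the algorithm of Theorem~\ref{the:main} becomes a single procedure parameterized only by $\zA$, which is the content of Theorem~\ref{the:main2}.
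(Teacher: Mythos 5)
Your proposal is essentially the same as the paper's: the paper makes Theorem~\ref{the:main2} almost immediate by constructing the solution procedure (Algorithm~\ref{alg:csp}, analyzed in Theorem~\ref{the:algorithm}) directly for $\CSP(\cA)$, with $\cA$ a finite class of algebras derived from $\zA$ by closing under subalgebras/quotients/retractions, so the algorithm is parameterized by $\zA$ alone and never inspects $\Gm$. The only thing you would find upon doing the ``audit'' you describe is that there is nothing to audit --- every subroutine (block-minimality, $(2,3)$-minimality, centralizer computations, the Maroti-style retraction) is already stated in terms of term and polynomial operations of $\zA$ and of relations in $\Inv(\zA)$, and the recursion stays inside the fixed finite class $\cA$, whose members all inherit the weak near-unanimity term, exactly as you anticipate.
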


An interesting question arising from Theorems~\ref{the:main},\ref{the:main2}
is known as the \emph{Meta-problem}: Given a constraint language or a 
finite algebra, decide whether or not it satisfies the conditions of the theorems.
The answer to this question is not quite simple, for a thorough study of
the Meta-problem see \cite{Chen17:asking,Freese09:complexity}.

We start with introducing the terminology and notation for CSPs that is used 
throughout the paper and reminding the basics of the algebraic approach. 
Then in Section~\ref{sec:centralizer} we introduce the key ingredients 
used in the algorithm: separation of congruences and centralizers. Then in 
Section~\ref{sec:algorithm1} we apply these concepts to CSPs, first, 
to demonstrate how centralizers help to decompose an instance into smaller
subinstances, and, second, to introduce a new kind of minimality condition for 
CSPs, \emph{block minimality}. After that we state the main results used by
the algorithm and describe the algorithm itself. The last part of the paper, 
Sections~\ref{sec:compressed}--\ref{sec:non-affine}, is devoted to 
proving the technical results.

\section{CSP, universal algebra and the Dichotomy 
conjecture}\label{sec:csp-p1}

For a detailed introduction to the CSP and the algebraic approach to its structure
the reader is referred to a recent survey by Barto et al.\
\cite{Barto17:polymorphisms}. Basics of universal algebra can be learned from
the textbook \cite{Burris81:universal}. In preliminaries to this paper we 
therefore focus on what is needed for our result.

\subsection{The CSP}
The `AI' formulation of the CSP best fits our purpose.
Fix a finite set $A$ and let $\Gm$ be a \emph{constraint language} over $A$, 
that is, a set --- not necessarily finite --- of relations over $A$. The 
(\emph{nonuniform}) \emph{Constraint Satisfaction 
Problem} (\emph{CSP}) associated with language $\Gm$ 
is the problem $\CSP(\Gm)$, in which, an \emph{instance}  
is a pair $(V,\cC)$, where $V$ is a set of variables; and $\cC$ is a set of 
\emph{constraints}, 
i.e.\ pairs $\ang{\bs,\rel}$, where $\bs=(\vc vk)$ is a tuple of 
variables from $V$, the \emph{constraint scope}, 
and $\rel\in\Gm$, the $k$-ary 
\emph{constraint relation}. We always assume that 
relations are given explicitly by a list of tuples. The way constraints are 
represented does not matter if $\Gm$ is finite, but it may change the
complexity of the problems for infinite languages.
The goal is to find a \emph{solution}, i.e., a mapping 
$\vf:V\to A$ such that for every constraint $\ang{\bs,\rel}\in\cC$,
$\vf(\bs)\in\rel$.

\subsection{Algebraic methods in the CSP}

Jeavons et al.\ in \cite{Jeavons97:closure,Jeavons98:algebraic} were the first
to observe that higher order symmetries of constraint languages, called 
polymorphisms, play a significant role in  the study of the complexity of the CSP.
A \emph{polymorphism} of a relation $\rel$ over $A$ is an operation 
$f(\vc xk)$ on $A$ such that for any choice of $\vc\ba k\in\rel$ we have
$f(\vc\ba k)\in\rel$. If this is the case we also say that $f$ 
\emph{preserves} $\rel$, or that $\rel$
is \emph{invariant} with respect to $f$. A polymorphism 
of a constraint language $\Gm$ is an 
operation that is a polymorphism of every $\rel\in\Gm$. 

\begin{theorem}[\cite{Jeavons97:closure,Jeavons98:algebraic}]%
\label{the:algebra-csp}
For constraint languages $\Gm,\Dl$, where $\Gm$ is finite, if every 
polymorphism of $\Dl$ is also a polymorphism of $\Gm$, then 
$\CSP(\Gm)$ is polynomial time reducible to $\CSP(\Dl)$.\footnote{Using 
the $s-t$-Connectivity algorithm by Reingold \cite{Reingold08:undirect} this
reduction can be improved to a log-space one.}
\end{theorem}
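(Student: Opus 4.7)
The plan is to route the reduction through the classical Galois correspondence between operations and relations. Write $\Pol(\Gm)$ for the clone of polymorphisms of $\Gm$, and for a set $F$ of operations on $A$ write $\Inv(F)$ for the set of all finitary relations on $A$ preserved by every operation in $F$. The hypothesis of the theorem is precisely $\Pol(\Dl)\sse\Pol(\Gm)$, which by monotonicity of $\Inv$ gives $\Gm\sse\Inv(\Pol(\Gm))\sse\Inv(\Pol(\Dl))$. So the first task is to describe $\Inv(\Pol(\Dl))$ concretely.

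The key structural ingredient I would invoke is the Bodnar\v{c}uk--Kalu\v{z}nin--Kotov--Romov / Geiger theorem: $\Inv(\Pol(\Dl))$ is exactly the smallest \emph{relational clone} containing $\Dl$, i.e.\ the set of all relations primitive-positive (pp) definable from $\Dl$ together with the equality relation. Thus every $\rel\in\Gm$ admits a formula
\[
\rel(\vc xk)\;\equiv\;\exists \vc y\ell\;\; \bigwedge_{i=1}^m \rela_i(\bz_i),
\]
where each $\rela_i\in\Dl\cup\{=_A\}$ and each $\bz_i$ is a tuple of variables from $\{\vc xk,\vc y\ell\}$.

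With this in hand, the reduction is the obvious one. Given an instance $(V,\cC)$ of $\CSP(\Gm)$, process each constraint $\ang{\bs,\rel}$ independently: take the pp-definition of $\rel$ over $\Dl\cup\{=_A\}$, introduce a fresh copy of the existentially quantified variables, substitute the components of $\bs$ for $\vc xk$, and replace each conjunct $\rela_i(\bz_i)$ by a new constraint over $\Dl$; equality conjuncts are eliminated by identifying the two variables involved (or by a union-find pass at the end). The resulting instance $(V',\cC')$ of $\CSP(\Dl)$ has a solution if and only if the original does: any solution to $(V,\cC)$ extends via the existential witnesses to one of $(V',\cC')$, and any solution of $(V',\cC')$ restricts to a solution of $(V,\cC)$.

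The complexity bound is where finiteness of $\Gm$ is used. For each of the finitely many $\rel\in\Gm$ fix one pp-definition; its size is a constant $c_\rel$ independent of the instance, so the total blow-up is at most $(\max_{\rel\in\Gm} c_\rel)\cd|\cC|$, giving a linear-time many-one reduction. The only subtlety worth flagging is that the pp-definitions must be fixed in advance (they are part of the reduction, not computed from the input), which is why finiteness of $\Gm$ matters; for an infinite $\Gm$ one would need a uniform way to compute pp-definitions, which is not guaranteed. The main obstacle in a self-contained write-up would be the Geiger / BKKR theorem itself, but since it is standard I would simply cite it and spend the bulk of the argument on verifying the correctness and bookkeeping of the syntactic reduction described above.
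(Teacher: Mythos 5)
Your proof is correct and is precisely the standard argument from the cited works of Jeavons et al.: route through the Galois connection and the Geiger/BKKR theorem to get a pp-definition of each $\rel\in\Gm$ over $\Dl\cup\{=_A\}$, then replace each constraint by the corresponding gadget, using finiteness of $\Gm$ to hard-wire the definitions into the reduction. The paper states this theorem by citation without reproving it, and your write-up matches that source; the one point worth keeping explicit, which you already flag, is that equality atoms are handled by variable identification so that $=_A$ need not itself lie in $\Dl$.
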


Listed below are several types of polymorphisms that occur frequently 
throughout the paper. The presence of each of these polymorphisms
imposes strong restrictions on the structure of invariant relations that can be used 
in designing a solution algorithm. Some of such results will be mentioned 
later.\\[1mm]
-- \emph{Semilattice} operation is a binary operation $f(x,y)$ such that $f(x,x)=x$,
$f(x,y)=f(y,x)$, and $f(x,f(y,z))=f(f(x,y),z)$ for all $x,y,z\in A$;\\[1mm]
-- $k$-ary \emph{near-unanimity} operation is a $k$-ary operation 
$u(\vc xk)$
such that\lb $u(y,x\zd x)=u(x,y,x\zd x)=\dots =u(x\zd x,y)=x$ for all 
$x,y\in A$; a ternary near-unanimity operation $m$ is called a \emph{majority} 
operation, it satisfies the equations $m(y,x,x)=m(x,y,x)=m(x,x,y)=x$;\\[1mm]
-- \emph{Mal'tsev} operation is a ternary operation $h(x,y,z)$ satisfying the 
equations
$h(x,y,y)=h(y,y,x)=x$ for all $x,y\in A$; the \emph{affine} operation $x-y+z$ of
an Abelian group is a special case of a Mal'tsev operation;\\[1mm]
-- $k$-ary \emph{weak near-unanimity} operation is a $k$-ary operation $w$ that 
satisfies the same equations as a near-unanimity operation
$w(y,x\zd x)=\dots =w(x\zd x,y)$, except for the last one ($=x$).

\smallskip

To illustrate the effect of polymorphisms on the structure of invariant relations
we give a few examples that involve polymorphisms introduced 
above. First, we need some terminology and notation.

By $[n]$ we denote the set $\{1\zd n\}$. For sets $\vc An$ tuples 
from $A_1\tms A_n$ are denoted in boldface, say, $\ba$; the $i$th component of 
$\ba$ is referred to as $\ba[i]$. 
An $n$-ary relation $\rel$ over sets $\vc An$ is any subset of 
$A_1\tms A_n$. 
For $I=\{\vc ik\}\sse[n]$ by $\pr_I\ba,\pr_I\rel$ we denote the 
\emph{projections}\index{projections} $\pr_I\ba=(\ba[i_1]\zd\ba[i_k])$, 
$\pr_I\rel=\{\pr_I\ba\mid\ba\in\rel\}$ of tuple
$\ba$ and relation $\rel$. If $\pr_i\rel=A_i$ for each $i\in[n]$, relation $\rel$ is 
said to be a \emph{subdirect product} of 
$A_1\tms A_n$. Sometimes it is convenient to label the coordinate positions 
of relations by elements of some set other than $[n]$, e.g.\ by variables of a CSP.

\begin{example}\label{exa:polymorphisms}
(1) Let $\join$ be the binary operation of disjunction on $\{0,1\}$, as is easily 
seen, it is a semilattice operation. The following property of relations invariant 
under $\join$ helps solving the corresponding CSP: A relation $\rel$ contains 
the tuple $(1\zd1)$ whenever for each coordinate position $\rel$ contains a
tuple with a 1 in that position. Similarly, relations invariant under other 
semilattice operations on larger sets always contain a sort of a `maximal' 
tuple.\\[2mm]
(2) By the results of \cite{Baker75:chinese-remainder} a tuple $\ba$ belongs to
a ($n$-ary) relation $\rel$ invariant under a $k$-ary near-unanimity operation
if and only if for every $(k-1)$-element set $I\sse[n]$ we have $\pr_I\ba\in\pr_I\rel$.
In particular, if $f$ is the majority operation on $\{0,1\}$ given by 
$(x\meet y)\join(y\meet z)\join(z\meet x)$, and $\rel$ is a relation on $\{0,1\}$,
then $\ba\in\rel$ if and only if $(\ba[i],\ba[j])\in\pr_{ij}\rel$. This property 
easily gives rise to a reduction of the corresponding CSP to 2-SAT.\\[2mm]
(3) If $m(x,y,z)=x-y+z$ is the affine operation of, say, $\zZ_p$, $p$ prime, 
then relations invariant with respect to $m$ are exactly those that can be
represented as solution sets of systems of linear equations over $\zZ_p$,
and the corresponding CSP can be solved by Gaussian Elimination.
One direction is easy to see. If $\rel=\{\bx\mid \bx\cdot M=\bd\}$, where
$M$ is the matrix of the system of equations, and $\ba,\bb,\bc\in\rel$, then
$$
(\ba-\bb+\bc)\cdot M=\ba\cdot M-\bb\cdot M+\bc\cdot M=\bd-\bd+\bd=\bd,
$$
implying $m(\ba,\bb,\bc)\in\rel$. The other direction is more involved.
\EEX
\end{example}

The next step in discovering more structure behind nonuniform CSPs 
was made in \cite{Bulatov05:classifying}, where universal algebras 
were brought into the picture.
A \emph{(universal) algebra} is a pair $\zA=(A,F)$ 
consisting of a set $A$, the \emph{universe} of $\zA$, and a 
set $F$ of operations on $A$. Operations from $F$ (called \emph{basic}) 
together with operations that 
can be obtained from them by means of  composition are called the 
\emph{term} operations of $\zA$.

Algebras allow for a more general definition of CSPs than the one used 
above. Let $\CSP(\zA)$ denote the class of nonuniform CSPs 
$\{\CSP(\Gm)\mid \Gm\sse\Inv(F), \text{ $\Gm$ finite}\}$, where $\Inv(F)$
denotes the set of all relations invariant with respect to all operations from $F$. 
Note that 
the tractability of $\CSP(\zA)$ can be understood in two ways: as the existence of 
a polynomial-time algorithm for every $\CSP(\Gm)$ from this class, or as the 
existence of a uniform polynomial-time algorithm for all such problems. One of the
implications of our results is that these two types of tractability are the same.
From the formal standpoint we will use the stronger one.

\subsection{Structural features of universal algebras}
We use some structural elements of algebras, the main of which are 
subalgebras, congruences, and quotient algebras.
For $B\sse A$ and an operation $f$ on $A$ by $f\red B$ we denote the 
restriction of $f$ on $B$. Algebra $\zB=(B,\{f\red B\mid f\in F\})$ is 
a \emph{subalgebra} of $\zA$ if $f(\vc bk)\in B$ for any $\vc bk\in B$ and
any $f\in F$.

Congruences play a very significant role in our algorithm, and we discuss them 
in more detail. A \emph{congruence} is an equivalence relation $\al\in\Inv(F)$. 
This means that
for any operation $f\in F$ and any $(a_1,b_1)\zd (a_k,b_k)\in\al$ it 
holds\lb $(f(\vc ak),f(\vc bk))\in\al$. Hence one can define an 
algebra on $A\fac\al$, the set of $\al$-blocks, by setting 
$f\fac\al(a_1\fac\al\zd a_k\fac\al)=(f(\vc ak))\fac\al$ for $\vc ak\in A$, where
$a\fac\al$ denotes the $\al$-block containing $a$. The algebra $\zA\fac\al$ 
is called the \emph{quotient algebra modulo}~$\al$. Often the fact that $a,b$
are related by a congruence $\al$ is denoted $a\eqc\al b$.

\begin{example}\label{exa:congruences}
The following are examples of congruences and quotient algebras.\\[1mm]
(1) Let $\zA$ be any algebra. Then the equality relation $\zz_\zA$ and the full
binary relation $\zo_\zA$ on $\zA$ are congruences of $\zA$. The quotient
algebra $\zA\fac{\zz_\zA}$ is $\zA$ itself, while $\zA\fac{\zo_\zA}$ is a 
1-element algebra.\\[1mm]
(2) Let $\zL_n$ be an $n$-dimensional vector space and $\zL'$ its $k$-dimensional
subspace, $k\le n$. The binary relation $\pi$ given by: $(\ov a,\ov b)\in\pi$ iff  
$\ov a,\ov b$ have the same orthogonal projection on $\zL'$, is a
congruence of $\zL_n$ and $\zL_n\fac\pi$ is~$\zL'$.\\[1mm]
(3) The next example will be our running example throughout the paper. 
Let $A=\{0,1,2\}$, and let $\zA_M$
be the algebra with universe $A$ and two basic operations: a binary operation 
$r$ such that $r(0,0)=r(0,1)=r(2,0)=r(0,2)=r(2,1)=0$, 
$r(1,1)=r(1,0)=r(1,2)=1$,
$r(2,2)=2$; and a ternary operation $t$ such that $t(x,y,z)=x-y+z$ if 
$x,y,z\in\{0,1\}$, where $+,-$ are the operations of $\zZ_2$, $t(2,2,2)=2$,
and otherwise $t(x,y,z)=t(x',y',z')$, where $x'=x$ if $x\in\{0,1\}$ and 
$x'=0$ if $x=2$; the values $y',z'$ are obtained from $y,z$ by the same rule. 
It is an easy exercise to verify the following facts:
(a) $\zB=(\{0,1\},r\red{\{0,1\}},t\red{\{0,1\}})$ and 
$\zC=(\{0,2\},r\red{\{0,2\}},t\red{\{0,2\}})$ are subalgebras of $\zA_M$; 
(b) the partition $\{0,1\},\{2\}$ is a congruence of $\zA_M$, let us denote 
it $\th$; (c) algebra $\zC$ is basically a semilattice, that is, a set with a semilattice
operation, see Fig~\ref{fig:small-algebra}(a). 

The classes of congruence
$\th$ are $0\fac\th=\{0,1\}, 2\fac\th=\{2\}$. Then the quotient 
algebra
$\zA_M\fac\th$ is also basically a semilattice, as $r\fac\th(0\fac\th,0\fac\th)=
r\fac\th(0\fac\th,2\fac\th)=r\fac\th(2\fac\th,0\fac\th)=0\fac\th$ 
and~$r\fac\th(2\fac\th,2\fac\th)=2\fac\th$.
\EEX
\end{example}

\begin{figure}[ht]
\centerline{\includegraphics[totalheight=2.5cm,keepaspectratio]{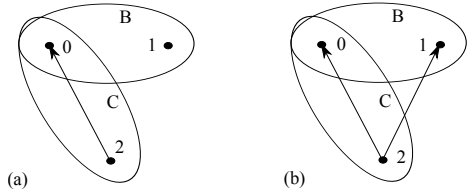}}
\caption{(a) Algebra $\zA_M$. (b) Algebra $\zA_N$. Dots represent elements, 
ovals represent subalgebras, and arrows represent semilattice edges (see 
Section~\ref{sec:few}).}\label{fig:small-algebra}
\end{figure}
\begin{figure}[ht]
\centerline{\includegraphics[totalheight=2.5cm,keepaspectratio]{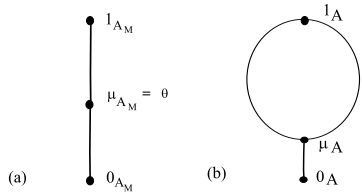}}
\caption{(a) The congruence lattice of algebra $\zA_M$; (b) congruence lattice of a 
subdirectly irreducible algebra.}\label{fig:congr}
\end{figure}

The (ordered) set of all congruences of $\zA$ is denoted by $\Con(\zA)$.
This set is actually a lattice, that is, the operations of meet $\meet$ and join
$\join$ can be defined so that $\al\meet\beta$ is the greatest lower bound of
$\al,\beta\in\Con(\zA)$ and $\al\join\beta$ is the least upper bound of $\al,\beta$. 
Fig.~\ref{fig:congr}(a) shows $\Con(\zA_M)$ for
the algebra $\zA_M$ from Example~\ref{exa:congruences}(3). By $\H\S(\zA)$ 
we denote the set of all quotient algebras of all subalgebras of $\zA$. 

\subsection{The Dichotomy Conjecture}

The results of \cite{Bulatov05:classifying} reduce the dichotomy conjecture  
to idempotent algebras. An algebra $\zA=(A,F)$ is said to be 
\emph{idempotent} if every operation $f\in F$
satisfies the equation $f(x\zd x)=x$. If $\zA$ is idempotent, then all the 
\emph{constant} relations $\{(a)\}$ are invariant under $F$. Therefore studying
CSPs over idempotent algebras is the same as studying the CSPs that allow all 
constant relations. Another useful property of idempotent algebras is that
every block of every its congruence is a subalgebra. We now can state
the algebraic version of the dichotomy theorem.

\begin{theorem}\label{the:algebra-dichot-p1}
For a finite idempotent algebra $\zA$ the following are equivalent:\\[.5mm]
(1) $\CSP(\zA)$ is solvable in polynomial time;\\[.5mm]
(2) $\zA$ has a weak near-unanimity term operation;\\[.5mm]
(3) every algebra from $\H\S(\zA)$ has a nontrivial term operation 
(that is, not a \emph{projection}, an operation of the form 
$f(\vc xk)=x_i$).\\[.5mm]
Otherwise $\CSP(\zA)$ is NP-complete.
\end{theorem}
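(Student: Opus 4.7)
The equivalence of (2) and (3) is a known purely algebraic theorem of Mar\'oti and McKenzie, which I would invoke as a black box: for a finite idempotent algebra, admitting a weak near-unanimity term is equivalent to no member of $\H\S(\zA)$ being term-equivalent to a pure-projection algebra. The hardness direction---that if (3) fails then $\CSP(\zA)$ is NP-complete---is also already in the literature and follows from Theorem~\ref{the:algebra-csp} together with the observation that a pure-projection algebra $\zB\in\H\S(\zA)$ allows one to pp-interpret $\CSP$ over an arbitrary relational structure, including NP-hard ones such as 1-in-3-SAT. So the content of this theorem is really concentrated in (2) $\Rightarrow$ (1).

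My plan for (2) $\Rightarrow$ (1), the construction of a uniform polynomial-time algorithm for any finite $\Gm\sse\Inv(\zA)$, follows the outline laid out in the introduction of the paper. First, I would transform an arbitrary instance into a \emph{block-minimal} form: a strengthening of standard local consistency defined in Section~\ref{sec:algorithm1} relative to the congruences of $\zA$ and the centralizer relation introduced in Section~\ref{sec:centralizer}. Second, I would look for a proper congruence $\al$ of $\zA$ whose centralizer separates from $\al$; if such an $\al$ exists, the instance decomposes into independent subinstances on the quotient $\zA\fac\al$ and on the $\al$-blocks, which are solved recursively and recombined using block minimality to guarantee compatibility of the partial solutions. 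When no such congruence exists, the algebra is forced into one of a few normal forms---most importantly, essentially affine over a finite ring, or admitting a semilattice absorption edge---each of which is handled by a specific solution procedure: Gaussian elimination in the affine case, absorption-based propagation in the semilattice case. The recursion bottoms out at trivial algebras.

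The main obstacle is twofold. First, proving that block minimality is enforceable in polynomial time and that it actually suffices for the centralizer-based decomposition to succeed requires detailed structural information about subdirect products of quotients of $\zA$ under a weak near-unanimity term; this is the role of the technical Sections~\ref{sec:compressed}--\ref{sec:non-affine}. Second, handling the non-decomposable case forces a careful split between affine and non-affine behaviour: in the affine case one must argue that the instance really is a system of linear equations \emph{globally}, not just locally, so that Gaussian elimination applies soundly; in the non-affine case one must extract a genuine semilattice or absorption structure from the weak near-unanimity term. This last extraction---producing usable reductive structure from only a weak near-unanimity polymorphism, without having access to a stronger polymorphism such as a Mal'tsev or majority term---is, I expect, where the bulk of the difficulty lies and why the argument cannot be kept short.
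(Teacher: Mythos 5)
Your proposal tracks the paper's own outline in broad strokes (block-minimality, centralizers, recursion), and your disposal of the hardness direction and of (2)$\Leftrightarrow$(3) is exactly what the paper does, invoking \cite{Bulatov05:classifying}, \cite{Bulatov01:varieties} and \cite{Maroti08:existence}. But two load-bearing ingredients of the paper's (2)$\Rightarrow$(1) argument are either missing or mischaracterised in a way that would derail the plan.

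First, the mechanism that actually \emph{shrinks} an instance is not a decomposition ``into independent subinstances on the quotient $\zA\fac\al$ and on the $\al$-blocks.'' That kind of split does not recurse cleanly: the $\al$-blocks are not independent across constraints, and you cannot solve the quotient problem and then lift block-by-block. What the paper does in the large-centralizer case (Theorem~\ref{the:central}) is Mar\'oti's retraction: fix a solution $\vf$ of the quotient instance $\cP\fac{\ov\mu^*}$ that hits a semilattice-successor value, and \emph{multiply} the whole instance by $\vf$ using the binary term of Lemma~\ref{lem:multiplication-p1}. That multiplication is a consistent self-map (Lemma~\ref{lem:consistent-mapping}), its iterate is an idempotent retraction, and Lemma~\ref{lem:centralizer-multiplication} is precisely what guarantees the map is well-defined modulo the monolith and strictly collapses the maximal non-semilattice-free domains. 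None of this is in your sketch, and without it the recursion does not terminate. Conversely, when centralizers are small the paper does \emph{not} recurse further: block-minimality itself forces a solution (Theorem~\ref{the:non-central}). So the role of the centralizer is opposite to what you suggest --- a large centralizer is what buys you the retraction; a small one buys you existence of a solution outright.

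Second, ``Gaussian elimination in the affine case'' is not how the paper handles type-$\two$ quotients, and it cannot be. Having a prime quotient of affine type (Lemma~\ref{lem:as-type-2}) does not make the instance a global system of linear equations --- only certain local factors behave like modules. Treating it as a linear system is unsound. What the paper does instead (Section~\ref{sec:affine-consistency}) is a congruence-separation argument: it keeps a $(\ov\beta,\ov B)$-compressed instance and, when there is an interval of type~$\two$ below $\beta_v$, it tightens by a $\tau_C$-block using the Congruence Lemma~\ref{lem:affine-link} together with collapsing polynomials (Lemma~\ref{lem:collapsing}) to preserve conditions (S1)--(S6). Similarly, your ``absorption-based propagation in the semilattice case'' is not the paper's tool: Section~\ref{sec:non-affine} uses the $\cG'(\zA)$-structure (as-maximality, u-maximality, quasi-2-decomposability in Theorem~\ref{the:quasi-2-decomp}), not Barto--Kozik absorption. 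So while your identification of where the difficulty lies is right, the specific solution procedures you propose for the two cases are the wrong ones and would not close the argument.
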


The hardness part of this theorem is proved in \cite{Bulatov05:classifying}; the 
equivalence of (2) and (3) was proved in \cite{Bulatov01:varieties} and 
\cite{Maroti08:existence}. The equivalence of (1) to (2) and (3) is the main result 
of this paper. In the rest of the paper we assume all algebras to satisfy  
conditions (2),(3).

In fact, we will prove a slightly more general result. Let $\cA$ be a finite class 
of finite idempotent \emph{similar} algebras, that is, whose basic operations 
have the same `names' and the corresponding arities. One may assume that 
such a class is produced from a single algebra $\zA$ by taking subalgebras,
quotient algebras and also \emph{retractions} introduced in 
Section~\ref{sec:theorem-47}. Then $\CSP(\cA)$ denotes the class of
CSP instances whose variables can have different domains belonging to
$\cA$, see, e.g.\ \cite{Bulatov03:multi-sorted}. We will design an algorithm 
for $\CSP(\cA)$ whenever there is a near-unanimity term for all algebras 
in $\cA$ simultaneously.

\section{Bounded width and the few subpowers algorithm}%
\label{seg:bounded-width-p1}

Leaving aside occasional combinations thereof, there are only two standard 
types of algorithms solving the CSP. In this section we give a brief introduction 
into them.

\subsection{CSPs of bounded width}

Algorithms of the first kind are based on the idea of local propagation, that is
formally described below.

Let $\cP=(V,\cC)$ be a CSP instance. For $W\sse V$ by $\cP_W$ 
we denote the \emph{restriction} of 
$\cP$ onto $W$, that is, the instance 
$(W,\cC_W)$, where for each $C=\ang{\bs,\rel}\in\cC$, the set $\cC_W$
includes the constraint $C_W=\ang{\bs\cap W,\pr_{\bs\cap W}\rel}$,
where $\bs\cap W$ is the subtuple of $\bs$ containing all the elements 
from $W$ in $\bs$, say, $\bs\cap W=(\vc ik)$, and $\pr_{\bs\cap W}\rel$ 
stands for $\pr_{\{\vc ik\}}\rel$. 
The set of solutions of $\cP_W$ will be denoted by $\cS_W$. 

Unary solutions, that is, when $|W|=1$ play a special role. As is easily seen, 
for $v\in V$ the set $\cS_v$ is just the intersections of unary projections 
$\pr_v\rel$ of constraints whose scope contains $v$. Instance $\cP$ is said 
to be \emph{1-minimal} if for every $v\in V$ and every constraint
$C=\ang{\bs,\rel}\in\cC$ such that $v\in\bs$, it holds $\pr_v\rel=\cS_v$.
For a 1-minimal instance one may always 
assume that allowed values for a variable $v\in V$ is the set $\cS_v$. We call 
this set the \emph{domain} of $v$ and assume that CSP
instances may have different domains, which nevertheless are always 
subalgebras or quotient algebras of the original algebra $\zA$. It will be 
convenient to denote the domain of $v$ by $\zA_v$. The domain $\zA_v$
may change as a result of transformations of the instance.

Instance $\cP$ is said to be \emph{(2,3)-consistent} if it has a 
\emph{(2,3)-strategy}, that is, a 
collection of relations $\rel^X$, $X\sse V$, $|X|=2$ satisfying the following 
conditions (we use $\rel^v,\rel^{vw}$ for $\rel^{\{v\}},\rel^{\{v,w\}}$:\\
-- for every $X\sse V$ with $|X|\le2$, $\pr_{\bs\cap X}\rel^X\sse\cS_X$;\\ 
-- for every $X=\{u,v\}\sse V$, any $w\in V-X$, and any 
$(a,b)\in\rel^X$, there is $c\in\zA_w$ such that $(a,c)\in\rel^{uw}$
and $(b,c)\in\rel^{vw}$.\\[1mm]
\indent
Let the collection of relations $\rel^X$ be denoted by $\cR$. 
%
A tuple $\ba$ whose entries are indexed with elements 
of $W\sse V$ and such that $\pr_X\ba\in\rel^X$ for any $X\sse W$, $|X|=2$,
will be called \emph{$\cR$-compatible}. If a 
(2,3)-consistent instance $\cP$
with a (2,3)-strategy $\cR$ satisfies the additional condition\\
-- for every constraint $C=\ang{\bs,\rel}$ of $\cP$ every tuple $\ba\in\rel$
is $\cR$-compatible,\\[1mm]
it is called \emph{(2,3)-minimal}. For 
$k\in\nat$,  $(k,k+1)$-strategies, $(k,k+1)$-consistency, and 
$(k,k+1)$-minimality are defined in a similar way replacing 2,3 with $k,k+1$.

Instance $\cP$ is said to be \emph{minimal} (or 
\emph{globally minimal})  if for every 
$C=\ang{\bs,\rel}\in\cC$ and every $\ba\in\rel$ there is a solution 
$\vf\in\cS$ such that $\vf(\bs)=\ba$. Similarly, $\cP$ is said to be 
\emph{globally 1-minimal} if for every $v\in V$ and $a\in\zA_v$ there
is a solution $\vf$ with $\vf(v)=a$.

Any instance can be transformed to
a 1-minimal, (2,3)-consistent, or (2,3)-minimal instance in polynomial
time using the standard constraint propagation algorithms (see, e.g.\ 
\cite{Dechter03:processing}). These algorithms work by changing the constraint
relations and the domains of the variables eliminating some tuples and elements 
from them. We call such a process \emph{tightening} the 
instance. It is important to notice that if the original instance belongs to 
$\CSP(\zA)$ for some algebra $\zA$, that is, all its constraint relations are invariant
under the basic operations of $\zA$, the constraint relations obtained by 
propagation algorithms are also invariant under the basic operations of $\zA$, and 
so the resulting instance also belongs to $\CSP(\zA)$. Establishing minimality
amounts to solving the problem and therefore not always can be easily 
done.

If a constraint propagation algorithm solves a CSP, the problem is said to be of 
bounded width. More precisely, $\CSP(\Gm)$ (or $\CSP(\zA)$) is said to have 
\emph{bounded width} if for some $k$
every $(k,k+1)$-minimal instance from $\CSP(\Gm)$ (or $\CSP(\zA)$) has a 
solution. Problems 
of bounded width are very well studied, see an older survey 
\cite{Bulatov08:dualities} and a more recent paper \cite{Barto16:collapse}.

\begin{theorem}%
[\cite{Barto16:collapse,Bulatov16:restricted,Bulatov04:graph,%
Kozik16:characterization}]%
\label{the:bounded-width-p1}
For an idempotent algebra $\zA$ the following are equivalent:\\[1mm]
(1) $\CSP(\zA)$ has bounded width;\\[1mm]
(2) every (2,3)-minimal instance from $\CSP(\zA)$ has a solution;\\[1mm]
(3) $\zA$ has a weak near-unanimity term of arity
$k$ for every $k\ge3$;\\[1mm]
(4) every algebra $\H\S(\zA)$ has a nontrivial operation, and none of them 
is equivalent to a module (in a certain precise sense).
\end{theorem}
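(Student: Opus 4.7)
The plan is to close the cycle $(2)\Rightarrow(1)\Rightarrow(3)\Rightarrow(4)\Rightarrow(2)$, citing the parts that have already been established in the literature and sketching what the remaining bulk of work looks like. The implication $(2)\Rightarrow(1)$ is immediate: $(2,3)$-minimality is just $(k,k+1)$-minimality for $k=2$, so solvability of every $(2,3)$-minimal instance gives bounded width with $k=2$. For $(1)\Rightarrow(3)$ I would argue by contraposition: if $\zA$ lacks a weak near-unanimity term of some arity $k\ge 3$, then by the Maroti--McKenzie/Taylor correspondence, some algebra in $\H\S(\zA)$ admits a nontrivial \emph{abelian} (module-like) quotient. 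Using this quotient one encodes the solvability of arbitrary systems of linear equations over a finite Abelian group into $\CSP(\zA)$, and a standard construction then produces, for every $k$, a $(k,k+1)$-minimal instance (given by a system of linear equations unsolvable but locally consistent at width $k$) with no global solution, contradicting bounded width.

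The equivalence $(3)\Leftrightarrow(4)$ is purely algebraic and is essentially what is proved in \cite{Bulatov01:varieties,Maroti08:existence} together with the omitting-types style results used in the algebraic CSP literature. I would proceed as follows: ``nontrivial term in every $\H\S(\zA)$'' excludes any $\H\S$-image being a projection algebra, so by Taylor's theorem one obtains cyclic/weak near-unanimity terms of \emph{some} arity; the extra module-freeness condition in (4) rules out exactly those arities at which WNUs can fail (the ones corresponding to affine quotients in tame congruence theory), yielding WNU terms of every arity $k\ge 3$. Conversely, the existence of WNUs of every arity $\ge 3$ forces all minimal sets of type \textbf{2} (affine) to disappear from $\H\S(\zA)$, which is exactly the module-freeness clause.

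The main obstacle, and the direction where the bulk of the work lies, is $(4)\Rightarrow(2)$: showing that under the module-free/nontrivial-term assumption, every $(2,3)$-minimal instance has a solution. My plan here follows the absorption-theoretic program of Barto--Kozik. First, I would set up the tools: absorbing and central subuniverses, linked/unlinked pairs in $(2,3)$-strategies, and the notion of a Prague strategy (a strengthening of $(2,3)$-minimality stable under the algebraic operations). Then I would show that any $(2,3)$-minimal instance can be tightened to a Prague instance without losing solutions, using that module-free algebras have enough absorption. The core inductive step is: in a Prague instance one either finds a proper absorbing, central, or linked subuniverse at some variable, allowing us to restrict the domain and recurse on a strictly smaller instance, or the instance is so `rigid' that a single assignment respecting the $(2,3)$-strategy is forced to be a solution. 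The delicate point throughout will be to guarantee that when we pass to a subuniverse we preserve $(2,3)$-minimality (after re-propagation) and that the module-freeness condition is inherited by all $\H\S$-images encountered during the recursion; this requires careful use of the loop/linkedness lemmas of \cite{Barto16:collapse,Kozik16:characterization} and the bounded-width result for CSPs over semilattice/majority-rich algebras from \cite{Bulatov04:graph,Bulatov16:restricted}.
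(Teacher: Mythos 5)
The paper gives no proof of Theorem~\ref{the:bounded-width-p1}; it is stated as a citation, delegating entirely to \cite{Barto16:collapse,Bulatov16:restricted,Bulatov04:graph,Kozik16:characterization}, so there is no ``paper's proof'' against which to compare your attempt. Evaluated on its own, your skeleton is the correct one for this circle of results, but one of your implications has a genuine gap.

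In your contrapositive argument for $(1)\Rightarrow(3)$, you assert that if $\zA$ lacks a weak near-unanimity term of some arity $k\ge3$, then some algebra in $\H\S(\zA)$ has ``a nontrivial abelian (module-like) quotient,'' which you then use to encode linear equations. This only covers one of the two ways $(3)$ can fail. Negating $(3)$ splits into two cases: (a) $\zA$ is not Taylor at all, in which case what appears in $\H\S(\zA)$ is a \emph{projection algebra} (trivial term operations, type~\textbf{1}), not a module, and the obstruction you must encode is something like $3$-SAT or NAE-$3$-SAT, not linear equations; and (b) $\zA$ is Taylor but admits a type~\textbf{2} quotient, which is your module case. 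Your invocation of the Mar\'oti--McKenzie correspondence is also slightly off target here: that theorem detects the \emph{existence} of a WNU term of some arity (equivalently, omitting type~\textbf{1}), not failure at a specific arity $k\ge3$; the ``all arities $\ge3$'' characterization is the $\mathrm{SD}(\wedge)$/omitting-types-$\one,\two$ result. You should either add case (a) explicitly with a separate unbounded-width construction, or re-route the implication as $(1)\Rightarrow(4)$ and rely on the biconditional $(3)\Leftrightarrow(4)$, which you already intend to prove separately and which cleanly packages both failure modes.

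The rest of the proposal is sound at the level of a plan: $(2)\Rightarrow(1)$ is indeed immediate from the paper's definition of bounded width, $(3)\Leftrightarrow(4)$ is the Hobby--McKenzie/Mar\'oti--McKenzie algebraic fact you describe, and your sketch of $(4)\Rightarrow(2)$ via absorption, Prague strategies, and the linkedness/loop machinery is the Barto--Kozik route that the cited papers actually take. Be aware, though, that the heaviest load sits in $(4)\Rightarrow(2)$, and what you have there is a roadmap rather than an argument; the paper itself leans entirely on \cite{Barto16:collapse,Kozik16:characterization} for this content rather than reproducing it.
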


\subsection{Omitting semilattice edges and the few subpowers property}%
\label{sec:few}
%
The second type of CSP algorithms can be viewed as a generalization of 
Gaussian elimination, although, it utilizes just one property also used by 
Gaussian elimination: the set of solutions of a system of linear equations
or a CSP has a set of generators of size polynomial in the number of variables. 
The property 
that for every instance $\cP$ of $\CSP(\zA)$ its solution space $\cS$ has 
a set of generators of polynomial size is nontrivial, because there are only 
exponentially many such sets, while,
as is easily seen CSPs may have up to double exponentially 
many different sets of solutions. Formally, an algebra $\zA=(A,F)$ has 
\emph{few subpowers} if for every $n$ there are only exponentially many 
$n$-ary relations in $\Inv(F)$. 

Algebras with few subpowers are well studied and
the CSP over such an algebra has a polynomial-time solution algorithm, see, 
\cite{Berman10:varieties,Idziak10:few}. In particular, such algebras admit
a characterization in terms of the existence of a term operation with special
properties, an \emph{edge} term. We need only a subclass of
algebras with few subpowers that appeared in 
\cite{Bulatov16:restricted,Bulatov20:restricted} and is defined as follows.

A pair of elements $a,b\in\zA$ is said to be a \emph{semilattice edge}
if there is a binary term operation $f$ of $\zA$ such that $f(a,a)=a$ and
$f(a,b)=f(b,a)=f(b,b)=b$, that is, $f$ is a semilattice operation on $\{a,b\}$.
For example, the set $\{0,2\}$ from Example~\ref{exa:congruences}(3)
is a semilattice edge, and the operation $r$ of $\zA_M$ witnesses that.

\begin{prop}[\cite{Bulatov16:restricted,Bulatov20:restricted}]%
\label{pro:semilattice-edges-p1}
If an idempotent algebra $\zA$ has no semilattice edges, it has few subpowers,
and therefore $\CSP(\zA)$ is solvable in polynomial time.
\end{prop}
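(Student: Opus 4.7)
The plan is to break the statement into the two implications it asserts: (a) no semilattice edges implies few subpowers, and (b) few subpowers implies a polynomial-time algorithm for $\CSP(\zA)$. The second implication is not something to reprove here; it is the main result of \cite{Berman10:varieties,Idziak10:few}, which shows that if $\Inv(F)$ contains at most exponentially many $n$-ary relations for every $n$, then one can represent the solution space of any instance of $\CSP(\zA)$ by a compact generating set and extend it variable-by-variable, yielding a polynomial-time decision procedure. So the real content is (a), and I will focus on how to establish it.

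For (a), I would use the characterization of few-subpowers algebras via edge terms: $\zA$ has few subpowers if and only if it has a $k$-edge term for some $k\ge 2$. Thus it suffices to produce such an edge term from the hypothesis that no pair $\{a,b\}\sse A$ admits an idempotent binary term acting as a semilattice on $\{a,b\}$. The natural approach is local-to-global: first analyze the possible behavior of term operations restricted to two-element subsets of $A$, and then glue local data into a global edge term. Since $\zA$ is idempotent and satisfies the tractability conditions of Theorem~\ref{the:algebra-dichot-p1}, every two-element subalgebra $\{a,b\}$ carries a nontrivial term operation; ruling out semilattice operations on $\{a,b\}$ forces the induced clone on $\{a,b\}$ to contain either a majority term or a Mal'tsev term on $\{a,b\}$ (this is the standard trichotomy for idempotent two-element clones).

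From there, the plan is to combine these local witnesses into a global term. I would proceed by induction on $|A|$: fix a nontrivial congruence $\al$ of $\zA$ (or pass to a minimal generating subalgebra when $\zA$ is simple), use the inductive hypothesis on the blocks and on $\zA\fac\al$ to obtain edge terms $e_{\mathrm{blk}}$ and $e_{\mathrm{qt}}$ there, and then iterate Mal'tsev/majority manipulations across $\al$-classes to obtain an edge term on $\zA$ itself; the absence of semilattice edges is exactly what permits such a composition to satisfy edge-term identities rather than degenerate into a term that collapses two unrelated elements into a single one. The arity $k$ will grow with $|A|$, but boundedly, which is all the characterization requires.

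The main obstacle is the combinatorial construction in the inductive step: ensuring that the pieced-together term actually satisfies the edge-term equations $e(y,y,x,x\zd x)=e(y,x,y,x\zd x)=\dots =e(x,x\zd x,y,x)=x$ simultaneously on every two-element set. This is where the no-semilattice-edge hypothesis enters essentially: on a semilattice edge $\{a,b\}$ any binary term is either a projection or identifies $a$ and $b$ asymmetrically, and this asymmetry obstructs the symmetric set of edge identities. Thus the technical heart of the proof is a careful term-rewriting argument showing that, in the absence of such asymmetries, the desired identities can be forced simultaneously — likely via an absorption-style or Mal'tsev-condition argument similar in flavor to the ones used in \cite{Berman10:varieties,Idziak10:few}, but specialized to the structure inherited from the restricted class of edges present in $\zA$.
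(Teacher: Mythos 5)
The paper does not prove this proposition; it is imported verbatim from \cite{Bulatov16:restricted,Bulatov20:restricted}, so there is no in-paper argument against which to line up your sketch. Judged on its own terms, your top-level split is correct: (b) is exactly \cite{Berman10:varieties,Idziak10:few}, and (a) is equivalent to producing a $k$-edge term. But the sketch of (a) has a real gap where the work actually lives.

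First, the local picture you describe is narrower than what is needed. The hypothesis ``no semilattice edges'' in this paper is the condition of Section~\ref{sec:few}: no pair $a,b\in\zA$ admits a binary term $f$ with $f(a,a)=a$, $f(a,b)=f(b,a)=f(b,b)=b$. Such a pair $\{a,b\}$ need not be a subuniverse of $\zA$, so restricting attention to \emph{two-element subalgebras} and invoking Post's trichotomy there does not capture the relevant structure. The actual local objects one analyzes (cf.\ Theorem~\ref{the:connectedness}, imported from \cite{Bulatov20:graph}) are the quotients $\Sg{a,b}\fac\th$ for maximal congruences $\th$ of the generated subalgebra; connectedness of that graph and the classification of edge types into semilattice/majority/affine is itself a nontrivial theorem, not a corollary of two-element Post classes.

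Second, and more seriously, the inductive step ``obtain edge terms on the $\al$-blocks and on $\zA\fac\al$, then glue'' is not a valid reduction as stated. Having a $k$-edge term is a Mal'tsev condition, hence hereditary under $\H$, $\S$, $\P$; it does \emph{not} pass from quotient-plus-blocks back to the whole algebra. (The classic obstruction: an algebra can be built whose quotient by $\al$ is affine and whose $\al$-blocks are trivial, yet the full algebra fails few subpowers because of a semilattice edge \emph{across} blocks --- precisely the thing your hypothesis has to rule out, but your sketch never shows how the hypothesis enters the construction beyond a heuristic remark about ``asymmetry.'') The content of \cite{Bulatov16:restricted,Bulatov20:restricted} is exactly a concrete construction that overcomes this, using the connectedness of $\cG(\zA)$ with only majority/affine thin edges (Section~\ref{sec:few} and Theorem~\ref{the:connectedness}) together with the specific term operations $g',h$ satisfying the majority/minority conditions. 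Without spelling out what replaces your ``iterate Mal'tsev/majority manipulations across $\al$-classes,'' the proposal reduces to restating the theorem in the form ``an edge term exists.'' You should either carry out that combination explicitly or, as the paper does, cite it.
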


Semilattice edges have other useful properties including the following one 
that we use for reducing a CSP to smaller problems.

\begin{lemma}[Proposition~24, \cite{Bulatov20:graph}]%
\label{lem:multiplication-p1}
For any idempotent algebra $\zA$ there is a binary term operation $xy$ 
of $\zA$ (think multiplication) such that $xy$ is a semilattice operation 
on any semilattice edge and for any $a,b\in\zA$ either $ab=a$ or 
$\{a,ab\}$ is a semilattice edge.
\end{lemma}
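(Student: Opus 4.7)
The plan is to build the operation $xy$ as a carefully iterated composition of binary terms witnessing the semilattice structure on each individual semilattice edge, then use finiteness of $\zA$ to force the required absorption identities. The target is a binary term $t$ which (i) acts as a semilattice on every semilattice edge of $\zA$, (ii) satisfies the right-absorption identity $t(x,t(x,y))=t(x,y)$, and (iii) satisfies the twisted identity $t(t(x,y),x)=t(x,y)$. Once such $t$ is in hand, for any $a,b$ set $c=t(a,b)$: if $c\neq a$, then $t(a,a)=a$, $t(c,c)=c$ by idempotence, $t(a,c)=c$ by (ii), and $t(c,a)=c$ by (iii), so $t$ itself witnesses $\{a,c\}$ as a semilattice edge.

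First I would enumerate the finitely many semilattice edges $\{a_1,b_1\}\zd\{a_N,b_N\}$ of $\zA$ (with $b_i$ the top) and fix for each a witnessing binary term $t_i$. I would then combine the $t_i$ into a single binary term $s$ which is a semilattice on each edge simultaneously. The key stability fact enabling this combination is that if $f$ is a semilattice on $\{a,b\}$ with top $b$, then any nested right-iteration $f(x,g(x,y))$ with $g$ a binary term satisfying $g(a,a)=a$, $g(a,b),g(b,a),g(b,b)\in\{a,b\}$ remains a semilattice on $\{a,b\}$, and similarly for left-iteration. Proceeding by induction on $i$, the candidate $s$ after processing edges $1\zd i-1$ is replaced by a suitable composition involving $t_i$ that keeps $s$ semilattice on every previously handled edge while adding correct behavior on $\{a_i,b_i\}$; a direct check on the four points $(a_j,a_j),(a_j,b_j),(b_j,a_j),(b_j,b_j)$ of each edge shows the property survives.

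Next, starting from such an $s$, I would iterate on the right: set $s^{(1)}=s$ and $s^{(n+1)}(x,y)=s(x,s^{(n)}(x,y))$. Since there are only finitely many binary operations on $\zA$, the sequence $(s^{(n)})$ stabilizes to a term $t$ satisfying $t(x,t(x,y))=t(x,y)$, and a straightforward induction (using $s(b_i,b_i)=b_i$, $s(a_i,b_i)=b_i$, $s(b_i,a_i)=b_i$) shows $t$ remains a semilattice on every edge. This gives properties (i) and (ii).

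The main obstacle is securing the twisted identity (iii) without losing (i) or (ii). Naively iterating on the left via $u^{(n+1)}(x,y)=u^{(n)}(t(x,y),x)$ stabilizes by finiteness to a term satisfying $u(u(x,y),x)=u(x,y)$, but may break right-absorption. To get both at once I would work inside the finite semigroup of binary term operations of $\zA$ under the composition $(f,g)\mapsto h$ with $h(x,y)=f(x,g(x,y))$ (and a parallel composition encoding the twisted move $g\mapsto g(g(x,y),x)$); in a finite semigroup every element has an idempotent power, and by a Ramsey/pigeonhole argument applied jointly to both compositions one can extract a single binary term that is a fixed point of both moves. Verifying that this extraction does not leave the set of terms which are semilattices on all edges — a set closed under the relevant compositions by the stability fact above — completes the construction and yields the required operation $xy$.
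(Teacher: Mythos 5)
The paper does not reprove this lemma; it is imported as Proposition~24 of \cite{Bulatov20:graph}, so there is no in-paper proof to compare against, and the review is of your argument on its own. Your construction has two real gaps. First, the ``key stability fact'' used to merge the witnesses $t_i$ is false as stated. Let $f$ be a semilattice operation on $\{a,b\}$ with top $b$ and let $g$ be any binary term with $g(a,a)=a$ and $g(a,b),g(b,a),g(b,b)\in\{a,b\}$, and put $h(x,y)=f(x,g(x,y))$. Then $h(b,a)=f(b,g(b,a))=b$ always, but $h(a,b)=f(a,g(a,b))=f(a,a)=a$ whenever $g(a,b)=a$; so $h$ is not even commutative on $\{a,b\}$, let alone a semilattice. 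What the composition actually needs is $g(a,b)=b$, that is, $g$ must fix the top of every previously processed edge, and there is no reason the witness $t_i$ for a new, unrelated edge should do this. Worse, the hypothesis $t_i(a_j,b_j),t_i(b_j,a_j),t_i(b_j,b_j)\in\{a_j,b_j\}$ is itself unjustified: $\{a_j,b_j\}$ is closed under its own witness $t_j$ but is in general not a subalgebra of $\zA$, so $t_i$ can send it outside $\{a_j,b_j\}$ entirely, and then the ``direct check on the four points'' is not even well posed. The inductive merging of the $t_i$ therefore does not get started.

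Second, the extraction of a single term satisfying both $t(x,t(x,y))=t(x,y)$ and $t(t(x,y),x)=t(x,y)$ is only gestured at (``a Ramsey/pigeonhole argument applied jointly to both compositions''). Right-iteration $s\mapsto s(x,s(x,y))$ and the twisted left-iteration $s\mapsto s(s(x,y),x)$ are two different self-maps of the finite set of binary term operations; each separately admits an idempotent power, but they need not commute, and a common fixed point of both that additionally remains a semilattice on every edge is exactly what must be constructed, not assumed via pigeonhole. The cited Proposition~24 of \cite{Bulatov20:graph} relies on structural facts about the coloured graph $\cG(\zA)$ and the interaction among semilattice, majority and affine edges, and your purely syntactic iteration scheme does not supply a substitute for that machinery.
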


Note that any semilattice operation satisfies the conditions of 
Lemma~\ref{lem:multiplication-p1}. The operation $r$ of the algebra
$\zA_M$ from Example~\ref{exa:congruences}(3) is not a semilattice 
operation (for instance, it does not satisfy the equation $r(x,y)=r(y,x)$),
but it satisfies the conditions of Lemma~\ref{lem:multiplication-p1}.

\section{Centralizers and decomposition of CSPs}\label{sec:centralizer}

In this section we introduce an alternative definition of the centralizer operator 
on congruence lattices studied in commutator theory, and study its 
properties and its connection to decompositions of CSPs.
Unlike the vast majority of the literature on the algebraic approach to the CSP
we use not only term operations, but also polynomial operations of an algebra.
It should be noted however that the first to use polynomials for CSP algorithms
was Maroti in \cite{Maroti10:tree}. We make use of some ideas from that paper
in the next section.

Let $f(\vc xk,\vc y\ell)$ be a $k+\ell$-ary term operation of an algebra 
$\zA=(A,F)$ and
$\vc b\ell\in\zA$. The operation $g(\vc xk)=f(\vc xk,\vc b\ell)$ is called a
\emph{polynomial} of $\zA$. The name `polynomial' refers to usual 
polynomials. Indeed, if $\zA$ is a ring, its polynomials as just defined 
are the same as polynomials in the regular sense. A polynomial that 
depends on only one variable, i.e.\ $k=1$, is said to be a \emph{unary} 
polynomial.

While polynomials of $\zA$ do not have to be polymorphisms of relations 
from $\Inv(F)$, congruences and unary polynomials are in a special 
relationship. More precisely, it is a well known fact that an equivalence 
relation over $\zA$ is a congruence if and only if it is preserved by all 
the unary polynomials of $\zA$. If $\al$ is a congruence, and $f$ is 
a unary polynomial, by $f(\al)$ we denote the set of pairs 
$\{(f(a),f(b))\mid (a,b)\in\al\}$.

\begin{example}\label{exa:more-small}
The unary polynomials of the algebra $\zA_M$ from 
Example~\ref{exa:congruences}(3) include the following unary operations
(these are the polynomials we will use, there are more unary polynomials of 
$\zA_M$):\\
$h_1(x)=r(x,0)=r(x,1)$, such that $h_1(0)=h_1(2)=0, h_1(1)=1$;\\
$h_2(x)=r(2,x)$, such that $h_2(0)=h_2(1)=0$, $h_2(2)=2$;\\
$h_3(x)=r(0,x)=0$.

The lattice $\Con(\zA_M)$ has 3 congruences: $\zz,\th,\zo$ (see 
Example~\ref{exa:congruences}(3)). As is easily seen,
$h_1(\th)\not\sse\zz$, $h_2(\zo)\not\sse\th$, but $h_1(\zo)\sse\th$, 
$h_2(\th)\sse\zz$, $h_3(\zo)\sse\zz$. 
\EEX
\end{example}

For an algebra $\zA$, a term operation $f(x,\vc yk)$, and $\ba\in\zA^k$, let
$f^\ba(x)=f(x,\ba)$. Let $\al,\beta\in\Con(\zA)$, $\al\le\beta$, 
and let $(\al:\beta)\sse\zA^2$ denote the greatest congruence such that  
for any term operation $f(x,\vc yk)$ and any $\ba,\bb\in\zA^k$
such that $(\ba[i],\bb[i])\in(\al:\beta)$, it holds that
$f^\ba(\beta)\sse\al$ if and only if $f^\bb(\beta)\sse\al$. Polynomials 
of the form $f^\ba,f^\bb$ are often called twin polynomials.

The congruence $(\al:\beta)$ will be called the \emph{centralizer} of 
$\al,\beta$\footnote{Traditionally, the centralizer of two congruences
is defined in a different way, see, e.g.\ \cite{Freese87:commutator}. 
Congruence $(\al:\beta)$ appeared in \cite{Hobby88:structure}, 
but completely inconsequentially, they did not study it at all, and its 
relation to the standard notion of centralizer remained unknown. 
We used the current definition in \cite{Bulatov17:dichotomy} and called 
it quasi-centralizer, again, not completely aware of its connection to 
the standard centralizer. 
Later Willard \cite{Willard19:centralizer} showed that the two concepts are 
equivalent, see \cite[Proposition~33]{Bulatov20:algebraic} for a proof, 
and we use `centralizer' here rather than `quasi-centralizer'.}. 
The following statement is one of the key ingredients of the algorithm.

\begin{lemma}[Corollary~37 \cite{Bulatov20:algebraic}]%
\label{lem:centralizer-multiplication}
Let $(\al:\beta)=\zo_\zA$, $a,b,c\in\zA$ and $b\eqc\beta c$.
Then $(ab,ac)\in\al$, where multiplication is as in 
Lemma~\ref{lem:multiplication-p1}.
\end{lemma}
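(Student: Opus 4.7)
My plan is to apply the defining biconditional of the centralizer $(\al:\beta)$ to the binary term giving multiplication from Lemma~\ref{lem:multiplication-p1}. Let $m(x,y)=yx$, so that the unary polynomial $m^a(x)=ax$ carries the pair $(b,c)\in\beta$ to $(ab,ac)$. The desired conclusion $(ab,ac)\in\al$ is a specialization of the stronger statement $m^a(\beta)\sse\al$, so I would aim for the stronger statement.

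Since $(\al:\beta)=\zo_\zA$, every pair of elements of $\zA$ lies in $(\al:\beta)$. The biconditional in the centralizer definition, applied to $m$ with parameter tuples $(a)$ and $(d)$, then gives
\[
m^a(\beta)\sse\al\quad\Longleftrightarrow\quad m^d(\beta)\sse\al\qquad\text{for every }d\in\zA.
\]
Hence it suffices to exhibit a single $d\in\zA$ for which left multiplication by $d$ sends every $\beta$-pair into an $\al$-pair.

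To construct such a $d$, I would pass to the quotient $\zA/\al$, where the hypothesis becomes $(\zz:\beta/\al)=\zo$, that is, $\beta/\al$ is abelian. Standard commutator theory (available since $\zA$ satisfies the conditions of the Dichotomy Conjecture) endows every $\beta/\al$-block $B$ with an abelian group structure on which each term operation restricts to an affine map. In particular, the multiplication $m$ on $B\times B$ has the form $m(x,y)=\phi(x)+\psi(y)$ with $\phi+\psi=\id$ by idempotency of $\zA$. The defining identities $m(x,x)=x$, $m(x,y)=m(y,x)=y$ of a semilattice edge $\{x,y\}\sse B$ then force, after subtraction, $2(y-x)=(y-x)$ in the block's group, giving $x=y$. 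So no nontrivial semilattice edge of $\zA/\al$ lies inside a $\beta/\al$-block. Combined with Lemma~\ref{lem:multiplication-p1}, which says each $L_u(x)=ux$ is idempotent as a function of $x$ (its image is exactly its set of fixed points), a short case analysis on whether $db=d$ and $dc=d$ then produces a witness $d$ with $m^d(\beta)\sse\al$.

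The main obstacle is precisely this structural step: transferring the abelianness of $\beta$ over $\al$ into a description of the multiplication on $\beta/\al$-blocks and then combining it with Lemma~\ref{lem:multiplication-p1} to select $d$. Once $d$ is in hand, the centralizer biconditional gives $m^a(\beta)\sse\al$ for free, and hence $(ab,ac)\in\al$.
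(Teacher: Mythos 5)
The paper does not give its own proof here; it cites Corollary~37 of \cite{Bulatov20:algebraic}, so I review your argument on its own terms. Your structural steps are correct and useful: from $(\al:\beta)=\zo_\zA$ the congruence $\beta/\al$ of $\zA/\al$ is abelian, and since the algebra omits type~\one\ the $\beta/\al$-blocks are affine; your computation with $m(x,y)=\phi(x)+\psi(y)$ and $\phi+\psi=\id$ then correctly shows that a semilattice edge lying inside a single $\beta/\al$-block is trivial. (This reproves Corollary~\ref{cor:semilattice-in-centralizer} directly, which is legitimate since the paper derives that corollary from the present lemma, not vice versa.)

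The gap is the last step. With $m(x,y)=yx$ the biconditional in the paper's definition of $(\al:\beta)$ does give $m^{a}(\beta)\sse\al$ iff $m^{d}(\beta)\sse\al$ for every $d$. But $m^{d}(\beta)\sse\al$ is the statement that $dx\eqc{\al}dy$ for \emph{every} $\beta$-pair $(x,y)$, not merely for $(b,c)$. Your ``case analysis on whether $db=d$ and $dc=d$'' addresses only that single pair, and the idempotency of $L_d$ combined with ``no semilattice edges inside a block'' only forces $L_d$ to be constant on the one $\beta/\al$-block that contains $d$ itself. On any other $\beta/\al$-block $B$, the map $L_d$ sends $B$ into a block $B'$ consisting of fixed points of $L_d$, and nothing in either the semilattice picture or the affine picture forbids $\mathrm{Fix}(L_d)\cap B'$ from having several elements (an idempotent affine map may have a large fixed set). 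So finding a single global witness $d$ with $m^{d}(\beta)\sse\al$ is essentially as hard as the lemma itself, and the biconditional does not reduce anything.

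The repair is to run the argument through the Freese--McKenzie term-condition form of the centralizer, which the paper (footnote, citing Willard) asserts is equivalent to the biconditional definition; the term condition is genuinely pair-by-pair. Taking the term $g(s,t)=st$, the $\zo_\zA$-pair $(b,a)$ in the first argument, and the $\beta$-pair $(b,c)$ as parameters, one needs only the seed $g(b,b)\eqc{\al}g(b,c)$, i.e.\ $b\eqc{\al}bc$. This seed is exactly what your step~(5) supplies: since $bc\eqc{\beta}bb=b$, the elements $b$ and $bc$ lie in one $\beta/\al$-block, and $\{b,bc\}$ is either trivial or a semilattice edge there, hence collapses modulo $\al$. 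The term condition then transports $b\eqc{\al}bc$ to $ab\eqc{\al}ac$ for every $a$. In short, your structural work produces exactly the right seed, but the bridge to the conclusion must go through the term condition, not the biconditional.
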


\begin{example}\label{exa:more3-small}
In the algebra $\zA_M$, see Example~\ref{exa:congruences}(3), the 
centralizer
acts as follows: $(\zz:\th)=\zo$ and $(\th:\zo)=\th$. We start with the
second centralizer. Since every polynomial preserves congruences, for any term 
operation $h(x,\vc yk)$ and any $\ba,\bb\in\zA_M^k$ such that 
$(\ba[i],\bb[i])\in\th$ for $i\in[k]$, we have $(h^\ba(x),h^\bb(x))\in\th$ for 
any $x$. This of course implies $(\th:\zo)\ge\th$. On the other hand,
let $f(x,y)=r(y,x)$. Then 
\begin{eqnarray*}
& f^0(x)=f(x,0)=r(0,x)=h_3(x), &\\
& f^2(x)=f(x,2)=r(2,x)=h_2(x), &
\end{eqnarray*}
and $f^0(\zo)\sse\th$, while $f^2(\zo)\not\sse\th$. This means that 
$(0,2)\not\in(\th:\zo)$ and so $(\th:\zo)\subset\zo$.
For the first centralizer it suffices to demonstrate that the condition in
the definition of centralizer is satisfied for pairs of twin polynomials of
the form $(r(a,x),r(b,x))$, $(r(x,a),r(x,b))$, $(t(x,a_1,a_2),t(x,b_1,b_2))$,
$(t(a_1,x,a_2),t(b_1,x,b_2))$, $(t(a_1,a_2,x),\lb t(b_1,b_2,x))$ for 
$a,b,a_1,a_2,b_1,b_2\in\{0,1,2\}$, which can be 
verified directly.

Interestingly, Lemma~\ref{lem:centralizer-multiplication} implies that if we 
change the operation $r$ in just one point, it has a profound
effect on the centralizer $(\zz:\th)$. Let $\zA_N$ be the same 
algebra as $\zA_M$ with operations $r',t'$ defined in the same way as $r,t$, 
except $r'(2,1)=1$ replacing the value $r(2,1)=0$. In this case $\{1,2\}$
is also a semilattice edge, see 
Fig.~\ref{fig:small-algebra}(b). Let again $f(x,y)=r'(y,x)$ and $a=0,b=2$.
This time we have 
\begin{eqnarray*}
& f^0(x)=f(x,0)=r'(0,x)=h'_3(x), &\\
& f^2(x)=f(x,2)=r'(2,x)=h'_2(x), &
\end{eqnarray*}
where $h'_3(x)=0$ for all $x\in\{0,1,2\}$ and $h'_2(0)=0,h'_2(1)=1$
showing that $f^0(\th)\sse\zz$, while $f^2(\th)\not\sse\zz$.
\EEX
\end{example}

Fig.~\ref{fig:central-p1}(a),(b) shows the effect of large centralizers 
$(\al:\beta)$ on the structure of algebra $\zA$, which is a generalization 
of the phenomena observed in Example~\ref{exa:more3-small}. Dots there 
represent $\al$-blocks (assume $\al$ is the 
equality relation), ovals represent $\beta$-blocks, let they be $B$ and $C$,
and such that there is at least one semilattice edge between $B$ and $C$.
If $(\al:\beta)$ is the full relation, Lemmas~\ref{lem:multiplication-p1} 
and~\ref{lem:centralizer-multiplication} imply that for any $a\in B$ and
any $b,c\in C$ we have $ab=ac$, and so $ab$ is the only element of $C$
such that $\{a,ab\}$ is a semilattice edge (represented by arrows). In other 
words, we have a
mapping from $B$ to $C$ that can also be shown injective. We will use
this mapping to lift any solution with a value from $B$ to a solution with 
a value from $C$.

\begin{figure}[ht]
\centerline{\includegraphics[totalheight=3cm,keepaspectratio]{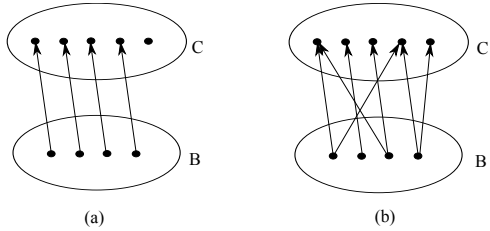}}
\caption{(a) $(\al:\beta)$ is the full relation; (b) $(\al:\beta)$ 
is not the full relation}\label{fig:central-p1}
\end{figure}

Finally, we prove an easy corollary from Lemma~\ref{lem:centralizer-multiplication}.

\begin{corollary}\label{cor:semilattice-in-centralizer}
Let $\al,\beta\in\Con(\zA)$, $\al\le\beta$, be such that $(\al:\beta)\ge\beta$.
Then for every $\beta$-block $B$ if $ab$ is a semilattice edge and $a,b\in B$, then
$a\eqc\al b$. 
\end{corollary}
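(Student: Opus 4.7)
The plan is to apply the centralizer property of $(\al:\beta)$ directly to the pair $(a,b)$, using the multiplication $xy$ from Lemma~\ref{lem:multiplication-p1}. First I would use that $\{a,b\}$ is a semilattice edge to conclude $xy$ is a semilattice operation on $\{a,b\}$; relabelling if necessary, take $b$ to be the absorbing element, so $aa=a$, $bb=b$, and $ab=ba=b$. Because $a,b$ lie in the same $\beta$-block $B$, $a\eqc\beta b$, and the hypothesis $(\al:\beta)\ge\beta$ gives $(a,b)\in(\al:\beta)$.

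Next I would invoke the term-condition formulation of the centralizer, equivalent to the paper's definition via Willard's theorem cited in the footnote on $(\al:\beta)$: for any term $t(x,y)$, any $(u,v)\in(\al:\beta)$, and any $(c,d)\in\beta$, $t(u,c)\eqc\al t(u,d)$ iff $t(v,c)\eqc\al t(v,d)$. (The ``iff'' holds because $(\al:\beta)$ is a symmetric congruence, so the standard one-sided implication applies to both $(u,v)$ and $(v,u)$.) Taking $t(x,y)=xy$ and $(u,v)=(c,d)=(a,b)$, the four evaluations give $t(u,c)=aa=a$, $t(u,d)=ab=b$, $t(v,c)=ba=b$, and $t(v,d)=bb=b$. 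Hence $a\eqc\al b$ iff $b\eqc\al b$; since the right-hand side is trivial, $a\eqc\al b$, which is the desired conclusion.

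The main obstacle is the passage from the paper's global formulation of $(\al:\beta)$ (asserting $f^{\ba}(\beta)\sse\al$ iff $f^{\bb}(\beta)\sse\al$ holistically) to the pointwise term-condition form used in the one-line calculation above. Willard's equivalence makes this transparent. A self-contained alternative would be to restrict to the subalgebra $B$ (a $\beta$-block is a subalgebra by idempotence), set $\al_B=\al\cap B^2$ and $\zo_B=B\times B$, show $(\al_B:\zo_B)_B=\zo_B$ in $B$ by pushing the centralizer property down (all relevant values stay in $B$, where $\al$ and $\al_B$ coincide), and then apply Lemma~\ref{lem:centralizer-multiplication} inside $B$ with the lemma's third element chosen as $a$, obtaining $(ab,aa)=(b,a)\in\al_B\sse\al$.
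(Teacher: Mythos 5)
Your first argument is correct and is essentially the same calculation the paper performs, only spelled out in terms of the pointwise term condition rather than by citing Lemma~\ref{lem:centralizer-multiplication}. The paper's one-line proof sets $ab=ba=b$, notes $a\eqc{(\al:\beta)}b$, and invokes Lemma~\ref{lem:centralizer-multiplication} to get $aa\eqc\al bb$, i.e.\ $a\eqc\al b$; you instead appeal directly to Willard's equivalence (cited in the paper's footnote and proved in \cite[Proposition~33]{Bulatov20:algebraic}) between the paper's polynomial-based definition of $(\al:\beta)$ and the classical term condition, and then do the four-term computation $aa,ab,ba,bb$ by hand. This is a legitimate unpacking of the same idea and, if anything, is more transparent, since Lemma~\ref{lem:centralizer-multiplication} as reproduced in this paper is stated under the stronger hypothesis $(\al:\beta)=\zo_\zA$, whereas the corollary only assumes $(\al:\beta)\ge\beta$; your route via the term condition makes it clear that the local hypothesis $(a,b)\in(\al:\beta)$ is all that is needed.

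Your proposed self-contained alternative, however, has a gap. You want to pass to the subalgebra $B$, set $\al_B=\al\cap B^2$, $\zo_B=B\tm B$, and claim $(\al_B:\zo_B)_B=\zo_B$ inside $B$ so that Lemma~\ref{lem:centralizer-multiplication} applies in $B$. This claim does not follow automatically from $(\al:\beta)\ge\beta$ in $\zA$. The paper's definition of $(\al:\beta)$ is a global condition: for twin polynomials $f^{\ba},f^{\bb}$ with parameters related by $(\al:\beta)$, the condition $f^{\ba}(\beta)\sse\al$ is equivalent to $f^{\bb}(\beta)\sse\al$, where the containment ranges over all $\beta$-pairs in $\zA$. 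When you restrict to $B$, the relevant condition in $B$ is $f^{\ba}(\zo_B)\sse\al_B$, which is strictly weaker (it only constrains behaviour on $B$). One implication transfers ($f^{\ba}(\beta)\sse\al\Rightarrow f^{\ba}(\zo_B)\sse\al_B$), but the converse does not: one could have $f^{\ba}(\zo_B)\sse\al_B$ while $f^{\ba}$ misbehaves on some other $\beta$-block, and then the ambient equivalence says nothing about $f^{\bb}$ on $B$. So ``pushing the centralizer property down'' to conclude $(\al_B:\zo_B)_B=\zo_B$ requires an argument you have not supplied, and without it the alternative route is unjustified. The safest remedy is to drop the alternative and keep the term-condition argument.
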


\begin{proof}
Let $a,b\in\zB$, $a\not\eqc\al b$, form a semilattice edge, that is, $ab=ba=b$. However,
since $a\eqc{(\al:\beta)}b$, by Lemma~\ref{lem:centralizer-multiplication}
it must hold $aa\eqc\al bb$, a contradiction.
\end{proof}

\section{The algorithm}\label{sec:algorithm1}

In this section we introduce the reductions used in the algorithm, and then 
explain the algorithm itself. The reductions heavily use the algebraic structure 
of the domains of an instance, and the structure of the instance itself.

\subsection{Decomposition of CSPs}

We have seen in the previous section that large centralizers impose strong 
restrictions on the structure of an algebra. We start this section showing that
small centralizers imply certain properties of CSP instances, as well.

Let $\rel$ be a binary relation, a subdirect product of $\zA\tm\zB$, and 
$\al\in\Con(\zA)$, $\gm\in\Con(\zB)$. Relation $\rel$ is said to be 
\emph{$\al\gm$-aligned} 
if, for any $(a,c),(b,d)\in\rel$, $(a,b)\in\al$ if and only if $(c,d)\in\gm$. 
This means that if $\vc Ak$ are the $\al$-blocks of $\zA$, then there are
also $k$ $\gm$-blocks of $\zB$ and they can be labeled $\vc Bk$ in such 
a way that 
$$
\rel=(\rel\cap(A_1\tm B_1))\cup\dots\cup(\rel\cap(A_k\tm B_k)).
$$

This definition provides a way to decompose CSP instances.
Let $\cP=(V,\cC)$ be a (2,3)-minimal instance from $\CSP(\zA)$. We will always 
assume that a (2,3)-consistent or (2,3)-minimal instance has a constraint 
$C^X=\ang{X,\rel^X=\cS_X}$ for every $X\sse V$, $|X|\le2$. So, 
$\cC$ contains a constraint $C^{vw}=\ang{(v,w),\rel^{vw}}$ for 
every $v,w\in V$, and these relations form a (2,3)-strategy for $\cP$. 
Recall that $\zA_v$ denotes the domain of $v\in V$. Let $W\sse V$ and 
$\al_v\in\Con(\zA_v)$, $v\in W$, be such that for any $v,w\in W$ the
relation $\rel^{vw}$ is $\al_v\al_w$-aligned. The set $W$ is then
called a \emph{strand} of $\cP$. We will also say that $\cP_W$ is
\emph{$\ov\al$-aligned}.

For a strand $W$ and congruences $\al_v$ as above there is 
a one-to-one correspondence between $\al_v$- and $\al_w$-blocks of 
$\zA_v$ and $\zA_w$, $v,w\in W$. Moreover, by (2,3)-minimality these 
correspondences are consistent,
that is, if $u,v,w\in W$ and $B_u,B_v,B_w$ are $\al_u$-, $\al_v$-
and $\al_w$-blocks, respectively, such that 
$\rel^{uv}\cap(B_u\tm B_v)\ne\eps$ and 
$\rel^{vw}\cap(B_v\tm B_w)\ne\eps$, then 
$\rel^{uw}\cap(B_u\tm B_w)\ne\eps$. This means that $\cP_W$
can be split into several instances, whose domains are $\al_v$-blocks.

\begin{lemma}\label{lem:central-decomposition-p1}
Let $\cP,W,\al_v$ for each $v\in W$, be as above. Then $\cP_W$ can be 
decomposed into a collection of instances $\cP_1\zd\cP_k$, $k$ constant, 
$\cP_i=(W,\cC_i)$ such that 
every solution of $\cP_W$ is a solution of one of the $\cP_i$ and for every 
$v\in W$ its domain in $\cP_i$ is an $\al_v$-block.
\end{lemma}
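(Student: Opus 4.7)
The plan is to fix a reference variable and use the alignment condition to propagate a block-wise decomposition to the rest of $W$. Pick any $v_0 \in W$ and list the $\al_{v_0}$-blocks of $\zA_{v_0}$ as $B_1^{v_0}, \ldots, B_k^{v_0}$. Note $k \le |\zA_{v_0}|$, which is bounded by a constant depending only on the algebra $\zA$ (or the class $\cA$), not on $|V|$. For each $i \in [k]$ and each $v \in W$, define $B_i^v$ to be the unique $\al_v$-block such that $\rel^{v_0 v} \cap (B_i^{v_0} \tm B_i^v) \ne \eps$; uniqueness (and existence) is immediate from $\al_{v_0}\al_v$-alignment of $\rel^{v_0 v}$ together with the fact that $\rel^{v_0 v}$ is a subdirect product of $\zA_{v_0} \tm \zA_v$. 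Define $\cP_i = (W, \cC_i)$ by restricting the domain of each $v \in W$ to $B_i^v$ and restricting every constraint in $\cC_W$ to these new domains.

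The main technical step will be to verify that this family of block assignments is \emph{coherent}: for all $u, w \in W$, $\rel^{uw} \cap (B_i^u \tm B_j^w) \ne \eps$ if and only if $i = j$. I will obtain the nonemptiness by picking $(a, b) \in \rel^{v_0 u} \cap (B_i^{v_0} \tm B_i^u)$, and applying (2,3)-consistency at the triple $\{v_0, u, w\}$ to get $c \in \zA_w$ with $(a, c) \in \rel^{v_0 w}$ and $(b, c) \in \rel^{uw}$; alignment of $\rel^{v_0 w}$ forces $c \in B_i^w$. The ``if and only if'' then follows from alignment of $\rel^{uw}$: if both $B_i^u \tm B_i^w$ and $B_i^u \tm B_j^w$ met $\rel^{uw}$, one would produce tuples $(b_1, c_1), (b_2, c_2) \in \rel^{uw}$ with $b_1 \eqc{\al_u} b_2$ but $c_1 \neq^{\al_w} c_2$, contradicting the alignment. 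This coherence both shows that $\cP_i$ is well-defined (each binary projection of a constraint relation is contained in the appropriate product of blocks) and that the $\cP_i$'s are pairwise disjoint at the level of solutions.

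Finally, given a solution $\vf$ of $\cP_W$, let $i$ be the index with $\vf(v_0) \in B_i^{v_0}$. For any $v \in W$ the tuple $(\vf(v_0), \vf(v)) \in \rel^{v_0 v}$ forces $\vf(v) \in B_i^v$ by alignment, so $\vf$ is a solution of $\cP_i$; and restricting the constraints of $\cP_W$ to the $B_i^v$-domains clearly leaves $\vf$ valid, so $\cP_i$ is a well-posed CSP with the correct domain structure. The main obstacle I anticipate is just the bookkeeping of the coherence argument above, which is really a triple-wise consistency propagation — everything else reduces to applying the alignment hypothesis and the assumption $|X|\le 2$ cases of (2,3)-minimality. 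The bound $k \le \max_{v} |\zA_v|$ then gives the claimed constant number of parts.
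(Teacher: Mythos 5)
Your proof is correct and follows exactly the route the paper sketches in the paragraph immediately preceding the lemma: fix a reference variable, transport its $\al_{v_0}$-blocks to every $v\in W$ via the $\al_{v_0}\al_v$-alignment of $\rel^{v_0 v}$, and use (2,3)-consistency at each triple $\{v_0,u,w\}$ plus the $\al_u\al_w$-alignment of $\rel^{uw}$ to show the resulting block labelling is coherent (the paper itself leaves this as the remark ``by (2,3)-minimality these correspondences are consistent''). The only cosmetic difference is that you invoke (2,3)-consistency where the paper assumes the stronger (2,3)-minimality; this does not affect the argument.
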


\begin{example}\label{exa:more4-small}
Let $\zA_M$ be the algebra introduced in Example~\ref{exa:congruences}(3),
and $\rel$ is the following ternary relation over $\zA_M$ invariant under 
$r,t$, given by
$$
\rel=\left(\begin{array}{cccccccccc}
0&0&1&1&0&0&1&1&2&2\\ 0&1&1&0&0&1&1&0&2&2\\ 
0&0&0&0&1&1&1&1&0&2 
\end{array}\right),
$$
where triples, the elements of the relation are written vertically. 
Consider the following simple CSP instance from  $\CSP(\zA_M)$:
$\cP=(V=\{v_1,v_2,v_3,v_4,v_5\}, \{C^1=\ang{\bs_1=(v_1,v_2,v_3),\rel_1},
C^2=\ang{\bs_2=(v_2,v_4,v_5),\rel_2}\}$, where $\rel_1=\rel_2=\rel$. 
To make the instance (2,3)-minimal we 
run the appropriate local propagation algorithm on it. First, such an algorithm
adds new binary constraints 
$C^{v_iv_j}=\ang{(v_i,v_j),\rel^{v_iv_j}}$ for $i,j\in[5]$ starting
with $\rel^{v_iv_j}=\zA_M\tm\zA_M$. It then iteratively removes pairs from 
these relations that do not satisfy the (2,3)-minimality condition. 
Similarly, it tightens the original constraint relations if they violate the
conditions of (2,3)-minimality. It is not hard to see that
this algorithm does not change constraints $C^1,C^2$, and that the new 
binary relations are as follows: 
$\rel^{v_1v_2}=\rel^{v_2v_4}=\rel^{v_1v_4}=\th$,
$\rel^{v_1v_3}=\rel^{v_2v_3}=\rel^{v_2v_5}=\rel^{v_4v_5}=\relo$, 
and $\rel^{v_1v_5}=\rel^{v_3v_4}=\rel^{v_3v_5}=\rela$, where
\begin{eqnarray*}
\relo = \pr_{13}\rel &=& \left(\begin{array}{cccccc}
0&0&1&1&2&2\\ 0&1&0&1&0&2
\end{array}\right),\\
\rela &=& \left(\begin{array}{ccccccc}
0&0&1&1&0&2&2\\ 0&1&0&1&2&0&2
\end{array}\right).
\end{eqnarray*}
In order to distinguish 
elements and congruences of domains belonging to different variables let the 
domain of $v_i$ be denoted by $\zA_i$, its elements by $0_i,1_i,2_i$, and 
the congruences of $\zA_i$ by $\zz_i,\th_i,\zo_i$.

\begin{figure}[ht]
\centerline{\includegraphics[totalheight=3cm,keepaspectratio]{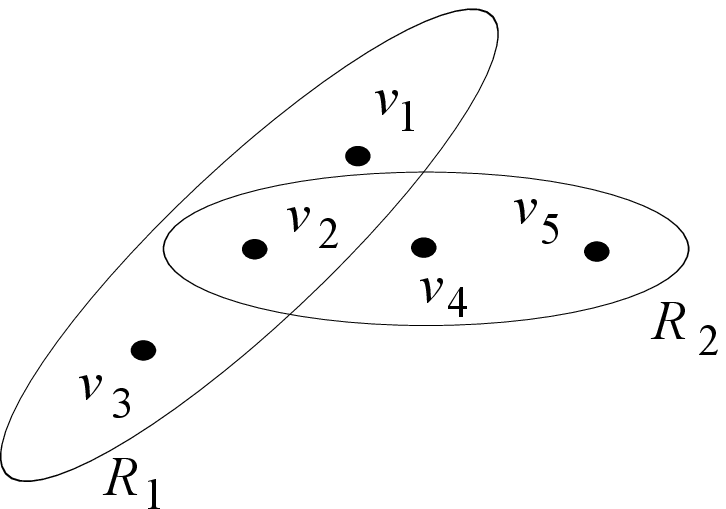}}
\caption{Instance $\cP$ from Example~\ref{exa:more4-small}}%
\label{fig:instance}
\end{figure}

Let $W=\{v_1,v_2,v_4\}$, 
$\al_i=\th_i$ for $v_i\in W$. Then, since $\rel^{v_1v_2}=\rel^{v_2v_4}
=\rel^{v_1v_4}=\th$ and therefore are $\al_i\al_j$-aligned, $i,j\in\{1,2,4\}$,
$W$ is a strand of $\cP$.  Therefore the instance
$\cP_W=(\{v_1,v_2,v_4\},\{C^1_W=\ang{(v_1,v_2),\pr_{v_1v_2}\rel_1},
C^2_W=\ang{(v_2,v_4),\pr_{v_2v_4}\rel_2}\})$ can be decomposed into
a disjoint union of two instances
\begin{eqnarray*}
\cP_1 &=& ( \{v_1,v_2,v_4\},\{\ang{(v_1,v_2),\relo_1},
\ang{(v_2,v_4),\relo_2}),\\
\cP_2 &=& ( \{v_1,v_2,v_4\},\{\ang{(v_1,v_2),\rela_1},
\ang{(v_2,v_4),\rela_2}),
\end{eqnarray*}
where $\relo_1=\{0_1,1_1\}\tm\{0_2,1_2\},
\relo_2=\{0_2,1_2\}\tm\{0_4,1_4\}$, $\rela_1=\{(2_1,2_2)\}, 
\rela_2=\{(2_2,2_4)\}$.
\EEX
\end{example}

\subsection{Irreducibility}
In order to formulate the algorithm properly we need one more transformation 
of algebras. An algebra $\zA$ is said to be 
\emph{subdirectly irreducible}
if the intersection of all its nontrivial (different from the equality relation)
congruences is nontrivial. This smallest nontrivial congruence $\mu_\zA$ is called 
the \emph{monolith} of $\zA$, see Fig.~\ref{fig:congr}(b). For instance,
the algebra $\zA_M$ from Example~\ref{exa:congruences}(3) is subdirectly
irreducible, because it has the smallest nontrivial congruence, $\th$.
It is a folklore observation that any CSP instance can be transformed in 
polynomial time to an instance, in which the domain of every variable 
is a subdirectly irreducible algebra. We will assume this property of all the
instances we consider.

\subsection{Block-minimality}

Using Lemma~\ref{lem:central-decomposition-p1} we introduce a new type of 
consistency of a CSP instance, block-minimality, which will be crucial 
for our algorithm. In a certain sense it is similar to the standard 
local consistency notions, as it also defined through a family of relations that have
to be consistent in a certain way. However, block-minimality is not quite 
local, and is more difficult to establish, as it involves solving smaller 
CSP instances recursively.
The definitions below are designed to allow for an efficient 
procedure to establish block-minimality. This is achieved either by allowing for 
decomposing a subinstance into instances over smaller domains as in 
Lemma~\ref{lem:central-decomposition-p1}, or by replacing large domains with 
their quotient algebras.

Let $\al_v$ be a congruence of $\zA_v$ for $v\in V$. By 
$\cP\fac{\ov\al}$ we denote
the instance $(V,\cC_{\ov\al})$ constructed as follows: the domain of 
$v\in V$ is $\zA_v\fac{\al_v}$; for every constraint $C=\ang{\bs,\rel}\in\cC$,
$\bs=(\vc vk)$, the set $\cC_{\ov\al}$ includes the constraint 
$\ang{\bs,\rel\fac{\ov\al}}$, where 
$\rel\fac{\ov\al}=\{(\ba[v_1]\fac{\al_{v_1}}\zd\ba[v_k]\fac{\al_{v_k}})
\mid \ba\in\rel\}$.

\begin{example}\label{exa:more5-small}
Consider the instance $\cP$ from Example~\ref{exa:more4-small}, and
let $\al_{v_i}=\th_i$ for each $i\in[5]$. Then $\cP\fac{\ov\al}$ is the instance
over $\zA_M\fac\th$ given by $\cP\fac{\ov\al}=
(V,\{\ang{\bs_1,\rel_1\fac{\ov\al}},\ang{\bs_2,\rel_2\fac{\ov\al}}\})$, where
$$
\rel_1\fac{\ov\al}=\rel_2\fac{\ov\al}=\left(\begin{array}{ccc}
0\fac\th & 2\fac\th & 2\fac\th \\ 0\fac\th & 2\fac\th & 2\fac\th \\ 
0\fac\th & 0\fac\th & 2\fac\th 
\end{array}\right).
$$
 \EEX
\end{example}

Let $\cP=(V,\cC)$ be a (2,3)-minimal instance, and 
for $X\sse V$, $|X|\le2$, there is a constraint $C^X=\ang{X,\rel^X}$, where 
$\rel^X$ is the set of partial solutions on $X$. 

Recall that an algebra $\zA_v$ is said to be semilattice free if it does not contain 
semilattice edges. Let $\razm(\cP)$ denote the 
maximal size of domains of $\cP$ that are not semilattice free and 
$\Razm(\cP)$ be the set of variables $v\in V$ such that
$|\zA_v|=\razm(\cP)$ and $\zA_v$ is not semilattice free. Finally, for $Y\sse V$ 
let $\mu^Y_v=\mu_v$ if $v\in Y$ and $\mu^Y_v=\zz_v$ otherwise. 

Instance $\cP$ is said to be \emph{block-minimal} if  
\begin{itemize}
\item[(BM)]
for every strand $U\sse V$ the problem $\cP_{/U}=\cP\fac{\ov\mu^Y}$,
where $Y=\Razm(\cP)-U$, is minimal.
\end{itemize}
The definition of block-minimality is designed in such a way that 
block-minimality can 
be efficiently established. Observe that a strand can be 
large, even equal to $V$. However $\cP_{/U}$ splits into a union of 
disjoint problems over smaller domains. 

\begin{example}\label{exa:more6-small}
Let us consider again the instance $\cP$ from Example~\ref{exa:more4-small}.
In that example we found all its binary solutions, and now we use 
them to find strands and to verify that this instance is block-minimal.
As we saw in Example~\ref{exa:more4-small}, unless $i,j\in\{1,2,4\}$ the
relation $\rel^{v_iv_j}$ is not $\al\beta$-aligned for any congruences 
$\al,\beta$ except the full ones. This means that the only strands of $\cP$
are $W=\{v_1,v_2,v_4\}$ and all the 1-element sets of variables.

Now we check the condition (BM) for $\cP$. Consider 
$W$. For this strand we have $Y=\{3,5\}$, and so 
$\mu^Y_1=\mu^Y_2=\mu^Y_4=\zz$ and $\mu^Y_3=\mu^Y_5=\th$. 
The problem $\cP_{/W}$ is the following problem: 
$(V,\{C'_1,C'_2\})$, where $C'_1=\ang{\bs_1,\rel^\th}$, 
$C'_2=\ang{\bs_2,\rel^\th}$, and
\[
R^\th = \left(\begin{array}{cccccc}
0&0&1&1&2&2\\ 0&1&0&1&2&2\\ 
0\fac\th&0\fac\th&0\fac\th&0\fac\th&0\fac\th&2\fac\th
\end{array}\right).
\]
Now, consider first $C_1$. For any tuple $(a_1,a_2,a_3)\in\rel^\th$, that is, 
assignment $v_1=a_1\in\zA_M, v_2=a_2\in\zA_M, v_3=a_3\in\zA_M\fac\th$,
we can extend this assignment to $v_4=v_2$ and $v_5=0\fac\th$ to obtain
a satisfying assignment of $\cP_{/W}$. For $C_2$ the argument is the same.

For 1-element strands consider $\{v_2\}$. Then $Y=\{v_1,v_3,v_4,v_5\}$, and 
$\mu^Y_1=\mu^Y_3=\mu^Y_4=\mu^Y_5=\th$. We have 
$\cP_{/\{v_2\}}=(V,\{C''_1,C''_2\})$, where $C''_1=\ang{\bs_1,\rel^{\th\th}_1}$, 
$C'_2=\ang{\bs_2,\rel^{\th\th}_2}$, and
\[
\rel^{\th\th}_1=\left(\begin{array}{cccc}
0\fac\th&0\fac\th&2\fac\th&2\fac\th\\  0&1&2&2\\ 
0\fac\th&0\fac\th&0\fac\th&2\fac\th
\end{array}\right), \qquad
\rel^{\th\th}_2=\left(\begin{array}{cccc}
0&1&2&2\\ 0\fac\th&0\fac\th&2\fac\th&2\fac\th\\ 
0\fac\th&0\fac\th&0\fac\th&2\fac\th
\end{array}\right).
\]
As is easily seen, any assignment to $v_1,v_2,v_3$ or to $v_2,v_4,v_5$ can be
extended to a solution of $\cP_{/\{v_2\}}$.
\EEX
\end{example}

For an instance $\cP$ we say that an instance $\cP'$ is \emph{strictly smaller} 
than instance $\cP$ if $\razm(\cP')<\razm(\cP)$. 

\begin{lemma}\label{lem:to-block-minimality}
Let $\cP=(V,\cC)$ be a (2,3)-minimal instance. 
Then $\cP$ can be transformed to an equivalent block-minimal 
instance $\cP'$ by solving a quadratic number of strictly smaller CSPs.
\end{lemma}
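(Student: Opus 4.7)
The plan is to establish block-minimality by iteratively detecting tuples that violate condition (BM) and eliminating them, with each deletion certified by solving a strictly smaller CSP (one whose $\razm$ is strictly below $\razm(\cP)$). I proceed in \emph{sweeps}: for every strand $U$ of $\cP$ with alignment $\ov\al$ and every constraint $C=\ang{\bs,\rel}\in\cC$, iterate over tuples $\bb\in\rel\fac{\ov\mu^Y}$, where $Y=\Razm(\cP)-U$, and test whether $\bb$ extends to a solution of $\cP_{/U}=\cP\fac{\ov\mu^Y}$. After a sweep, delete from every $\rel$ the preimages of all failing $\bb$. This deletion preserves invariance under the basic operations of $\zA$: if each retained tuple admits an extension, then applying any basic operation coordinatewise to these extensions gives another solution of $\cP_{/U}$ whose projection on $\bs$ is the image of the retained tuples under that operation. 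Restore (2,3)-minimality by standard propagation (which only removes further tuples), and iterate; since each pass strictly decreases $\sum_{C}|\rel_C|$, at most a quadratic number of sweeps occurs.

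\emph{The extension test as a strictly smaller CSP.} Fix $U,\bs,\bb$ and assume some $\al_v\ne\zo_v$. By Lemma~\ref{lem:central-decomposition-p1}, the strand sub-instance $\cP_U$ decomposes into a constant number $k$ of pieces, one per coherent choice of $\al_v$-blocks across $U$. For each choice, build the CSP obtained from $\cP_{/U}$ by substituting $\bb$ on $\bs$ and restricting each $v\in U$ to the selected $\al_v$-block; then $\bb$ extends iff at least one of the $k$ resulting CSPs is satisfiable. In every such CSP, variables in $\bs$ become singletons; variables in $U\setminus\bs$ lie in $\al_v$-blocks of size strictly less than $|\zA_v|$ since $\al_v\ne\zo_v$; variables in $Y$ carry the monolith quotient $\zA_v/\mu_v$, which is strictly smaller since $\mu_v$ is nontrivial by subdirect irreducibility; and all remaining variables either have size below $\razm(\cP)$ or are semilattice-free. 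Hence no non-semilattice-free domain in the reduced CSP reaches size $\razm(\cP)$, so $\razm$ strictly decreases.

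\emph{Main obstacle.} The delicate case is when every $\al_v$ in the strand equals $\zo_v$, which happens in particular for each one-element strand $\{v\}$: the decomposition of Lemma~\ref{lem:central-decomposition-p1} is vacuous, and if $v\in\Razm(\cP)\setminus\bs$ then $v$ still carries a maximum-size domain after substituting $\bb$. I would resolve this by branching on $v$'s value: for each $a\in\zA_v$ (a constant number), perform a separate extension test with $v\mapsto a$ added to $\bs$, so that $v$ becomes a singleton and the situation reduces to the generic case analyzed above. Therefore each extension test invokes only a constant number of strictly smaller CSPs, and since there are quadratically many (strand, constraint, tuple) triples across the whole procedure, block-minimality is reached using a quadratic number of strictly smaller CSP solves, as claimed.
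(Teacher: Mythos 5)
Your proof follows the paper's own: iteratively delete tuples that fail the extension test, where each test is delegated via Lemma~\ref{lem:central-decomposition-p1} to a constant number of strictly smaller CSPs obtained by restricting to the aligning $\al_v$-blocks and pinning the scope variables to the values of the tuple under test. Your ``main obstacle'' is misdiagnosed, though: for a singleton strand $\{v\}$ the aligning congruence $\al_v$ is entirely unconstrained (the alignment condition is vacuous over a single coordinate), so one may always take $\al_v=\zz_v$ and get singleton domains for $v$ in every piece of the decomposition --- your branching over the values of $v$ achieves exactly the same effect, so the fix is correct, but Lemma~\ref{lem:central-decomposition-p1} was never vacuous there.
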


\begin{proof}
To establish block-minimality of $\cP$, for every strand 
$U\sse V$, we need to check if the problem given in condition (BM) is minimal.
If they are then $\cP$ is block-minimal, otherwise some tuples can be 
removed from some constraint relation $\rel$ (the set of tuples that remain 
in $\rel$ is always a subalgebra, as is easily seen), and the instance $\cP$ 
tightened, in which case we need to repeat the procedure with the tightened 
instance. Therefore we just need to show how to reduce solving those subproblems 
to solving strictly smaller CSPs. 

By the definition of a strand there is a partition $B_{w1}\zd B_{w\ell}$ 
of $\zA_w$ for $w\in U$ such that for every constraint
$\ang{\bs,\rel}\in\cC$, for any $w_1,w_2\in\bs\cap U$, any 
$\bb\in\rel$, and any $i\in[\ell]$ it holds $\bb[w_1]\in B_{w_1i}$ if and 
only if $\bb[w_2]\in B_{w_2i}$. Then the problem $\cP_{/U}$ is a disjoint 
union of instances $\vc\cP\ell$ given by: $\cP_i=(V,\cC_i)$, where
for every constraint $C=\ang{\bs,\rel}\in\cC$ there is 
$C_i=\ang{\bs,\rel_i}\in\cC_i$ such that 
\[
\rel_i=\{\ba'\mid \ba\in\rel, \ba[w]\in B_{wi} \text{ for each $w\in\bs\cap U$}\},
\]
with $\ba'[u]=\ba[u]\fac{\mu^Y_u}$, $Y=\Razm(\cP)-U$, for each $u\in\bs$.
Clearly, $\razm(\cP_i)<\razm(\cP)$ for each $i\in[\ell]$. 

In order to establish the minimality of $\cP_{/U}$ it suffices to do the
following. Take $C=\ang{\bs,\rel}\in\cC$ and $\ba\in\rel$. We need to check 
that $\ba'=\ba\fac{\ov\mu^Y}$, $Y=\Razm(\cP)-U$, extends to a solution 
of at least one of the problems $\vc\cP\ell$. For $i\in[\ell]$ let $\cP_i'$ 
be the problem obtained from $\cP_i$ as follows: fix the values of variables 
from $\bs$ to those of $\ba'$, or in other words, add the constraint 
$\ang{(w),\{\ba[w]\fac{\mu^Y_w}\}}$ for each $w\in\bs$. Then $\ba'$
can be extended to a solution of $\cP_i$ if and only if $\cP'_i$ has a 
solution.
\end{proof}

\subsection{The algorithm}\label{sec:the-algorithm}

We are now in a position to describe our solution algorithm. In the algorithm 
we distinguish three cases depending on the presence of 
semilattice edges and 
centralizers of the domains of variables. In each case we employ different 
methods of solving or reducing the instance to a strictly smaller one. 
Algorithm~\ref{alg:csp}, {\sf SolveCSP}, gives a more formal description of 
the solution algorithm.

Let $\cP=(V,\cC)$ be a subdirectly irreducible (2,3)-minimal instance. Let 
$\Centr(\cP)$ denote the set of variables $v\in V$ such 
that $(\zz_v:\mu_v)=\zo_v$. Let $\mu^*_v=\mu_v$ if 
$v\in\Razm(\cP)\cap\Centr(\cP)$ and $\mu^*_v=\zz_v$ otherwise. 

\paragraph{Semilattice free domains.}
If all domains of $\cP$ are semilattice free then $\cP$ can be solved in 
polynomial time, using the few subpowers algorithm, as shown in 
\cite{Idziak10:few,Bulatov16:restricted}.

\paragraph{Small centralizers}
If $\mu^*_v=\zz_v$ for all $v\in V$, by Theorem~\ref{the:non-central} 
block-minimality guarantees that a 
solution exists, and we can use Lemma~\ref{lem:to-block-minimality} to solve
the instance. 

\begin{theorem}\label{the:non-central}
If $\cP$ is subdirectly irreducible, (2,3)-minimal, block-minimal, and 
$\Razm(\cP)\cap\Centr(\cP)=\eps$, then $\cP$ has a solution.
\end{theorem}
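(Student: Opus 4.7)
The plan is to induct on $\razm(\cP)$, reducing to a strictly smaller instance whose solution can then be lifted. In the base case $\razm(\cP)=0$, every non-trivial domain of $\cP$ is semilattice-free, and Proposition~\ref{pro:semilattice-edges-p1} (in its multi-sorted extension to the class $\cA$) supplies a few-subpowers solution to $\cP$.

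For the inductive step, fix $v_0 \in \Razm(\cP)$ and form $\cP' = \cP\fac{\ov\nu}$ with $\nu_v = \mu_v$ for $v \in \Razm(\cP)$ and $\nu_v = \zz_v$ otherwise. Because each $\zA_v$ with $v \in \Razm(\cP)$ is subdirectly irreducible with nontrivial monolith, every such domain strictly shrinks and $\razm(\cP') < \razm(\cP)$. I would verify that $\cP'$ (after possibly passing to a subdirectly irreducible subquotient of each new domain and re-establishing (2,3)-minimality) inherits block-minimality and $\Razm(\cP') \cap \Centr(\cP') = \eps$; these checks are routine but need care, because tightening after the quotient may further shrink constraints. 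The induction hypothesis then yields a solution $\vf'$ of $\cP'$.

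The core task is to lift $\vf'$ to a solution $\vf$ of $\cP$: for each $v \in \Razm(\cP)$ one must pick $\vf(v)$ from the $\mu_v$-block $B_v = \vf'(v)$ so that every original constraint holds. My plan is to apply block-minimality of $\cP$ to strands $U \sse V$ on which the block pattern $\{B_v\}_{v \in U}$ is fixed: by Lemma~\ref{lem:central-decomposition-p1}, $\cP_U$ decomposes into subinstances whose $v$-domain equals $B_v$, and the block-minimality hypothesis (BM) forces each such subinstance to be minimal. Starting from any initial choice, I would iteratively improve it by applying the multiplication $xy$ of Lemma~\ref{lem:multiplication-p1} to pairs belonging to a common $\mu_v$-block, pushing each coordinate along semilattice edges toward a maximal element of $B_v$; Corollary~\ref{cor:semilattice-in-centralizer} ensures that within each $\mu_v$-block the relevant semilattice structure is self-contained, so the lift stays compatible with $\vf'$.

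The main obstacle is showing that these block-wise maximal lifts glue to a single globally consistent $\vf$ across all constraints of $\cP$. This is precisely where the hypothesis $\Razm(\cP) \cap \Centr(\cP) = \eps$ is essential: if some $v \in \Razm(\cP)$ had $(\zz_v : \mu_v) = \zo_v$, then Lemma~\ref{lem:centralizer-multiplication} would force multiplication inside $B_v$ to collapse onto a single element, destroying the room to choose on which the lift depends, while twin polynomials $f^{\ba}, f^{\bb}$ could then broadcast a bad choice in one block through a constraint scope to a conflict in another. I expect the technical heart of the argument to be a pigeonhole-style consistency lemma, exploiting the proper centralizer $(\zz_v : \mu_v) < \zo_v$ to show that the strand-wise maximal lifts agree on overlaps and so assemble into one solution of $\cP$.
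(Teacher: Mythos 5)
Your proposal takes a genuinely different route from the paper, and unfortunately it contains a gap that the paper's machinery exists precisely to close. You propose: quotient by monoliths to get a strictly smaller instance, solve it inductively, then ``lift'' the quotient solution $\vf'$ back to $\cP$ using block-minimality plus the multiplication of Lemma~\ref{lem:multiplication-p1}. The lifting step is the problem. Knowing $\vf'(v) = B_v$ for each $v$ reduces $\cP$ to the restricted instance whose $v$-domain is $B_v$; but solving that restricted instance is essentially as hard as solving $\cP$, and block-minimality does not hand you a solution of it. Condition (BM) says that for each strand $U$ the problem $\cP_{/U}$ (in which variables \emph{outside} $U$ in $\Razm(\cP)$ are quotiented by monoliths) is minimal. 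This is a local-consistency property of a family of quotient problems; it does not imply that the block-restricted instance with all blocks fixed simultaneously even has a nonempty constraint relation, let alone a solution. There is no ``pigeonhole-style consistency lemma'' to be had here: gluing strand-wise partial solutions into a global one is exactly the difficulty, and block-minimality alone is far too weak a hypothesis.

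The paper resolves this by an entirely different mechanism. It does not induct on $\razm(\cP)$ within the proof of Theorem~\ref{the:non-central}; instead it introduces $(\ov\beta,\ov B)$-compressed instances, a package of six conditions (S1)--(S6) including chainedness (S4), polynomial closedness (S5), and weak as-closedness (S6), and shows these are preserved while a collection of congruences $\ov\beta$ is driven down one prime quotient $\al \prec \beta_v$ at a time. The heart of the proof is a case split on the type of the prime interval $(\al,\beta_v)$: type \two{} (affine) in Section~\ref{sec:affine-consistency}, using the Congruence Lemma~\ref{lem:affine-link} and collapsing polynomials (Lemma~\ref{lem:collapsing}); and types \three,\four,\five{} in Section~\ref{sec:non-affine}, using Quasi-2-Decomposition (Theorem~\ref{the:quasi-2-decomp}). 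The condition $\Razm(\cP)\cap\Centr(\cP)=\eps$ enters through Lemma~\ref{lem:restricted-congruence}(2): when $(v,\al,\beta_v)\in\cW'$ (full centralizer), the interval can be separated from $(\zz_w,\mu_w)$ for every $w\in\Razm(\cP)$, which is what lets the collapsing polynomials of Lemma~\ref{lem:collapsing} kill monolith distinctions while preserving the chosen $(\al,\beta_v)$-subtrace. Your reading of the centralizer hypothesis is also inverted: large centralizers are not ``destroying room to choose'' in this theorem, they are handled separately by the retraction argument of Theorem~\ref{the:central}; in Theorem~\ref{the:non-central} the small-centralizer hypothesis is what makes the separation machinery available. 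None of the separation, chaining, or type-based case analysis appears in your sketch, and I do not see how the proposal could be completed without rebuilding something equivalent to it.
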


\paragraph{Large centralizers}
Suppose that $\Razm(\cP)\cap\Centr(\cP)\ne\eps$. In this case the algorithm
proceeds in three steps.\\[1mm]
\emph{Stage 1.} 
Consider the problem $\cP\fac{\ov\mu^*}$. We establish the global 1-minimality 
of this
problem. If it is tightened in the process, we start solving the new problem
from scratch. To check global 1-minimality, for each $v\in V$ and every 
$a\in\zA_v\fac{\mu^*_v}$, we need to find a solution of the instance, or show it
does not exists. To this end, add the constraint $\ang{(v),\{a\}}$ to 
$\cP\fac{\ov\mu^*}$. The resulting problem belongs to $\CSP(\zA)$, since
$\zA_v$ is idempotent, and hence $\{a\}$ is a subalgebra of $\zA_v\fac{\mu^*_v}$.
Then we establish (2,3)-minimality and block minimality of the resulting problem.
Let us denote it $\cP'$. There are two possibilities. First, if 
$\razm(\cP')<\razm(\cP)$ then $\cP'$ is a problem strictly smaller than $\cP$
and can be solved by recursively calling Algorithm~\ref{alg:csp} on $\cP'$. If
$\razm(\cP')=\razm(\cP)$ then, as all the domains $\zA_v$ of maximal size 
for $v\in\Centr(\cP)$ are replaced with their quotient algebras, there is 
$w\not\in\Centr(\cP)$ such that $|\zA_w|=\razm(\cP)$ and $\zA_w$ is not semilattice
free. Therefore for every $u\in\Centr(\cP')$, for the corresponding domain 
$\zA'_u$ we have $|\zA'_u|<\razm(\cP)=\razm(\cP')$. Thus, 
$\Razm(\cP')\cap\Centr(\cP')=\eps$, and $\cP'$ has a solution by 
Theorem~\ref{the:non-central}.\\[1mm]
\emph{Stage 2.}
For every $v\in\Razm(\cP)$ we find a solution $\vf$ of $\cP\fac{\ov\mu^*}$ 
such that there is $a\in\zA_v$ such that 
$\{a,\vf(v)\}$ is a semilattice edge if $\mu^*_v=\zz_v$, or, if $\mu^*_v=\mu_v$,
there is $b\in\vf(v)$ such that $\{a,b\}$ is a semilattice edge. Take
$v\in\Razm(\cP)$ and $b\in\zA_v\fac{\mu^*_v}$ such that $\{a,b\}$ is a semilattice
edge in $\zA_v\fac{\mu^*_v}$ for some $a\in\zA_v\fac{\mu^*_v}$. Such a
semilattice edge exists, because $\zA_v$ is not semilattice free. Also, 
if $\mu^*_v\ne\zz_v$, then $v\in\Centr(\cP)$ and $(\zz_v:\mu_v)=\zo_v$ 
and by Corollary~\ref{cor:semilattice-in-centralizer} its semilattice edges are 
all between $\mu_v$-blocks. Since $\cP\fac{\ov\mu^*}$ is globally 1-minimal, 
there is a solution $\vf_{v,b}$ such that $\vf_{v,b}(v)=b$, and therefore 
$\vf_{v,b}$ satisfies the condition. Let $\Razm(\cP)=\{\vc v\ell\}$ and 
$\vc b\ell$ the values satisfying the requirements above.\\[1mm]
\emph{\sc Stage 3.}
We apply the transformation of $\cP$ suggested by Maroti in 
\cite{Maroti10:tree}. For a solution $\vf$ of $\cP\fac{\ov\mu^*}$, 
by $\cP\cdot\vf$
we denote the instance $(V,\cC_\vf)$ given by the rule: for every 
$C=\ang{\bs,\rel}\in\cC$ the set $\cC_\vf$ contains a constraint 
$\ang{\bs,\rel\cdot\vf}$. To construct $\rel\cdot\vf$ choose a tuple $\bb\in\rel$
such that $\bb[v]\fac{\mu^*_v}=\vf(v)$ for all $v\in\bs$; this is possible 
because $\vf$ is a solution of $\cP\fac{\ov\mu^*}$. Then set 
$\rel\cdot\vf=\{\ba\cdot\bb\mid \ba\in\rel\}$. By the results of 
\cite{Maroti10:tree} and Lemma~\ref{lem:centralizer-multiplication} 
the instance $\cP\cdot\vf$ has a solution if and only if $\cP$ does.
We now use the solutions $\vf_{v_1,b_1}\zd\vf_{v_\ell,b_\ell}$ to construct
a new problem 
\[
\cP^1=(\dots((\cP\cdot\vf_{v_1,b_1})\cdot\vf_{v_2,b_2})\cdot\dots)
\cdot\vf_{v_\ell,b_\ell}.
\]
Note that the transformation of $\cP$ above boils down to a collection of mappings
$p_v:\zA_v\to\zA_v$, $v\in V$, so called \emph{consistent mappings}, see 
Section~\ref{sec:theorem-47}, that also satisfy some additional properties. If we 
now repeat the procedure above starting from $\cP^1$ and using the same
solutions $\vf_{v_i,b_i}$, we obtain an instance $\cP^2$, for which the corresponding
collection of consistent mappings is $p_v\circ p_v$, $v\in V$. More generally,
\[
\cP^{i+1}=(\dots((\cP^i\cdot\vf_{v_1,b_1})\cdot\vf_{v_2,b_2})\cdot\dots)
\cdot\vf_{v_\ell,b_\ell}.
\]
There is $k$ such that $p^k_v$ is idempotent for every $v\in V$, that is,
$(p^k_v\circ p^k_v)(x)=p^k_v(x)$ for all $x$. Set $\cP^\dg=\cP^k$.
We will show later that $\razm(\cP^\dg)<\razm(\cP)$.

This last case can be summarized as the following

\begin{theorem}\label{the:central}
If $\cP\fac{\ov\mu^*}$ is globally 1-minimal, 
then $\cP$ can be reduced in polynomial time to a strictly smaller instance
over a class of algebras satisfying the conditions of the Dichotomy Conjecture.
\end{theorem}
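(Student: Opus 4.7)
The plan is to execute Stages~2 and~3 of the reduction described after the theorem statement and then verify three things about the resulting instance $\cP^\dg = \cP^k$: it is solvability-equivalent to $\cP$, it has strictly smaller $\razm$, and it lives over a class of algebras that still satisfies the conditions of the Dichotomy Conjecture. The hypothesis of global 1-minimality is used only in Stage~2, to extract the distinguished solutions $\vf_{v_i,b_i}$.

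For Stage~2, for each $v_i \in \Razm(\cP) = \{v_1,\dots,v_\ell\}$ I select a semilattice edge $\{a_i,b_i\}$ in $\zA_{v_i}\fac{\mu^*_{v_i}}$, which exists because $\zA_{v_i}$ is not semilattice free; when $v_i \in \Centr(\cP)$ we have $\mu^*_{v_i} = \mu_{v_i}$, and Corollary~\ref{cor:semilattice-in-centralizer} places the edge across two distinct $\mu_{v_i}$-blocks. Global 1-minimality of $\cP\fac{\ov\mu^*}$ then yields in polynomial time a solution $\vf_{v_i,b_i}$ with $\vf_{v_i,b_i}(v_i) = b_i$, by pinning the value $b_i$ and invoking the solver on a (strictly smaller or block-minimal) resulting problem.

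For Stage~3, I iterate Maroti's multiplicative transformation: $\cP^{i+1} = (\cdots(\cP^i\cdot\vf_{v_1,b_1})\cdots)\cdot\vf_{v_\ell,b_\ell}$. Each single step $\cP \mapsto \cP\cdot\vf$ is well-defined and solvability-preserving. Well-definedness: at coordinates $v$ with $\mu^*_v = \mu_v$ (forcing $v \in \Centr(\cP)$ and $(\zz_v:\mu_v) = \zo_v$) the product $a\cd\bb[v]$ is independent modulo $\mu_v$ of the choice of representative $\bb[v]$ inside the block $\vf(v)$, by Lemma~\ref{lem:centralizer-multiplication}. Solvability-preservation is the argument of \cite{Maroti10:tree}. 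The compound effect at each coordinate $v$ is a unary polynomial $p_v : \zA_v \to \zA_v$, and since the monoid of unary polynomials on a finite algebra is finite, there is $k$, bounded by the monoid's size, with $p_v^k\circ p_v^k = p_v^k$ simultaneously for every $v \in V$; set $\cP^\dg = \cP^k$. The whole construction runs in polynomial time.

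The main obstacle is showing $\razm(\cP^\dg) < \razm(\cP)$. For each $v_i \in \Razm(\cP)$ I track $p_{v_i}$ modulo $\mu^*_{v_i}$: one of the factors in its compound definition is multiplication by (a representative of) $b_i$, and on the semilattice edge $\{a_i,b_i\}$ of $\zA_{v_i}\fac{\mu^*_{v_i}}$ this multiplication coincides with the semilattice operation with top $b_i$, sending both $a_i$ and $b_i$ to $b_i$; after this collapse the remaining factors act identically on the two branches. Hence the induced map on the quotient is non-injective and therefore non-surjective, so $p_{v_i}^k(\zA_{v_i})$ misses an entire $\mu^*_{v_i}$-class, giving $|p_{v_i}^k(\zA_{v_i})| < |\zA_{v_i}| = \razm(\cP)$. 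For $v \notin \Razm(\cP)$, the subalgebra $p_v^k(\zA_v) \le \zA_v$ is either of size less than $\razm(\cP)$ or semilattice free, since semilattice-freeness is inherited by subalgebras. Therefore $\razm(\cP^\dg) < \razm(\cP)$, and because the tractability conditions (2)--(3) of Theorem~\ref{the:algebra-dichot-p1} are inherited by every member of $\H\S(\zA)$, the new instance indeed lies over an admissible class of algebras.
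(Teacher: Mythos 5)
Your overall plan — execute Stages~2 and~3, check solvability-equivalence via Maroti, check the size drop, check that the new domains stay in the good class — mirrors the paper's proof, and most of the ingredients match. But there is one genuine gap and one under-articulated step worth calling out.

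The genuine gap is the last sentence: you justify that the reduced instance ``lies over an admissible class of algebras'' by noting that conditions (2)--(3) of Theorem~\ref{the:algebra-dichot-p1} pass to every member of $\H\S(\zA)$. That fact is true but doesn't apply here. The new domain $\zA^\dg_v = p_v^k(\zA_v)$ is \emph{not} a subalgebra or a quotient of $\zA_v$: the mapping $p_v$ is a unary \emph{polynomial} (a composition of $x\mapsto x\cdot\vf_{v_j,b_j}(v)$ factors), not a term operation, so its image need not be closed under the basic operations of $\zA_v$, and $\zA^\dg_v$ is a \emph{retract} carrying the twisted operations $p_v^k\circ f$. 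The paper closes this by observing directly that if $f$ is a weak near-unanimity term for $\cA$ and $p$ is idempotent then $p\circ f$ is a weak near-unanimity term on $p(\zA)$ (idempotence because $p(f(x,\dots,x))=p(x)=x$ on the image; the symmetry identities because they hold before applying $p$). You need this retraction argument, not $\H\S$-closure.

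The under-articulated step is the non-injectivity claim for coordinates in $\Razm(\cP)$. You argue that the $i$-th factor $q_i(x)=x\cdot b_i$ collapses $a_i$ and $b_i$, and that ``after this collapse the remaining factors act identically on the two branches,'' so the composite is non-injective. That is the right idea, but as written it only gives non-injectivity of the composite if \emph{both} $a_i$ and $b_i$ are in the image of the prefix $q_{i-1}\circ\cdots\circ q_1$. The paper handles this cleanly by a dichotomy: either $|p_{v,i-1}(\zA_{v_i})|<|\zA_{v_i}|$ already (and then any further factor keeps the image small), or $p_{v,i-1}$ is a surjection of a finite set onto itself, hence a bijection, so both $a_i$ and $b_i$ have preimages and $q_i\circ p_{v,i-1}$ identifies them. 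You should make that case split explicit; otherwise the reader has to guess why your ``two branches'' are both populated.

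Everything else — the use of Lemma~\ref{lem:centralizer-multiplication} to make $q_v$ well-defined across representatives of a $\mu_v$-block, consistency of the mappings via Lemma~\ref{lem:consistent-mapping}, iterating to an idempotent $p_v^k$, the appeal to Maroti for solvability-preservation, and the observation that semilattice-free domains are fixed pointwise by $p_v$ (Lemma~\ref{lem:multiplication-p1}) so $v\notin\Razm(\cP)$ causes no issues — tracks the paper's proof faithfully.
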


\begin{algorithm}
\caption{Procedure {\sf SolveCSP}}
\label{alg:csp}
\begin{algorithmic}[1] 
\REQUIRE A CSP instance $\cP=(V,\cC)$ over $\cA$
\ENSURE A solution of $\cP$ if one exists, `NO' otherwise
\IF{all the domains are semilattice free}
\STATE Solve $\cP$ using the few subpowers algorithm and RETURN the answer
\ENDIF
\STATE Transform $\cP$ to a subdirectly irreducible, block-minimal and 
(2,3)-minimal instance
\STATE $\mu^*_v=\mu_v$ for $v\in\Razm(\cP)\cap\Centr(\cP)$ and 
$\mu^*_v=\zz_v$ otherwise
\STATE $\cP^*=\cP\fac{\ov\mu^*}$
\STATE /* the global 1-minimality of $\cP^*$
\FOR{every $v\in V$ and $a\in\zA_v\fac{\mu^*_v}$}
\STATE $\cP'=\cP^*_{(v,a)}$ \ \ \ \ /* Add constraint $\ang{(v),\{a\}}$
fixing the value of $v$ to $a$
\STATE Transform $\cP'$ to a subdirectly irreducible, (2,3)-minimal instance
$\cP''$
\STATE If $\razm(\cP'')<\razm(\cP)$ call {\sf SolveCSP} on $\cP''$ and 
flag $a$ if $\cP''$ has no solution
\STATE Establish block-minimality of $\cP''$; if the problem changes,
return to Step~10
\STATE If the resulting instance is empty, flag the element $a$ 
\ENDFOR
\STATE If there are flagged values, tighten the instance by removing the 
flagged elements and start over
\STATE Use Theorem~\ref{the:central} to reduce $\cP$ to an instance $\cP^\dg$ 
with $\razm(\cP^\dg)<\razm(\cP)$
\STATE Call {\sf SolveCSP} on $\cP^\dg$ and RETURN the answer
\end{algorithmic}
\end{algorithm}

We now illustrate the algorithm on our running example.

\begin{example}\label{exa:more7-small}
We illustrate the algorithm {\sf SolveCSP} on the instance from 
Example~\ref{exa:more4-small}. Recall that the domain of each variable is
$\zA_M$, its monolith is $\th$, and $(\zz:\th)$ is the full relation.
This means that $\razm(\cP)=3$, $\Razm(\cP)=V$ and $\Centr(\cP)=V$,
as well. Therefore we are in the case of large centralizers. Set
$\mu^*_{v_i}=\th_i$ for each $i\in[5]$ and consider the problem 
$\cP\fac{\ov\mu*}=(V,\{C^*_1=\ang{\bs_1,\rel^*_1}, 
C^*_2=\ang{\bs_2,\rel^*_2})$, where
$$
\rel^*=\left(\begin{array}{ccc}
0\fac\th & 2\fac\th & 2\fac\th\\ 0\fac\th & 2\fac\th & 2\fac\th\\ 
0\fac\th & 0\fac\th & 2\fac\th
\end{array}\right).
$$
It is an easy exercise to show that this instance is globally 1-minimal (every
value $0\fac\th$ can be extended to the all-$0\fac\th$ solution, and every 
value $2\fac\th$ can be extended to the all-$2\fac\th$ solution). This completes 
\emph{Stage 1}.
For every variable $v_i$ we choose $b\in\zA_M\fac\th$ such that for some
$a\in\zA_M\fac\th$ the pair $\{a,b\}$ is a semilattice edge. Since 
$\zA_M\fac\th$ is a 2-element semilattice, setting $b=0\fac\th$ and $a=2\fac\th$
is the only choice. Therefore $\vf_{v_i,b_i}$ in our case can be chosen to
be the same solution $\vf$ given by 
$\vf(v_i)=0\fac\th$; and \emph{Stage 2} is completed. For \emph{Stage 3}
first note that in $\zA_M$ the operation $r$ plays the role of multiplication $\cdot$. 
Then for each of the constraints $C^1,C^2$ choose a representative 
$\ba_1\in\rel_1\cap(\vf(v_1)\tm\vf(v_2)\tm\vf(v_3))=\rel_1\cap\{0,1\}^3$, 
$\ba_2\in\rel_2\cap(\vf(v_2)\tm\vf(v_4)\tm\vf(v_5))=\rel_2\cap\{0,1\}^3$,
and set $\cP'=(\{v_1\zd v_5\},\{C'_1=\ang{(v_1,v_2,v_3),\rel'_1}, 
C'_2=\ang{(v_2,v_4,v_5),\rel'_2}\})$, where $\rel'_1=r(\rel_1,\ba)$,
$\rel'_2=r(\rel_2,\bb)$. Since $r(2,0)=r(2,1)=0$, regardless of the choice 
of $\ba,\bb$ in our case $\rel'_1\sse\rel_1,\rel'_2\sse\rel_2$, and are 
invariant with respect to the affine operation of $\zZ_2$. Therefore the 
instance $\cP'$ can be viewed as a system of linear equations over $\zZ_2$
(this system is actually empty in our case), and can be easily solved.
\EEX
\end{example}

Using Lemma~\ref{lem:to-block-minimality} and 
Theorems~\ref{the:non-central},\ref{the:central} it is not difficult to see 
that the algorithm runs in polynomial time. 

\begin{theorem}\label{the:algorithm}
Algorithm SolveCSP (Algorithm~\ref{alg:csp}) correctly solves every instance 
from $\CSP(\cA)$ and runs in polynomial time.
\end{theorem}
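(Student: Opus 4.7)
The plan is to prove correctness and polynomial running time simultaneously by strong induction on the parameter $\razm(\cP)$, the maximum size of a non-semilattice-free domain in $\cP$. The inductive hypothesis asserts that {\sf SolveCSP} correctly solves every instance $\cP'$ with $\razm(\cP') < \razm(\cP)$ in time polynomial in the size of $\cP'$. The base case is handled on line~2: when every domain of $\cP$ is semilattice free, Proposition~\ref{pro:semilattice-edges-p1} together with the few-subpowers algorithm of \cite{Idziak10:few,Berman10:varieties} solves $\cP$ correctly in polynomial time. For the inductive step, I would first verify that the preprocessing on line~4 is itself correct and polynomial: subdirect irreducibility and (2,3)-minimality are achieved by standard constraint propagation, and block-minimality by Lemma~\ref{lem:to-block-minimality}, which reduces to a quadratic number of strictly smaller CSPs, dispatched by the inductive hypothesis. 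Because each tightening strictly shrinks some constraint relation, this inner loop terminates after polynomially many iterations.

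Next I would split into the two branches. If $\Razm(\cP) \cap \Centr(\cP) = \eps$ after preprocessing, Theorem~\ref{the:non-central} directly guarantees a solution, and a concrete solution can be extracted by fixing variables one at a time, re-establishing consistency, and using recursion to handle each of the strictly smaller subproblems that result. Otherwise, Stage~1 (lines 7--15) establishes global 1-minimality of $\cP\fac{\ov\mu^*}$: for each pair $(v,a)$ with $a\in\zA_v\fac{\mu^*_v}$, pinning $v$ to $a$ and re-preprocessing yields an instance $\cP''$ which either has $\razm(\cP'')<\razm(\cP)$ and is dispatched by recursion, or has $\Razm(\cP'')\cap\Centr(\cP'')=\eps$ and is solved by Theorem~\ref{the:non-central}. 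Once global 1-minimality has stabilized, Theorem~\ref{the:central} produces an equivalent instance $\cP^\dg$ with $\razm(\cP^\dg)<\razm(\cP)$, which a single final recursive call then solves. Correctness of the whole algorithm thus reduces to the cited structural results: Theorem~\ref{the:non-central} supplies solutions in the small-centralizer case, while Theorem~\ref{the:central} together with Lemma~\ref{lem:centralizer-multiplication} guarantees that $\cP^\dg$ is equisatisfiable with $\cP$.

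The main obstacle will be the polynomial running-time bound. The recursion depth is at most $|A|$, since $\razm$ is bounded above by $|A|$ and strictly decreases between a caller and any of its non-trivial recursive callees. The delicate part is bounding the fan-out: Stage~1 alone can spawn up to $|V|\cdot|A|$ subproblems, each of which may, during its own preprocessing, trigger a further polynomial number of subcalls via Lemma~\ref{lem:to-block-minimality}; moreover, whenever a value is flagged on line~13 the whole instance is restarted on line~15. I would bound the work done at each recursion level by a polynomial $p(n,\|\cP\|)$, where $n=|V|$ and $\|\cP\|$ is the encoding size, observe that each restart strictly shrinks at least one constraint relation (so there are at most polynomially many restarts in total), and finally combine this with the depth bound via a straightforward geometric argument to obtain a global polynomial bound. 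The correctness arguments at each step are routine given the theorems cited; the accounting for total work is where one has to be careful to avoid a hidden exponential blowup.
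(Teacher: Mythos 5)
Your proposal follows the same strategy as the paper's proof: correctness by direct appeal to the cited structural results, and polynomial time by bounding the recursion depth through $\razm$ and the fan-out at each level. The paper organizes the argument slightly differently—it does not carry out a separate case for $\Razm(\cP)\cap\Centr(\cP)=\eps$, since Algorithm~\ref{alg:csp} handles the small- and large-centralizer cases uniformly after the preprocessing on line~4 (when $\mu^*_v=\zz_v$ for all $v$ one has $\cP^*=\cP$, and Steps~8--17 are still executed unchanged)—but the underlying reasoning coincides with what you write.

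Two technical points that your sketch glosses over but that the paper makes explicit: (i) the set of unflagged values for each variable $v$ must be a subalgebra of $\zA_v\fac{\mu^*_v}$, so that the tightening on line~15 yields an instance still in $\CSP(\cA)$; this follows because the solution set of $\cP^*$ is a subalgebra of $\prod_v\zA_v\fac{\mu^*_v}$ and the unflagged values are its projection onto $v$. (ii) For the case $\razm(\cP'')=\razm(\cP)$ in Step~11, you need the claim that $\Razm(\cP'')\cap\Centr(\cP'')=\eps$; this rests on the observation that the domain of any $u\in\Centr(\cP'')$ either was quotiented by its monolith (and so shrank) or had $\zA''_u=\zA_u$ with $u\in\Centr(\cP)-\Razm(\cP)$, so in either case $|\zA''_u|<\razm(\cP'')$. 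Both are routine but worth stating. Finally, your ``geometric argument'' for the total work is better phrased as: with recursion depth at most $k=\max_{\zA\in\cA}|\zA|$ and fan-out polynomial in $n$, the bound $n^{O(k)}$ follows because $k$ is a constant of the problem; the paper records this as $O(n^{3k})$.
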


\begin{proof}
By the results of \cite{Bulatov16:restricted,Bulatov20:restricted} the algorithm correctly solves
the given instance $\cP$ in polynomial time if the conditions of Step~1 are 
true. Lemma~\ref{lem:to-block-minimality} implies
that Steps~4 and~12 can be completed by recursing to strictly smaller 
instances. 

Next we show that the for-loop in Steps 8-14 checks if 
$\cP^*=\cP\fac{\ov\mu^*}$ is globally 1-minimal. For this we need to 
verify that a value $a$ is flagged if and only if $\cP^*$ has no solution 
$\vf$ with $\vf(v)=a$, and therefore if no values are flagged then 
$\cP^*$ is globally 1-minimal.
If $\vf(v)=a$ for some solution $\vf$ of $\cP^*$, then $\vf$
is a solution $\cP'$ constructed in Step~9. In this case Steps~11,12 cannot
result in an empty instance. 
Suppose $a\in\zA_v\fac{\mu^*_v}$ is not flagged. If
$\razm(\cP'')<\razm(\cP)$ this means that $\cP''$ and therefore $\cP'$
has a solution. Otherwise this means that establishing block-minimality of 
$\cP''$ is successful. In this case $\cP''$ has a solution by 
Theorem~\ref{the:non-central}, because $\Razm(\cP'')\cap\Centr(\cP'')=\eps$. 
This in turn implies that $\cP'$ has a solution.  
Observe also that the set of unflagged values for each variable $v\in V$
is a subalgebra of $\zA\fac{\mu^*}$. Indeed, the set of solutions of $\cP^*$
is a subalgebra $\cS^*$ of $\prod_{v\in V}\zA\fac{\mu^*}$, and the set of 
unflagged values is the projection of $\cS^*$ on the coordinate position $v$.

Finally, if Steps~8--15 are completed without restarts, Steps~16,17 
can be completed by Theorem~\ref{the:central}, 
and recursing on $\cP'$ such that either $\razm(\cP')<\razm(\cP)$ or
$\Razm(\cP')\cap\Centr(\cP')=\eps$.

To see that the algorithm runs in polynomial time it suffices to observe 
that\\[1mm] 
(1) The number of restarts in Steps~4 and~15 is at most linear, as the 
instance becomes smaller after every restart; therefore the number of
times Steps~4--15 are executed together is at most linear.\\[1mm]
(2) The number of iterations of the for-loop in Steps 8--14 is linear.\\[1mm]
(3) The number of restarts in Steps~10 and~12 is at most linear, as the
instance becomes smaller after every iteration.\\[1mm]
(4) Every call of SolveCSP when establishing block-minimality in Steps~4, 
and~12 is made on an instance strictly smaller than $\cP$, and therefore
the depth of recursion is bounded by $\razm(\cP)$ in Step~4,11,12 
and~17. \\[2mm]
Thus a more thorough estimation gives a bound on the running time of 
$O(n^{3k})$, where $k$ is the maximal size of an algebra in $\cA$.
\end{proof}

\subsection{Proof of Theorem~\ref{the:central}}\label{sec:theorem-47}

Following \cite{Maroti10:tree} let $\cP=(V,\cC)$ be an instance and 
$p_v\colon\zA_v\to\zA_v$, $v\in V$. 
Mappings $p_v$, $v\in V$, are said to be 
\emph{consistent} if for any
$\ang{\bs,\rel}\in\cC$, $\bs=(\vc vk)$, and any tuple $\ba\in\rel$ the
tuple $(p_{v_1}(\ba[1])\zd p_{v_k}(\ba[k]))$ belongs to $\rel$. 
It is easy to see that the composition of two families of consistent mappings 
is also a consistent mapping. 
For consistent idempotent mappings $p_v$ by $p(\cP)$ 
we denote the \emph{retraction} of $\cP$, that is, $\cP$ 
restricted to the images of $p_v$. In this case $\cP$ has a solution if and only if 
$p(\cP)$ has, see \cite{Maroti10:tree}. 

Let $\vf$ be a solution of $\cP\fac{\ov\mu^*}$. We define 
$p^\vf_v:\zA_v\to\zA_v$ as follows: $p^\vf_v=q_v^k$, where 
$q_v(a)=a\cdot b_v$, element $b_v$ is any 
element of $\vf(v)$, and $k$ is such that $q_v^k$ is idempotent for all $v\in V$. 
Note that by Lemma~\ref{lem:centralizer-multiplication} this mapping is 
properly defined even if $\mu^*_v\ne\zz_v$.

\begin{lemma}\label{lem:consistent-mapping}
Mappings $p^\vf_v$, $v\in V$, are consistent. 
\end{lemma}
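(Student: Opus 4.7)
My plan is to first reduce the statement for $p^\vf_v=q_v^k$ to the analogous statement for the ``one-step'' mappings $q_v(a)=a\cdot b_v$. Once I show that the family $\{q_v\}_{v\in V}$ is consistent, the claim for $\{p^\vf_v\}_{v\in V}$ follows because, as noted just before the lemma, the composition of consistent families is consistent; iterating $q_v$ a total of $k$ times then gives $p^\vf_v$.

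So the heart of the proof is to show, for a fixed constraint $\ang{\bs,\rel}\in\cC$ with $\bs=(v_1\zd v_k)$ and for any $\ba\in\rel$, that the tuple
\[
(q_{v_1}(\ba[1])\zd q_{v_k}(\ba[k]))=(\ba[1]\cdot b_{v_1}\zd\ba[k]\cdot b_{v_k})
\]
lies in $\rel$. The strategy is to produce a witness tuple $\bb\in\rel$ ``close enough'' to $(b_{v_1}\zd b_{v_k})$ and then combine $\ba$ with $\bb$ using the multiplication term operation. Since $\vf$ is a solution of $\cP\fac{\ov\mu^*}$, the tuple $(\vf(v_1)\zd \vf(v_k))$ belongs to $\rel\fac{\ov\mu^*}$, so I may pick $\bb\in\rel$ with $\bb[i]\eqc{\mu^*_{v_i}}b_{v_i}$ for every $i\in[k]$. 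Because $\cdot$ is a binary term operation of $\zA$, it preserves $\rel$, giving $\ba\cdot\bb\in\rel$.

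It then remains to identify $\ba[i]\cdot\bb[i]$ with $\ba[i]\cdot b_{v_i}$ coordinatewise. If $\mu^*_{v_i}=\zz_{v_i}$ this is immediate since $\bb[i]=b_{v_i}$. Otherwise $v_i\in\Razm(\cP)\cap\Centr(\cP)$, so $\mu^*_{v_i}=\mu_{v_i}$ and $(\zz_{v_i}:\mu_{v_i})=\zo_{v_i}$ by the definition of $\Centr(\cP)$. Applying Lemma~\ref{lem:centralizer-multiplication} with $\al=\zz_{v_i}$, $\beta=\mu_{v_i}$, and the pair $\bb[i]\eqc{\mu_{v_i}}b_{v_i}$ yields $\ba[i]\cdot\bb[i]=\ba[i]\cdot b_{v_i}$, exactly as required.

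The only real subtlety, and the point that I expect to be the crux, is that the elements $b_v\in\vf(v)$ are chosen independently for each variable $v$, so a single tuple $\bb\in\rel$ need not have $\bb[i]=b_{v_i}$ on every coordinate of a constraint scope; it is the centralizer condition $(\zz_v:\mu_v)=\zo_v$ on the coordinates where $\mu^*_v=\mu_v$, packaged as Lemma~\ref{lem:centralizer-multiplication}, that lets me swap $\bb[i]$ for $b_{v_i}$ after multiplying. Everything else is routine: existence of $\bb$ comes from $\vf$ being a solution of the quotient instance, and passing from $q_v$ to $q_v^k=p^\vf_v$ uses only closure of consistency under composition.
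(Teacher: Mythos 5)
Your proof is correct and follows essentially the same route as the paper's: pick a witness $\bb\in\rel$ above $\vf(\bs)$, observe $\ba\cdot\bb\in\rel$ since $\cdot$ is a term operation, invoke Lemma~\ref{lem:centralizer-multiplication} to see that the product is independent of the representative $\bb[i]$ on coordinates where $\mu^*_{v_i}=\mu_{v_i}$, and then iterate. The paper compresses this into a couple of sentences; your write-up merely spells out the coordinatewise case analysis more explicitly.
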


\begin{proof}
Take any $C=\ang{\bs,\rel}\in\cC$. Since $\vf$ is a solution of 
$\cP\fac{\ov\mu^*}$, there is $\bb\in\rel$ such that $\bb[v]\in\vf(v)$ 
for $v\in\bs$. Then for any $\ba\in\rel$, 
$q(\ba)=\ba\cdot\bb\in\rel$, and this product does not depend on the choice
of $\bb$, as it follows from Lemma~\ref{lem:centralizer-multiplication}. 
Iterating this operation also produces a tuple from $\rel$.
\end{proof}

We are now in a position to prove Theorem~\ref{the:central}.

\begin{proof}[of Theorem~\ref{the:central}]
We need to show 3 properties of the problem $\cP^\dg$ constructed in Stage~3:
(a) $\cP$ has a solution if and only if $\cP^\dg$ does; (b) for every $v\in\Razm(\cP)$,
$|\zA^\dg_v|<|\zA_v|$, where $\zA^\dg_v$ is the domain of $v$ in $\cP^\dg$; and
(c) every algebra $\zA^\dg_v$ has a weak near-unanimity term operation. We use the 
inductive definition of $\cP^\dg$ given in Stage~3.

Recall that $\Razm(\cP)=\{\vc v\ell\}$, $a_i,b_i\in\zA_{v_i}$ are such that 
$a_i\le b_i$ and $b_i\in\vf_{v_i,b_i}(v_i)$, where $\vf_{v_i,b_i}$ is a 
solution of $\cP\fac{\ov\mu^*}$. For $v\in V$ let mapping $p_{vi}:\zA_v\to\zA_v$
be given by 
\[
p_{vi}(x)=(\dots(x\cdot\vf_{v_1,b_1}(v))\cdot\dots)\cdot\vf_{v_i,b_i}(v),
\]
where if $\mu^*_v=\mu_v$ by Lemma~\ref{lem:centralizer-multiplication} 
the multiplication by $\vf_{v_j,b_j}(v)$ does not depend on the choice of 
a representative from $\vf_{v_j,b_j}(v)$. 
By Lemma~\ref{lem:consistent-mapping} $\{p_{vi}\}$ for every $i$, and so 
$\{p_v\}$ and $\{p^k_v\}$ are collections of consistent mappings. Now (a) 
follows from \cite{Maroti10:tree}.

Next we show that for every $j\le i\le\ell$ it holds that 
$|p_{vi}(\zA_{v_j})|<|\zA_{v_j}|$. Since applying mappings to a set does not
increase its cardinality, this implies (b). If $|p_{vj-1}(\zA_{v_j})|<|\zA_{v_j}|$,
we have the desired inequality applying the observation in the previous sentence.
Otherwise $a_j\in \zA_{v_j}=p_{vj-1}(\zA_{v_j})$, and it suffices to notice that 
$a_j\cdot\vf_{v_j,b_j}(v_j)=b_j\cdot\vf_{v_j,b_j}(v_j)=b_j$.

To prove (c) observe that if $\zA_v$ is semilattice free then $p^\vf_v$ is 
the identity mapping for any $\vf$ by Lemma~\ref{lem:multiplication-p1}, 
and so $\zA^\dg_v=\zA_v$. For the remaining domains let 
$f$ be a weak near-unanimity term of the class $\cA$. Then for any
idempotent mapping $p$ the operation $p\circ f$ given by
$(p\circ f)(\vc xn)=p(f(x_1\zd x_n))$
is a weak near-unanimity term of $p(\cA)=\{p(\zA)\mid \zA\in\cA\}$.
The result follows.
\end{proof}

\section{Algebra technicalities}\label{sec:compressed}

The rest of the paper is dedicated to proving Theorem~\ref{the:non-central}. 
This part assumes some familiarity with algebraic terminology. A brief review 
of the necessary facts from universal algebra can be found in 
\cite{Bulatov20:algebraic}. In this section we remind some results from 
\cite{Bulatov20:algebraic} necessary for our proof.

\subsection{Coloured graphs}

In \cite{Bulatov04:graph,Bulatov08:recent} we introduced a local approach 
to the structure of finite algebras. As we use this approach in the proof of 
Theorem~\ref{the:non-central}, we present the necessary elements of it here, 
see also \cite{Bulatov20:graph,Bulatov20:maximal}. 
For the sake of the definitions below we slightly abuse terminology 
and by a module mean the full idempotent reduct of a module.

For an algebra $\zA$ the graph $\cG(\zA)$ is defined as follows. 
The vertex set is the universe $A$ of $\zA$. A pair $ab$ of vertices is an 
\emph{edge} if and only if there exists a maximal congruence $\th$ of 
$\Sg{a,b}$, and a term operation $f$ of $\zA$ 
such that either $\Sg{a,b}\fac\th$ is a module and $f$ is an affine 
operation on it, or $f$ is a semilattice operation on
$\{a\fac\th,b\fac\th\}$, or $f$ is a majority operation on
$\{a\fac\th,b\fac\th\}$. (Note that we use the same operation symbol in this case.)
If there are a maximal congruence $\th$ and a term operation $f$ of $\zA$ such that $f$ 
is a semilattice operation on $\{a\fac\th,b\fac\th\}$ then $ab$ is said to have the
\emph{semilattice type}. An edge $ab$ is of 
\emph{majority type} if there are a maximal congruence $\th$ and a term 
operation $f$ such that $f$ is a majority
operation on $\{a\fac\th,b\fac\th\}$ and there is no semilattice 
term operation on $\{a\fac\th,b\fac\th\}$. Finally, $ab$ 
has the \emph{affine type} if there are $\th$ and $f$ 
such that $f$ is an affine operation on $\Sg{a,b}\fac\th$ and 
$\Sg{a,b}\fac\th$ is a module. Pairs of the form $\{a\fac\th,b\fac\th\}$
will be referred to as \emph{thick edges}.

Properties of $\cG(\zA)$ are related to the properties of the algebra $\zA$.

\begin{theorem}[Theorem~5 of \cite{Bulatov20:graph}]%
\label{the:connectedness}
Let $\zA$ be an idempotent algebra $\zA$ such that 
$\var\zA$ omits type \one. Then
\begin{itemize}
\item[(1)]
any two elements of $\zA$ are connected by a sequence of edges of
the semilattice, majority, and affine types;
\item[(2)]
$\var\zA$ omits types \one\ and \two\ if and only if $\cG(\zA)$ satisfies
the conditions of item (1) and contains no edges of the affine type.
\end{itemize}
\end{theorem}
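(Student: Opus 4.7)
The plan is to derive both parts from tame congruence theory (TCT) applied to two-generated subalgebras of $\zA$. Recall from TCT that the type of a prime quotient in the congruence lattice of a finite algebra lies in $\{\one,\two,\three,\four,\five\}$, corresponding respectively to unary, affine, Boolean, lattice, and semilattice behaviour on a minimal set; and in the idempotent setting each of the non-unary types yields a term operation of the algebra restricting to an affine (\two), majority (\three\ or \four), or semilattice (\five) operation on a suitable pair from a trace. These are exactly the three edge types appearing in the definition of $\cG(\zA)$.

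For part~(1) I argue by induction on $|\Sg{a,b}|$. If $a=b$ there is nothing to prove; otherwise set $\zB=\Sg{a,b}$ and pick any maximal proper congruence $\th$ of $\zB$. A key observation is that $(a,b)\notin\th$: if $a$ and $b$ lay in a common $\th$-block, then by idempotency that block would be a subalgebra of $\zB$ containing $\{a,b\}$, hence equal to $\Sg{a,b}=\zB$, which would force $\th=\zo_\zB$ and contradict maximality. Consequently $\zB\fac\th$ is simple with $a\fac\th\ne b\fac\th$; since $\zB\fac\th\in\H\S(\zA)$ and $\var\zA$ omits \one, the type of $\zB\fac\th$ lies in $\{\two,\three,\four,\five\}$. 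If the pair $\{a\fac\th,b\fac\th\}$ sits inside a minimal set of $\zB\fac\th$, the standard TCT construction combined with idempotency furnishes a term operation of $\zA$ acting as affine/majority/semilattice on this thick edge, making $ab$ itself an edge of $\cG(\zA)$. Otherwise I invoke the inductive hypothesis to connect $a$ and $b$ each to an element in a common trace via strictly smaller two-generated subalgebras, and concatenate the resulting paths.

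For part~(2), granted part~(1) the right-hand side reduces to ``$\var\zA$ omits \one\ and $\cG(\zA)$ contains no affine edge''. The direction $(\Rightarrow)$ is straightforward: an affine edge $ab$ produces by definition a module $\Sg{a,b}\fac\th\in\H\S(\zA)$, witnessing type \two\ in $\var\zA$, so omitting \two\ forbids such edges. For $(\Leftarrow)$, assume toward contradiction that $\var\zA$ contains type~\two\ and pick a witness: some $\zB\in\H\S(\zA)$ with a prime quotient $\al\prec\beta$ of type \two. A suitable quotient of $\zB$ becomes (the idempotent reduct of) a module; choosing $a,b$ in this module with $a\ne b$ and running the argument of part~(1) on $\Sg{a,b}$, the resulting maximal-congruence quotient must again be a module (since subalgebras and quotients of idempotent reducts of modules retain that structure), so we obtain an affine edge in $\cG(\zA)$, contradicting the hypothesis.

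The main obstacle is the promotion of local pseudo-operations supplied by TCT on a minimal set to honest term operations of $\zA$ acting correctly on the specific pair $\{a\fac\th,b\fac\th\}$. In the idempotent setting this can be carried through because every subset is closed under idempotent operations restricted to it, so the unary polynomial absorptions onto minimal sets can be replaced by term operations without altering their behaviour on the pair of interest; nevertheless this step is where the delicate algebraic work lives, and the induction in part~(1) is precisely what buys the flexibility to handle pairs not already contained in a trace by passing to strictly smaller two-generated subalgebras.
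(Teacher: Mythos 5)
This theorem is not proved in the present paper; it is imported verbatim as Theorem~5 of \cite{Bulatov20:graph} (``Local structure of idempotent algebras~I''), and the body of this paper uses it as a black box. So there is no ``paper's own proof'' here to compare against, but your proposed argument can still be assessed on its own terms, and it has several genuine gaps.

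The central one is the sentence you flag yourself: you claim that in the idempotent setting ``every subset is closed under idempotent operations restricted to it,'' and use this to promote the TCT polynomial operations to term operations. This is false. Idempotency only forces singletons to be subalgebras; a two-element subset $\{a,b\}$ need not be closed under a term operation. For instance, in the idempotent reduct of $\zZ_3$ with the affine term $t(x,y,z)=x-y+z$, the set $\{0,1\}$ is not closed, since $t(0,1,0)=2$. The conversion of a pseudo-{\sl (majority, Mal'tsev, semilattice)} polynomial on a trace into an honest {\sl term} of $\zA$ that acts correctly on the specific pair $\{a\fac\th,b\fac\th\}$ is exactly where the work in the cited proof lives, and it is not bought just by idempotency. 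A second gap is the ``otherwise'' branch of your induction in part~(1): you want to connect $a$ and $b$ through an intermediate element $c$ of a trace and recurse on $\Sg{a,c}$, $\Sg{c,b}$. But $c\in\Sg{a,b}$ only gives $\Sg{a,c}\sse\Sg{a,b}$, and equality is entirely possible, so the induction parameter $|\Sg{a,b}|$ is not guaranteed to decrease. Unless you can exhibit a $c$ with $\Sg{a,c},\Sg{c,b}$ both {\sl properly} smaller, the recursion can cycle.

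Part~(2) of your proposal also has a lifting problem in the $(\Leftarrow)$ direction. A witness of type \two\ in $\var\zA$ lives in some $\zB\in\H\S(\zA)$, so its elements are $\psi$-blocks of a subalgebra $\zC\le\zA$, not elements of $\zA$. To contradict ``no affine edges in $\cG(\zA)$'' you need a pair $a',b'\in A$ such that $\Sg[\zA]{a',b'}$, and {\sl its} maximal congruence quotient, exhibits the module structure; choosing representatives $a'\in a$, $b'\in b$ of the two blocks does not obviously give this, because $\Sg[\zA]{a',b'}$ may be much larger than a preimage of the two-generated submodule and its maximal quotient may have a different type. This step needs an argument (essentially the one that the cited reference actually uses: a type \two\ prime interval in $\var\zA$ forces a type \two\ prime interval in $\Con(\zA)$, and from there one extracts an affine edge between elements of $\zA$). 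As written, your proof skips it.

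So while the scaffolding (pass to $\Sg{a,b}$, take a maximal congruence, classify the simple quotient by type, repackage as an edge) is the right frame, the two load-bearing steps --- polynomial-to-term promotion inside $\Sg{a,b}$, and the claim that the pair of generators (not just some pair in a trace) actually carries the majority/semilattice/module behaviour --- are asserted rather than argued, and the unsupported closure claim that is supposed to discharge them is simply incorrect.
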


We use the following refinement of this construction. Let $\cA$ be a finite class of
finite smooth algebras. A ternary term operation $g'$ of $\cA$ is said to satisfy the 
\emph{majority condition} for $\cA$ if $g'$ is a majority operation 
on every thick majority edge of every algebra from $\cA$. A ternary term operation 
$h'$ is said to satisfy the 
\emph{minority condition} for $\cA$ if $h'$ is a Mal'tsev operation on every thick affine
edge. Operations satisfying the majority and minority conditions always exists, 
as is proved in \cite[Theorem~21]{Bulatov20:graph}. Fix an operation $h$ 
satisfying the minority condition, it can also be chosen to satisfy the equation 
$h(h(x,y,y),y,y)=h(x,y,y)$. A pair of elements $a,b\in\zA\in\cA$ is said to be 
\begin{itemize}
\item[(1)]
a \emph{semilattice edge} if there is a term operation $f$ such that 
$f(a,b)=f(b,a)=b$;
\item[(2)]
a \emph{thin majority edge} if for any term operation $g'$ satisfying the 
majority condition the subalgebras $\Sg{a,g'(a,b,b)},\Sg{a,g'(b,a,b)},
\Sg{a,g'(b,b,a)}$ contain $b$.
\item[(3)]
a \emph{thin affine edge} if $h(b,a,a)=b$ and $b\in\Sg{a,h'(a,a,b)}$ for any 
term operation $h'$ satisfying the minority condition.
\end{itemize}
Note that thin edges are directed, as $a$ and $b$ appear asymmetrically. 
By $\cG'(\zA)$ we denote the graph whose vertices are the elements of 
$\zA$, and the edges are the thin edges defined above. 
Theorem~21 from \cite{Bulatov20:graph} also implies that there exists
a binary term operation $\cdot$ of $\zA$ that is a semilattice operation 
on every thin semilattice edge. 

We distinguish several types of paths in $\cG'(\zA)$ depending on the 
types of edges involved. A directed path in $\cG'(\zA)$ is called an 
\emph{asm-path},  if there is an asm-path from $a$ to $b$ we write 
$a\sqq_{asm} b$. If all edges of this path 
are semilattice or affine, it is called an \emph{affine-semilattice path} or 
an \emph{as-path}, if there is an as-path from $a$ to $b$ we write 
$a\sqq_{as} b$. We consider strongly connected components 
of $\cG'(\zA)$ with majority edges 
removed, and the natural partial order on such components. The maximal 
components will be called \emph{as-components}, and 
the elements from as-components are called 
\emph{as-maximal}; the set of all
as-maximal elements of $\zA$ is denoted by $\amax(\zA)$. 
An alternative way to define as-maximal elements is as 
follows: $a$ is as-maximal if for every $b\in\zA$ such that $a\sqq_{as} b$ it also 
holds that $b\sqq_{as}a$. Finally, element $a\in\zA$ is said to be 
\emph{universally maximal} (or \emph{u-maximal} for short) 
if for every $b\in\zA$ such that $a\sqq_{asm} b$ 
it also holds that $b\sqq_{asm}a$. The set of all u-maximal elements of 
$\zA$ is denoted $\umax(\zA)$.

U-maximality has additional useful properties.

\begin{lemma}[Theorem~23, \cite{Bulatov20:maximal}; %
Lemma~12, \cite{Bulatov20:algebraic}]%
\label{lem:u-max-congruence}
(1) Any two u-maximal elements are connected with an asm-path,\\[2mm]
(2) Let $\zB$ be a subalgebra of $\zA$ containing a u-maximal element of $\zA$. 
Then every element u-maximal in $\zB$ is also 
u-maximal in $\zA$. In particular, if $\al$ is a congruence of $\zA$ and $\zB$ 
is a u-maximal $\al$-block, that is $\zB$ is a u-maximal element in 
$\zA\fac\al$, then $\umax(\zB)\sse\umax(\zA)$. 
\end{lemma}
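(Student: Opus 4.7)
The plan is to prove the two parts in order, since part~(2) rests on part~(1), and the ``in particular'' clause rests on the first half of part~(2).

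For part~(1), recall that the u-maximal elements of $\zA$ are exactly the elements lying in the maximal strongly connected components of $\cG'(\zA)$ under asm-reachability; so the statement is equivalent to saying that this DAG of SCCs has a \emph{unique} maximal sink. The natural tool is the binary term operation $\cdot$ from Theorem~21 of \cite{Bulatov20:graph} (cited in the excerpt), which is a semilattice operation on every thin semilattice edge, together with the operation of Lemma~\ref{lem:multiplication-p1}. Given u-maximal elements $a,b$, I would consider $c = a\cdot b$: if $\{a,c\}$ (resp.\ $\{b,c\}$) is a thin semilattice edge, then $a\sqq_{as}c$ (resp.\ $b\sqq_{as}c$), and by u-maximality of $a$ (resp.\ $b$) one has $c\sqq_{asm}a$ (resp.\ $c\sqq_{asm}b$), so $a$ and $b$ lie in the same SCC. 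The degenerate cases $c=a$ or $c=b$ are resolved by Theorem~\ref{the:connectedness}: an undirected semilattice/majority/affine path between $a$ and $b$ exists, and at each intermediate element one applies the same product trick on consecutive vertices and uses the u-maximality hypothesis to propagate membership in a single SCC along the path. This ``propagate and absorb'' argument is the main obstacle, as it requires a careful case analysis by edge type, and for majority and affine transitions one needs to invoke the term operations $g'$ and $h$ satisfying the majority and minority conditions to produce thin directed edges witnessing the as/asm-connections.

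For the first sentence of part~(2), the plan is immediate from part~(1). Every thin edge of $\cG'(\zB)$ is also a thin edge of $\cG'(\zA)$: the term operations used to witness the semilattice/majority/affine conditions in $\zB$ are restrictions of the corresponding term operations of $\zA$, so the witnessing subalgebras $\Sg{a,g'(a,b,b)}$ etc.\ computed inside $\zB$ are contained in those computed inside $\zA$. Hence if $a^{*}\in\zB\cap\umax(\zA)$ (which exists by hypothesis) and $b\in\umax(\zB)$, then by part~(1) applied inside $\zB$ there is an asm-path from $a^{*}$ to $b$ in $\cG'(\zB)$, which transfers to an asm-path in $\cG'(\zA)$. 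Since $a^{*}\in\umax(\zA)$, the return path $b\sqq_{asm}a^{*}$ exists in $\zA$, placing $b$ in the maximal SCC of $a^{*}$, so $b\in\umax(\zA)$.

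The ``in particular'' clause reduces to exhibiting a u-maximal element of $\zA$ inside a u-maximal $\al$-block $\zB$. The strategy is to pick any $a^{*}\in\umax(\zA)$ and show that $a^{*}/\al\in\umax(\zA\fac\al)$: an asm-path from $a^{*}/\al$ to $c/\al$ in $\cG'(\zA\fac\al)$ lifts (via the term operations witnessing the edges, applied to representatives) to an asm-path from $a^{*}$ to some $c'\eqc\al c$ in $\cG'(\zA)$, and u-maximality of $a^{*}$ in $\zA$ returns it. Then $a^{*}/\al$ and $\zB$ are both u-maximal in $\zA\fac\al$, so by part~(1) in $\zA\fac\al$ an asm-path $a^{*}/\al\sqq_{asm}\zB$ exists, which lifts inside $\zA$ to an asm-path ending at some $b^{*}\in\zB$. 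Since $a^{*}\in\umax(\zA)$ and $a^{*}\sqq_{asm}b^{*}$, u-maximality forces $b^{*}\in\umax(\zA)\cap\zB$, so $\zB$ satisfies the hypothesis of the first sentence of part~(2), and the inclusion $\umax(\zB)\sse\umax(\zA)$ follows. The hardest step here is the lifting of asm-paths through the quotient $\zA\fac\al$, which is where the specific structural properties of thin edges and the availability of the semilattice, majority and Mal'tsev-type term operations on subalgebras $\Sg{a,b}$ are essential.
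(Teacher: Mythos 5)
This lemma is one the paper \emph{imports} rather than proves: the citation ``[Theorem~23, \cite{Bulatov20:maximal}; Lemma~12, \cite{Bulatov20:algebraic}]'' signals that the proof lives in the cited companion papers, and no argument is given here. So there is no in-paper proof to compare against; what you have done is attempt a reconstruction from the definitions and the other facts quoted in Section~\ref{sec:compressed}.

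Your reconstruction identifies the right ingredients---the multiplication operation $\cdot$ from Lemma~\ref{lem:multiplication-p1}, the global connectivity of Theorem~\ref{the:connectedness}, the fact that thin edges are witnessed by term operations and generated subalgebras and therefore persist from a subalgebra $\zB$ to the ambient $\zA$, and the reduction of the ``in particular'' clause to finding one u-maximal element of $\zA$ inside the u-maximal $\al$-block. The easiest step, the first sentence of part~(2) given part~(1), is essentially correct. But the two steps you flag as ``the main obstacle'' and ``the hardest step'' are precisely where the substance of the cited theorem lies, and they are not actually argued. In part~(1), the product $c=a\cdot b$ only helps when $c\neq a$; when $ab=a$ (and symmetrically $ba=b$), you fall back on Theorem~\ref{the:connectedness}, which gives an \emph{undirected} path of \emph{thick} edges in $\cG(\zA)$, while what you need is a \emph{directed} asm-path of \emph{thin} edges in $\cG'(\zA)$. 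Converting one into the other, and showing that u-maximality ``propagates'' along the intermediate vertices (which are not themselves assumed u-maximal), is exactly the nontrivial content; the sketch asserts it without supplying the case analysis. Similarly, the lifting of asm-paths from $\zA\fac\al$ to $\zA$ in the ``in particular'' clause is plausible but requires checking that each thin edge in the quotient has a thin-edge lift issuing from a prescribed representative (for affine edges this uses the identity $h(h(x,y,y),y,y)=h(x,y,y)$, and the subalgebra condition $b\in\Sg{a,h'(a,a,b)}$ must also be verified to lift); none of this is carried out. As a roadmap the proposal is sensible, but both open steps would need to be filled in before it counts as a proof, and both genuinely depend on results established in \cite{Bulatov20:graph,Bulatov20:maximal} rather than on anything quoted in the present paper.
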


Relations, or, more generally subdirect products of algebras can be naturally
endowed with a graph structure: Let $\rel$ be a subdirect product of 
$\zA_1\tms\zA_n$. A pair $\ba,\bb\in\rel$ is a thin \{semilattice, majority, 
affine\} edge if for every $i\in[n]$ the pair $\ba[i],\bb[i]$ is a 
thin \{semilattice, majority, affine\} edge or $\ba[i]=\bb[i]$ (in the latter 
case it will often be convenient to call a pair of equal elements a thin edge of 
whatever type we need). Paths and maximality can also be 
lifted to subdirect products.

\begin{lemma}[The Maximality Lemma, Corollaries~18,19, 
\cite{Bulatov20:maximal}]\label{lem:to-max}
Let $\rel$ be a subdirect product of $\zA_1\tms\zA_n$, $I\sse[n]$.\\[1mm]
(1) For any $\ba\in\rel$, and an as-path (asm-path) 
$\vc\bb k\in\pr_I\rel$ with $\pr_I\ba=\bb_1$, there is an
as-path (asm-path) $\vc{\bb'}\ell\in\rel$ 
such that $\pr_I\bb'_\ell=\bb_\ell$.\\[1mm]
(2) For any $\bb\in\amax(\pr_I\rel)$ ($\bb\in\umax(\pr_I\rel)$) there 
is $\bb'\in\amax(\rel)$ ($\bb'\in\umax(\rel)$), such that 
$\pr_I\bb'=\bb$. \\[1mm]
(3) If $\ba\in\rel$ is a as-maximal or u-maximal element then so is 
$\pr_I\ba$.
\end{lemma}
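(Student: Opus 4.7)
I would prove the three parts in order, with part (1) doing the real work and (2), (3) following as consequences.

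\textbf{Part (1): Lifting paths.} The natural approach is induction on the length $k$ of the path $\bb_1,\ldots,\bb_k$, so it suffices to handle the step of extending a lifted path ending at some $\bc \in \rel$ with $\pr_I\bc = \bb_i$ by an as-path (or asm-path) in $\rel$ ending at something projecting to $\bb_{i+1}$. I would treat each edge type separately. For a semilattice edge $\bb_i \to \bb_{i+1}$, pick any preimage $\bd \in \rel$ with $\pr_I\bd = \bb_{i+1}$ and consider $\bc \cdot \bd$, where $\cdot$ is the semilattice-on-thin-semilattice-edges operation supplied by the graph machinery. Coordinatewise, on positions in $I$ this gives $\bb_i \cdot \bb_{i+1} = \bb_{i+1}$, while on positions outside $I$ we get a thin semilattice edge or an equality; so $\bc \to \bc \cdot \bd$ is a thin semilattice edge in $\rel$ with the correct projection. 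Affine edges are handled analogously using $h(\bd,\bd,\bc)$ with the minority-condition term $h$, yielding (by the equation $h(b,a,a)=b$ coordinatewise) a thin affine edge. For majority edges (only needed in the asm-case), I would use the term operation $g'$ satisfying the majority condition; the defining property of thin majority edges, namely that $b$ lies in $\Sg{a, g'(\cdots)}$ for each of the three arrangements, lets us produce, possibly via a short auxiliary subpath inside the generated subalgebras, an asm-edge in $\rel$ reaching a preimage of $\bb_{i+1}$. This is why the lifted path length $\ell$ need not equal $k$.

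\textbf{Part (2): Lifting maximal elements.} Given $\bb \in \amax(\pr_I\rel)$, take any preimage $\bc \in \rel$, and walk up an as-path from $\bc$ to some $\bb' \in \amax(\rel)$ (such a walk exists and terminates by finiteness and the definition of as-maximal). By part (1), the projection gives an as-path in $\pr_I\rel$ from $\bb$ to $\pr_I\bb'$. Since $\bb$ is as-maximal, there is an as-path back from $\pr_I\bb'$ to $\bb$. Apply part (1) once more, starting this time from $\bb'$, to lift this return path to an as-path in $\rel$ ending at some $\bb'' \in \rel$ with $\pr_I\bb'' = \bb$. Then $\bb''$ is reachable from the as-maximal $\bb'$ by an as-path, and by the equivalence of as-maximal elements in a single as-component we get $\bb'' \in \amax(\rel)$. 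The u-maximal case is identical with as-paths replaced by asm-paths throughout, using the asm-version of (1).

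\textbf{Part (3): Projecting maximal elements.} This is the contrapositive of (1). Suppose $\ba \in \amax(\rel)$ but $\pr_I\ba$ is not as-maximal in $\pr_I\rel$, so there is an as-path from $\pr_I\ba$ to some $\bb \in \pr_I\rel$ with no as-path back. By (1), lift the forward path starting at $\ba$ to obtain $\bc \in \rel$ with $\pr_I\bc = \bb$ and $\ba \sqq_{as} \bc$. By as-maximality of $\ba$ we get $\bc \sqq_{as} \ba$, and projecting this path to $I$ produces an as-path from $\bb$ to $\pr_I\ba$, a contradiction. The u-maximal case is the same argument with asm-paths.

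\textbf{Main obstacle.} The delicate step is part (1) for majority edges, because unlike semilattice and affine edges they are not produced by a single term applied to a preimage; one must work inside $\Sg{\ba, g'(\cdot,\cdot,\cdot)}$ and produce an asm-connection in $\rel$ whose projection reaches $\bb_{i+1}$, which is why the path length may grow. The semilattice and affine cases, by contrast, are essentially a single application of a term, and part (3) is a one-line contrapositive once (1) is in hand.
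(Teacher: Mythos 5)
This lemma is not proved in the paper at all: it is cited verbatim from \cite{Bulatov20:maximal} (as Corollaries~18,19 there), so there is no ``paper's own proof'' to compare against. I will therefore assess your proposal on its own merits.

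Your treatment of parts~(2) and~(3) as consequences of~(1) is sound, and your lifting step for thin semilattice edges is correct: $\bc\cdot\bd$ has $\pr_I(\bc\cdot\bd)=\bb_{i+1}$ because $\cdot$ is a semilattice operation on thin semilattice edges, and on coordinates outside $I$ Lemma~\ref{lem:multiplication-p1} guarantees $\bc[j]\to\bc[j]\cdot\bd[j]$ is a thin semilattice edge or an equality, so $\bc\to\bc\cdot\bd$ is a thin semilattice edge in $\rel$.

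The affine case has a real gap, though, and not only the apparent transposition of arguments (you wrote $h(\bd,\bd,\bc)$, while the defining equation $h(b,a,a)=b$ calls for $h(\bd,\bc,\bc)$). Even with the corrected term, on a coordinate $j\notin I$ you get the element $h(\bd[j],\bc[j],\bc[j])$, and there is no reason that $\bc[j]\to h(\bd[j],\bc[j],\bc[j])$ is a thin affine edge or an equality. The minority condition on $h$ only describes its behaviour on thick affine edges, and the defining equations of a thin affine edge concern pairs that already are thin affine edges; neither tells you anything about $h(d,c,c)$ for arbitrary $c,d$ from the algebra $\zA_j$. So the coordinatewise check that makes the semilattice case a one-liner fails here. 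The majority case you flag as the main obstacle and leave genuinely unargued, which is honest, but it means a third of the cases in part~(1) has no proof. In the source Bulatov20:maximal these two cases require a substantial amount of separate machinery, and the as/asm path that is produced really can be longer than the projected path for reasons beyond the one you mention for majority edges. So I would say: parts~(2) and~(3) are fine modulo~(1), the semilattice case of~(1) is correct, but the affine and majority cases of~(1) are not established and cannot be dispatched ``analogously'' to the semilattice case.
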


We complete this section with an auxiliary statement that will be needed
later.

\begin{lemma}[Lemmas~15, \cite{Bulatov20:algebraic}, Lemma~4.14, 
\cite{Hobby88:structure}]\label{lem:as-type-2}
(1) Let $\al\prec\beta$, $\al,\beta\in\Con(\zA)$, let $B$ be a $\beta$-block and 
$\typ(\al,\beta)=\two$. Then $B\fac\al$ is term equivalent to a module. 
In particular, every pair of elements of $B\fac\al$ is a thin affine edge in 
$\zA\fac\al$.\\[2mm]
(2) If $(\al:\beta)\ge\beta$, then $\typ(\al,\beta)=\two$.
\end{lemma}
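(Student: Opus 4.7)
The plan is to tackle the two parts separately, invoking the machinery of tame congruence theory and the correspondence between the centralizer used here and the classical commutator.

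For part (1), I would begin from the classical structure theorem for type \two\ covers (as cited: Hobby--McKenzie Lemma~4.14): on every $(\al,\beta)$-minimal set, the induced algebra on a $\beta$-trace modulo $\al$ is polynomially equivalent to a one-dimensional vector space. Under our standing assumption that $\zA$ is idempotent and $\var\zA$ omits type~\one, the global structure of $B\fac\al$ inherits this: traces cover $B\fac\al$ by the Two-Term Lemma / connectivity of the trace graph, and idempotency upgrades polynomial equivalence to term equivalence (since all polynomials are already term operations up to the idempotent hull of parameters). This yields a term-definable Mal'tsev operation on $B\fac\al$ making it a module, and any pair $a\fac\al, b\fac\al \in B\fac\al$ is then a thin affine edge in $\zA\fac\al$: the required minority term $h$ satisfies $h(b,a,a) = b$ on the module, and $b \in \Sg{a,h'(a,a,b)}$ follows because on a module every Mal'tsev polynomial sweeps out the affine span of its arguments.

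For part (2), the idea is to translate the hypothesis $(\al:\beta) \ge \beta$ into the classical centralizer condition $C(\beta,\beta;\al)$, i.e.\ $[\beta,\beta] \le \al$. By the equivalence established by Willard (cited in the footnote, \cite{Bulatov20:algebraic}, Proposition~33), the alternative centralizer coincides with the standard one, so $\beta$ centralizes $\beta$ modulo $\al$ in the term-condition sense. Now the classification of types in a prime quotient tells us: on $\al \prec \beta$, we have $[\beta,\beta] \le \al$ precisely when $\typ(\al,\beta) \in \{\one,\two\}$ (the two ``abelian'' types). The standing assumption of the paper (the Dichotomy Conjecture's condition, equivalently $\var\zA$ omits type~\one) rules out $\typ(\al,\beta) = \one$, and therefore $\typ(\al,\beta) = \two$.

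The main obstacle I anticipate is part~(1): showing that the trace-level module structure extends term-definably to the whole $\beta$-block $B\fac\al$ rather than only to a single trace. On a non-minimal set one needs to glue module structures along thin semilattice or majority edges; if any such edges existed within $B\fac\al$, they would contradict the type~\two\ hypothesis, so the absence of these edges (a standard tame-congruence consequence of type~\two\ on a prime interval) is precisely what lets a single affine term serve globally. The rest is routine verification that the resulting Mal'tsev term witnesses the thin-affine-edge conditions.
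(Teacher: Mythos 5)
This lemma is stated in the paper with a citation (Lemma~15 of \cite{Bulatov20:algebraic} and Lemma~4.14 of \cite{Hobby88:structure}) and is not proved here, so there is no in-paper argument to compare against. Your part~(2) is sound: translating $(\al:\beta)\geq\beta$ into the term-condition centralizer $C(\beta,\beta;\al)$ via the Willard correspondence, invoking the Hobby--McKenzie classification to get $\typ(\al,\beta)\in\{\one,\two\}$, and excluding type~\one\ using the standing Taylor/weak-near-unanimity hypothesis is the standard and correct route.

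The gap is in part~(1), in the step where you pass from the trace-level vector-space structure to the assertion that $B\fac\al$ as a whole is term-equivalent to a module. Your sketch argues that semilattice or majority edges inside $B\fac\al$ would contradict type~\two, so ``a single affine term serves globally.'' Two problems: first, the absence of those edges is not an off-the-shelf consequence of $\typ(\al,\beta)=\two$ at the prime-interval level; it follows from abelianness of $B\fac\al$ (i.e.\ from $[\beta,\beta]\leq\alpha$), and you never invoke abelianness. Second, and more seriously, even granting that every edge of $\cG(B\fac\al)$ has affine type, this does not by itself produce a single Mal'tsev term operation on all of $B\fac\al$: edges are classified through maximal congruences of two-generated subalgebras, and there is no automatic way to glue the per-pair affine behavior into a global module structure (indeed $B\fac\al$ need not be simple, so one cannot simply appeal to the type of $0\prec 1$ in $\Con(B\fac\al)$). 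The clean argument bypasses gluing entirely: $\typ(\al,\beta)=\two$ gives $C(\beta,\beta;\al)$, hence $B\fac\al$ is abelian; $B\fac\al$ is a finite idempotent algebra lying in a variety that omits type~\one; and a finite idempotent abelian algebra in such a variety is affine, hence term-equivalent to a module. With the Mal'tsev term in hand, the thin-affine-edge verification you give at the end is fine.
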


\subsection{Quasi-decomposition and rectangularity}

We make use of the property of quasi-2-decomposability proved 
in \cite{Bulatov20:maximal}.

\begin{theorem}[The 2-Decomposition Theorem~30, \cite{Bulatov20:maximal}]%
\label{the:quasi-2-decomp}
If $\rel$ is an $n$-ary relation, $X\sse[n]$, tuple $\ba$ is such that
$\pr_J\ba\in\pr_J\rel$ for any $J\sse[n]$, $|J|=2$, and $\pr_X\ba
\in\amax(\pr_X\rel)$, there is a tuple $\bb\in\rel$ with
$\pr_J\ba\sqq_{as}\pr_J\bb$ 
for any $J\sse[n]$, $|J|=2$, and $\pr_X\bb=\pr_X\ba$. 
\end{theorem}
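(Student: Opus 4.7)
The plan is to prove Theorem~\ref{the:quasi-2-decomp} by induction on $n - |X|$, the number of coordinates of $\ba$ outside the as-maximal block $X$. The Maximality Lemma (Lemma~\ref{lem:to-max}) and the structure of the graph $\cG'(\zA)$ are the central tools.

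For the base case $|X| = n$ the hypothesis $\pr_X\ba \in \amax(\pr_X\rel)$ becomes $\ba \in \amax(\rel) \sse \rel$, so $\bb = \ba$ works. For the inductive step with $|X| < n$, I would fix a coordinate $i_0 \notin X$ and write $X' = X \cup \{i_0\}$, with the aim of reducing to the case of the larger set $X'$. By Lemma~\ref{lem:to-max}(2), I can lift $\pr_X\ba$ to a tuple $\bc \in \amax(\rel)$ with $\pr_X\bc = \pr_X\ba$, and by Lemma~\ref{lem:to-max}(3) this gives $\pr_{X'}\bc \in \amax(\pr_{X'}\rel)$. The first substantive claim will be that $\pr_{\{j,i_0\}}\ba \sqq_{as} \pr_{\{j,i_0\}}\bc$ in $\pr_{\{j,i_0\}}\rel$ for every $j \in X$; this I would derive by combining the pairwise consistency of $\ba$ at the pair $\{j,i_0\}$, the as-maximality of $\bc[i_0]$ in the fiber of $\pr_X$ above $\pr_X\ba$, and a path-lifting application of Lemma~\ref{lem:to-max}(1). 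Setting $c_0 = \bc[i_0]$, one naturally tries to apply the induction to the modified tuple $\ba^*$ obtained from $\ba$ by replacing its $i_0$-coordinate with $c_0$, with respect to the enlarged set $X'$.

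The main obstacle will be that $\ba^*$ may fail the pairwise consistency hypothesis at pairs $\{i_0, i'\}$ with $i' \notin X'$, since the pair $(c_0, \ba[i'])$ need not lie in $\pr_{\{i_0,i'\}}\rel$. To overcome this I would avoid modifying $\ba$ to a single auxiliary tuple and instead carry out an iterative construction: extend $X$ one coordinate at a time, at each stage choosing a new value $c_i$ for the next coordinate $i$ as an as-maximal lift compatible with the currently fixed coordinates, and then applying the multiplication operation from Lemma~\ref{lem:multiplication-p1} between the current candidate $\bc$ and witnesses of pairwise consistency at the newly constrained pairs. The key technical ingredient justifying both termination and correctness is that this multiplication leaves the already-fixed as-maximal coordinates unchanged (by idempotence on semilattice edges combined with Corollary~\ref{cor:semilattice-in-centralizer}), and only moves the remaining coordinates upward in the as-order, while Lemma~\ref{lem:to-max}(1) allows lifting the locally produced as-paths in each $\pr_J\rel$ back to a single tuple in $\rel$. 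Finally, the as-paths produced at successive stages will be chained through transitivity of $\sqq_{as}$ to yield the required as-path from $\pr_J\ba$ to $\pr_J\bb$ in each $\pr_J\rel$ simultaneously.
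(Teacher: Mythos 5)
This theorem is not proved in the paper; it is imported verbatim as Theorem~30 of \cite{Bulatov20:maximal}, so there is no in-paper proof to compare your attempt against. Reviewing the attempt on its own merits: the outer strategy (induction on $n-|X|$, lifting to as-maximal tuples via Lemma~\ref{lem:to-max}, and iterated multiplication) is a plausible line of attack, but the crux step is wrong as stated. You assert that multiplication against witnesses of pairwise consistency ``leaves the already-fixed as-maximal coordinates unchanged,'' citing Corollary~\ref{cor:semilattice-in-centralizer}. That is not what the corollary says, and the assertion is false: if $a$ lies in an as-component $C$ and $b\in\zA$, Lemma~\ref{lem:multiplication-p1} only gives $a\sqq_{as}ab$, so $ab$ stays in $C$ but need not equal $a$ --- multiplication within a nontrivial as-component is not the identity. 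Corollary~\ref{cor:semilattice-in-centralizer} is a statement about centralizers $(\al:\beta)\ge\beta$ collapsing semilattice edges to $\al$-classes; it offers no such ``idempotence on the fixed block'' principle in the general situation of the theorem. Consequently your iteratively refined candidate $\bc$ only preserves $\pr_X\bc$ up to the as-component of $\pr_X\ba$, not on the nose, so the required conclusion $\pr_X\bb=\pr_X\ba$ is not reached. A plausible repair is to run the construction to obtain some $\bb'\in\rel$ with $\pr_J\ba\sqq_{as}\pr_J\bb'$ for all pairs $J$ and $\pr_X\bb'$ in the same as-component as $\pr_X\ba$, and then use Lemma~\ref{lem:to-max}(1) to lift an as-path in $\amax(\pr_X\rel)$ from $\pr_X\bb'$ back to $\pr_X\ba$, letting transitivity of $\sqq_{as}$ restore the pairwise conclusions; but as written the proof has a genuine gap at exactly this point, and the auxiliary claim $\pr_{\{j,i_0\}}\ba\sqq_{as}\pr_{\{j,i_0\}}\bc$ is also sketched rather than established.
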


Another property of relations was also introduced in \cite{Bulatov20:maximal} 
and is similar to the rectangularity property of relations with a Mal'tsev 
polymorphism. Let $\rel$ be a subdirect product of $\zA_1,\zA_2$. 
By $\lnk_1,\lnk_2$ we denote the congruences of
$\zA_1,\zA_2$, respectively, generated by the sets of pairs
$\{(a,b)\in\zA_1^2\mid \text{ there is $c\in\zA_2$ such that } 
(a,c),(b,c)\in\rel\}$ and $\{(a,b)\in\zA_2^2\mid \text{ there is 
$c\in\zA_1$ such that } (c,a),(c,b)\in\rel\}$, respectively. Congruences 
$\lnk_1,\lnk_2$ are called \emph{link congruences}.
Relation $\rel$ is said to be \emph{linked} 
if the link congruences are full congruences.

\begin{prop}[Corollary~28, \cite{Bulatov20:maximal}]\label{pro:max-gen} 
Let $\rel$ be a subdirect product of $\zA_1$ and $\zA_2$, 
$\lnk_1,\lnk_2$ the link congruences, and let $B_1,B_2$ be 
as-components of a $\lnk_1$-block and a $\lnk_2$-block, respectively, 
such that $\rel\cap(B_1\tm B_2)\ne\eps$. Then $B_1\tm B_2\sse\rel$.

In particular, if $\rel$ is linked and $B_1,B_2$ are as-components of 
$\zA_1,\zA_2$, respectively, such that $\rel\cap(B_1\tm B_2)\ne\eps$, 
then $B_1\tm B_2\sse\rel$.
\end{prop}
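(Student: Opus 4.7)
My plan is to first reduce the general statement to the ``in particular'' linked case, and then prove the linked case by iterated lifting of as-paths combined with the structural operations witnessing thin edges.

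\emph{Reduction to the linked case.} Suppose we have the statement for linked subdirect products. Let $E_1$ and $E_2$ be the $\lnk_1$-block and $\lnk_2$-block containing $B_1$ and $B_2$ respectively, and set $\rel^\ast=\rel\cap(E_1\tm E_2)$. Because $\lnk_i$-blocks are precisely the connected components of the bipartite graph induced by $\rel$, different choices of $(\lnk_1,\lnk_2)$-block pairs produce disjoint pieces of $\rel$, hence $\rel^\ast$ is a subdirect product of $E_1\tm E_2$. Its own link congruences are full on $E_i$ (by the definition of $\lnk_i$ restricted to this component), so $\rel^\ast$ is linked. Applying the ``in particular'' statement to $\rel^\ast$ and the as-components $B_1,B_2$ of $E_1,E_2$ gives $B_1\tm B_2\sse\rel^\ast\sse\rel$.

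\emph{The linked case, first half.} Assume now that $\rel$ is linked and fix $(b_1,b_2)\in\rel\cap(B_1\tm B_2)$ and a target $(b_1',b_2')\in B_1\tm B_2$. Since $B_1$ is a strongly connected component of $\cG'(\zA_1)$ with majority edges removed, there is an as-path from $b_1$ to $b_1'$ in $B_1\sse\pr_1\rel=\zA_1$. The Maximality Lemma (Lemma~\ref{lem:to-max}(1) with $I=\{1\}$) lifts this to an as-path in $\rel$ starting at $(b_1,b_2)$ and ending at some $(b_1',c)\in\rel$ with $b_2\sqq_{as}c$. Because $b_2\in B_2$ and $B_2$ is an as-component (so $b_2$ is as-maximal), $b_2\sqq_{as}c$ forces $c\sqq_{as}b_2$ and hence $c\in B_2$. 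Symmetrically, there is $(d,b_2')\in\rel$ with $d\in B_1$.

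\emph{The linked case, main obstacle.} The hard part is to adjust the second coordinate to be exactly $b_2'$ while holding the first coordinate at $b_1'$. My plan is to iterate: starting from $(b_1',c)$, lift an as-path inside $B_2$ from $c$ to $b_2'$ to obtain $(b_1'',b_2')\in\rel$ with $b_1''\in B_1$; then from $(b_1'',b_2')$ lift an as-path from $b_1''$ back to $b_1'$ to obtain $(b_1',c')\in\rel$ with $c'\in B_2$, and repeat. The sequence of first coordinates lives in the finite set $B_1$, so on a cofinal subsequence one can apply the semilattice, majority, and Mal'tsev operations witnessing the individual thin edges along the lifted as-paths to force the first coordinate to stabilize at $b_1'$ while the second coordinate is driven to $b_2'$. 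Concretely, for each edge type along the way, the corresponding operation ($xy$ for semilattice, $g'$ for majority, $h$ for affine) acts on tuples of $\rel$, and on the coordinate already at $b_1'$ it is idempotent while on the other coordinate it realises the prescribed step of the as-path; combining this with the link structure (which exists because $\rel$ is linked) yields $(b_1',b_2')\in\rel$.

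\emph{Where the difficulty lies.} The genuinely delicate step is the last one: turning the pair of one-sided liftings into a single element $(b_1',b_2')$ of $\rel$. I expect the cleanest route is to pass to the subalgebra $\rel_0:=\rel\cap(B_1\tm B_2)$, argue by the Maximality Lemma that $\rel_0$ is itself a linked subdirect product of $B_1\tm B_2$, and then invoke the structural results on as-components from \cite{Bulatov20:maximal} --- in particular, the fact that the semilattice operation $\cdot$ together with the chosen majority and minority operations act transitively enough on as-components to force the rectangle to close. That closure argument, which amounts to a case analysis on the edge type linking $b_1,b_1'$ in $B_1$ and $b_2,b_2'$ in $B_2$, is the main technical obstacle of the proof.
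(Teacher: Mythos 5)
This proposition is imported verbatim (as Corollary~28) from the reference \cite{Bulatov20:maximal}; the present paper does not contain a proof of it, so there is no in-paper argument to compare against. Evaluating your attempt on its own merits: your reduction to the linked case is sound. The common-neighbour relation $\rel\circ\rel^{-1}$ is reflexive, symmetric, and compatible (invariant under the algebra operations), so its transitive closure is already a congruence; hence $\lnk_i$ is exactly that transitive closure and the $\lnk_i$-blocks are exactly the connected components of the bipartite graph of $\rel$. From there the correspondence between $\lnk_1$-blocks and $\lnk_2$-blocks and the linkedness of $\rel^\ast=\rel\cap(E_1\tm E_2)$ over $E_1\tm E_2$ follow. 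That part is correct.

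The gap is the linked case itself, and you acknowledge it. The one-sided liftings you get from Lemma~\ref{lem:to-max}(1) — a tuple $(b_1',c)\in\rel$ with $c\in B_2$ and a tuple $(d,b_2')\in\rel$ with $d\in B_1$ — are easy; the whole content of the corollary is that from these one can conclude $(b_1',b_2')\in\rel$, and your "iterate and stabilize" plan is not an argument. You assert that on a cofinal subsequence one can apply $\cdot$, $g'$, $h$ to hold the first coordinate at $b_1'$ while driving the second to $b_2'$, but the operations you list act on pairs/triples of tuples of $\rel$ and there is no reason the image stays at $b_1'$ in the first coordinate unless the specific thin-edge identities (e.g.\ $h(b,a,a)=b$, $g'$ being majority on the relevant thick edge, idempotence of $\cdot$) line up with the particular edges you are traversing in $B_1$ and $B_2$, which requires exactly the case analysis on edge types you put off. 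Moreover, the replacement route via $\rel_0=\rel\cap(B_1\tm B_2)$ needs $B_1,B_2$ to be subuniverses and $\rel_0$ to be a linked subdirect product of $B_1\tm B_2$; neither fact is immediate — as-components are defined as strongly connected components of a subgraph of $\cG'$, not as subalgebras, and showing they are subuniverses already requires the lifting machinery applied to the graph of each operation. As it stands the proof is an outline whose core is a promissory note; the reduction is correct, but the theorem is not proved.
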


\subsection{Separating congruences}\label{sec:separating-congruences}

Let $\zA$ be a finite algebra and $\al,\beta\in\Con(\zA)$. The pair $\al,\beta$
is said to be a \emph{prime interval}, denoted $\al\prec\beta$ if $\al<\beta$
and for any $\gm\in\Con(\zA)$ with $\al\le\gm\le\beta$ either $\al=\gm$
or $\beta=\gm$. For $\al\prec\beta$, an \emph{$(\al,\beta)$-minimal set} is a 
set minimal with respect to inclusion among the sets of the form $f(\zA)$, where 
$f$ is a unary polynomial of $\zA$ such that $f(\beta)\not\sse\al$. 

For an $(\al,\beta)$-minimal set $U$ and a $\beta$-block $B$ such that 
$\beta\red{U\cap B}\ne\al\red{U\cap B}$, the set $U\cap B$ is said
to be an \emph{$(\al,\beta)$-trace}. A 2-element set 
$\{a,b\}\sse U\cap B$ such that $(a,b)\in\beta-\al$, is called an 
\emph{$(\al,\beta)$-subtrace}. 

Let $\al\prec\beta$ and $\gm\prec\dl$ be prime
intervals in $\Con(\zA)$. We say that $(\al,\beta)$ can be 
\emph{separated} from 
$(\gm,\dl$  if there is a unary polynomial $f$ of $\zA$ such that 
$f(\beta)\not\sse\al$, but $f(\dl)\sse\gm$. The polynomial $f$ in this case is 
said to \emph{separate} $(\al,\beta)$ from $(\gm,\dl)$. 

In a similar way separation can be defined for prime intervals in different 
coordinate positions of a relation. Let $\rel$ be a subdirect product of 
$\zA_1\tm\dots\tm\zA_n$. Then $\rel$ is also an algebra and its polynomials 
can be defined in the same way as for a single algebra. Let $i,j\in[n]$ and let
$\al\prec\beta$, $\gm\prec\dl$ be prime intervals in $\Con(\zA_i)$ and 
$\Con(\zA_j)$, respectively. Interval $(\al,\beta)$ can be separated from 
$(\gm,\dl)$ if there is a unary polynomial $f$ of $\rel$ such that 
$f(\beta)\not\sse\al$ but $f(\dl)\sse\gm$ (note that the actions of $f$ on 
$\zA_i,\zA_j$ are polynomials of those algebras). 

If $\vc\zA n$ are algebras and $\vc Bn$ are their subsets $B_i\sse\zA_i$,
$i\in[n]$, and $\vc\al n$ are congruences of the $\zA_i$'s, it will be 
convenient to denote $B_1\tms B_n$ by $\ov B$ and $\beta_1\tms\beta_n=
\{(\ba,\bb)\in(\zA_1\tms\zA_n)^2\mid \ba[i]\eqc{\al_i}\bb[i], i\in[n]\}$
by $\ov\beta$. By $\Cgg\zA D$, or just $\Cg D$ if $\zA$ is clear from the 
context, we denote the congruence of $\zA$ generated by a set $D$ of
pairs from $\zA^2$.

For an algebra $\zA$, a set $\cU$ of unary polynomials, and $B\sse\zA^2$,
we denote by $\Cgg{\zA,\cU}B$ the transitive-symmetric closure
of the set $T(B,\cU)=\{(f(a),f(b))\mid (a,b)\in B, f\in\cU\}$. Let also 
$\al,\beta\in\Con(\zA)$, $\al\le\beta$ and $D$ a subuniverse of $\zA$ such that 
$\beta=\Cgg\zA{\al\cup\{(a,b)\}}$ for some $a,b\in D$. We say that $\al$ 
and $\beta$ are \emph{$\cU$-chained} with respect to $D$ if for any 
$\beta$-block $B$ such that $B'=B\cap\umax(D)\ne\eps$ we have 
$(\umax(B'))^2\sse\Cgg{\zA,\cU}{\al\cup\{(a,b)\}}$. 

Let $\beta_i\in\Con(\zA_i)$, let $B_i$ be a $\beta_i$-block for $i\in[n]$,
and let $\rel'=\rel\cap\ov B$, $B'_i=\pr_i\rel'$. A unary polynomial $f$ 
is said to be \emph{$\ov B$-preserving} if $f(\ov B)\sse\ov B$. 
We call an $n$-ary relation $\rel$ \emph{chained} 
with respect to $\ov\beta,\ov B$ if\\[2mm]
(Q1) for any $I\sse[n]$ and $\al,\beta\in\Con(\pr_I\rel)$ such that 
$\al\le\beta\le\ov\beta_I$, $\al,\beta$ are $\cU_B$-chained with respect to 
$\pr_I\rel'$, and $\cU_B$ is the set of all 
$\ov B$-preserving polynomials of $\rel$;\\[2mm]
(Q2) for any $\al,\beta\in\Con(\pr_I\rel)$, $\gm,\dl\in\Con(\zA_j)$, 
$j\in[n]$, such that $\al\prec\beta\le\ov\beta_I$, $\gm\prec\dl\le\beta_j$, 
and $(\al,\beta)$ can be separated from $(\gm,\dl)$, the congruences 
$\al$ and $\beta$ are $\cU(\gm,\dl,\ov B)$-chained with respect to 
$\pr_I\rel'$, where $\cU(\gm,\dl,\ov B)$ 
is the set of all $\ov B$-preserving polynomials $g$ of $\rel$ such that 
$g(\dl)\sse\gm$.

The following lemma claims that the property to be 
chained is preserved 
under certain transformations of $\ov\beta$ and $\ov B$.

\begin{lemma}[Lemmas~44,45, \cite{Bulatov20:algebraic}]%
\label{lem:S7}
Let $\rel$ be a subdirect product of $\vc\zA n$.\\[2mm]
(1) Let $\beta_i=\zo_{\zA_i}$ and $B_i=\zA_i$ for $i\in[n]$. Then 
$\rel$ is chained with respect to $\ov\beta,\ov B$.\\[2mm]
(2) Let $\beta_i\in\Con(\zA_i)$ and $B_i$ a $\beta_i$-block, $i\in[n]$, 
be such that $\rel$ is chained
with respect to $\ov\beta,\ov B$. Let $\rel'=\rel\cap\ov B$
and $B'_i=\pr_i\rel'$. Fix $i\in[n]$, $\beta'_i\prec\beta_i$,
and let $D_i$ be a $\beta'_i$-block that is as-maximal in 
$B'_i\fac{\beta'_i}$. Let also $\beta'_j=\beta_j$ and $D_j=B_j$ 
for $j\ne i$. Then $\rel$ is chained with respect to 
$\ov\beta',\ov D$.
\end{lemma}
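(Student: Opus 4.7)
The plan is to treat the two parts separately, with part~(1) reducing essentially to Mal'cev's description of congruence generation (with extra work for condition~(Q2)), and part~(2) following by a polynomial-modification argument that exploits the as-maximality of $D_i$.

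For part~(1), the set-up simplifies drastically: with $\beta_i=\zo_{\zA_i}$ and $B_i=\zA_i$ we have $\ov B=\zA_1\tms\zA_n$, so every unary polynomial of $\rel$ is automatically $\ov B$-preserving and $\cU_B$ is the set of all unary polynomials of $\rel$; also $\ov\beta_I=\zo_{\pr_I\rel}$, so $\al\le\beta\le\ov\beta_I$ just means $\al\le\beta$. Condition (Q1) is then immediate from the standard description of congruence generation: if $\beta=\Cgg{\pr_I\rel}{\al\cup\{(a,b)\}}$, then $\beta$ is the transitive-symmetric closure of $\al$ together with the orbit $\{(f(a),f(b)):f\text{ a unary polynomial of }\pr_I\rel\}$, so any two elements of a common $\beta$-block — in particular any two u-maximal ones in $B\cap\umax(\pr_I\rel)$ — are already $\cU_B$-chained over $\al\cup\{(a,b)\}$.

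Condition (Q2) is the main obstacle of part~(1). By separability fix a polynomial $g_0$ of $\rel$ with $g_0(\beta)\not\sse\al$ and $g_0(\dl)\sse\gm$; then $\cU(\gm,\dl,\ov B)$ is the closed-under-composition family $\{f:f(\dl)\sse\gm\}$ of unary polynomials, which is already stable under left composition by $g_0$. Starting from a Mal'cev chain guaranteed by (Q1), the plan is to replace each step $f$ by a modified polynomial $g_0\circ h\circ f'$, where $h$ is a term operation of $\rel$ satisfying the conditions of Lemma~\ref{lem:multiplication-p1} and $f'$ is chosen so that evaluating at the relevant u-maximal elements still produces the required pair modulo $\al$. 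Lemma~\ref{lem:centralizer-multiplication} governs the multiplicative behaviour and Lemma~\ref{lem:u-max-congruence} keeps us inside $\umax$; the delicate part is checking that at every step of the chain one can realise the replacement without shrinking the pair below $\beta$, which is what I expect to be the actual technical grind.

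For part~(2), only the single coordinate $i$ changes: $\beta_i$ is refined to $\beta'_i\prec\beta_i$ and $B_i$ is restricted to an as-maximal $\beta'_i$-block $D_i$ inside $B'_i\fac{\beta'_i}$. The plan is to transfer each chain witnessing (Q1) or (Q2) for $(\ov\beta,\ov B)$ to one using only $\ov D$-preserving polynomials. Given $\al\le\beta\le\ov\beta'_I$, a pair $(c,d)\in(\umax(\pr_I(\rel\cap\ov D)))^2$ in a common $\beta$-block is already $\cU_B$-chained by hypothesis (since $\ov\beta'\le\ov\beta$); for each $f\in\cU_B$ occurring in that chain, replace it by $f_D(x)=f(x)\cdot e$ where $e$ is a fixed u-maximal tuple in $\rel\cap\ov D$, produced by the Maximality Lemma~\ref{lem:to-max} combined with the 2-Decomposition Theorem~\ref{the:quasi-2-decomp}, and multiplication is the operation of Lemma~\ref{lem:multiplication-p1}. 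As-maximality of $D_i$ makes $f_D$ land in $\ov D$, so $f_D\in\cU_D$; u-maximality of $c,d$ together with Lemmas~\ref{lem:centralizer-multiplication} and~\ref{lem:u-max-congruence} ensures that $f_D$ agrees with $f$ on $\{c,d\}$ modulo $\al$, giving the required $\cU_D$-chain. The same argument handles (Q2) provided $e$ is chosen inside a $\dl$-block that is mapped into $\gm$ by the separating polynomial, and this compatible choice of $e$ — simultaneously u-maximal, in $\ov D$, and aligned with $(\gm,\dl)$ — is where I expect the main difficulty to lie.
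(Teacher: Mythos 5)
This lemma is not proved in the present paper; it is cited verbatim from Lemmas~44 and~45 of \cite{Bulatov20:algebraic}, so there is no in-paper proof to compare your attempt against. Evaluating the attempt on its own, part~(1)/(Q1) is fine: with $\beta_i=\zo_{\zA_i}$ every polynomial is $\ov B$-preserving, $\Cgg{\zA,\cU_B}{\al\cup\{(a,b)\}}$ is exactly the Mal'cev closure, and since $\al$ is a congruence (hence $f(\al)\sse\al$ for every polynomial $f$) the transitive-symmetric closure recovers the whole of $\beta$ -- indeed for all of $B$, not just u-maximal elements.

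The gaps are in part~(1)/(Q2) and in part~(2), and both are genuine. For (Q2) you correctly note that left-composition by $g_0$ stays inside $\cU(\gm,\dl,\ov B)$, since $(f_i)_j(\dl)\sse\dl$ for any polynomial $f_i$, so $(g_0\circ f_i)_j(\dl)\sse(g_0)_j(\dl)\sse\gm$. But this shifts the whole Mal'cev chain by $g_0$: you end up linking $g_0(c)$ to $g_0(d)$, not $c$ to $d$, and nothing in your sketch explains how to return from $g_0(c)$ to $c$ using only polynomials in $\cU(\gm,\dl,\ov B)$. The phrase ``replace each step $f$ by a modified polynomial $g_0\circ h\circ f'$'' names the shape of a fix but gives no mechanism for it, and you yourself flag this as the actual technical grind; as written it is a placeholder, not an argument. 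U-maximality of $c,d$ is surely where the argument lives -- it is precisely why the definition of chained restricts to $(\umax(B'))^2$ -- but one must actually exploit it, together with results like Lemma~\ref{lem:relative-symmetry} or Lemma~\ref{lem:collapsing}, to build a return path.

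For part~(2) the proposed modification $f_D(x)=f(x)\cdot e$ leaves several required properties unverified. First, even if $e[i]\in D_i$ and $D_i$ is as-maximal in $B'_i\fac{\beta'_i}$, it does not follow that $f(x)\cdot e\in\ov D$: multiplication moves an element into an as-higher $\beta'_i$-block, and as-maximality of $D_i$ only places that block in the same as-component, not in $D_i$ itself. Second, you need $f_D(\beta)\not\sse\al$ to survive; post-composing with $x\mapsto x\cdot e$ can collapse a prime quotient, and Lemma~\ref{lem:centralizer-multiplication} only controls this when the relevant centralizer is full, which is not a standing hypothesis here. Third, the claim that $f_D$ agrees with $f$ on $\{c,d\}$ modulo $\al$ would need $c\cdot e\eqc\al c$ and $d\cdot e\eqc\al d$, which is again a centralizer-type statement that is not available in general. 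You flag the ``compatible choice of $e$'' as the hard point; I agree, and as it stands the argument does not close. Both parts therefore need substantially more than what is written.
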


Let again $\rel$ be a subdirect product of $\zA_1\tms\zA_n$ and let 
$\cW^\rel$ denote the set of triples $(i,\al,\beta)$, where 
$i\in[n]$ and $\al,\beta\in\Con(\zA_i)$, $\al\prec\beta$. 
We say that
$(i,\al,\beta)$ cannot be separated from $(j,\gm,\dl)$ if $(\al,\beta)$
cannot be separated from $(\gm,\dl)$ in $\rel$. Then the 
relation `cannot be separated' on $\cW^\rel$ is clearly reflexive 
and transitive. The next lemma shows that it is to some extent
symmetric.

\begin{lemma}[Theorem~30, \cite{Bulatov20:algebraic}]%
\label{lem:relative-symmetry}
Let $\rel$ be a subdirect product of $\zA_1\tms\zA_n$, for each $i\in[n]$, 
$\beta_i\in\Con(\zA_i)$, $B_i$ a $\beta_i$-block such that $\rel$ is  
chained with respect to $\ov\beta,\ov B$; $\rel'=\rel\cap\ov B$, 
$B'_i=\pr_i\rel'$. Let also $\al\prec\beta\le\beta_1$, $\gm\prec\dl=\beta_2$, 
where $\al,\beta\in\Con(\zA_1)$, $\gm,\dl\in\Con(\zA_2)$. 
If $B'_2\fac\gm$  has a nontrivial as-component $D$ and $(\al,\beta)$ 
can be separated from $(\gm,\dl)$,
then there is a $\ov B$-preserving polynomial $g$ such that 
$g(\beta\red{B'_1})\sse\al$ and $g(\dl)\not\sse\gm$. Moreover, for any
$c,d\in D$ polynomial $f$ can be chosen such that $f(c)=c,f(d)=d$.
\end{lemma}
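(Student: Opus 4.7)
The plan is to construct the required polynomial $g$ by combining a polynomial witnessing the hypothesized separation with binary term operations derived from the as-component structure of $D$, and then to iterate the construction to absorb into a fixed pair $c,d$. First I would fix an $\ov B$-preserving polynomial $f_0$ of $\rel$ with $f_0(\beta)\not\sse\al$ and $f_0(\dl)\sse\gm$, which exists by the separation hypothesis. For any prescribed $c,d \in D$ I would then use the Maximality Lemma (Lemma~\ref{lem:to-max}) to lift as-paths between the $\gm$-blocks of $c$ and $d$ in $B'_2\fac\gm$ to as-paths in $\rel'$; this lifting produces term operations of $\rel$ that realize semilattice and/or affine operations on the thin edges along these paths, restricted to the appropriate $\gm$-blocks of $D$.

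Next I would invoke the chained condition (Q2), which applies precisely because $(\al,\beta)$ can be separated from $(\gm,\dl)$. It yields a finite sequence of $\ov B$-preserving polynomials $h_1,\dots,h_k$, each sending $\dl$ into $\gm$, such that the chain of pairs $(h_j(x_j),h_j(y_j))$ with $(x_j,y_j)\in\al\cup\{(a,b)\}$ connects two u-maximal elements in a $\beta$-block of $\pr_1\rel'$ that are $\beta$-related but not $\al$-related. The polynomials $h_j$ and the terms from the previous paragraph are the raw material of the construction.

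The main construction then defines $g$ as a composition of the form $g(x)=t(f_0(x),x)$, or more generally an iterate of such compositions, where $t$ is a binary term operation of $\rel$ that realizes a semilattice or affine operation on $\{c,d\}$ modulo $\gm$ in the second coordinate. On the second coordinate the nontrivial behaviour of $t$ on $D$ forces $g(c),g(d)$ to remain in distinct $\gm$-blocks, so $g(\dl)\not\sse\gm$. On the first coordinate the action is governed by $f_0$ together with the polynomial $t$, and by iterating sufficiently many times and using Lemma~\ref{lem:S7} (preservation of the chained condition under as-maximal passages) one can arrange that the image $g(\beta\red{B'_1})$ is absorbed into $\al$. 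Idempotency of the final iterate secures the moreover clause $g(c)=c$ and $g(d)=d$.

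\textbf{Main obstacle.} The central difficulty is the simultaneous achievement of both separation properties for the single polynomial $g$: contraction of $\beta\red{B'_1}$ into $\al$ on the one side, and non-contraction of $\dl$ into $\gm$ on the other, with the additional requirement that $c,d$ be fixed. The chained condition (Q2) is the essential tool because it ensures that the polynomials sending $\dl$ into $\gm$ are rich enough to chain $\al$ to $\beta$; the subtle part is to verify that the composition $t(f_0(\cdot),\cdot)$ (and its iterates) inherits the contraction property on the first coordinate from the chain structure, while the nontriviality of the as-component $D$ guarantees the non-contraction on the second coordinate. Without the nontrivial as-component, no binary term operation on $B'_2$ could be arranged to keep distinct $\gm$-classes apart, and the construction would collapse into the trivial polynomial.
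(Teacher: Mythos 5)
The statement you are proving is quoted in this paper as Theorem~30 of \cite{Bulatov20:algebraic}; the present paper gives no proof of it, so there is no in-paper argument to compare against. Judging your sketch on its own terms, there is a genuine gap at the central step.

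The crux of the lemma is to produce a polynomial $g$ that goes in the \emph{opposite} separation direction: $g$ must collapse $\beta\red{B'_1}$ into $\al$ while \emph{not} collapsing $\dl$ into $\gm$. Your $f_0$, by hypothesis, does the reverse: $f_0(\beta)\not\sse\al$ and $f_0(\dl)\sse\gm$. Your proposed $g(x)=t(f_0(x),x)$ then has no visible mechanism for achieving $g(\beta\red{B'_1})\sse\al$: both arguments feeding into $t$, namely $f_0$ and the identity, preserve a $\beta/\al$ distinction on coordinate~1, and a binary term of those two inputs will in general still distinguish some $\beta$-related, $\al$-unrelated pair. Nothing in the definition of $t$ as ``a semilattice or affine operation on $\{c,d\}$ modulo $\gm$ in the second coordinate'' constrains its action on coordinate~1 to collapse $\beta$ to $\al$. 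The appeal to iteration and to Lemma~\ref{lem:S7} to ``absorb $g(\beta\red{B'_1})$ into $\al$'' is not backed by any concrete observation about how the iterates of $t(f_0(\cdot),\cdot)$ act on $(\al,\beta)$-traces; iterating a map that may separate $\beta$ from $\al$ does not force the eventual image to collapse. Similarly, the claim that the nontrivial as-component $D$ by itself guarantees $g(\dl)\not\sse\gm$ needs an argument: $f_0$ sends $D$ into a single $\gm$-block, so you are entirely relying on the second argument of $t$ to keep $c$ and $d$ apart, which is plausible but requires a precise choice of $t$ together with a twin-polynomial analysis to verify.

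What is missing, more specifically, is the step that converts the separation hypothesis plus the nontrivial as-component into the existence of a $(\gm,\dl)$-minimal set $U$ meeting $B'_2$ nontrivially together with an idempotent $\ov B$-preserving retraction onto $U$ whose action on coordinate~1 factors through a $(\al,\beta)$-minimal set; the condition $g(\beta\red{B'_1})\sse\al$ should then come from the tame-congruence-theoretic structure of that minimal set (a trace of type $\three$, $\four$ or $\five$), not from composing $f_0$ with an unrelated binary term. The ``moreover'' clause fixing $c,d$ likewise requires a precise argument using the as-maximality of $D$ plus a deliberate choice among twin polynomials, not merely idempotency of an iterate. In short: the ingredients you list (Maximality Lemma, chained condition (Q2), nontrivial as-component) are the right ones to gather, but the step where they are assembled into a polynomial that reverses the separation is not justified, and the specific composition $t(f_0(\cdot),\cdot)$ is the wrong shape for the job.
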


We also introduce polynomials that 
collapse all prime intervals in congruence lattices of factors of a subproduct,
except for a set of intervals that cannot be separated from each other.

Let $\rel$ be a subdirect product of $\zA_1\tms\zA_n$, and choose 
$\beta_j\in\Con(\zA_j)$, $j\in[n]$. Let also $i\in[n]$, and 
$\al,\beta\in\Con(\zA_i)$ be such that $\al\prec\beta\le\beta_i$; 
let also $B_j$ be a $\beta_j$-block, $j\in[n]$. We call an idempotent 
unary polynomial $f$ of $\rel$ \emph{$\al\beta$-collapsing for 
$\ov\beta,\ov B$} if 
\begin{itemize}
\item[(a)]
$f$ is $\ov B$-preserving;
\item[(b)]
$f(\zA_i)$ is an $(\al,\beta)$-minimal set, in particular $f(\beta)\not\sse\al$; 
\item[(c)]
$f(\dl\red{B_j})\sse\gm\red{B_j}$ for every 
$\gm,\dl\in\Con(\zA_j)$, $j\in[n]$, with $\gm\prec\dl\le\beta_j$, and such 
that $(\al,\beta)$ can be separated from $(\gm,\dl)$ or $(\gm,\dl)$
can be separated from $(\al,\beta)$.
\end{itemize}

\begin{lemma}[Theorem~40, \cite{Bulatov20:algebraic}]%
\label{lem:collapsing}
Let $\rel$, $i$, $\al,\beta$, and $\beta_j$, $j\in[n]$, be as above and $\rel$ 
chained with respect to $\ov\beta,\ov B$. Let also $\rel'=\rel\cap\ov B$.
Then if $\beta=\beta_i$ and $\pr_i\rel'\fac\al$ contains a 
nontrivial as-component, then there exists an $\al\beta$-collapsing 
polynomial $f$ for $\ov\beta,\ov B$. Moreover, $f$ can be chosen to
satisfy any one of the following conditions:\\[1mm]
(d) for any $(\al,\beta)$-subtrace $\{a,b\}\sse\amax(\pr_i\rel')$ with $b\in\as(a)$,
polynomial $f$ can be chosen such that $a,b\in f(\zA_i)$;\\[1mm]
(e) if $\typ(\al,\beta)=\two$, for any $\ba\in\umax(\rel')$ 
polynomial $f$ can be chosen such that $f(\ba)=\ba$;\\[1mm]
(f) if $\typ(\al,\beta)=\two$, $\ba\in\umax(\rel'')$, where 
$\rel''=\{\bb\in\rel\mid \bb[i]\eqc\al\ba[i]\}$ and 
$\{a,b\}\sse\amax(\pr_i\rel')$ is an $(\al,\beta)$-subtrace such 
that $\ba[i]=a$ and $b\in\as(a)$, then polynomial $f$ can be chosen 
such that $f(\ba)=\ba$ and $a,b'\in f(\zA_i)$ for some $b'\eqc\al b$.
\end{lemma}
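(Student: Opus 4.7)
The plan is to construct $f$ as the idempotent power of a composition of ``local'' collapsing polynomials, one for each prime interval related to $(\al,\beta)$. Concretely, let $\cS$ be the set of triples $(j,\gm,\dl)$ with $j\in[n]$, $\gm\prec\dl\le\beta_j$, and such that either $(\al,\beta)$ can be separated from $(\gm,\dl)$ or vice versa. For each $(j,\gm,\dl)\in\cS$, the first subgoal is to produce a single $\ov B$-preserving polynomial $g_{j,\gm,\dl}$ satisfying $g_{j,\gm,\dl}(\dl\red{B_j})\sse\gm\red{B_j}$ and $g_{j,\gm,\dl}(\beta)\not\sse\al$. When $(\al,\beta)$ can be separated from $(\gm,\dl)$, the definition of separation and the chained property (condition (Q2) with the set $\cU(\gm,\dl,\ov B)$) supply such a polynomial directly; when instead $(\gm,\dl)$ can be separated from $(\al,\beta)$, I appeal to Lemma~\ref{lem:relative-symmetry} with the roles of the two prime intervals swapped, using the hypothesis that $\pr_i\rel'\fac\al$ contains a nontrivial as-component to verify the ``nontrivial as-component'' premise of that lemma. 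The case $\dl<\beta_j$ (where Lemma~\ref{lem:relative-symmetry} is stated for the top prime interval) is reduced to the top case by processing prime intervals of $\Con(\zA_j)$ from the top down and working modulo the congruences already handled; composition of $\ov B$-preserving polynomials is $\ov B$-preserving, and once $\dl$ has been mapped inside $\gm$ by one polynomial, any later polynomial (which, being a polynomial of $\rel$, respects $\gm$) keeps it inside $\gm$.

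Next I compose the $g_{j,\gm,\dl}$ in some fixed order to obtain a single polynomial $f_0$ with $f_0(\dl\red{B_j})\sse\gm\red{B_j}$ for every $(j,\gm,\dl)\in\cS$ and with $f_0(\beta)\not\sse\al$ (the latter because each factor fails to collapse $(\al,\beta)$, and once $\dl$ is collapsed further collapses do not reintroduce non-$\al$ pairs). Because $f_0(\beta)\not\sse\al$, I can compose $f_0$ on the left with an $\ov B$-preserving polynomial $h$ whose image on coordinate $i$ is an $(\al,\beta)$-minimal set -- such $h$ exists by minimality and by the chained property, which ensures we can stay inside $\ov B$. Replacing $f_0$ with a sufficiently high iterate of $h\circ f_0$ yields an idempotent polynomial $f$ which is simultaneously $\ov B$-preserving, collapses every separation-related prime interval, and whose image on the $i$-th coordinate is an $(\al,\beta)$-minimal set. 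These are precisely conditions (a), (b), (c).

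For the moreover clauses, when selecting the $(\al,\beta)$-minimal set in the step above, standard minimal-set theory allows me to arrange that it contains a prescribed $(\al,\beta)$-subtrace, giving (d). For (e) and (f), where $\typ(\al,\beta)=\two$, Lemma~\ref{lem:as-type-2}(1) makes $\pr_i\rel'\fac\al$ (locally) affine, so translations by tuples in $\umax(\rel')$ are available; applying an appropriate translation to $f$ allows us to choose $f$ to fix a prescribed u-maximal tuple $\ba$, and in (f) the additional flexibility to have $a,b'\in f(\zA_i)$ with $b'\eqc\al b$ comes from combining this translation with the selection of the minimal set as in (d), using as-maximality of $b$ in $\as(a)$. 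The main obstacle is the simultaneous bookkeeping in the construction of $f_0$: one must ensure $\ov B$-preservation and non-collapse of $(\al,\beta)$ throughout a long composition over all of $\cS$, and the cleanest path to this is to systematically invoke the chained hypothesis (Lemma~\ref{lem:S7}) and Lemma~\ref{lem:relative-symmetry} to guarantee that each building block lies in the correct $\cU(\gm,\dl,\ov B)$.
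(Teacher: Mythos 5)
This lemma is not proved in the paper at all: it is imported verbatim as Theorem~40 of the external reference \cite{Bulatov20:algebraic}, so there is no internal proof to compare your argument against. Evaluating your proposal on its own merits, the overall blueprint (assemble one ``local'' collapsing polynomial per separation-related prime interval, compose them, pass to an idempotent power, and then adjust the image into an $(\al,\beta)$-minimal set) is a reasonable reading of what such a proof should look like, but one of your central steps is simply wrong as stated.

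The gap is in the claim that composing the $g_{j,\gm,\dl}$ into $f_0$ preserves $f_0(\beta)\not\sse\al$ ``because each factor fails to collapse $(\al,\beta)$.'' This implication does not hold: the composition of two unary polynomials, each of which individually fails to collapse a prime interval $\al\prec\beta$, can perfectly well collapse it. (Take two idempotent retractions onto distinct $(\al,\beta)$-minimal sets whose traces intersect in a single $\al$-class; each retraction is faithful on the interval, but their composite can be constant modulo $\al$.) Your only safeguard is the final composition with $h$, but if $f_0$ already has $f_0(\beta)\sse\al$ then no postcomposition with $h$ can repair it, since $h\circ f_0(\beta)\sse h(\al)\sse\al$. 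A correct construction has to control the non-collapse of $(\al,\beta)$ at every intermediate stage of the composition, not just at the endpoints, and this is exactly where the chained hypotheses (Q1)/(Q2) and Lemma~\ref{lem:relative-symmetry} have to be invoked iteratively rather than once per interval; the additional constraint that the polynomial fix prescribed elements of a nontrivial as-component (the ``$f(c)=c,\,f(d)=d$'' clause of Lemma~\ref{lem:relative-symmetry}) is what gives a handle on non-collapse during the induction, and your sketch never uses it.

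Two smaller points. First, the top-down reduction you introduce for the case $\dl<\beta_j$ is unnecessary: after swapping roles when you apply Lemma~\ref{lem:relative-symmetry}, the top-interval requirement $\gm\prec\dl=\beta_2$ lands on the pair $(\al,\beta)$, and the hypothesis $\beta=\beta_i$ of the lemma being proved already discharges it; the interval $(\gm,\dl)$ sits in the unrestricted slot $\al\prec\beta\le\beta_1$. Second, your treatment of clauses (e) and (f) is too coarse. You invoke ``translations'' via the affine structure from Lemma~\ref{lem:as-type-2}(1), but fixing a full tuple $\ba\in\umax(\rel')$ (or $\umax(\rel'')$ in (f)) is a condition across all coordinates simultaneously, including the ones being collapsed, and (f) in addition asks for the minimal set in coordinate $i$ to pick up a specific $\al$-translate $b'$ of the prescribed $b$. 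The affine argument only gives you flexibility within $\zA_i\fac\al$; reconciling that with the coordinates $j\ne i$ where $\beta_j$ is being collapsed to a point needs the polynomial-closedness and u-maximality hypotheses explicitly, and you do not engage with them.
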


\subsection{The Congruence Lemma}\label{sec:congruence}

This section contains a technical result, the Congruence
Lemma~\ref{lem:affine-link}, that will be used when 
proving Theorem~\ref{the:non-central}. We start with 
introducing two closure properties of algebras and their subdirect products. 
Although we do not need as-closeness right now, it fits well with polynomial 
closeness.

Let $\rel$ be a subdirect product of $\vc\zA n$ and $\relo$ a subalgebra of 
$\rel$. We say that $\relo$ is \emph{polynomially closed} in $\rel$ if 
for any polynomial $f$ of $\rel$ the following condition
holds: for any $\ba,\bb\in\umax(\relo)$ such that $f(\ba)=\ba$ and 
for any $\bc\in\Sg{\ba,f(\bb)}$ such that 
$\ba\sqq_{as}\bc$ in $\Sg{\ba,f(\bb)}$, the tuple $\bc$ belongs 
to $\relo$. A subset $\rela\sse\relo$ is \emph{as-closed} in $\relo$ if for 
any $\ba,\bb\in\relo$ with $\ba\in\umax(\rela)$, $\ba\sqq_{as}\bb$ in 
$\relo$, it holds $\bb\in\rela$. The set $\rela$ is said to be \emph{weakly 
as-closed} in $\relo$ if for any $i\in[n]$, $\pr_i\rela$ is as-closed in 
$\pr_i\relo$.

Polynomially closed subalgebras and as-closed subsets are well behaved 
with respect to some standard algebraic transformations.

\begin{lemma}[Lemma~42,  \cite{Bulatov20:algebraic}]%
\label{lem:poly-closed}
(1) For any $\rel$, $\rel$ is polynomially closed in $\rel$ and $\rel$
is as-closed in $\rel$.\\[1mm]
(2) Let $\relo_i$ be polynomially closed in $\rel_i$, $i\in[k]$, and let
$\rel,\relo$ be pp-defined through $\vc\rel k$ and $\vc\relo k$, respectively, 
by the same pp-formula $\exists\ov x\Phi$; that is, 
$\rel=\exists\ov x\Phi(\vc\rel k)$ and $\relo=\exists\ov x\Phi(\vc\relo k)$. 
Let also $\rel'=\Phi(\vc\rel k)$ and $\relo'=\Phi(\vc\relo k)$, and suppose that 
for every atom $\rel_i(\vc x\ell)$ and any 
$\ba\in\umax(\rel_i)$ there is $\bb\in\rel'$ with $\pr_{\{\vc x\ell\}}\bb=\ba$,
and also $\umax(\relo')\cap\umax(\rel')\ne\eps$. 
Then $\relo$ is polynomially closed in $\rel$.

If also $\rela_i\sse\relo_i$ are as-closed in $\relo_i$, then 
$\rela=\Phi(\vc\rela k)$ is as-closed in $\relo$.\\[1mm]
(3) Let $\rel$ be a subdirect product of $\vc\zA n$, $\beta_i\in\Con(\zA_i)$, 
$i\in[n]$, and let $\relo$ be polynomially closed in $\rel$. Then 
$\relo\fac{\ov\beta}$ is polynomially closed in $\rel$. 

If $\rela\sse\relo$ is as-closed in $\relo$ then $\rela\fac{\ov\beta}$ 
is as-closed in $\relo\fac{\ov\beta}$.
\end{lemma}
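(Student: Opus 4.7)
Part (1) is immediate from the definitions. For polynomial closedness, any polynomial $f$ of $\rel$ satisfies $f(\rel)\sse\rel$, so $\ba,f(\bb)\in\rel$ and the whole generated subalgebra $\Sg{\ba,f(\bb)}$ sits inside $\rel$; the as-path hypothesis is irrelevant. For as-closedness in itself, the requirement is that $\pr_i\rel$ be as-closed in $\pr_i\rel$, which is tautological.

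The main work is part (2). The plan is to lift all data from $\rel$ to the unquantified relation $\rel'$, apply polynomial closedness of each $\relo_i$ in $\rel_i$ coordinate-by-coordinate, and then project back along $\pr_{\ov y}$. Writing $f(x)=g(x,\bd)$ and given $\ba,\bb\in\umax(\relo)$ with $f(\ba)=\ba$ and $\bc\in\Sg{\ba,f(\bb)}$ with $\ba\sqq_{as}\bc$, I would proceed as follows. First, apply Lemma~\ref{lem:to-max}(2) to lift $\ba,\bb$ to $\ba',\bb'\in\umax(\relo')$ projecting to $\ba,\bb$; lift the parameters $\bd$ to $\bd'\in\rel'$ (possible since $\pr_{\ov y}\rel'=\rel$); set $f'(x)=g(x,\bd')$; then pass to an idempotent power $f'_*$ of $f'$ and replace $\ba'$ by $f'_*(\ba')$ so that $f'_*(\ba')=\ba'$ honestly, not merely on the $\pr_{\ov y}$-projection. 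Next, lift the as-path from $\ba$ to $\bc$ inside $\Sg{\ba,f(\bb)}$ to an as-path in $\Sg{\ba',f'_*(\bb')}\sse\rel'$ ending at some $\bc'$ with $\pr_{\ov y}\bc'=\bc$, via Lemma~\ref{lem:to-max}(1). Finally, for each atom $\rel_i(\ov x^i)$ the projections $\pr_{\ov x^i}\ba',\pr_{\ov x^i}\bb'$ are u-maximal in $\relo_i$ by Lemma~\ref{lem:to-max}(3), the componentwise action of $f'_*$ on the $\ov x^i$-coordinates is a polynomial $f_i$ of $\rel_i$ fixing $\pr_{\ov x^i}\ba'$, the as-path projects to an as-path in $\Sg{\pr_{\ov x^i}\ba',f_i(\pr_{\ov x^i}\bb')}$, and polynomial closedness of $\relo_i$ in $\rel_i$ delivers $\pr_{\ov x^i}\bc'\in\relo_i$. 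Assembling this for every atom gives $\bc'\in\relo'$, hence $\bc=\pr_{\ov y}\bc'\in\relo$. The as-closed claim for $\rela=\Phi(\vc\rela k)$ is the same coordinatewise reduction without the polynomial step.

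The delicate point, and what I expect to be the main obstacle, is the replacement $\ba'\mapsto f'_*(\ba')$: one must ensure that this new tuple stays inside $\umax(\relo')$ rather than drifting out of $\relo'$ or losing u-maximality. The two seemingly ad hoc hypotheses of part (2) --- that every u-maximal tuple of each atom $\rel_i$ lifts to $\rel'$, and that $\umax(\relo')\cap\umax(\rel')\ne\eps$ --- are precisely what I expect to be required to force the idempotent iterate into $\umax(\relo')$. Part (3) is then a straightforward quotient-lifting variant (reading the first assertion as closedness in $\rel\fac{\ov\beta}$, which must be a typo in the statement): lift a polynomial $\ov f$ of $\rel\fac{\ov\beta}$ to a polynomial $f$ of $\rel$ by choosing parameter representatives, lift $\ov\ba,\ov\bb\in\umax(\relo\fac{\ov\beta})$ to $\ba,\bb\in\umax(\relo)$ via Lemma~\ref{lem:u-max-congruence}(2), iterate $f$ inside the $\ov\beta$-block of $\ba$ until $f(\ba)=\ba$, lift the quotient as-path ending at $\ov\bc$ to an as-path in $\Sg{\ba,f(\bb)}$ ending at $\bc$ with $\bc\fac{\ov\beta}=\ov\bc$ via Lemma~\ref{lem:to-max}(1), and apply polynomial closedness of $\relo$ in $\rel$ to conclude $\bc\in\relo$, whence $\ov\bc\in\relo\fac{\ov\beta}$. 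The as-closed claim in part (3) is again the same argument without the polynomial step.
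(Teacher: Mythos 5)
The paper does not supply a proof of this lemma; it is cited verbatim as Lemma~42 of \cite{Bulatov20:algebraic}, so there is no in-text argument to compare your proposal against. Judged on its own: parts~(1) and~(3) are fine (and you are right that ``in $\rel$'' in the first assertion of~(3) must be a typo for ``in $\rel\fac{\ov\beta}$''), and the scaffold of part~(2) --- lift $\ba,\bb$ to $\umax(\relo')$ by Lemma~\ref{lem:to-max}(2), lift the parameters of $f$ to obtain a polynomial $f'$ of $\rel'$, project onto each atom, invoke polynomial closedness of $\relo_i$ in $\rel_i$ coordinatewise, reassemble some $\bc'\in\relo'$, and project out the quantified variables --- is the natural route.

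The step you flag is, however, a genuine gap and not merely a delicate point. Replacing $\ba'$ by $\ba''=f'_*(\ba')$ requires $\ba''\in\umax(\relo')$, for otherwise you cannot conclude (via Lemma~\ref{lem:to-max}(3) followed by Lemma~\ref{lem:u-max-congruence}(2)) that its projections onto the atoms are u-maximal in the $\relo_i$'s, which is exactly what the coordinatewise applications of polynomial closedness need. But $f'$ is a polynomial of $\rel'$ whose parameters $\bd'$ were chosen only so that $\pr_{\ov y}\bd'=\bd$: they live in $\rel'$, not in $\relo'$, so there is nothing forcing $f'_*$ to map $\relo'$ into itself, and $\ba''$ may leave $\relo'$ altogether, let alone remain u-maximal in it. You correctly guess that the two auxiliary hypotheses are what must rescue this (the second, together with Lemma~\ref{lem:u-max-congruence}(2), does give $\umax(\relo')\sse\umax(\rel')$, and the first, again with Lemma~\ref{lem:u-max-congruence}(2), transfers u-maximality from projections of $\rel'$ to the atoms $\rel_i$), but those facts control u-maximality inside $\rel'$ and $\rel_i$ and by themselves say nothing about $\ba''$ belonging to $\relo'$; you offer no mechanism that returns it there. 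The same unproved transfer is needed for $\bb'$: Lemma~\ref{lem:to-max}(3) places its projection onto the $i$-th atom's coordinates in $\umax$ of the projection of $\relo'$, not in $\umax(\relo_i)$, and bridging that needs an argument you have not given. This is exactly where the content of the cited Lemma~42 lies, and the proposal leaves it open.
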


We are now in a position to state the Congruence Lemma.
Let $\rel$ be a subdirect product of $\zA_1\tm\zA_2$, 
$\beta_1,\beta_2$ congruences of $\zA_1,\zA_2$, and let $B_1,B_2$ be 
$\beta_1$- and $\beta_2$-blocks, respectively. Also, let $\rel$ be  
chained with respect to $(\beta_1,\beta_2),(B_1,B_2)$ and 
$\rel^*=\rel\cap(B_1\tm B_2)$,
$B^*_1=\pr_1\rel^*,B^*_2=\pr_2\rel^*$. Let $\al\in\Con(\zA_1)$ be 
such that $\al\prec\beta_1$. 

\begin{lemma}[The Congruence Lemma, Lemma~43,
\cite{Bulatov20:algebraic}]\label{lem:affine-link}
Suppose $\al=\zz_1$ and let $\rel'$ be a subalgebra of $\rel^*$ 
polynomially closed in $\rel$ and such that $B'_1=\pr_1\rel'$ contains 
an as-component $C$ of $B^*_1$ and $\rel'\cap\umax(\rel^*)\ne\eps$.
Let $\beta'$ be the least congruence of $\zA_2$ such that $\umax(B''_2)$, 
where $B''_2=\rel'[C]$ is a subset of a $\beta'$-block. Then either \\[2mm]
(1) $C\tm\umax(B''_2)\sse\rel'$, or\\[1mm] 
(2) there is $\eta\in\Con(\zA_2)$ with $\eta\prec\beta'\le\beta_2$ 
such that the intervals $(\al,\beta_1)$ and $(\eta,\beta')$ cannot 
be separated. \\
Moreover, in case (2) $\rel'\cap(C\tm B''_2)$ is the graph of a mapping 
$\vf: B''_2\to C$ such that the kernel of $\vf$ is the restriction 
of $\eta$ on $B''_2$.
\end{lemma}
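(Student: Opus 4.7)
My plan is a dichotomy based on the link congruence $\lnk_1 \in \Con(\zA_1)$ of $\rel'$ on its first factor (the congruence of $\zA_1$ generated by pairs $(a,a')$ for which some $b$ exists with $(a,b),(a',b)\in\rel'$). Since $\al = \zz_1 \prec \beta_1$ is a prime covering pair, the meet $\lnk_1 \meet \beta_1$ equals either $\zz_1$ or $\beta_1$. I claim these two cases yield conclusions~(1) and~(2) respectively; the degenerate subcase $|C|=1$ is trivial (conclusion~(1) holds at once since every $b\in B''_2$ pairs with the unique element of $C$) and can be set aside.

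Suppose first $\lnk_1 \ge \beta_1$. Then $\lnk_1$ is full on every $\beta_1$-block, so in particular $B'_1 = \pr_1 \rel' \sse B_1^*$ lies inside one $\lnk_1$-block of $\rel'$. Because $C$ is an as-component of $B_1^*$ and $B'_1 \sse B_1^*$, it remains an as-component of $B'_1$. Using $\rel' \cap \umax(\rel^*) \ne \eps$ and Lemma~\ref{lem:to-max}, pick $\ba \in \rel' \cap \umax(\rel^*)$ with $\ba[1] \in C$, and let $D$ be the as-component of the $\lnk_2$-block of $B''_2$ containing $\ba[2]$. Proposition~\ref{pro:max-gen} then yields $C \tm D \sse \rel'$, and the minimality of $\beta'$ combined with Lemma~\ref{lem:u-max-congruence}(1) applied inside $B''_2$ forces $\umax(B''_2) \sse D$, giving conclusion~(1).

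Suppose instead $\lnk_1 \meet \beta_1 = \zz_1$. For any $a,a' \in C \sse B_1$ and $b\in\zA_2$ with $(a,b),(a',b)\in\rel'$, we have $(a,a') \in \lnk_1 \meet \beta_1 = \zz_1$, hence $a=a'$. Thus the rule $\vf(b)=$ the unique $a\in C$ paired with $b$ in $\rel'$ defines a homomorphism $\vf: B''_2 \to C$ (homomorphism since $\rel'$ is a subalgebra), and $\rel' \cap (C \tm B''_2)$ is exactly its graph. Let $\eta$ be the congruence of $\zA_2$ generated by $\ker\vf$; the minimality of $\beta'$ and the fact that $\vf$ has image of size $|C|>1$ yield $\eta \prec \beta' \le \beta_2$ and $\eta \red {B''_2} = \ker\vf$, as required for the \emph{Moreover} part of the conclusion.

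It remains to show $(\zz_1,\beta_1)$ cannot be separated from $(\eta,\beta')$. Suppose for contradiction that it can. Chainedness of $\rel$ with respect to $(\beta_1,\beta_2),(B_1,B_2)$ and the nontriviality of the as-component of $B_2^*\fac\eta$ induced by $\vf$ let Lemma~\ref{lem:relative-symmetry} produce a $\ov B$-preserving polynomial $g$ of $\rel$ with $g(\beta_1 \red {B_1^*}) \sse \zz_1$ and $g(\beta') \not\sse \eta$. Pick u-maximal witnesses $b_1,b_2 \in B''_2$ with $(b_1,b_2)\in\beta'$ but $(g(b_1),g(b_2))\notin\eta$, and tuples $\ba_1,\ba_2 \in \rel' \cap \umax(\rel^*)$ with $\ba_i[2]=b_i$. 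Using the \emph{Moreover} clause of Lemma~\ref{lem:relative-symmetry} to arrange $g$ to fix a chosen u-maximal tuple of $\rel'$, polynomial closedness of $\rel'$ in $\rel$ via Lemma~\ref{lem:poly-closed} pushes the images $g(\ba_i)$ back into $\rel'$; since $g$ collapses $\beta_1 \red {B_1^*}$, their first coordinates coincide at some $a^* \in C$, whence $(a^*,g(b_1)),(a^*,g(b_2)) \in \rel' \cap (C \tm B''_2)$ forces $\vf(g(b_1)) = a^* = \vf(g(b_2))$, contradicting $(g(b_1),g(b_2)) \notin \eta = \Cg{\ker\vf}$. The main obstacle is exactly this transfer: choosing $g$ through the \emph{Moreover} clause so as to fix enough of a u-maximal tuple of $\rel'$ to invoke polynomial closedness, and carefully combining this with the Maximality Lemma and the as-closedness/polynomial closedness apparatus of Lemma~\ref{lem:poly-closed}, is where the bulk of the bookkeeping lies.
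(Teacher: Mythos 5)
The overall shape of your argument---a dichotomy on the link congruence of $\rel'$ restricted to $B_1^*$, with the full case handled by Proposition~\ref{pro:max-gen} and the trivial case yielding a function $\vf$---is a reasonable starting point, but as written several of the pivotal claims are gaps rather than steps, and they are not all cosmetic bookkeeping.

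In the case $\lnk_1\ge\beta_1$ you ``pick $\ba\in\rel'\cap\umax(\rel^*)$ with $\ba[1]\in C$.'' The hypothesis only supplies one u-maximal tuple in $\rel'$; neither Lemma~\ref{lem:to-max} nor anything else you cite lets you steer its first coordinate into the fixed as-component $C$ rather than some other as-component of $B_1^*$ (u-maximal elements in general span several as-components connected by majority edges). You would need to move within $\rel'$---not merely $\rel^*$---along an asm-path, and that transfer is exactly what polynomial closedness is for; it cannot simply be asserted. Similarly, the conclusion $\umax(B''_2)\sse D$ from ``minimality of $\beta'$ and Lemma~\ref{lem:u-max-congruence}(1)'' does not follow: u-maximal elements of $B''_2$ lie in one $\beta'$-block by definition of $\beta'$, but being in one $\lnk_2$-block and being in one as-component of that block are not the same thing, and $D$ was defined as a single as-component.

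In the other case the assertion $\eta\prec\beta'$ is stated but never argued. Defining $\eta$ as the congruence of $\zA_2$ generated by $\ker\vf$ and observing $\eta<\beta'$ is easy; concluding that the interval is prime, and that $\eta\red{B''_2}=\ker\vf$ rather than something coarser, genuinely requires work (typically one transfers the primeness of $\zz_1\prec\beta_1$ across $\vf$, which needs an argument). Finally, the non-separability contradiction is, as you yourself note, where the substance lies: you cannot in general lift your witnesses $b_1,b_2\in B''_2$ to tuples in $\rel'\cap\umax(\rel^*)$, since the hypothesis gives only nonemptiness of that intersection, not surjectivity onto $\umax(B''_2)$; and applying Lemma~\ref{lem:relative-symmetry} requires checking that $B'_2\fac\eta$ has a nontrivial as-component, which you have not established (it is supposed to \emph{come} from $\vf$, but $\vf$'s image is in $\zA_1$, not $\zA_2\fac\eta$, so this needs spelling out). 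Flagging these steps as ``bookkeeping'' understates the matter: they are the reason the chainedness hypothesis and the collapsing-polynomial machinery (Lemma~\ref{lem:collapsing}) appear in the statement, and a proof that does not engage them is incomplete at its core.
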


\section{Decompositions and compressed problems}%
\label{sec:csp-non-central}

In this section we apply the machinery developed in the previous section 
to constraints satisfaction problems in order to prove 
Theorem~\ref{the:non-central}.

\subsection{Decomposition of CSPs}\label{sec:csp-decomposition}

We begin with showing how separating congruence intervals and
centralizers can be combined to obtain strands and therefore useful
decompositions of CSPs. The case of binary relations is settled in
\cite{Bulatov20:algebraic}.

\begin{lemma}[Lemma~34, \cite{Bulatov20:algebraic}]%
\label{lem:delta-alignment}
Let $\rel$ be a subdirect product of 
$\zA_1\tm\zA_2$, $\al_i,\beta_i\in\Con(\zA_i)$, $\al_i\prec\beta_i$, 
for $i=1,2$. If $(\al_1,\beta_1)$ and $(\al_2,\beta_2)$ cannot be separated 
from each other, then the coordinate positions 1,2 are 
$\zeta_1\zeta_2$-aligned in $\rel$, where 
$\zeta_1=(\al_1:\beta_1), \zeta_2=(\al_2:\beta_2)$.
\end{lemma}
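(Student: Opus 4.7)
The plan is to recast $\zeta_1\zeta_2$-alignment as equality of two natural congruences of $\rel$: set
\[
\eta_i=\{((a,c),(b,d))\in\rel^2:(\pr_i(a,c),\pr_i(b,d))\in\zeta_i\}
\]
for $i=1,2$. Both are congruences of $\rel$ (the restrictions of the product congruences $\zeta_1\tm\zo_{\zA_2}$ and $\zo_{\zA_1}\tm\zeta_2$), and $\zeta_1\zeta_2$-alignment is exactly $\eta_1=\eta_2$. By symmetry of the non-separation hypothesis it suffices to prove $\eta_1\sse\eta_2$: whenever $(a,u),(b,v)\in\rel$ and $(a,b)\in\zeta_1$, also $(u,v)\in\zeta_2$. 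Let $\pi$ be the congruence of $\zA_2$ generated by all such pairs $(u,v)$. Since $\zeta_2=(\al_2:\beta_2)$ is the greatest congruence of $\zA_2$ with the centralizer property for $(\al_2,\beta_2)$, it suffices to verify $\pi$ itself has that property.

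Because the centralizer biconditional is transitive in the tuples and invariant under applying term operations coordinatewise, a standard Mal'tsev-style generation argument reduces this to the case of tuples $\bu,\bv\in\zA_2^m$ where each $(\bu[i],\bv[i])$ is a direct generator of $\pi$, witnessed by $a_i,b_i\in\zA_1$ with $(a_i,\bu[i]),(b_i,\bv[i])\in\rel$ and $(a_i,b_i)\in\zeta_1$ (reflexive pairs are absorbed by taking $a_i=b_i$ arbitrary with $(a_i,\bu[i])\in\rel$, using subdirectness). Fix such data and a term $G(y,z_1,\ldots,z_m)$. Substituting the $\rel$-tuples $(a_i,\bu[i])$ and $(b_i,\bv[i])$ as parameters in $G$, viewed as a term of $\rel$, yields two unary polynomials of $\rel$,
\[
p(x,y)=(G^{\ov a}(x),G^\bu(y)),\qquad p'(x,y)=(G^{\ov b}(x),G^\bv(y)).
\]
Non-separation of the two intervals, applied to $p$ and to $p'$, respectively gives $G^{\ov a}(\beta_1)\sse\al_1\Leftrightarrow G^\bu(\beta_2)\sse\al_2$ and $G^{\ov b}(\beta_1)\sse\al_1\Leftrightarrow G^\bv(\beta_2)\sse\al_2$. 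Since $(a_i,b_i)\in\zeta_1=(\al_1:\beta_1)$ for each $i$, the centralizer property in $\zA_1$ applied to $G$ and the tuples $\ov a,\ov b$ supplies $G^{\ov a}(\beta_1)\sse\al_1\Leftrightarrow G^{\ov b}(\beta_1)\sse\al_1$. Chaining the three biconditionals delivers $G^\bu(\beta_2)\sse\al_2\Leftrightarrow G^\bv(\beta_2)\sse\al_2$, as required.

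I expect the main obstacle to be the generation step, i.e.\ the reduction from arbitrary tuples in $\pi^m$ to tuples whose coordinates are direct generators. This is routine but somewhat delicate because the centralizer condition is a property of tuples rather than single pairs; it is handled cleanly by changing one coordinate at a time and absorbing inner term operations into the outer term $G$. Once this bookkeeping is granted, the substance of the proof is the three-step triangle above: non-separation transports the collapsing condition from the second coordinate to the first, the $\zA_1$-centralizer swaps $\ov a$ for $\ov b$, and non-separation carries the conclusion back.
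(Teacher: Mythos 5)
Your argument is correct, and the ``triangle'' you build is exactly the right mechanism: non-separation applied to $p=(G^{\ov a},G^\bu)$ transports the collapsing condition from coordinate 2 to coordinate 1, the $(\al_1:\beta_1)$-centralizer property swaps $\ov a$ for $\ov b$, and non-separation applied to $p'=(G^{\ov b},G^\bv)$ carries it back. The reduction to coordinatewise direct generators is also sound, but there is a tidying observation that removes most of the delicacy you flagged: the set $S$ of pairs $(u,v)$ witnessed by $(a,u),(b,v)\in\rel$ with $(a,b)\in\zeta_1$ is already a reflexive (by subdirectness), symmetric subalgebra of $\zA_2\times\zA_2$, because $\rel$ is a subalgebra of $\zA_1\times\zA_2$ and $\zeta_1$ is a congruence. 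Thus $S$ is a tolerance of $\zA_2$, $\pi$ is simply its transitive closure, and the Mal'tsev chain consists of $S$-steps directly; no unary polynomials need to be absorbed into $G$. Changing one coordinate at a time through an $S$-chain, each step is a ``direct generator + reflexive elsewhere'' tuple to which your three-biconditional argument applies verbatim. (The paper in hand cites Lemma~34 of \cite{Bulatov20:algebraic} without reproducing the proof, so a line-by-line comparison is not possible from this text, but this is the natural and almost certainly intended argument.)
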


Let $\cP=(V,\cC)$ be a (2,3)-minimal instance and let
$\ov\beta$, $\beta_v\in\Con(\zA_v)$, $v\in V$, be a collection of congruences. 
Let $\cW^\cP(\ov\beta)$ denote the set of triples $(v,\al,\beta)$ 
such that $v\in V$, $\al,\beta\in\Con(\zA_v)$, and $\al\prec\beta\le\beta_v$. 
Also, $\cW^\cP$ denotes $\cW^\cP(\ov\beta)$ when $\beta_v=\zo_v$ 
for all $v\in V$. 
We will omit the superscript $\cP$ whenever it is clear from the context. 
Let also $\cW'^\cP(\ov\beta)$, $\cW'^\cP$, $\cW'$ denote the set of 
triples $(v,\al,\beta)$ from $\cW^\cP(\ov\beta)$, $\cW^\cP$, $\cW$, 
respectively, for which $(\al:\beta)=\zo_v$. For every 
$(v,\al,\beta)\in\cW(\ov\beta)$, let $Z(v,\al,\beta,\ov\beta)$ denote the set of triples 
$(w,\gm,\dl)\in\cW(\ov\beta)$ such that $(\al,\beta)$ and $(\gm,\dl)$
cannot be separated in $\rel^{vw}$. Slightly abusing the terminology we 
will also say that $(\al,\beta)$ and $(\gm,\dl)$ cannot be separated in $\cP$. 
Then let $W(v,\al,\beta,\ov\beta)=\{w\in V\mid (w,\gm,\dl)\in Z(v,\al,\beta,\ov\beta)
\text{ for some $\gm,\dl\in\Con(\zA_w)$}\}$.
We will omit mentioning of $\ov\beta$ whenever possible. Sets of the form
$W(v,\al,\beta,\ov\beta)$ will be called \emph{$\ov\beta$-coherent sets},
or just coherent sets if $\ov\beta$ is clear from the context. Also, if
$(\al:\beta)\ne\zo_v$ then the corresponding coherent set is called 
\emph{non-central}. The following statement is an easy corollary of 
Lemma~\ref{lem:delta-alignment}.

\begin{theorem}\label{the:centralizer-alignment}
Let $\cP=(V,\cC)$ be a (2,3)-minimal instance and $(v,\al,\beta)\in\cW$. 
For $w\in W(v,\al,\beta,\ov\beta)$, where $\beta_v=\zo_v$ for $v\in V$, 
let $(w,\gm,\dl)\in\cW$ be such that $(\al,\beta)$ and 
$(\gm,\dl)$ cannot be separated and $\zeta_w=(\gm:\dl)$. Then 
$\cP_{W(v,\al,\beta,\ov\beta)}$ is $\ov\zeta$-aligned.
\end{theorem}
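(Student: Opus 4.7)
The plan is to reduce the claim to a pairwise application of Lemma~\ref{lem:delta-alignment}. Since $\ov\zeta$-alignment of $\cP_{W(v,\al,\beta,\ov\beta)}$ amounts to $\zeta_u\zeta_w$-alignment of each $\rel^{uw}$ for $u,w\in W(v,\al,\beta,\ov\beta)$, and the lemma already supplies such alignment for any binary subdirect product in which the chosen intervals cannot be separated from each other, the whole theorem boils down to the following transitivity-type property: if $(\gm_u,\dl_u)$ and $(\al,\beta)$ cannot be separated in $\rel^{vu}$, and $(\al,\beta)$ and $(\gm_w,\dl_w)$ cannot be separated in $\rel^{vw}$, then $(\gm_u,\dl_u)$ and $(\gm_w,\dl_w)$ cannot be separated in $\rel^{uw}$.

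To establish the forward direction of this transitivity I argue by contradiction: assume a unary polynomial $f$ of $\rel^{uw}$ satisfies $f_u(\dl_u)\not\sse\gm_u$ and $f_w(\dl_w)\sse\gm_w$, where $f_u,f_w$ denote the coordinate-wise actions of $f$. Write $f(x)=t(x,\bc_1,\ldots,\bc_k)$ for a term operation $t$ of the ambient algebra and constants $\bc_i=(c_i^u,c_i^w)\in\rel^{uw}$. Invoking (2,3)-consistency, pick for each $i$ an element $c_i^v\in\zA_v$ with $(c_i^v,c_i^u)\in\rel^{vu}$ and $(c_i^v,c_i^w)\in\rel^{vw}$, and set
\[
\bar f(x)=t\bigl(x,(c_1^v,c_1^u),\ldots,(c_k^v,c_k^u)\bigr), \qquad
\hat f(x)=t\bigl(x,(c_1^v,c_1^w),\ldots,(c_k^v,c_k^w)\bigr),
\]
polynomials of $\rel^{vu}$ and $\rel^{vw}$ respectively. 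By construction they share the common $v$-action $f^v(x):=t(x,c_1^v,\ldots,c_k^v)$, while $\bar f_u=f_u$ and $\hat f_w=f_w$.

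A case split on the behaviour of $f^v$ on $(\al,\beta)$ now produces the contradiction. If $f^v(\beta)\sse\al$, then $\bar f$ has $\bar f_u(\dl_u)\not\sse\gm_u$ together with $\bar f_v(\beta)\sse\al$, so $\bar f$ separates $(\gm_u,\dl_u)$ from $(\al,\beta)$ in $\rel^{vu}$, contradicting the defining property of $W(v,\al,\beta,\ov\beta)$. Otherwise $f^v(\beta)\not\sse\al$, and then $\hat f$ separates $(\al,\beta)$ from $(\gm_w,\dl_w)$ in $\rel^{vw}$ via $\hat f_v(\beta)\not\sse\al$ and $\hat f_w(\dl_w)\sse\gm_w$, again a contradiction. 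The reverse direction, ruling out separation of $(\gm_w,\dl_w)$ from $(\gm_u,\dl_u)$ in $\rel^{uw}$, is handled by swapping the roles of $u$ and $w$ throughout. Thus neither direction of separation exists in $\rel^{uw}$, and Lemma~\ref{lem:delta-alignment} supplies the desired $\zeta_u\zeta_w$-alignment.

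The delicate point is the bookkeeping for the lift: a single choice of $c_i^v$ per constant must be used in both $\bar f$ and $\hat f$, which is precisely what makes their $v$-actions identical and enables the case split. Everything else is routine manipulation of polynomial expressions and coordinate projections, so Lemma~\ref{lem:delta-alignment} applied to each pair $u,w\in W(v,\al,\beta,\ov\beta)$ really carries the proof.
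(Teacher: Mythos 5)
Your proof is correct, and it is essentially the argument the paper has in mind when it calls Theorem~\ref{the:centralizer-alignment} ``an easy corollary of Lemma~\ref{lem:delta-alignment}'' --- namely, transitivity of the non-separability relation across the third variable $v$, obtained by lifting constants through $(2,3)$-consistency so as to fix a common $v$-action; this is precisely the transitivity the paper asserts (without proof) just before Lemma~\ref{lem:relative-symmetry}, here reconstructed across the three binary projections $\rel^{vu},\rel^{vw},\rel^{uw}$.
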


Theorem~\ref{the:centralizer-alignment} relates domains with
congruence intervals that cannot be separated with strands.

\begin{corollary}\label{cor:separation-strand}
Let $\cP=(V,\cC)$ be a (2,3)-minimal instance and $W$ a non-central
coherent set. Then $W$ is a subset of a strand.
\end{corollary}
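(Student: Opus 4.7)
The plan is to read off the corollary as an almost immediate consequence of Theorem~\ref{the:centralizer-alignment} combined with the definition of a strand; the non-centrality hypothesis enters only at the end to ensure that the strand is witnessed by proper congruences. First I would unpack the hypothesis: by assumption $W = W(v,\al,\beta,\ov\beta)$ for some $(v,\al,\beta) \in \cW(\ov\beta)$ with $\zeta_v := (\al:\beta) \ne \zo_v$. For every $w \in W$ the definition of a coherent set supplies a triple $(w,\gm_w,\dl_w) \in Z(v,\al,\beta,\ov\beta)$ such that $(\al,\beta)$ and $(\gm_w,\dl_w)$ cannot be separated in $\cP$; set $\zeta_w := (\gm_w : \dl_w)$.

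Next I would apply Theorem~\ref{the:centralizer-alignment} directly to this collection of congruences: it asserts that $\cP_W$ is $\ov\zeta$-aligned, i.e., for every pair $u, w \in W$ the binary projection $\rel^{uw}$ is $\zeta_u \zeta_w$-aligned. This is precisely the condition defining a strand in the paragraph introducing Lemma~\ref{lem:central-decomposition-p1}, so the pair $(W,\ov\zeta)$ is a strand of $\cP$, trivially containing $W$.

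The only remaining point is to check that the strand is genuinely informative, i.e., that no $\zeta_w$ is the full relation $\zo_w$, and this is where non-centrality is used. Since $\cP$ is (2,3)-minimal (hence 1-minimal), $\pr_v \rel^{vw} = \zA_v$ for every $w \in V$; so $\zeta_v \zeta_w$-alignment combined with $\zeta_w = \zo_w$ would force every pair from $\zA_v^2$ to lie in $\zeta_v$, yielding $\zeta_v = \zo_v$ and contradicting the assumption $(\al:\beta) \ne \zo_v$. Hence each $\zeta_w$ is a proper congruence and the enclosing strand is nontrivial. No real obstacle is expected: the genuine content lies in Lemma~\ref{lem:delta-alignment} and its assembly into Theorem~\ref{the:centralizer-alignment}, and the corollary is essentially a matter of translating that theorem into the language of strands.
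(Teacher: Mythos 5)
Your proof is correct and follows exactly the approach the paper intends: the corollary is read directly off Theorem~\ref{the:centralizer-alignment}, which literally states that $\cP_{W(v,\al,\beta,\ov\beta)}$ is $\ov\zeta$-aligned, and by the definition given just before Lemma~\ref{lem:central-decomposition-p1} this is precisely what it means for $W$ (with the congruences $\zeta_w$) to be a strand. Your final paragraph verifying that non-centrality together with 1-minimality forces each $\zeta_w$ to be a proper congruence is a correct extra observation, but it is not required for the corollary's literal conclusion (the definition of strand does not exclude full congruences, and the paper only needs $W$ to sit inside some strand, which your argument already gives with $U=W$).
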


For technical reasons we will also count the empty set as a
non-central coherent set.

\subsection{Compressed problems}\label{sec:strategies}

In this section we define a way to tighten a block-minimal problem instance 
in such a way that it remains (similar to) block-minimal. More precisely, 
we introduce several properties of a subproblem of a CSP instance 
$\cP$ that are preserved when the problem is restricted in a certain way.

Let $\cP=(V,\cC)$ be a (2,3)-minimal and block-minimal instance over $\cA$.
Recall that for a strand $W\sse V$ by $\cP_{/W}$ we denote the problem
$\cP\fac{\ov\mu_{/W}}$, where $\ov\mu_{/W}=\ov\mu^Y$ and 
$Y=\Razm(\cP)-W$. Let also $\cS_{/W}$ denote
the set of solutions of $\cP_{/W}$. If $W$ is a non-central coherent set, 
the problem $\cP_{/W}$ is defined in the same way. 

\begin{lemma}\label{lem:coherent-block-minimal}
Let $\cP$ be a (2,3)-minimal and block minimal problem. Then for 
every non-central coherent set $W$ the problem $\cP_{/W}$ is 
minimal.
\end{lemma}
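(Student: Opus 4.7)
The plan is to leverage Corollary~\ref{cor:separation-strand} (every non-central coherent set is contained in a strand) to reduce minimality of $\cP_{/W}$ to minimality of $\cP_{/U}$ for some enclosing strand $U$, which block-minimality supplies directly. The only real work is a lift-and-project passage between the two quotients.

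First I would locate a strand containing $W$. By Corollary~\ref{cor:separation-strand}, every non-central coherent set $W$ is a subset of some strand $U \sse V$ of $\cP$ (with $U = \eps$ serving in the case $W = \eps$, which is a strand vacuously). Block-minimality of $\cP$ applied to $U$ then gives that $\cP_{/U} = \cP\fac{\ov\mu^{Y_U}}$, where $Y_U = \Razm(\cP) - U$, is minimal.

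Next I would check that $\cP_{/W}$ is a further quotient of $\cP_{/U}$. Since $W \sse U$ we have $Y_U \sse Y_W = \Razm(\cP) - W$, and therefore $\mu^{Y_U}_v \le \mu^{Y_W}_v$ for every $v \in V$: they are equal outside $(U-W)\cap\Razm(\cP)$, and for $v$ in that set $\mu^{Y_U}_v = \zz_v$ while $\mu^{Y_W}_v = \mu_v$. Consequently every constraint relation $\rel\fac{\ov\mu^{Y_W}}$ of $\cP_{/W}$ is the componentwise image of $\rel\fac{\ov\mu^{Y_U}}$ under the natural quotient maps, and any solution of $\cP_{/U}$ descends, via these quotient maps, to a solution of $\cP_{/W}$.

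Then I would carry out the lift-and-project. Fix a constraint $\ang{\bs,\rel}\in\cC$ and a tuple $\ba\in\rel\fac{\ov\mu^{Y_W}}$; pick any witness $\hat\ba\in\rel$ projecting to $\ba$, and set $\tilde\ba[v] = \hat\ba[v]\fac{\mu^{Y_U}_v}$ for $v\in\bs$. Then $\tilde\ba\in\rel\fac{\ov\mu^{Y_U}}$ and maps down to $\ba$. By minimality of $\cP_{/U}$ there is a solution $\tilde\vf$ of $\cP_{/U}$ with $\tilde\vf(\bs)=\tilde\ba$, and taking componentwise quotients yields a solution $\vf$ of $\cP_{/W}$ with $\vf(\bs)=\ba$. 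Since $\ba$ was arbitrary, $\cP_{/W}$ is minimal.

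The main obstacle is purely bookkeeping: verifying that $W\sse U$ forces $\mu^{Y_U}_v \le \mu^{Y_W}_v$ uniformly and that the two quotient constructions are compatible as claimed. Once the inclusion of strands from Corollary~\ref{cor:separation-strand} is in hand, no genuinely new algebraic content is needed.
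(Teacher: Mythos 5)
Your proof is correct and follows essentially the same route as the paper's: invoke Corollary~\ref{cor:separation-strand} to enclose $W$ in a strand $U$, apply block-minimality to $U$, and project solutions of $\cP_{/U}$ through the quotient map induced by $\mu^{Y_U}_v \le \mu^{Y_W}_v$. You have simply spelled out the bookkeeping (the congruence comparison and the lift-and-project step) that the paper compresses into a single sentence.
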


\begin{proof}
By Corollary~\ref{cor:separation-strand} there is a strand $U\sse V$
such that $W\sse U$. It now suffices to observe that for every 
solution $\vf\in\cS_{/U}$ of $\cP_{/U}$ the mapping 
$\vf\fac{\ov\mu_{/W}}$ is a solution of $\cP_{/W}$.
\end{proof}

Let $\beta_v\in\Con(\zA_v)$ and let $B_v$ be a $\beta_v$-block, 
$\ov\beta=(\beta_v\mid v\in V)$, $\ov B=(B_v\mid v\in V)$. 
A problem instance $\cP^\dg=(V,\cC^\dg)$, where 
$\ang{\bs,\rel^\dagger}\in\cC^\dg$ if and only if 
$\ang{\bs,\rel}\in\cC$, is said to be \emph{$(\ov\beta,\ov B)$-compressed}
from $\cP$ if the following conditions hold:

\begin{itemize}
\item[(S1)]
For every $\ang{\bs,\rel}\in\cC$ the relation $\rel^\dagger$ is a 
nonempty subalgebra of $\rel\cap\ov B$;
\item[(S2)]
the relations $\rel^{X\dagger}$, where $\rel^{X\dagger}$ is
obtained from $\rel^X$ for $X\sse V$, $|X|\le 2$, form a nonempty 
$(2,3)$-strategy for $\cP^\dg$; 
\item[(S3)]
for every non-central coherent set $W$ the problem 
$\cP^\dg_{/W}=\cP^\dg\fac{\ov\mu_{/W}}$ is minimal;
\item[(S4)]
for every $\ang{\bs,\rel}\in\cC$ the relation $\rel$ is chained 
with respect to $\ov\beta,\ov B$, and the relation $\cS_{/W}$ is 
chained with respect to $\ov\beta,\ov B$ for every non-central
coherent set $W\sse V$;
\item[(S5)]
for every $\ang{\bs,\rel}\in\cC$ the subalgebra $\rel^\dagger$ is
polynomially closed in $\rel$;
\item[(S6)]
for every $\ang{\bs,\rel}\in\cC$ the subalgebra $\rel^\dg$ is
weakly as-closed in $\rel\cap\ov B$.
\end{itemize}

Conditions (S1)--(S3) are the conditions we actually want to maintain when
constructing a compressed instance, and these are the ones that provide 
the desired results. However, to prove that (S1)--(S3) are preserved under 
transformations of compressed instances we also need more technical 
conditions (S4)--(S6).

We now show how we plan to use compressed instances. Let $\cP$ be 
a subdirectly irreducible, (2,3)-minimal, and block-minimal instance, 
$\beta_v=\zo_v$ and $B_v=\zA_v$ for $v\in V$. Then as is easily seen 
the instance $\cP$ itself is $(\ov\beta,\ov B)$-compressed from $\cP$. 
Also, by (S1) a $(\ov\gm,\ov D)$-compressed instance with 
$\gm_v=\zz_v$ for all $v\in V$ gives a solution of $\cP$. Our goal is 
therefore to show that a $(\ov\beta,\ov B)$-compressed instance for any
$\ov\beta$ and an appropriate $\ov B$ can be `reduced', that is, 
transformed to a $(\ov\beta',\ov B')$-compressed instance
for some $\ov\beta'<\ov\beta$. Note that this reduction of instances is 
where the condition $\Razm(\cP)\cap\Centr(\cP)=\eps$ is used. Indeed, 
suppose that $\beta_v=\mu^*_v$ (see Section~\ref{sec:the-algorithm}). 
Then by conditions (S1)--(S6) we only 
have information about solutions to problems of the form 
$\cP\fac{\ov\mu^*}$ or something very close to that. Therefore this 
barrier cannot be penetrated. We  consider two cases.

\medskip
{\sc Case 1.} There are $v\in V$ and $\al\prec\beta_v$ nontrivial on $B_v$, 
$\typ(\al,\beta_v)=\two$. This case is considered in 
Section~\ref{sec:affine-consistency}.

\smallskip
{\sc Case 2.} For all $v\in V$ and $\al\prec\beta_v$ nontrivial on $B_v$, 
$\typ(\al,\beta_v)\in\{\three,\four,\five\}$. This case is considered 
in Section~\ref{sec:non-affine}.

\smallskip
There is also the possibility that $\al\red{\rel^{v\dg}}=
\beta_v\red{\rel^{v\dg}}$ for all $\al\prec\beta_v$. In this case we can 
replace $\beta_v$ with a smaller congruence without violating any of the 
conditions (S1)--(S6).

\section{Proof of Theorem~\ref{the:non-central}: Affine factors}%
\label{sec:affine-consistency}

In this section we consider Case~1 of tightening instances: there is 
$\al\in\Con(\zA_v)$ for some $v\in V$ such that $\al\prec\beta_v$ 
and $\typ(\al,\beta_v)=\two$.

\subsection{Tightening the instance and induced congruences}%
\label{sec:transformation1}

Let $\cP=(V,\cC)$ be a block-minimal instance with subdirectly irreducible 
domains, $\ov\beta=(\beta_v\in\Con(\zA_v)\mid v\in V)$ and 
$\ov B=(B_v\mid B_v \text{ is a $\beta_v$-block, } v\in V)$. 
Let $\cW,\cW'$ denote $\cW^\cP(\ov\beta)$, $\cW'^\cP(\ov\beta)$,
respectively.
Let also $\cP^\dg=(V,\cC^\dg)$ be a $(\ov\beta,\ov B)$-compressed 
instance, and for $C=\ang{\bs,\rel}\in\cC$ there is 
$C^\dagger=\ang{\bs,\rel^\dagger}\in\cC^\dg$. 
We select $v\in V$ and $\al\in\Con(\zA_v)$ with 
$\al\prec\beta_v$, $\typ(\al,\beta_v)=\two$, and 
an $\al$-block $B\in B_v\fac\al$. Note that since $\typ(\al,\beta_v)=\two$, 
$B_v\fac\al$ is a module, and therefore $B$ is as-maximal in this
set. In this section we show how $\cP^\dg$ 
can be transformed to a $(\ov\beta',\ov B')$-compressed instance such
that $\beta'_w\le\beta_w$, $B'_w\sse B_w$ for $w\in V$, and 
$\beta'_v=\al$, $B'_v=B$. Let also $W=W(v,\al,\beta_v,\ov\beta)$, and
let $\cS^\dagger_{/U}$ denote the set of solutions of $\cP^\dg_{/U}$
for a non-central coherent set $U$. We use $\cP^\dg_{/\eps}$,  
$\cS^\dg_{/\eps}$ to denote such a problem and its solution set 
for $U=\eps$. Let also $\cS^\dg_{/U}(B)=\{\vf\in\cS^\dg_{/U}\mid 
\vf(v)\in B\fac{\mu_{/U v}}\}$.

Let $\cP^\ddg=(V,\cC^\ddg)$ be the following instance.
\begin{itemize}
\item[(R1)]
For every $C^\dagger=\ang{\bs,\rel^\dagger}\in\cC^\dg$, the set 
$\rel'^\ddg$ includes 
\begin{itemize}
\item[(a)] 
if $(v,\al,\beta_v)\not\in\cW'$, every $\ba\in\umax(\rel^\dg)$ 
such that $\ba\fac{\ov\mu_{/W}}$ extends to a solution 
$\vf\in\umax(\cS^\dg_{/W}(B))$;\\
\item[(b)] 
if $(v,\al,\beta_v)\in\cW'$, every $\ba\in\umax(\rel^\dagger)$ such that 
$\ba\fac{\ov\mu^\eps}$ extends to a solution 
$\vf\in\umax(\cS^\dg_{/\eps}(B))$. 
\end{itemize}
\item[(R2)]
for every $C^\dagger=\ang{\bs,\rel^\dagger}\in\cC^\dg$, there is 
$C^\ddagger=\ang{\bs,\rel^\ddagger}$, where 
$\rel^\ddagger=\Sgg{\rel}{\rel'^\ddg}$. 
\end{itemize}

The following two statements show how relations $\rel^\ddagger$ 
are related to $\rel^\dagger$. They amount to saying that either 
$\rel^\ddagger$ is (almost) the intersection of $\rel^\dagger$ with 
a block of a congruence of $\rel$, or 
$\umax(\rel^\ddagger)=\umax(\rel^\dagger)$. Recall that for 
congruences $\beta_w$, $w\in V$, and $U\sse V$ by $\ov\beta_U$ we 
denote the collection $(\beta_w)_{w\in U}$.

\begin{lemma}\label{lem:congruence-restriction}
Let $C=\ang{\bs,\rel}\in\cC$, and let 
$\cS^\circ,\cS^{\circ\dagger}$ be the set of solutions of $\cP_{/W}$ 
(respectively, $\cP^\dagger_{/W}$) if $(v,\al,\beta_v)\not\in\cW'$, 
or the set of solutions of $\cP_{/\eps}$ (respectively, 
$\cP^\dg_{/\eps}$) if $(v,\al,\beta_v)\in\cW'$. 
There is a congruence $\tau_C$ 
of $\rel$ satisfying the following conditions.
\begin{itemize}\itemsep0pt
\item[(a)] 
Either $\umax(\rel^\ddagger)=\umax(\rel^\dagger)$, or for a 
$\tau_C$-block $\reli$ it holds $\rel^\ddg=\rel^\dg\cap\reli$. 
\item[(b)] 
Either $\tau_C\red{\rel^\dg}=\ov\beta_\bs\red{\rel^\dg}$, 
or $\rel^\dg\fac{\tau_C}$ is isomorphic to $\rel^{v\dg}\fac\al$. 
Moreover, in the latter case $\tau_C\prec\ov\beta_\bs$.
\end{itemize}
\end{lemma}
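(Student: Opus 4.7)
The plan is to construct $\tau_C$ as a natural congruence of $\rel$ that encodes the selection condition imposed by (R1), and then analyze how the subalgebra generation in (R2) interacts with it. First I would define $\tau_C$ in the case $v \in \bs$ as the intersection $\pi_v \cap (\ov\beta_\bs|_\rel)$, where $\pi_v$ is the pullback of $\al$ under the $v$-coordinate projection $\pr_v\colon \rel \to \zA_v$. This is a congruence of $\rel$ contained in $\ov\beta_\bs|_\rel$, and since $\al \prec \beta_v$, either $\tau_C = \ov\beta_\bs|_\rel$ (when $\al$ and $\beta_v$ agree on $\pr_v\rel^\dg$) or $\tau_C \prec \ov\beta_\bs|_\rel$, in which case $\rel^\dg/\tau_C$ surjects onto $\rel^{v\dg}/\al$. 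For the case $v \notin \bs$, I would transport $\al$ through the (2,3)-strategy relations $\rel^{vw}$ for $w \in \bs \cap W$, using Theorem~\ref{the:centralizer-alignment} and the coherent set $W$ to convert $\al$ into a congruence on some $\zA_w$ and then pull it back to $\rel$ through $\pr_w$.

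Next I would analyze $\rel'^\ddg$ from (R1). The condition that $\ba\fac{\ov\mu_{/W}}$ extends to a u-maximal solution $\vf \in \umax(\cS^\dg_{/W}(B))$ should translate, via the block-minimality condition (S3) and (2,3)-minimality of $\cP^\dg$, into the condition that $\ba$ lies in a specific $\tau_C$-block $\reli$. The coherent-set decomposition of Section~\ref{sec:csp-decomposition}, in particular Corollary~\ref{cor:separation-strand}, together with the Congruence Lemma~\ref{lem:affine-link} applied to binary relations $\rel^{vw}$, should give that the set of $\ba \in \umax(\rel^\dg)$ satisfying the condition is exactly $\umax(\rel^\dg) \cap \reli$. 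In the degenerate subcase where the condition is vacuous on $\umax(\rel^\dg)$, one instead has $\rel'^\ddg = \umax(\rel^\dg)$.

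For part (a) I would then combine (R2) with the structural properties (S4)--(S6) of the compressed instance: since $\rel^\dg$ is polynomially closed in $\rel$ and weakly as-closed in $\rel \cap \ov B$, and $\tau_C$ is a congruence of $\rel$, the subalgebra $\Sgg\rel{\rel'^\ddg}$ stays inside the single $\tau_C$-block $\reli$ and inside $\rel^\dg$, giving $\rel^\ddg \subseteq \rel^\dg \cap \reli$. The reverse inclusion would follow from the fact that every tuple in $\rel^\dg \cap \reli$ is reachable from u-maximal seeds in $\rel'^\ddg$ via $\rel$-operations, invoking the 2-Decomposition Theorem~\ref{the:quasi-2-decomp} and Lemma~\ref{lem:poly-closed}. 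When $\rel'^\ddg = \umax(\rel^\dg)$, the first alternative of (a) holds; otherwise the second. For part (b), the dichotomy tracks whether $(\al, \beta_v)$ can be separated from prime intervals over $\bs$ in $\rel$: if not, then $\tau_C|_{\rel^\dg}$ collapses to $\ov\beta_\bs|_{\rel^\dg}$; otherwise Lemma~\ref{lem:collapsing} forces the $\tau_C$-quotient of $\rel^\dg$ to be isomorphic to $\rel^{v\dg}/\al$, and since $\al \prec \beta_v$ the congruence $\tau_C$ is a cover of $\ov\beta_\bs$ on $\rel$.

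The hardest part will be the case $v \notin \bs$, where $\tau_C$ must be imported from $\zA_v$ through the (2,3)-strategy via the coherent set $W$ and one must rule out that the subalgebra generation in (R2) leaks $\rel'^\ddg$ across multiple $\tau_C$-blocks. The chained property (S4), together with Lemma~\ref{lem:S7} and the relative symmetry of Lemma~\ref{lem:relative-symmetry}, is designed to handle exactly this, but stitching these results together in the precise form required by (a) and (b) simultaneously, and ruling out the various degenerate sub-cases (for example when a projection of $\rel^\dg$ misses the relevant $\al$-classes), is where most of the delicate work lies.
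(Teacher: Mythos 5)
The critical divergence is in your construction of $\tau_C$ when $v\notin\bs$. You propose to transport $\al$ through the binary (2,3)-strategy relations $\rel^{vw}$ for $w\in\bs\cap W$ and then pull back to $\rel$ via $\pr_w$. This would fail for two reasons. First, it is undefined when $\bs\cap W=\eps$, which is a perfectly possible case and is precisely the case where the paper's construction silently yields the full congruence $\tau_C=\ov\beta_\bs$. Second, and more importantly, the restriction in (R1) is a \emph{global} condition: it asks whether $\ba\fac{\ov\mu_{/W}}$ extends to a solution from $\umax(\cS^\dg_{/W}(B))$, and this cannot be read off from the binary relations $\rel^{vw}$ alone. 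The paper handles this by forming $\relo=\pr_{\bs\cup\{v\}}\cS^\circ$, where $\cS^\circ$ is the solution set of the whole quotient problem ($\cP_{/W}$ or $\cP_{/\eps}$), and applies the Congruence Lemma~\ref{lem:affine-link} to $\relo\fac\al$ viewed as a subdirect product of $\pr_\bs\cS^\circ$ and $\zA_v\fac\al$. The congruence $\eta$ that the Congruence Lemma produces \emph{is} $\tau_C$; there is no stitching together of pullbacks from several variables, and no need to verify independence of the choice of $w$. Your route to the same object is not just less efficient, it is circular: the reason the $\pr_w$-pullbacks would coincide for different $w$ is exactly the alignment captured by the paper's construction, and you would need to prove that first.

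The second gap is in part (b), the covering claim $\tau_C\prec\ov\beta_\bs$. You write that ``since $\al\prec\beta_v$ the congruence $\tau_C$ is a cover of $\ov\beta_\bs$'', but this does not follow from $\al\prec\beta_v$ alone. The paper has to argue this: it takes two $\tau_C$-unrelated u-maximal tuples $\ba,\bb\in\rel^\dg$, observes that they correspond to distinct $\al$-blocks in $\zA_v$, uses the module structure of $B_v\fac\al$ to pick an $(\al,\beta_v)$-subtrace between those blocks, invokes Lemma~\ref{lem:collapsing} to build a carefully constrained polynomial fixing $\ba$ and one endpoint, then uses polynomial closeness and a thin affine edge inside the subalgebra generated by the two images to jump to any other $\tau_C$-block. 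Each of those steps is necessary; none of them is automatic. Your sketch names Lemma~\ref{lem:collapsing} but does not explain how it yields a cover rather than just a strict inequality. Your $v\in\bs$ construction and the broad shape of part (a) agree with the paper, but as proposed the proof has these two holes.
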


If, according to item (b) of the lemma, 
$\tau_C\red{\rel^\dg}=\ov\beta_\bs\red{\rel^\dg}$,
we say that $\tau_C$ is the \emph{full congruence};
if the latter option of item (b) holds we say that $\tau_C$ is a 
\emph{maximal congruence}.

\begin{proof}
If $v\in\bs$ then set $\tau_C$ to be $\ov\beta_\bs\wedge\al$, where 
$\al$ is viewed as a congruence of $\rel^\dagger$, equal to 
$\al\tm\prod_{x\in \bs-\{v\}}\zo_x$. Otherwise consider 
$\relo=\pr_{\bs\cup\{v\}}\cS^\circ$ as a subdirect product of $\zA_v$ and
$\pr_\bs\cS^\circ$. This relation is chained with respect to $\ov\beta,\ov B$ 
by (S4) for $\cP^\dg$ and 
$\pr_{\bs\cap\{v\}}\cS^{\circ\dagger}$ is polynomially 
closed in $\relo$ by (S5) for $\cP^\dg$ and Lemma \ref{lem:poly-closed}(2); 
apply the Congruence 
Lemma~\ref{lem:affine-link} to it. Specifically, consider $\relo\fac\al$ as a 
subdirect product of $\pr_\bs\cS^\circ$ and $\zA_v\fac\al$. If the first option 
of the Congruence Lemma~\ref{lem:affine-link} holds, set 
$\tau_C=\ov\beta_\bs$. If the second option is the case, choose 
$\tau_C$ to be the congruence $\eta$ of $\pr_\bs\cS^\circ$ identified 
in the Congruence Lemma~\ref{lem:affine-link}. Note that in the 
latter case the restriction of $\tau_C$ on $\rel^\dagger$ is nontrivial, 
because tuples from a $\tau_C$-block are related in $\relo$ only to 
elements from one $\al$-block, while the domain of $v$ in $\relo$
spans more than one $\al$-block.

(a) In this case the result follows by the Congruence 
Lemma~\ref{lem:affine-link}.

(b) If $\tau_C\ne\ov\beta_\bs$, by construction 
$\rel^\dg\fac{\tau_C}$ is isomorphic
to $\pr_v\cS^{\circ\dg}\fac\al$, which is isomorphic to $\rel^{v\dg}$. 

To show that $\tau_C\prec\ov\beta_\bs$, as $\beta_w$, $w\in\bs$, 
is the smallest congruence for which $\rel^{w\dg}$ is a subset of a 
$\beta_w$-block, it suffices to prove that for any $\ba,\bb\in\rel^\dg$ and 
such that $\ba\not\eqc{\tau_C}\bb$, $\rel^\dg$ is in a $\gm$-block,
where $\gm=\Cgg\rel{\tau_C\cup\{\ba,\bb\}}$. Consider 
again the relation $\relo$ and let $\rel'=\rel^\dg\fac{\ov\mu^\circ}$,
$\ba'=\ba\fac{\ov\mu^\circ},\bb'=\bb\fac{\ov\mu^\circ}$. Tuples
$\ba,\bb$ can be chosen u-maximal in their $\tau_c$-blocks.
Let also $(\ba',a),(\bb',b)\in\relo$; then $a\not\eqc\al b$ and $a$ can be chosen
u-maximal in its $\al$-block. Since $\al\prec\beta_v$ and $B_v\fac\al$ is a 
module, for any $\al$-block $D\sse B_v$ there is $c\in D$ such that 
$\{a,c\}$ is an $(\al,\beta_w)$-subtrace. 
By Lemma~\ref{lem:collapsing} there is 
a polynomial $f$ of $\pr_{\bs\cup\{v\}}\cS^\circ$ such that 
$f(\ba',a)=(\ba',a)$ and $f(\bb',b)=(\bc',c)$ for some $\bc'\in\rel'$. 
Indeed, we start with any polynomial $g$ that maps $a\fac\al,b\fac\al$
to $a\fac\al,c\fac\al$ and $g(\zA_v)$ is an $(\al,\beta_v)$-minimal set.
Then by Lemma~\ref{lem:collapsing} it can be amended in such a way that 
$g(a)=a$ and $g(\ba')=\ba'$.
Since $a\fac\al b\fac\al$ is 
an affine edge there is also $(\bd,d)\in\Sgg\relo{(\ba',a),(\bc',c)}$
such that $(\ba',a)(\bd',d)$ is a thin affine edge and $d\eqc\al c$.
Since $\relo$ is polynomially closed $(\bd,d)\in\relo$. On the other hand,
as $(\bd,d)\in\Sgg\relo{(\ba',a),(\bc',c)}$, there is a term operation 
$h$ such that $(\bd,d)=h((\ba',a),(\bc',c)$. The polynomial
$h(f(\ba',a),f(x))$ maps $(\ba',a)$ to $(\ba',a)$ and $(\bb',b)$ to
$(\bd,d)$, proving that any two $\tau_C$ blocks of $\rel^\dg$
are $\gm$-related.
\end{proof}

Next we identify variables $w\in V$ for which $\beta'_w$ has to be
different from $\beta_w$. Since $\cP$ is (2,3)-minimal, for every 
$w\in V$ there is $C^w=\ang{(w),\rel^w}\in\cC$. 
For $w\in W$ there are two cases. In the first
case, when $\tau_{C^w}$ is the full congruence, we set
$\beta'_w=\beta_w$. Otherwise $\tau_{C^w}$ is a congruence 
of $\zA_w$ with $\tau_{C^w}\prec\beta_w$ in $\Con(\zA_w)$. Set
$\beta'_w=\tau_{C^w}$. If $\beta'_w\ne\beta_w$ then there 
is a $\beta'_w$-block $B'_w$ such that $b\in B'_w$ whenever 
$(a,b)\in\rel^{vw\dg}$ and $a\in B$. For the remaining variables $w$
we set $B'_w=B_w$.

\begin{lemma}\label{lem:restricted-congruence}
In the notation above
\begin{itemize}\itemsep0pt
\item[(1)]
Let $\gm,\dl\in\Con(\zA_u)$, 
$u\in U=\bs\cap W$ be such that $(u,\gm,\dl)\in\cW$ and 
$(\al,\beta_v),(\gm,\dl)$ cannot be separated from each other. Then 
if $\tau_C$ is a maximal congruence, for any polynomial $f$ of $\rel$, 
$f(\ov\beta_\bs)\sse\tau_C$ if and only if $f(\dl)\sse\gm$. 
If $\gm,\dl$ are considered as congruences of $\rel$,
this condition means
that $(\tau_C,\ov\beta_\bs)$ and $(\gm,\dl)$ cannot be separated.
\item[(2)]
Assuming $\Razm(\cP)\cap\Centr(\cP)=\eps$, if $(v,\al,\beta_v)\in\cW'$,
then for any $w\in\Razm(\cP)$, the interval $(\zz_w,\mu_w)$ can be 
separated from $(\al,\beta_v)$ or the other way round, and therefore 
either $(\zz_w,\mu_w)$ can be separated from every 
$(\tau_C,\ov\beta_\bs)$, where $C\in\cC$ is such that $\tau_C$ 
is a maximal congruence, or the other way round.
\end{itemize}
\end{lemma}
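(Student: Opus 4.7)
Part (1). The plan is to factor the biconditional through an always-valid structural equivalence. First I would show, from the description of $\tau_C$ in Lemma~\ref{lem:congruence-restriction}(b), that for any polynomial $f$ of $\rel$, $f(\ov\beta_\bs)\sse\tau_C$ if and only if $f_v(\beta_v)\sse\al$, where $f_v$ is the polynomial of $\zA_v$ induced by $f$ on the $v$-coordinate. In the case $v\in\bs$, where $\tau_C=\ov\beta_\bs\wedge\al^*$ with $\al^*$ the pullback of $\al$ along $\pr_v$, this reduces to $f(\ov\beta_\bs)\sse\al^*$, and the $\Rightarrow$ direction uses that every $\beta_v$-related pair in $\pr_v\rel=\zA_v$ lifts to an $\ov\beta_\bs$-related pair in $\rel$ (via the congruence structure guaranteed by the (2,3)-minimality set-up). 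In the case $v\notin\bs$, $\tau_C$ arises from the Congruence Lemma~\ref{lem:affine-link} on $\relo=\pr_{\bs\cup\{v\}}\cS^\circ$; the graph-of-function $\vf\colon B''_2\to C$ there identifies $\rel^\dg/\tau_C$ with $\rel^{v\dg}/\al$ compatibly with polynomial actions, and the two-sided argument transfers from the ambient $\relo$ down to $\rel$.

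Having the structural equivalence, I would combine it with the non-separation of $(\al,\beta_v)$ and $(\gm,\dl)$, which unfolds as $f_v(\beta_v)\sse\al$ iff $f_u(\dl)\sse\gm$ for each relevant polynomial $f$. Composing the two biconditionals yields $f(\ov\beta_\bs)\sse\tau_C$ iff $f(\dl)\sse\gm$, the first claim. For the second, viewing $\gm$ and $\dl$ as congruences of $\rel$ via pullback along $\pr_u$, the biconditional forbids both possible directions of separation between the prime intervals $(\tau_C,\ov\beta_\bs)$ and $(\gm,\dl)$, so these intervals cannot be separated in either direction.

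Part (2). For the dichotomy between $(\zz_w,\mu_w)$ and $(\al,\beta_v)$, I would argue by contradiction: suppose neither can be separated from the other. By Theorem~\ref{the:centralizer-alignment}, the binary strategy relation $\rel^{vw}$ is $(\al:\beta_v)(\zz_w:\mu_w)$-aligned. Since $(v,\al,\beta_v)\in\cW'$ gives $(\al:\beta_v)=\zo_v$, alignment with the full congruence on the $v$-side forces $(\zz_w:\mu_w)=\zo_w$, so $w\in\Centr(\cP)$; combined with $w\in\Razm(\cP)$ this contradicts the standing assumption $\Razm(\cP)\cap\Centr(\cP)=\eps$. Hence one direction of separation must hold.

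For the propagation to each $C$ with $\tau_C$ maximal, I would invoke the structural equivalence $f(\ov\beta_\bs)\sse\tau_C\Leftrightarrow f_v(\beta_v)\sse\al$ established in the first paragraph, whose derivation uses only Lemma~\ref{lem:congruence-restriction} and no non-separation hypothesis. Assuming without loss of generality that $(\zz_w,\mu_w)$ separates from $(\al,\beta_v)$ via a polynomial $f$ with $f_v(\beta_v)\sse\al$ and $f_w(\mu_w)\not\sse\zz_w$, the equivalence yields $f(\ov\beta_\bs)\sse\tau_C$, and so the same $f$ separates $(\zz_w,\mu_w)$ from every $(\tau_C,\ov\beta_\bs)$ with $\tau_C$ maximal, uniformly in $C$; the dual case is symmetric and yields the other uniform conclusion. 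The main obstacle I anticipate is the case $v\notin\bs$: there $\tau_C$ is defined on $\rel$ through the Congruence Lemma applied to the auxiliary $\relo$, and one must verify both that restricting polynomials from $\relo$ to $\rel$ preserves the structural equivalence, and that the separating polynomial furnished by Theorem~\ref{the:centralizer-alignment} (which naturally lives in the larger CSP ambient) can be realised with its $v$- and $w$-coordinate actions preserved as a polynomial of each constraint relation $\rel$, so that the resulting separation truly takes place inside the relation where $(\tau_C,\ov\beta_\bs)$ lives.
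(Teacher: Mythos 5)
Your overall line of attack in both parts coincides with the paper's, but in part~(1) you leave open precisely the step the paper resolves, and the way you flag it as ``the main obstacle I anticipate'' suggests you did not see how to close it.

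The relevant gap: for $v\notin\bs$ a polynomial $f$ of $\rel$ has no intrinsic action on $\zA_v$, so your proposed biconditional $f(\ov\beta_\bs)\sse\tau_C\Leftrightarrow f_v(\beta_v)\sse\al$ is not even well-posed until you say what $f_v$ is. You try to get around this by appealing to the graph-of-function identification from the Congruence Lemma, but the Congruence Lemma's non-separation conclusion quantifies over polynomials of $\pr_{\bs\cup\{v\}}\cS^\circ$, not polynomials of $\rel$; there is no reason a priori that a polynomial of $\rel$ corresponds to one of $\relo$ under the identification. What bridges the two is \emph{block-minimality}, which the paper invokes as the first sentence of the proof: a polynomial $f=t(\cdot,\vc\bb k)$ of $\rel$ has its constants $\vc\bb k\in\rel$, and by block-minimality each $\bb_i$ extends to some $\hat\bb_i\in\cS^\circ$, so $t(\cdot,\vc{\hat\bb}k)$ is a polynomial of $\cS^\circ$ projecting back to $f$ on the $\bs$-coordinates and simultaneously acting on $\zA_v$. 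Once you have that extension, the Congruence Lemma (applicable because $\tau_C$ is maximal) directly yields $f(\beta_v)\sse\al\Leftrightarrow f(\ov\beta_\bs)\sse\tau_C$ and you then chain with the given non-separability of $(\al,\beta_v)$ and $(\gm,\dl)$. Without naming block-minimality as the mechanism, your argument does not actually produce the polynomial the Congruence Lemma needs.

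Part~(2) is fine and matches the paper modulo cosmetics: you route the contradiction through Theorem~\ref{the:centralizer-alignment} where the paper cites Lemma~\ref{lem:delta-alignment} directly (the former is a corollary of the latter), and the alignment-with-a-full-centralizer argument forcing $(\zz_w\colon\mu_w)=\zo_w$ is exactly the paper's. Your explication of the ``and therefore'' clause is more detailed than the paper (which states it without proof), and is correct, again provided you have the block-minimality extension in hand so that a separating polynomial for $(\al,\beta_v)$ vs.\ $(\zz_w,\mu_w)$ lives in an ambient where it also acts on $\rel$; one small slip is that Theorem~\ref{the:centralizer-alignment} does not ``furnish'' a separating polynomial---it concludes alignment from \emph{non}-separation---the separating polynomial comes from the contrapositive of Lemma~\ref{lem:delta-alignment}.
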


\begin{proof}
(1) Let $\cS^\circ$ be defined as in Lemma~\ref{lem:congruence-restriction}
and $\tau_C$ a maximal congruence. Take a polynomial $f$ of $\rel$. 
Since $\cP$ is a block-minimal instance, the polynomial $f$ can be extended
from a polynomial on $\rel$ to a polynomial of $\cS^\circ$, and, 
in particular, to a polynomial of $\pr_{\bs\cup\{v\}}\cS^\circ$; we keep notation 
$f$ for those polynomials. 
Since $\tau_C$ is maximal, by the Congruence Lemma~\ref{lem:affine-link}
the intervals $(\al,\beta_v)$ and $(\tau_C,\ov\beta_\bs)$ in the congruence
lattices of $\zA_v$ and $\rel$, respectively, cannot be 
separated in $\pr_{\bs\cup\{v\}}\cS^\circ$. 
Therefore $f(\beta_v)\sse\al$ if and only if $f(\ov\beta_\bs)\sse\tau_C$.
Since $(\al,\beta_v)$ and $(\gm,\dl)$ cannot be separated in $\cP$, 
the first inclusion holds if and only if $f(\dl)\sse\gm$, and we infer the 
result. 

(2) Since $(v,\al,\beta_v)\in\cW'$, the centralizer $(\al:\beta_v)=\zo_v$. 
On the other hand, if $w\in\Razm(\cP)$, then
$w\not\in\Centr(\cP)$ and $(\zz_w:\mu_w)\ne\zo_w$.
Therefore $(\al,\beta_v)$ can be separated from  
$(\zz_w,\mu_w)$ or the other way round, as it follows from 
Lemma~\ref{lem:delta-alignment}. 
\end{proof}

Now we are in a position to prove that $\cP^\ddg$ is a 
$(\ov\beta',\ov B')$-compressed instance.

\begin{theorem}\label{the:restricting-strategy}
In the notation above, $\cP^\ddg$ is a 
$(\ov\beta',\ov B')$-compressed instance.
\end{theorem}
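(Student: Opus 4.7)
The plan is to verify each of the six conditions (S1)--(S6) in turn for $\cP^\ddg$ with respect to $\ov\beta',\ov B'$, leaning heavily on Lemmas~\ref{lem:congruence-restriction} and \ref{lem:restricted-congruence} to understand the effect of the restriction on each constraint.

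First I would dispense with (S1). Nonemptiness of $\rel^\ddg$: apply the Maximality Lemma~\ref{lem:to-max} to $\cS^\dg_{/W}$ (or $\cS^\dg_{/\eps}$) to obtain a u-maximal solution $\vf$ projecting into the block $B\fac{\mu_{/W\,v}}$; minimality of $\cP^\dg_{/W}$ (from (S3) for $\cP^\dg$) lifts every $\ba\in\umax(\rel^\dg)$ to such a $\vf$ for at least one choice of $B$, and we pick the $B$ we fixed. Containment $\rel^\ddg\sse\rel\cap\ov{B'}$: the generators in $\rel'^\ddg$ lie in $\ov{B'}$ by the choice of $\beta'_w,B'_w$ and (2,3)-minimality relating $\rel^{vw\dg}$ to $B$; and $\ov{B'}$ is a subalgebra of $\ov B$ (each $B'_w$ is a congruence block), so taking the subalgebra generated preserves containment.

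Next, (S2) follows by running the same construction on the binary constraints $\rel^{X\dg}$ and observing that (2,3)-consistency is inherited because all $\rel'^\ddg$ are obtained from the single family of solutions $\cS^\dg_{/W}$ or $\cS^\dg_{/\eps}$, which is itself (2,3)-consistent. For (S4), (S5), (S6), I would invoke the general machinery: Lemma~\ref{lem:poly-closed}(2) shows that $\rel^\ddg$, being pp-defined from the polynomially closed sets $\cS^\dg_{/*}$ and from $\rel^\dg$ itself (using $\rel'^\ddg$ as generators lying in $\umax$), is polynomially closed in $\rel$; the same lemma gives weak as-closure inside $\rel\cap\ov{B'}$; and Lemma~\ref{lem:S7}(2) yields the chaining property for $\ov{\beta'},\ov{B'}$, because at the coordinate $v$ we have passed from $\beta_v$ to the prime predecessor $\al$, and $B$ is as-maximal in $B_v\fac\al$ (true because $B_v\fac\al$ is a module by $\typ(\al,\beta_v)=\two$, so every block is as-maximal). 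Chaining of $\cS^\ddg_{/U}$ for non-central coherent $U$ is then obtained from chaining of $\cS^\dg_{/U}$ and Lemma~\ref{lem:S7}(2) applied coordinatewise.

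The main obstacle is (S3): for every non-central coherent set $U$, $\cP^\ddg_{/U}$ must be minimal. The idea is to reduce it to minimality of $\cP^\dg_{/U\cup W}$ (or $\cP^\dg_{/U}$ in the case $(v,\al,\beta_v)\in\cW'$, using Lemma~\ref{lem:restricted-congruence}(2) to ensure that $U$ and the variables affected by $\al$ are compatible). Given $C^\ddg=\ang{\bs,\rel^\ddg}\in\cC^\ddg$ and $\ba\in\rel^\ddg$, one must extend $\ba\fac{\ov\mu_{/U}}$ to a solution of $\cP^\ddg_{/U}$. The delicate point is that the individual restrictions $\rel^\ddg\sse\rel^\dg$ at different constraints are controlled by different maximal congruences $\tau_C$, and one must show that together they cut out a consistent subinstance. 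This is where Lemma~\ref{lem:restricted-congruence}(1) is critical: the pairs $(\tau_C,\ov\beta_\bs)$ cannot be separated from $(\al,\beta_v)$, and hence not from each other across constraints whose scopes intersect $W$. Consequently, Theorem~\ref{the:centralizer-alignment} aligns all these maximal congruences via a common family of centralizers, so one can simultaneously restrict a solution of $\cP^\dg_{/U\cup W}$ to land in all $\tau_C$-blocks dictated by $\ba$ at once. In the $(v,\al,\beta_v)\in\cW'$ case the centralizer is the full relation on $\zA_v$, yet Lemma~\ref{lem:restricted-congruence}(2) together with $\Razm(\cP)\cap\Centr(\cP)=\eps$ guarantees separation from the monolith of any large non-central domain, so the alignment still applies across the coherent set $U$. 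Combined with the Maximality Lemma to lift to u-maximal witnesses, this yields the required extension and hence minimality of $\cP^\ddg_{/U}$.
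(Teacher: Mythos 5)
Your high-level plan — check (S1)--(S6) in turn, using Lemmas~\ref{lem:congruence-restriction} and~\ref{lem:restricted-congruence} to understand the effect of the restriction — matches the paper's structure, and the treatment of (S1) and (S4) is essentially correct. But the two hardest conditions, (S2) and (S3), are where the paper does real work and where your argument has genuine gaps. For (S2), the claim that ``(2,3)-consistency is inherited because all $\rel'^\ddg$ are obtained from the single family of solutions $\cS^\dg_{/W}$'' is not a valid step: the defining condition of $\rel'^\ddg$ is that a u-maximal tuple lifts to a u-maximal element of $\cS^\dg_{/W}(B)$, which involves both a global block restriction at $v$ and a u-maximality requirement. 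Two pairs $(a,b)\in\rel'^{xy\ddg}$ and $(a,c)\in\rel^{xw\dg}$ need not have their common extension at $w$ sit inside $\rel'^{xw\ddg}$ for the same choice of witnessing solution, since the block condition on $v$ is not local. The paper's Lemma~\ref{lem:S2-1} constructs the witness explicitly through the auxiliary relation $\rela$, the $\al\beta$-collapsing polynomial of Lemma~\ref{lem:collapsing} (conditions (a)--(e)), and a step through a thin affine edge in a generated subalgebra, exploiting polynomial closeness to land back inside $\rel^\dg$. None of that machinery is present in your sketch, and the bare inheritance argument would fail.

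For (S3), your intuition that Lemma~\ref{lem:restricted-congruence}(1) ties all the maximal congruences $\tau_C$ into a common non-separable family is correct, but the step ``one can simultaneously restrict a solution of $\cP^\dg_{/U\cup W}$ to land in all $\tau_C$-blocks dictated by $\ba$ at once'' is a leap. Alignment of congruence intervals decomposes the binary relations $\rel^{vw}$, but it does not by itself produce a solution of the factored problem landing in the prescribed block of every $\tau_C$ simultaneously — indeed $U\cup W$ need not be a non-central coherent set, so $\cP^\dg_{/U\cup W}$ is not guaranteed to be minimal. What the paper actually does in Lemma~\ref{lem:S3-1} is an exchange argument: take a solution $\sg\in\umax(\cS^\dg_{/U})$ maximizing the number of satisfied constraints $C_j\in\cC_1$ for which $(\psi(\bs^*),\bb'_{C^*})\in\tau_{C^*}$, suppose some $C_j$ fails, build an $\al\beta$-collapsing polynomial $f$ of a carefully-defined product relation $\reli$ satisfying (a)--(e) of Lemma~\ref{lem:collapsing}, push the reference solution $\vf$ through $f$ to get $\sg_0$ which fixes one more constraint, and then step along a thin affine edge using polynomial closeness to get back inside $\cS^\dg_{/U}$. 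Without this iterative mechanism the proof does not close, so (S3) is a genuine gap. Finally, your appeal to Lemma~\ref{lem:poly-closed}(2) for (S5)/(S6) is not quite right in this section: $\rel^\ddg=\Sgg{\rel}{\rel'^\ddg}$ is not a pp-definition from polynomially closed pieces (the u-maximal lifting condition defining $\rel'^\ddg$ is not pp-definable). The paper instead uses the direct observation that $\rel^\ddg$ is (essentially) $\rel^\dg\cap T$ for a $\tau_C$-block $T$, from which (S5)/(S6) follow by a short hands-on computation via Lemma~\ref{lem:congruence-restriction}(a).
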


\subsection{Conditions (S1), and (S4)--(S6)}\label{sec:S5-1}

We start with conditions (S1), and (S4)--(S6).

Condition (S1) is straightforward by construction, item (R2).
Since $B_v\fac\al$ is a module, and therefore is a nontrivial as-component, 
Lemma~\ref{lem:S7} immediately implies that condition (S4) for $\cP^\ddg$ holds.
Condition (S5) is also fairly straightforward.

\begin{lemma}\label{lem:S5-1}
Condition (S5) for $\cP^\ddg$ holds. That is, for every $\ang{\bs,\rel}\in\cC$
the relation $\rel^\ddagger$ is polynomially closed in $\rel$.
\end{lemma}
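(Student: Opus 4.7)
The plan is to derive condition~(S5) for $\cP^\ddagger$ from the corresponding condition for $\cP^\dagger$, using the dichotomy supplied by Lemma~\ref{lem:congruence-restriction}(a). Fix a constraint $\ang{\bs,\rel}\in\cC$, tuples $\ba,\bb\in\umax(\rel^\ddagger)$, a polynomial $f$ of $\rel$ with $f(\ba)=\ba$, and $\bc\in\Sg{\ba,f(\bb)}$ with $\ba\sqq_{as}\bc$. The goal is to show $\bc\in\rel^\ddagger$.

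The first step is to propagate u-maximality along $\rel^\ddagger\sse\rel^\dagger\sse\rel$. By~(R1) the generators $\rel'^\ddagger$ lie inside $\umax(\rel^\dagger)$, so $\rel^\ddagger$ meets $\umax(\rel^\dagger)$, and Lemma~\ref{lem:u-max-congruence}(2) then yields $\umax(\rel^\ddagger)\sse\umax(\rel^\dagger)$; in particular $\ba,\bb\in\umax(\rel^\dagger)$. A second application of the same lemma---leveraging the natural invariant, also used as a hypothesis in Lemma~\ref{lem:poly-closed}(2), that $\rel^\dagger$ meets $\umax(\rel)$---upgrades this to $\ba,\bb\in\umax(\rel)$, after which the as-path $\ba\sqq_{as}\bc$ inside the ambient $\rel$ together with u-maximality of $\ba$ forces $\bc\in\umax(\rel)$. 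Condition~(S5) for $\cP^\dagger$ now applies to $\ba,\bb\in\umax(\rel^\dagger)$ and $f$, delivering $\bc\in\rel^\dagger$.

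I then split according to Lemma~\ref{lem:congruence-restriction}(a). In the first alternative $\umax(\rel^\ddagger)=\umax(\rel^\dagger)$, and the task reduces to upgrading $\bc\in\rel^\dagger\cap\umax(\rel)$ to $\bc\in\umax(\rel^\dagger)$, for then $\bc\in\umax(\rel^\ddagger)\sse\rel^\ddagger$. In the second alternative $\rel^\ddagger=\rel^\dagger\cap\reli$ for some $\tau_C$-block $\reli$; since $\ba,\bb\in\reli$ one has $\ba\eqc{\tau_C}\bb$, and because the unary polynomial $f$ preserves the congruence $\tau_C$ of $\rel$ we obtain $f(\bb)\eqc{\tau_C}f(\ba)=\ba$, so $f(\bb)\in\reli$. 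As $\reli$ is a subalgebra of the idempotent algebra $\rel$, $\Sg{\ba,f(\bb)}\sse\reli$, whence $\bc\in\reli$ and therefore $\bc\in\rel^\dagger\cap\reli=\rel^\ddagger$.

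The main technical obstacle I expect is the u-maximality upgrade in the first alternative: showing that $\bc\in\rel^\dagger\cap\umax(\rel)$ forces $\bc\in\umax(\rel^\dagger)$, which is the direction of the inclusion not handed to us directly by Lemma~\ref{lem:u-max-congruence}(2). The natural route is to observe that $\ba$ and $\bc$ are asm-equivalent in $\rel$ with $\ba\in\umax(\rel^\dagger)$, and then to appeal to the Maximality Lemma~\ref{lem:to-max}, or a companion statement, to lift the relevant asm-paths from $\rel$ back into $\rel^\dagger$; this is where the polynomial-closure and as-closure conditions (S5)--(S6) for $\cP^\dagger$ should combine. The Case~B argument, by contrast, is essentially automatic from congruence preservation of unary polynomials and never requires the finer module structure of $\rel^\dagger/\tau_C$ provided by Lemma~\ref{lem:congruence-restriction}(b).
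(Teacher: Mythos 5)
The core of your Case~2 argument is exactly the paper's proof: show $\bc\in\rel^\dagger$ via (S5) for $\cP^\dagger$, then observe $\ba\eqc{\tau_C}\bb$ because both lie in $\rel^\ddagger$, use the fact that a unary polynomial of $\rel$ preserves the congruence $\tau_C$ together with $f(\ba)=\ba$ to get $f(\bb)\eqc{\tau_C}\ba$, and conclude $\bc\eqc{\tau_C}\ba$ since $\bc\in\Sg{\ba,f(\bb)}$ and $\tau_C$-blocks are subalgebras. That is essentially the entire content of the paper's proof.

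Where you go wrong is in imposing a case split on Lemma~\ref{lem:congruence-restriction}(a) and then getting stuck in the first alternative. The paper does not split on that dichotomy at all; it runs the $\tau_C$-block observation uniformly. The relevant fact is that $\rel^\ddagger=\Sgg{\rel}{\rel'^\ddagger}$ and $\rel'^\ddagger$ is, by its very definition in (R1), contained in a single $\tau_C$-block (the one selected by the chosen $\al$-block $B$); since $\tau_C$ is a congruence, $\rel^\ddagger$ itself lies inside that block. So the assertion ``it suffices to show $\bc$ is in the same $\tau_C$-block as $\ba$'' is available independently of which alternative of Lemma~\ref{lem:congruence-restriction}(a) holds. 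The obstacle you anticipate in your ``first alternative'' --- a u-maximality upgrade via the Maximality Lemma and lifting asm-paths back into $\rel^\dagger$, using (S5)--(S6) --- is not the mechanism the paper uses and is not needed here. Similarly, your auxiliary step showing $\ba,\bb\in\umax(\rel)$ (and hence $\bc\in\umax(\rel)$) plays no role in the paper's proof and can be dropped: (S5) for $\cP^\dagger$ is invoked with $\ba,\bb\in\umax(\rel^\dagger)$, which you already get from Lemma~\ref{lem:u-max-congruence}(2), and nothing finer about maximality in $\rel$ is used. In short, keep the $\tau_C$ argument and discard both the detour into $\umax(\rel)$ and the case split.
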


\begin{proof}
Let $f$ be a polynomial of $\rel$, and let $\ba,\bb\in\rel$ be tuples 
satisfying the conditions of polynomial closeness. Let $\bc\in\Sg{\ba,f(\bb)}$
be such that $\ba\sqq_{as}\bc$ in $\Sg{\ba,f(\bb)}$. By (S5) for $\cP^\dg$, 
$\bc\in\rel^\dagger$. It suffices to show that $\bc$ is in the same 
$\tau_C$ block as $\ba$. However, this is straightforward, because 
$\ba\eqc{\tau_C}\bb$, and as $f(\ba)=\ba$, we also have 
$\ba\eqc{\tau_C}f(\bb)$. Since $\bc\in\Sg{\ba,f(\bb)}$,
it follows $\bc\eqc{\tau_C}\ba$.
\end{proof}

Finally, condition (S6) also holds.

\begin{lemma}\label{lem:S6-1}
Condition (S6) for $\cP^\ddg$ holds. 
\end{lemma}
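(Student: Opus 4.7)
To establish (S6) for $\cP^\ddg$, the plan is to verify, for each constraint $\ang{\bs,\rel}\in\cC$ and each coordinate $w\in\bs$, that $\pr_w\rel^\ddg$ is as-closed in $\pr_w(\rel\cap\ov B')$. Fix $a\in\umax(\pr_w\rel^\ddg)$ together with $b\in\pr_w(\rel\cap\ov B')$ satisfying $a\sqq_{as}b$ in $\pr_w(\rel\cap\ov B')$; the target is $b\in\pr_w\rel^\ddg$.

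The first step is to lift $a$ to a u-maximal tuple $\ba\in\umax(\rel^\ddg)$ with $\pr_w\ba=a$ via Lemma~\ref{lem:to-max}(2), and then to promote its u-maximality to $\rel^\dg$: since $\rel^\ddg=\Sgg\rel{\rel'^\ddg}$ and the generating set $\rel'^\ddg\sse\umax(\rel^\dg)$ is nonempty (by the already-established (S1) for $\cP^\ddg$), Lemma~\ref{lem:u-max-congruence}(2) gives $\ba\in\umax(\rel^\dg)$, and then $a\in\umax(\pr_w\rel^\dg)$ by Lemma~\ref{lem:to-max}(3). The second step exploits $\ov B'\sse\ov B$: the given as-path is also an as-path in $\pr_w(\rel\cap\ov B)$, and the weak as-closeness of $\rel^\dg$ in $\rel\cap\ov B$ recorded in (S6) for $\cP^\dg$ produces $b\in\pr_w\rel^\dg$.

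The third step upgrades $b\in\pr_w\rel^\dg$ to $b\in\pr_w\rel^\ddg$ using the dichotomy of Lemma~\ref{lem:congruence-restriction}(a). In the easier sub-case $\umax(\rel^\ddg)=\umax(\rel^\dg)$, I would apply Lemma~\ref{lem:to-max}(1) to lift the as-path to $\ba=\bc_0\zd\bc_k=\bb$ inside $\rel\cap\ov B'$; since $\ba$ is u-maximal and u-maximality propagates along asm-paths starting at u-maximal tuples, $\bb\in\umax(\rel^\dg)=\umax(\rel^\ddg)$, so $b=\pr_w\bb\in\pr_w\rel^\ddg$. In the harder sub-case $\rel^\ddg=\rel^\dg\cap\reli$ for a $\tau_C$-block $\reli$: the choice $B'_v=B$ (a single $\al$-block of $B_v$) combined with the identification $\rel^\dg\fac{\tau_C}\cong\rel^{v\dg}\fac\al$ from Lemma~\ref{lem:congruence-restriction}(b) forces $\rel\cap\ov B'\sse\reli$, so any lifted $\bb\in\rel\cap\ov B'$ automatically sits in $\reli$, and it only remains to place $\bb$ in $\rel^\dg$.

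The main obstacle is exactly this last placement: weak as-closeness of $\rel^\dg$ is only a coordinate-wise statement, supplying $\pr_u\bb\in\pr_u\rel^\dg$ for every $u\in\bs$ but not $\bb$ itself. The plan to close this gap is to invoke the 2-Decomposition Theorem~\ref{the:quasi-2-decomp} applied to $\rel^\dg$: the binary projections of $\bb$ lie in the corresponding binary projections of $\rel^\dg$ by the (2,3)-minimality of $\cP$ inherited by $\rel^\dg$, and $\pr_w\bb=b$ can be arranged to be as-maximal in $\pr_w\rel^\dg$, so the theorem produces a tuple $\bb^*\in\rel^\dg$ with $\pr_w\bb^*=b$ whose binary projections are as-reachable from those of $\bb$; since $\ov B'$ is closed under as-paths in each coordinate (by the choice of the $B'_w$ as as-maximal blocks, cf.\ Lemma~\ref{lem:S7}(2)), the tuple $\bb^*$ lies in $\rel\cap\ov B'\sse\reli$, yielding $\bb^*\in\rel^\dg\cap\reli=\rel^\ddg$ and hence $b\in\pr_w\rel^\ddg$.
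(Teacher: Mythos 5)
There is a genuine gap in the final ``placement'' step, and it happens to be exactly the place where the paper's proof is simplest. You correctly reduce via Lemma~\ref{lem:congruence-restriction}(a) to the two subcases and correctly diagnose the obstacle --- weak as-closeness of $\rel^\dg$ puts $b$ into $\pr_w\rel^\dg$ but says nothing about a whole tuple lying in $\rel^\dg\cap\ov B'$. However, your 2-Decomposition fix does not close it: Theorem~\ref{the:quasi-2-decomp} needs (i) $\pr_J\bb\in\pr_J\rel^\dg$ for every pair $J\sse\bs$, and (ii) $b=\pr_w\bb\in\amax(\pr_w\rel^\dg)$, and neither is available. Point (i) does not follow from weak as-closeness: that property constrains a coordinate only when you start from a u-maximal element of $\pr_u\rel^\dg$ and have an as-path emanating from it, and for $u\neq w$ you have no such path for the lifted tuple $\bb$. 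Point (ii) is asserted (``$b$ can be arranged to be as-maximal in $\pr_w\rel^\dg$'') but $b$ is only the endpoint of an as-path from $a$, not necessarily as-maximal, and u-maximality of $a$ in $\pr_w\rel^\ddg$ does not promote $b$ to as-maximality in the larger set $\pr_w\rel^\dg$. Moreover the intermediate claim $\rel\cap\ov B'\sse T$ (you write $\reli$) for a $\tau_C$-block $T$ does not follow from the isomorphism $\rel^\dg\fac{\tau_C}\cong\rel^{v\dg}\fac\al$ alone: $\tau_C$ is a congruence of $\rel$, not of $\prod_{u\in\bs}\zA_u$, and $v$ need not be in $\bs$.

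The simplification you are missing is that the block $\ov B'$ was chosen along a strand: since $\rel^{vw\dg}$ is $\al\al_w$-aligned for $w\in W$ and the alignment is transitive across $\bs\cap W$ by $(2,3)$-minimality, one has for every $\ba\in\rel^\dg$ and every $w,u\in\bs\cap W$ that $\ba[w]\in B'_w$ if and only if $\ba[u]\in B'_u$, i.e.\ $\rel^\ddg = \rel^\dg\cap\ov B'$. With this in hand the lemma is one line: (S6) for $\cP^\dg$ puts $b\in\pr_w\rel^\dg$, hence there is $\bb\in\rel^\dg$ with $\bb[w]=b\in B'_w$; the alignment then forces $\bb[u]\in B'_u$ for all $u\in\bs\cap W$ (and $B'_u=B_u$ for $u\notin W$), so $\bb\in\rel^\dg\cap\ov B'=\rel^\ddg$ and $b\in\pr_w\rel^\ddg$. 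No 2-Decomposition, no lifting of $a$ to a u-maximal tuple of $\rel^\ddg$, no as-maximality hypothesis on $b$ is needed. Your detour through Theorem~\ref{the:quasi-2-decomp} replaces a free structural fact (alignment of blocks along the strand) with a heavier tool whose preconditions you have not established.
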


\begin{proof}
Let $C=\ang{\bs,\rel}\in\cC$.
By Lemma~\ref{lem:congruence-restriction}(a) either
$\umax(\rel^\ddg)=\umax(\rel^\dg)$, in which case we 
are done, or $\rel^\ddg=\rel^\dg\cap T$,
where $T$ is a $\tau_C$-block. If $w\in\bs-W$, then 
$\umax(\pr_w\rel^\ddg)=\umax(\pr_w\rel^\dg)$ and the property 
of weak as-closeness holds for such variables. Otherwise if $\bs\cap W\ne\eps$,
$\rel^\ddg=\rel^\dg\cap\ov B'$. Moreover for any $\ba\in\rel^\dg$
and any $w,u\in\bs\cap W$ it holds $\ba[w]\in B'_w$ if and only if
$\ba[u]\in B'_u$. Let $a\in\umax(\pr_w\rel^\ddg)\sse\umax(\pr_w\rel^\dg)$
and $b\in\pr_w(\rel\cap\ov B')$ such that $a\sqq_{as}b$ in
$\pr_w(\rel\cap\ov B')$. By (S6) for $\rel^\dg$ there is $\bb\in\rel^\dg$ 
such that $\bb[w]=b$. Then, as we observed $\bb\in\rel^\dg\cap\ov B'=
\rel^\ddg$, as required.
\end{proof}

\subsection{Condition (S2)}\label{sec:S2}

Property (S2) is more difficult to prove. We start with a construction 
similar to what we used before and that we will also use in the proof of (S3).

Let $\mu^\circ_z$ denote $\zz_z$ if $z\in W$ and 
$(v,\al,\beta_v)\not\in\cW'$, and $\mu^\circ_z=\mu^\eps_z$ otherwise.
In other words, $\ov\mu^\circ$ is $\ov\mu_{/W}$ if $(v,\al,\beta_v)\not\in\cW'$
and $\ov\mu^\circ$ is $\ov\mu^\eps$ otherwise. Let $\cS^\circ$ be 
the set of solutions of $\cP^\circ=\cP\fac{\ov\mu^\circ}$. Then for 
$C=\ang{\bs,\rel}\in\cC$ we define $\relo_C$ to be a subalgebra of the product 
$\rel\tm\zA_v\fac\al$ that consists of all tuples $(\bb,c')$, 
$\bb\in\rel$, such that, there is a solution $\vf\in\cS^\circ$ with 
$\bb\in\vf(\bs)$, and $\vf(v)\in c'$. By the block-minimality of $\cP$ 
the relation $\relo_C$ is indeed a subdirect product of $\rel$ and $\zA_v\fac\al$, 
and by (S3) for $\cP^\dg$ we have $\relo_C\cap(\rel^\dg\tm\rel^{v\dg}\fac\al)$
is a subdirect product of $\rel^\dg$ and $\rel^{v\dg}\fac\al)$.
Also, by Lemma~\ref{lem:poly-closed}(2,3) $\relo_C$ is polynomially closed. 

\begin{lemma}\label{lem:S2-1}
Condition (S2) for $\cP^\ddg$ holds. That is, the relations 
$\rel^{X\ddg}$, where $\rel^{X\ddg}$ is obtained 
from $\rel^{X\dg}$ as described in (R1),(R2) for $X\sse V$, $|X|\le 2$, 
form a nonempty $(2,3)$-strategy for $\cP^\ddg$.
\end{lemma}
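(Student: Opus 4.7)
The plan is to build tuples in $\rel^{X\ddg}$ from u-maximal solutions of $\cP^\circ=\cP\fac{\ov\mu^\circ}$ whose value at $v$ lies in $B$, and then close under the term operations allowed by~(R2). Nonemptiness is immediate from condition~(S3) for $\cP^\dg$: applied to the non-central coherent set $W$ (or to $\eps$ when $(v,\al,\beta_v)\in\cW'$) it yields that $\cS^\dg_{/W}(B)$ (respectively $\cS^\dg_{/\eps}(B)$) is nonempty, and the Maximality Lemma~\ref{lem:to-max} lifts one of its members to a u-maximal solution $\vf$. Projecting $\vf$ onto each constraint scope populates $\rel'^\ddg$, so $\rel^{X\ddg}\ne\eps$ for every $X$ with $|X|\le2$.

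For the propagation condition, I would take a pair $\bp=(a,b)\in\rel^{X\ddg}$ with $X=\{u,w\}$ and a third variable $z\in V-X$. By~(R2), $\bp=t(\vc\bp k)$ for some term operation $t$ and generators $\vc\bp k\in\rel'^{X\ddg}$, where each $\bp_i$ is the $(u,w)$-projection of a u-maximal solution $\vf_i$ of $\cP^\circ$ with $\vf_i(v)\in B\fac{\mu^\circ_v}$. Applying $t$ coordinate-wise to $\vc\vf k$ produces another solution $\vf$ of $\cP^\circ$, and $c=\vf(z)=t(\vf_1(z)\zd\vf_k(z))$ satisfies $(a,c)\in\rel^{uz\ddg}$ and $(b,c)\in\rel^{wz\ddg}$, because each generator pair $(\bp_i[u],\vf_i(z))$ and $(\bp_i[w],\vf_i(z))$ already lies in $\rel'^{uz\ddg}$ and $\rel'^{wz\ddg}$ by construction of $\rel'^\ddg$.

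The main obstacle is verifying that $c$ is a legal value of $z$ in $\cP^\ddg$, namely that $\vf_i(z)\in B'_z\fac{\mu^\circ_z}$ whenever $\beta'_z\ne\beta_z$. If $z\notin W$ then $B'_z=B_z$ and nothing is needed; otherwise $z\in W$ with $\tau_{C^z}$ maximal, and Lemma~\ref{lem:restricted-congruence}(1) tells us that $(\tau_{C^z},\ov\beta_z)$ and $(\al,\beta_v)$ cannot be separated in $\rel^{vz}$. Theorem~\ref{the:centralizer-alignment} then aligns the $v$- and $z$-coordinates with respect to the corresponding centralizer congruences, and this, together with $\vf_i(v)\in B\fac{\mu^\circ_v}$, forces $\vf_i(z)\in B'_z\fac{\mu^\circ_z}$; closure under $t$ preserves the block. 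In the subcase $(v,\al,\beta_v)\in\cW'$, Lemma~\ref{lem:restricted-congruence}(2) together with the hypothesis $\Razm(\cP)\cap\Centr(\cP)=\eps$ makes the choice $\ov\mu^\circ=\ov\mu^\eps$ compatible with this alignment by ensuring that $(\al,\beta_v)$ is separated from every $(\zz_w,\mu_w)$, $w\in\Razm(\cP)$, or vice versa. The remaining projection condition $\pr_{\bs\cap X}\rel^{X\ddg}\sse\cS_X^\ddg$ follows by applying the same argument to the generators of $\rel^{X\ddg}$.
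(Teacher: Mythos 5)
Your nonemptiness argument is fine, and the idea of reducing the propagation condition to the generators $\rel'^{\ddg}$ is the right first step (the paper does this too). The problem is in the propagation step itself, where you have elided the entire difficulty of the lemma.

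First, a type error that masks a real gap. You take $c=\vf(z)=t(\vf_1(z)\zd\vf_k(z))$, with each $\vf_i$ a solution of $\cP^\circ=\cP\fac{\ov\mu^\circ}$. So $c$ lives in the quotient $\zA_z\fac{\mu^\circ_z}$, not in $\zA_z$, while $\rel^{uz\ddg}$ is a relation over the original algebras. The generator tuples $\bp_i\in\rel'^{X\ddg}$ are elements of $\umax(\rel^{X\dg})\sse\zA_u\tm\zA_w$ whose $\ov\mu^\circ$-quotients extend to $\vf_i$; they are not literally projections of $\vf_i$. So $(a,c)$ does not even type-check as a member of $\rel^{uz\ddg}$. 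When $\mu^\circ_z\ne\zz_z$ (which happens for $z\in\Razm(\cP)-W$, or for all of $\Razm(\cP)$ in the $(v,\al,\beta_v)\in\cW'$ case), this is not merely notational: one has to produce a concrete lift $c^\bullet\in\zA_z$ of the quotient witness such that $(a,c^\bullet)$ and $(b,c^\bullet)$ simultaneously belong to $\rel^{uz\ddg}$ and $\rel^{wz\ddg}$, and this is precisely what the paper spends most of its effort on. There the proof forms a quinary relation $\rela(x,y,w,v_1,v_2)=\rel^{xy}\meet\relo_x\meet\relo_y$, applies an $\al\beta_v$-collapsing polynomial from Lemma~\ref{lem:collapsing} (whose existence depends on Lemma~\ref{lem:restricted-congruence}(2) to separate $(\al,\beta_v)$ from every $(\zz_z,\mu^\circ_z)$), kills the $\mu^\circ$-classes on $x,y,w$ while keeping an $(\al,\beta_v)$-subtrace on $v_1$, and then uses polynomial closeness (S5) and the thin-affine-edge structure of $B_v\fac\al$ to slide the result into $\rel^{xw\dg}$ and $\rel^{yw\dg}$. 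None of this has an analogue in your argument.

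Second, even ignoring types, you assert that each pair $(\bp_i[u],\vf_i(z))$ is in $\rel'^{uz\ddg}$ ``by construction.'' By (R1) membership in $\rel'^{uz\ddg}$ requires the pair to be in $\umax(\rel^{uz\dg})$; nothing in your argument guarantees u-maximality of this pair. Finally, the appeal to Theorem~\ref{the:centralizer-alignment} and Lemma~\ref{lem:restricted-congruence}(1) addresses only whether $c$ lands in the right $\beta'_z$-block, which is the easy sanity check, not whether a usable concrete witness exists in the $\ddg$ relations. The lemma is substantially harder than the proposal makes it out to be, and the collapsing-polynomial machinery is not optional.
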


\begin{proof} 
By (S2) for $\cP^\dg$ the relations $\rel^{X\dg}$, $X\sse V$, $|X|\le2$, 
constitute a $(2,3)$-strategy for $\cP^\dg$.
As $\rel^{xy\ddg}$ is generated by $\rel'^{xy\ddg}$, 
it suffices to show that for any tuple $(a,b)\in\rel'^{xy\ddg}$ 
and any $w\not\in \{x,y\}$ there is $c\in\zA_w$ such 
that  $(a,c)\in\rel^{xw\ddg}, (b,c)\in\rel^{yw\ddg}$. By (R1)
$\rel'^{xw\ddg}\sse\umax(\rel^{xw\dg})$ and so by (S2) for 
$\cP^\dg$ there is $d\in\zA_w$ such that $(a,d)\in\umax(\rel^{xw\dg})$, 
$(b,d)\in\umax(\rel^{yw\dg})$. 

Let $\relo_x=\relo_{C^{xw}},\relo_y=\relo_{C^{yw}}$, as defined 
before Lemma~\ref{lem:S2-1}.
As we observed, $\relo_x$ is a subdirect product of 
$\rel^{xw}\tm\zA_v\fac\al$ and by (S3) for $\cP^\dg$ we have 
$\relo_x\cap(\rel^{xw\dg}\tm\rel^{v\dg}\fac\al)$ is a subdirect product 
of $\rel^{xw\dg}$ and $\rel^{v\dg}\fac\al)$.
For the relation $\relo_y$ similar properties hold. 

Consider the relation 
$$
\rela(x,y,w,v_1,v_2)=\rel^{xy}(x,y)\meet\relo_x(x,w,v_1)\meet\relo_y(y,w,v_2),
$$
and $\rela'=\rela\cap\ov B$ and $\rela^*=\rela\fac{\ov\mu^\circ}$. 
It suffices to show that for some $c\in\rel^{w\dg}$ and $e=B$, such that 
$(a,c)\in\umax(\rel^{xw\dg})$ and $(b,c)\in\umax(\rel^{yw\dg})$ 
it holds $(a,b,c,e,e)\in\rela'$.  Indeed, by the definition of 
$\relo_x,\relo_y$ it means that $(a,c)\in\rel^{xw\ddg}$ 
and $(b,c)\in\rel^{yw\ddg}$. As we 
observed above there is $d\in\rel^{w\dg}$ such that 
$(a,d)\in\rel^{xw\dg}$, $(b,d)\in\rel^{yw\dg}$, and the
triple $(a,b,d)$ extends to a tuple from $\rela'$. Note that as 
$(a,b)\in\umax(\rel^{xy\ddg})$, $d$ can be chosen such that 
$(a,b,d)\in\umax(\pr_{x,y,w}\rela')$. Thus, for some 
$e_1,e_2\in B\fac\al$ we have $\ba=(a,b,d,e_1,e_2)\in\rela'$. 
Since $B_v\fac\al$ is a module and therefore is as-connected,
$\ba\in\umax(\rela')$.
On the other hand, by (R1) there is a solution $\vf$ of 
$\cP^\dg\fac{\ov\mu^\circ}$  such that 
$a\in\vf(x)$, $b\in\vf(y)$, and $\vf(v)\in e$. In other words, there are 
$(a',c')\in\rel^{xw\ddg}$ and $(b',c'')\in\rel^{yw\ddg}$
with $a'\eqc{\mu^\circ_x}a$, $b'\eqc{\mu^\circ_y}b$, and
$c'\eqc{\mu^\circ_w}c''$. This also means that $(a',c',e)\in\relo_x$
and $(b',c'',e)\in\relo_y$. 

By the definition of the congruences $\mu^\circ_z$ and 
Lemma~\ref{lem:restricted-congruence}(2) for every $z\in V$ the
interval $(\al,\beta_v)$ can be separated from $(\zz_z,\mu^\circ_z)$ 
or the other way round. Therefore, 
by Lemma~\ref{lem:collapsing}  there exists an idempotent polynomial $f$ of 
$\rela$ satisfying the following conditions:\\[1mm]
(a) $f$ is $\ov B$-preserving;\\[1mm]
(b) $f(\zA_v\fac\al)$ is an $(\al,\beta_v)$-minimal set;\\[1mm]
(c) $f(\mu^\circ_x\red{B_x})\sse\zz_x$, $f(\mu^\circ_y\red{B_y})\sse\zz_y$, 
$f(\mu^\circ_w\red{B_w})\sse\zz_w$.\\[1mm]
Since $\{e,e_1\}$ is an $(\al,\beta_v)$-subtrace of $\zA_v\fac\al$, 
as $B\fac\al$ is a module, and as $\ba$ can be assumed from
$\umax(\rela'')$, $\rela''=\{\bb\in\rela'\mid \bb[v_1]=e_1,\bb[v_2]=e_2\}$, 
by Lemma~\ref{lem:collapsing} for
$\rela$ the polynomial $f$ can be chosen such that\\[1mm] 
(d) $f(e)=e$, $f(e_1)=e_1$ in coordinate position $v_1$; and\\[1mm]
(e) $f(\ba)=\ba$.\\[1mm]
The appropriate restrictions of $f$ are also polynomials of $\relo_x,\relo_y$.
Therefore applying $f$ to $(a',c',e)$ and $(b',c'',e)$ we get 
$(a,c^*,e)\in\relo_x$, $(b,c^*,e')\in\relo_y$, where $c^*=f(c')=f(c'')$
and $e'=f(e)$ in the coordinate position $v_2$ (and so $f(e)=e$ does 
not have to be true in $v_2$). Thus, $\bb=(a,b,c^*,e,e')\in\rela'$. However,
$(a,c^*),(b,c^*)$ do not necessarily belong to $\rel^{xw\dg},
\rel^{yw\dg}$ respectively. To fix this
let $\bc$ be a tuple in $\Sgg{\rela'}{\ba,\bb}$ such that 
$\ba\bc$ is a thin affine edge and $\bc[v_1]=e$. As is easily seen,
$\bc$ has the form $(a,b,c^\circ,e,e'')$. As, $(a,c')\in\rel^{xw\dg}$,
$(b,c'')\in\rel^{yw\dg}$, and these relations are polynomially closed in 
$\rel^{xw},\rel^{yw}$, respectively, 
$(a,c^\circ)\in\rel^{xw\dg}$, $(b,c^\circ)\in\rel^{yw\dg}$, as well.
Since $(a,b,e,e')\in\umax(\pr_{x,y,v_1,v_2}\rela')$, we may assume 
$\bc\in\umax(\rela')$. Finally, repeating the same argument 
we find a polynomial $g$ of $\rela$ satisfying the conditions (a)--(e) 
with $\bc$ in place of $\ba$ and using the $(\al,\beta_v)$-subtrace
$\{e',e\}$ in coordinate position $v_2$ in place of $\{e_1,e\}$. 
Then we conclude that for some $c^\bullet\in\Sgg{\zA_w}{c^\circ,g(c')}$,
such that $c^\circ c^\bullet$ is a thin affine edge it holds 
$(a,b,c^\bullet,e,e)\in\rela$ and $(a,c^\bullet)\in\rel^{xw\dg}$, 
$(b,c^\bullet)\in\rel^{yw\dg}$. 
\end{proof}

\subsection{Conditions (S3)}\label{sec:S3-1}

In this section we prove that $\cP^\ddg$ satisfies conditions (S3).

As before, let $W=W(v,\al,\beta_v,\ov\beta)$. Recall also that for a coherent set 
$U=W(u,\gm,\dl,\ov\beta)$, $(u,\gm,\dl)\not\in\cW'$
by $\ov\mu_{/U}$ we denote a collection of congruences $\mu'_w$,
$w\in V$ such that $\mu'_w=\mu_w$ if $w\in\Razm(\cP)-U$, and 
$\mu'_w=\zz_w$ otherwise.

\begin{lemma}\label{lem:S3-1}
The instance $\cP^\ddg$ satisfies (S3). That is, for every coherent set 
$U$ the problem $\cP^\ddg_{/U}$ is minimal. More precisely, for every 
$\ang{\bs,\rel^\ddagger}\in\cC^\ddg$, and every $\ba\in\rel^\ddagger$,
there is a solution $\vf\in\cS^\ddagger_{/U}$ such that 
$\vf(\bs)=\ba\fac{\ov\mu_{/U}}$.
\end{lemma}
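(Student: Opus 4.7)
The plan is to establish (S3) for $\cP^\ddg$ in three stages: reducing to generators of $\rel^\ddg$, handling a base coherent set using (R1), and extending to general coherent sets.

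For the reduction, given $\ba \in \rel^\ddg$ for a constraint $C = \ang{\bs, \rel}$, by (R2) we can write $\ba = t(\bb_1, \ldots, \bb_k)$ for some term $t$ of $\rel$ and $\bb_i \in \rel'^\ddg$. If each $\bb_i\fac{\ov\mu_{/U}}$ extends to some $\vf_i \in \cS^\ddg_{/U}$, then $\vf := t(\vf_1, \ldots, \vf_k)$ extends $\ba\fac{\ov\mu_{/U}}$ and lies in $\cS^\ddg_{/U}$ by closure under term operations. Thus we may assume $\ba = \bb \in \rel'^\ddg$.

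For $\bb \in \rel'^\ddg$, (R1) yields $\psi \in \umax(\cS^\circ(B))$ extending $\bb\fac{\ov\mu^\circ}$ with $\psi(v) \in B\fac{\mu^\circ_v}$, where $\cS^\circ$ denotes $\cS^\dg_{/W}$ in case R1a and $\cS^\dg_{/\eps}$ in case R1b (with $\ov\mu^\circ$ the corresponding tuple of congruences). I would first show $\psi$ is a solution of $\cP^\ddg_{/W}$ (resp.\ $\cP^\ddg_{/\eps}$), handling the base coherent set: for each $C' = \ang{\bs', \rel'} \in \cC$, by the Maximality Lemma~\ref{lem:to-max} pick $\bc' \in \umax(\rel'^\dg)$ with $\bc'\fac{\ov\mu^\circ} = \psi(\bs')$; since $\bc'\fac{\ov\mu^\circ}$ extends to $\psi \in \umax(\cS^\circ(B))$, condition (R1) forces $\bc' \in \rel'^\ddg$, so $\psi(\bs') \in \rel'^\ddg\fac{\ov\mu^\circ}$. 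For a general coherent set $U$ with $\ov\mu_{/U} \ge \ov\mu^\circ$, the coarsening $\vf := \psi\fac{\ov\mu_{/U}}$ is a solution of $\cP^\ddg_{/U}$. Otherwise, I would invoke (S3) of $\cP^\dg$ to obtain some $\vf \in \cS^\dg_{/U}$ extending $\bb\fac{\ov\mu_{/U}}$ and then modify $\vf$ so that $\vf(v) \in B\fac{\mu_{/Uv}}$, using the Congruence Lemma~\ref{lem:affine-link} applied to $\pr_{\bs \cup \{v\}} \cS^\dg_{/U}$ (which is chained via (S4) of $\cP^\dg$). Once $\vf(v) \in B\fac{\mu_{/Uv}}$, Lemma~\ref{lem:congruence-restriction} yields $\vf(\bs') \in \rel'^\ddg\fac{\ov\mu_{/U}}$ for each $C'$: via the Maximality Lemma when $\tau_{C'}$ is full, and via the isomorphism $\rel'^\dg\fac{\tau_{C'}} \cong \rel^{v\dg}\fac\al$ when $\tau_{C'}$ is maximal, under which the $\al$-block $B$ corresponds exactly to the $\tau_{C'}$-block of $\rel'^\ddg$.

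The main obstacle lies in the general $U$ case, specifically in transferring the $\tau_C$-correspondence from $\cS^\circ$ (used to define $\rel'^\ddg$) to $\cS^\dg_{/U}$. The non-separation of $(\al, \beta_v)$ from the relevant $(\tau_{C'}, \ov\beta_{\bs'})$, established for $\cS^\circ$ in Lemma~\ref{lem:restricted-congruence}(1), must be shown to persist in $\cS^\dg_{/U}$; this requires invoking the polynomial closure and chainedness properties guaranteed by (S4)--(S6) for $\cP^\dg$, and leveraging the invariance of separation of prime intervals under changing the reference solution set.
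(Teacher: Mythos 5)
Your reduction to generators $\rel'^\ddagger$ matches the paper, and your identification of the central difficulty is accurate --- but that difficulty is the entire content of the lemma, and you leave it unresolved. Let me be concrete about what is missing and how the paper actually closes it.

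The paper does not split into ``$\ov\mu_{/U}\ge\ov\mu^\circ$'' and ``general $U$'' cases; it treats every non-central coherent set $U$ with one argument. For each constraint $C_j\in\cC_1$ (those with maximal $\tau_{C_j}$) it keeps the auxiliary relation $\relo_{C_j}$ from before Lemma~\ref{lem:S2-1} (which encodes, for each tuple $\bb$, the $\al$-classes of $\vf(v)$ compatible with $\bb$), and forms a single relation
\[
\reli(\vc xn,\vc v\ell)=\cS_{/U}(\vc xn)\meet\bigwedge_{j=1}^\ell\relo'_{C_j}(\bs_j,v_j).
\]
The goal is reformulated as showing that $(\ba',\bc,e,\dots,e)\in\reli\cap(\ov B\tm(B_v\fac\al)^\ell)$ for some extension $\bc$. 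It then takes $\sg\in\umax(\cS^\dg_{/U})$ extending $\ba'$ chosen extremally --- satisfying condition~(\ref{equ:congruence-condition}) $(\psi(\bs^*),\bb'_{C^*})\in\tau_{C^*}$ for a \emph{maximal} number of constraints $C^*\in\cC_1$; assuming the condition fails for some $C_j$, it produces a $\ov B$-preserving $\al\beta_v$-collapsing polynomial $f$ of $\reli$ via Lemma~\ref{lem:collapsing} (using Lemma~\ref{lem:restricted-congruence}(2) to guarantee that $(\al,\beta_v)$ is separated from every $(\zz_z,\mu^\circ_z)$), fixing $\sg$ and the subtrace $\{e,e_j\}$; applies $f$ to the $\bb'_{C^*}$'s to get a coherently-patched mapping $\sg_0$ that satisfies~(\ref{equ:congruence-condition}) for one more constraint; and finally moves along a thin affine edge inside $\Sgg\reli{(\sg,e,\dots,e),(\sg_0,\dots)}$, using polynomial closure (S5), to land in $\cS^\dg_{/U}$ rather than merely $\cS_{/U}$. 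This contradicts extremality of $\sg$, which proves the lemma. You correctly flag that ``non-separation must persist in $\cS^\dg_{/U}$'' and that ``(S4)--(S6) must be leveraged,'' but you provide none of this machinery; in particular, your step ``modify $\vf$ so that $\vf(v)\in B\fac{\mu_{/Uv}}$, then invoke Lemma~\ref{lem:congruence-restriction}'' does not address how to achieve the $\tau_{C'}$-block condition simultaneously for all constraints $C'$, nor how to ensure the modified $\vf$ remains a solution of the \emph{compressed} problem $\cP^\dg_{/U}$ rather than just $\cP_{/U}$. These are precisely the points the paper's iterative collapsing-polynomial argument and the final affine-edge/polynomial-closure step are designed to handle, and without them the proof does not go through.

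Your ``easy case'' for the base coherent sets ($U=W$ or $U=\eps$) is plausible in outline but is not a separate case in the paper, and there too you would need to verify that the same single solution $\psi$ gives witnesses for \emph{all} constraints simultaneously and lands $\psi$ in the correct u-maximal subset, which in the paper is folded into the uniform argument.
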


\begin{proof}
For a coherent set $U$ and a constraint $C=\ang{\bs,\rel^\ddagger}$ it 
suffices only to check that tuples $\ba\in\rel'^\ddagger$
are extendable to solutions of $\cS^\ddagger_{/U}$, because 
$\rel^\ddagger$ is generated by $\rel'^\ddagger$.

For a constraint $C'=\ang{\bs',\rel'}\in\cC$, let $\relo_{C'}$ denote the
relation introduced before Lemma~\ref{lem:S2-1}, and 
$\relo'_{C'}=\relo_{C'}\fac{\ov\mu_{/U}}$.
 
Let $\cC_1\sse\cC$ be the set of all constraints $C'$ such that $\tau_{C'}$
is maximal. Let also $V=\{\vc xn\}$, $v=x_i$, $\bs=(\vc xk)$, and 
$\cC_1=\{\vc C\ell\}$, $C_j=\ang{\bs_j,\rel_j}$. Consider the relation 
$$
\reli(\vc xn,\vc v\ell)=\cS_{/U}(\vc xn)\meet
\bigwedge_{j=1}^\ell \relo'_{C_j}(\bs_j,v_j),
$$
and $\reli'=\reli\cap(\ov B\tm (B_v\fac\al)^\ell)$.  Let $\ba\in\rel'^\ddg$ 
and $\ba'=\ba\fac{\ov\mu_{/U}}$. It suffices to show that for some 
$\bc\in\pr_{x_{k+1}\zd x_n}\cS^\dg_{/U}$ and $e=B$ such that 
$(\ba',\bc)\in\umax(\cS^\dg_{/U})$ it holds $(\ba',\bc,e\zd e)\in\reli'$. 

By construction there is a solution $\vf$ of $\cP^\dg\fac{\ov\mu^\circ}$ 
(recall that this problem is $\cP^\dg_{/\eps}$ if 
$(v,\al,\beta_v)\in\cW'$, and is $\cP^\dg_{/W}$ if 
$(v,\al,\beta_v)\not\in\cW'$) such that 
$\ba\fac{\ov\mu^\circ}=\vf(\bs)$ and $\vf(v)\in e$. Since 
$\ba\fac{\ov\mu^\circ}\in\umax(\rel^\dg\fac{\ov\mu^\circ})$, $\vf$ can be
chosen from $\umax(\cS^{\circ\dg})$. The existence of $\vf$ also 
means that for any $C^*=\ang{\bs^*,\rel^*}\in\cC$ there is 
$\bb_{C^*}\in\rel^{*\ddg}$ such that $\bb_{C^*}\fac{\ov\mu^\circ}=\vf(\bs^*)$.
Again, $\bb_{C^*}$ can be chosen from $\umax(\rel^{*\dg})$. We show 
that there exists a solution $\psi\in\cS^\dg_{/U}$ such that 
$\psi(\bs)=\ba'$ and for every $C^*=\ang{\bs^*,\rel^*}\in\cC_1$ 
it holds 
\begin{equation}
(\psi(\bs^*),\bb'_{C^*})\in\tau_{C^*},\label{equ:congruence-condition}
\end{equation}
where we use $\bb'_{C^*}$ to denote $\bb_{C^*}\fac{\ov\mu_{/U}}$. 
In other words, $\psi\in\cS^\ddg_{/U}$, as required. 
By the definition of $\relo_{C_j}$ there exists $e_j\in B_v\fac\al$
such that $(\bb_{C_j},e_j)\in\relo_{C_j}$, and so $(\bb'_{C_j},e_j)\in\relo'_{C_j}$.

By (S3) for $\cP^\dg$ there is $\sg\in\umax(\cS^\dg_{/U})$ with 
$\sg(\bs)=\ba'$. Choose one for which condition
(\ref{equ:congruence-condition}) is true for a maximal number of 
constraints from $\cC_1$. Suppose that (\ref{equ:congruence-condition}) 
does not hold for $C_j=\ang{\bs^*,\rel^*}\in\cC$. Using the solution $\vf$ of 
$\cP^\dg\fac{\ov\mu^\circ}$ we will construct another solution 
$\sg_0\in\cS^\dg_{/U}$ such that (\ref{equ:congruence-condition})
for $\sg_0$ is true for all constraints it is true for $\sg$, and is also true 
for $C_j$.

By the definition of the congruences $\mu^\circ_z$ and 
Lemma~\ref{lem:restricted-congruence}(2) for every $z\in V$ the
interval $(\al,\beta_v)$ can be separated from $(\zz_z,\mu^\circ_z)$ 
or the other way round. Therefore, 
by Lemma~\ref{lem:collapsing}  there exists an idempotent polynomial $f$ of 
$\reli$ satisfying the following conditions:\\[1mm]
(a) $f$ is $\ov B$-preserving;\\[1mm]
(b) $f(\zA_v\fac\al)$ in the coordinate $v_j$ position of $\reli$ 
is an $(\al,\beta_v)$-minimal set; and\\[1mm]
(c) $f(\mu^\circ_{x_q}\red{B_{x_q}})\sse\zz_{x_q}$ for $q\in[n]$.\\[1mm]
Since $\{e,e_j\}$ is a $(\al,\beta_v)$-subtrace of $\zA_v\fac\al$, 
as $B_v\fac\al$ is a module, and $(\sg,\vc e\ell)\in\umax(\reli')$, 
by Lemma~\ref{lem:collapsing} for
$\reli$ the polynomial $f$ can be chosen such that\\[1mm]
(d) $f(e)=e$, $f(e_j)=e_j$ in coordinate position $v_j$; and \\[1mm]
(e) $f((\sg,\vc e\ell))=(\sg,\vc e\ell)$.\\[1mm]
\indent
The appropriate restrictions of $f$ are also polynomials of $\relo'_{C_q}$
and $\rel'$ for each $q\in[\ell]$ and $C'=\ang{\bs',\rel'}\in\cC$.
By (c) for any $C^\circ=\ang{\bs^\circ,\rel^\circ}, 
C^\bullet=\ang{\bs^\bullet,\rel^\bullet}\in\cC$ we have
$f(\bb'_{C^\circ}[w])=f(\bb'_{C^\bullet}[w])$ for each 
$w\in\bs^\circ\cap\bs^\bullet$. This means that $\sg_0=f(\vf)$ is
properly defined by setting $\sg_0(w)=f(\bb'_{C^\bullet}[w])$ for any $w\in V$ and 
$C^\bullet=\ang{\bs^\bullet,\rel^\bullet}\in\cC$ such that $w\in\bs^\bullet$. 
Also, for any constraint 
$C_q\in\cC_1$ for which (\ref{equ:congruence-condition}) holds for $\sg$,
it also holds for $\sg_0$, as 
$f(\sg(\bs_q))=\sg(\bs_q)\eqc{\tau_{C_q}}\bb'_{C_q}$ implies 
$\sg_0(\bs_q)=f(\bb'_{C_q})\eqc{\tau_{C_q}}f(\sg(\bs_q))
\eqc{\tau_{C_q}}\bb'_{C_q}$ in this case. By (e), $\sg_0(\bs)=\ba'$. 
Finally, $f(e)=e$ in the coordinate position $v_j$ of $\relo$, and so 
$\sg_0(\bs_j)\eqc{\tau_{C_j}}\bb'_{C_j}$,
that is, (\ref{equ:congruence-condition}) holds for $C_j$ as well. 

The mapping $\sg_0$ satisfies many of the desired properties, and it is 
a solution of $\cP_{/U}$ because $\sg_0(\bs^\circ)\in\rel^\circ$ for each 
$C^\circ=\ang{\bs^\circ,\rel^\circ}\in\cC$. However, it is not necessarily a 
solution of $\cP^\dg_{/U}$, and so we need to make one more step. 
To convert $\sg_0$ into a solution of $\cP^\dg_{/U}$ consider
$\bc=(\sg,e\zd e)$ and $\bd=(\sg_0,f_1(e)\zd f_\ell(e))$. Note that the action
of the polynomial $f$ in coordinate positions $v_r$ of $\reli$ may differ,
we reflect it by using subscripts in the tuple $\bd$. In the subalgebra of $\reli$
generated by $\bc,\bd$ take $\bc'=(\psi,e'_1\zd e'_\ell)$ such that $\bc\bc'$ 
is a thin affine edge
and $\bc'[v_j]=e'_j=f_j(e)=e$. For every $C^\circ=\ang{\bs^\circ,\rel^\circ}\in\cC$ 
the relation $\rel^{\circ\dg}$ is polynomially closed in $\rel^\circ$ by (S5).
Since $\sg(\bs^\circ)\psi(\bs^\circ)$ is a thin affine edge in the subalgebra 
generated by $\sg(\bs^\circ),\sg_0(\bs^\circ)$, and $\sg_0(\bs^\circ)$ is the 
image of $\bb'_{C^\circ}\in\rel^{\circ\dg}\fac{\ov\mu_{/U}}$ under $f$, 
we get $\psi(\bs^\circ)\in\rel^{\circ\dg}\fac{\ov\mu_{/U}}$, as well. Thus, 
$\psi$ is a solution of $\cP^\dg_{/U}$. 

Since $\sg(\bs)=\sg_0(\bs)=\ba'$, the same holds for $\psi(\bs)$. Also,
for any constraint $\C_q\in\cC_1$ for which $\sg$ satisfies 
(\ref{equ:congruence-condition}) so does $\sg_0$, and therefore $\psi$.
Finally, by construction $\bc'[v_j]=e$, which means that 
(\ref{equ:congruence-condition}) holds for $C_j$ as well.
A contradiction with the choice of $\sg$.
\end{proof}

\section{Proof of Theorem~\ref{the:non-central}: non-affine factors}%
\label{sec:non-affine}

In this section we consider Case~2 of tightening instances: for every $v\in V$
and every $\al\in\Con(\zA_v)$ with $\al\prec\beta_v$ it holds
$\typ(\al,\beta_v)\ne\two$.

Let $\cP=(V,\cC)$ be a (2,3)-minimal and block-minimal instance with 
subdirectly irreducible 
domains, $\ov\beta=(\beta_v\in\Con(\zA_v)\mid v\in V)$ and 
$\ov B=(B_v\mid B_v \text{ is a $\beta_v$-block, } v\in V)$. 
Let also $\cP^\dg=(V,\cC^\dg)$ be a $(\ov\beta,\ov B)$-compressed 
instance, and for $C=\ang{\bs,\rel}\in\cC$ there is 
$C^\dagger=\ang{\bs,\rel^\dagger}\in\cC^\dg$. 
We select $v\in V$ and $\al\in\Con(\zA_v)$ with 
$\al\prec\beta_v$, $\typ(\al,\beta_v)\ne\two$, and 
an $\al$-block $B\in B_v\fac\al$ such that $B$ is as-maximal in 
$\rel^{v\dg}\fac\al$. By (S6) for $\cP^\dg$ for any 
$C=\ang{\bs,\rel}\in\cC$ with $v\in\bs$, the $\al$-block $B$
is also as-maximal in $\pr_v(\rel\cap \ov B)\fac\al$. In particular, 
it is maximal in $B_v\fac\al=(\rel^v\cap B_v)\fac\al$. We show how $\cP^\dg$ 
can be transformed to a $(\ov\beta',\ov B')$-compressed instance such
that $\beta'_w\le\beta_w$, $B'_w\sse B_w$ for $w\in V$, and 
$\beta'_v=\al$, $B'_v=B$. 

By Lemma~\ref{lem:relative-symmetry} if $\rel^{v\dg}\fac\al$ 
contains a nontrivial as-component,
there is a coherent set associated with the triple $(v,\al,\beta_v)$. 
Let $W=W(v,\al,\beta_v,\ov\beta)$ in this case; note that
$(v,\al,\beta_v)\not\in\cW'$, because $(\al:\beta_v)\ne\zo_v$
by Lemma~\ref{lem:as-type-2}(2). Let also $\cS^\dagger_{/U}$ 
denote the set of solutions of $\cP^\dg_{/U}$ for a coherent set $U$.

\begin{lemma}\label{lem:binary-connections}
If $B_v\fac\al$ contains a nontrivial as-component, then for 
every $w\in W$ there is a congruence 
$\al_w\in\Con(\zA_w)$ with $\al_w<\beta_w$, and such that  
$\rel^{vw\dg}$ is aligned with respect to $(\al,\al_w)$, 
that is, for any $(a_1,a_2),(b_1,b_2)\in\rel^{vw\dg}$, 
$a_1\eqc\al b_1$ if and only if $a_2\eqc{\al_w}b_2$.
\end{lemma}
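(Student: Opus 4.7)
\noindent
Since $w\in W=W(v,\al,\beta_v,\ov\beta)$, by definition of a coherent set there are $\gm,\dl\in\Con(\zA_w)$ with $\gm\prec\dl\le\beta_w$ such that the intervals $(\al,\beta_v)$ and $(\gm,\dl)$ cannot be separated in $\rel^{vw}$. My plan is to derive $\al_w$ from the Congruence Lemma~\ref{lem:affine-link} applied to a suitable quotient of $\rel^{vw}$, and to use non-separability both to rule out the `full product' case of that lemma and to propagate the resulting alignment to all of $\rel^{vw\dg}$.

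First I pass to the quotient $\td\rel=\rel^{vw}\fac{(\al,\zz_w)}$, a subdirect product of $\zA_v\fac\al$ and $\zA_w$. Under this quotient $\al$ becomes the trivial congruence $\zz$, the $\beta_v$-block $B_v$ maps onto the $(\beta_v\fac\al)$-block $B_v\fac\al$, and the hypothesis supplies a nontrivial as-component $C$ of $B_v\fac\al$. The chained hypothesis for $\td\rel$ with respect to $(\beta_v\fac\al,\beta_w),(B_v\fac\al,B_w)$ follows from (S4) for $\cP^\dg$ together with Lemma~\ref{lem:S7}; the image of $\rel^{vw\dg}$ in $\td\rel$ is polynomially closed by (S5) and Lemma~\ref{lem:poly-closed}(3); and the choice of $B$ ensures the image meets $\umax(\rel^{vw}\cap(B_v\tm B_w))$. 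The Congruence Lemma then gives two alternatives. Case~(1), in which $C\tm\umax(B''_w)$ is contained in the image of $\rel^{vw\dg}$, would furnish a $\ov B$-preserving polynomial of $\rel^{vw}$ that separates $(\al,\beta_v)$ from $(\gm,\dl)$, contradicting the choice of $w$; so Case~(2) holds. It delivers a congruence $\eta\prec\beta'\le\beta_w$ on $\zA_w$ such that $(\al,\beta_v)$ and $(\eta,\beta')$ cannot be separated, and moreover the restriction of (the image of) $\rel^{vw\dg}$ to $C\tm B''_w$ is the graph of a map whose kernel equals $\eta$ on $B''_w$. I set $\al_w=\eta$.

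The mapping structure provides $(\al,\al_w)$-alignment for tuples of $\rel^{vw\dg}$ whose $v$-projection lies in $C$, and the kernel description gives $\al_w<\beta_w$. To extend alignment to the rest of $\rel^{vw\dg}$, define on $\pr_w\rel^{vw\dg}$ the relation $b\sim b'$ iff some $(a,b),(a',b')\in\rel^{vw\dg}$ satisfy $a\eqc\al a'$; the `only if' half of the target alignment and the congruence property of $\sim$ are routine, so it suffices to show that $\sim$ coincides with the restriction of $\al_w$. This is where non-separability of $(\al,\beta_v)$ from $(\eta,\beta')$ is used a second time: by Lemma~\ref{lem:relative-symmetry} every $\ov B$-preserving polynomial of $\rel^{vw}$ that collapses $\beta_v$ into $\al$ simultaneously collapses $\beta'$ into $\eta$, and vice versa, so the chained condition (S4) for $\cP^\dg$ lets me translate any as-path in $\pr_v\rel^{vw\dg}\fac\al$ into a corresponding as-path in $\pr_w\rel^{vw\dg}\fac\eta$, forcing $\sim$ to coincide with $\al_w$ on $\pr_w\rel^{vw\dg}$. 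The main obstacle will be this last propagation step, where the chained property and Lemma~\ref{lem:relative-symmetry} must be used in tandem; the remainder of the argument is essentially bookkeeping over the already-established lemmas.
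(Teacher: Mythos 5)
Your proposal takes a genuinely different route from the paper's own argument, which is a short computation with link congruences: since $(\al,\beta_v)$ and $(\gm,\dl)$ cannot be separated, Lemma~\ref{lem:delta-alignment} shows $\rel^{vw}$ is $(\al:\beta_v)(\gm:\dl)$-aligned, whence $\lnk_1\le(\al:\beta_v)$ and $\lnk_2\le(\gm:\dl)$; then Lemma~\ref{lem:as-type-2}(2), applied with $\typ(\al,\beta_v)\ne\two$, forces $\beta_v\meet(\al:\beta_v)=\al$, so $\beta_v\meet\lnk_1\le\al$ and $\dl\meet\lnk_2\le\gm$, and $\al_w=\beta_w\meet\lnk_2$ does the job. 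Your route through the Congruence Lemma~\ref{lem:affine-link} is attractive but contains a real gap, and I do not think it can be closed as written.

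The gap is the step ruling out Case~(1). You assert that $C\tm\umax(B''_w)\sse\rel'$ ``would furnish a $\ov B$-preserving polynomial of $\rel^{vw}$ that separates $(\al,\beta_v)$ from $(\gm,\dl)$.'' But Case~(1) only describes the shape of the compressed, block-restricted relation; it says nothing about the existence of a polynomial of the algebra $\rel^{vw}$ with the required collapsing behaviour. Separability is a statement about the clone of polynomial operations, and a full-product block restriction of the compressed subrelation $\rel'$ provides no such polynomial — indeed, constant polynomials collapse both intervals simultaneously, and there is no obvious way to manufacture a nontrivial separating one from the set-theoretic fact $C\tm\umax(B''_w)\sse\rel'$. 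In the paper's other uses of the Congruence Lemma (e.g.\ Lemma~\ref{lem:congruence-restriction}) both cases are accommodated rather than one being excluded by a non-separability hypothesis, which is a sign that the exclusion you want is not automatic. The correct tool that turns non-separability directly into alignment is Lemma~\ref{lem:delta-alignment}, not the Congruence Lemma; the latter has the implication in the other direction (Case~(2) supplies a non-separable interval as a conclusion, it is not entered because of one). Your final propagation step, where $\sim$ is matched to $\al_w$, also leans on Lemma~\ref{lem:relative-symmetry} as if it said that every collapsing polynomial behaves uniformly, whereas that lemma only constructs particular polynomials with prescribed fixed points; the claimed translation of as-paths between coordinates via (S4) is not something the chained condition gives you. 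I would redo the proof along the paper's line: bound the link congruences of $\rel^{vw}$ using Lemmas~\ref{lem:delta-alignment} and~\ref{lem:as-type-2}, and then let the standard fact that $\rel^{vw}\fac{(\lnk_1,\lnk_2)}$ is the graph of a bijection deliver the alignment on the block.
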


\begin{proof}
It suffices to show that the link congruences $\lnk_1,\lnk_2$ of 
$\relo=\rel^{vw}$ viewed as a subdirect product of $\zA_v\tm\zA_w$ 
are such that $\beta_v\meet\lnk_1\le\al$ and 
$\beta_w\meet\lnk_2<\beta_w$. Since 
$w\in W$ there are $\gm,\dl\in\Con(\zA_w)$ such that 
$\gm\prec\dl\le\beta_w$ and $(\al,\beta_v)$ and $(\gm,\dl)$ cannot be 
separated. By Lemmas~\ref{lem:as-type-2},\ref{lem:delta-alignment} 
it follows that  
$\beta_v\meet\lnk_1\le\al$ and $\lnk_2\meet\dl\le\gm$. We set 
$\al_w=\beta_w\meet\lnk_2<\beta_w$. 
\end{proof}

Let $\cP^\ddg=(V,\cC^\ddg)$ be constructed as follows.
\begin{itemize}
\item[(R)]
Let $\cP'$ be the problem obtained from $\cP^\dg$ by adding extra 
constraint $\ang{\{v\},B}$.
Let $\cP^\ddg$ be the problem obtained from $\cP'$ by establishing 
$(2,3)$-minimality, and the minimality of  
$\cP^\ddg_{/U}$ for every non-central coherent set $U$.
\end{itemize}
Set $\beta'_v=\al$, $B'_v=B$. Let $Z$ be the set of variables $w$ such that
there is a congruence $\al_w<\beta_w$ such that $\rel^{wv\dg}\fac\al$
is the graph of a mapping $\pi_w:\rel^{w\dg}\to\rel^{v\dg}\fac\al$ 
and $\al_w$ is its kernel. For instance, if $B$ belongs to a nontrivial 
as-component, then $Z=W$. For $w\in U$ set $\beta'_w=\al_w$, 
$B'_w=\pi^{-1}(B)$. 
For the remaining variables $w$ set $\beta'_w=\beta_w$, $B'_w=B_w$. 

\begin{lemma}\label{lem:S5-2}
$\cP^\ddg$ satisfies condition (S5). In other words, for every 
$C=\ang{\bs,\rel}\in\cC$, the relation $\rel^\ddg$ is polynomially closed
in $\rel$.
\end{lemma}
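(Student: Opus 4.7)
My plan is to view $\rel^\ddg$ as obtained from $\rel^\dg$ by a sequence of pp-style restrictions corresponding to the propagation in step~(R), and to inherit polynomial closeness stage-by-stage through Lemma~\ref{lem:poly-closed}(2). The base case is immediate: $\rel^\dg$ is polynomially closed in $\rel$ by (S5) for $\cP^\dg$, while the singleton constraint $\langle\{v\}, B\rangle$ which $\cP'$ adjoins to $\cP^\dg$ is trivially polynomially closed in $\langle\{v\}, \zA_v\rangle$.

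I would then handle the two propagation steps in turn. For the $(2,3)$-minimality tightening, each intermediate constraint $\rel^{X\ddg}$ is the conjunction of the corresponding $\rel^{X\dg}$ with the pp-definable condition that every binary projection lies in the current (tightened) binary relation; iterating Lemma~\ref{lem:poly-closed}(2) carries polynomial closeness through. For the block-minimality tightening, fix a non-central coherent set $U$: from (S5) for $\cP^\dg$ applied to each constraint together with Lemma~\ref{lem:poly-closed}(2,3), the solution set $\cS^\dg_{/U}$ is polynomially closed in $\cS_{/U}$. Intersecting with the additional constraint that the $v$-coordinate lies in $B/\mu_{/U,v}$ again preserves polynomial closeness, so the tightening which discards tuples of $\rel^\dg$ that fail to extend to a solution of $\cP^\ddg_{/U}$ can once more be read as a pp-restriction and handled by Lemma~\ref{lem:poly-closed}(2).

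The main obstacle I anticipate is checking, at each propagation stage, the u-maximal matching hypothesis in Lemma~\ref{lem:poly-closed}(2), namely the existence of a common u-maximal element of the unquantified ambient and smaller relations. Here I would exploit the fact that $B$ was selected as-maximal in $\rel^{v\dg}/\al$ and, by (S6) for $\cP^\dg$, as-maximal in $\pr_v(\rel\cap\ov B)/\al$: the Maximality Lemma~\ref{lem:to-max} then lifts an as-maximal block representative to a u-maximal tuple of $\rel\cap\ov B$ that lies in $\rel^\ddg$, and Lemma~\ref{lem:u-max-congruence}(2) upgrades this to $\umax(\rel^\ddg) \subseteq \umax(\rel^\dg)$. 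With this matching in hand, the inductive application of Lemma~\ref{lem:poly-closed}(2) goes through and yields polynomial closeness of $\rel^\ddg$ in $\rel$.
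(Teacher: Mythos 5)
Your proposal is correct and matches the paper's own route: both proofs observe that $\rel^\ddg$ is defined from the polynomially closed relations $\rel^\dg$ and $\{v\}\times B$ via pp-formulas (the $(2,3)$-minimality and block-minimality tightenings), and then invoke Lemma~\ref{lem:poly-closed} to propagate polynomial closeness. You additionally spell out a verification of the u-maximal-element side hypotheses in Lemma~\ref{lem:poly-closed}(2) using the as-maximality of $B$, the Maximality Lemma, and Lemma~\ref{lem:u-max-congruence}(2), which the paper's much terser proof treats as implicit; this extra care is sound and fills in exactly the detail the paper omits.
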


\begin{proof}
Condition (S5) holds for $\cP^\dg$.
The instance $\cP^\ddg$ is obtained from $\cP^\dg$ by adding an 
extra constraint (whose relation is polynomially closed in $\zA_v$)
and establishing various sorts of minimality. This means 
that every $\rel^\ddg$ is obtained through a pp-formula of 
polynomially closed relations. By Lemma~\ref{lem:poly-closed} it is
polynomially closed in $\rel$ as well.
\end{proof}

Condition (S4) follows from Lemma~\ref{lem:S7} by the choice of 
$\beta'_v, B$ and (S6) for $\cP^\dg$.

The following two lemmas show that the constraints of $\cP^\ddg$ 
are not empty. We do it by identifying a set of tuples in every constraint 
relation that withstand the propagation algorithms. We start with constructing
such sets for $(2,3)$-minimality. Set
\begin{eqnarray*}
\relo^x&=&\{a\in \amax(\rel^{x\dg})\mid \text{ there is 
$d\in B$ such that } (d,a)\in\rel^{vx\dg}\},
\end{eqnarray*}

\begin{lemma}\label{lem:2-3}
The collection of sets $\relo^{xy}=\rel^{xy\dg}\cap(\relo^x\tm\relo^y)$, 
$x,y\in V$, is a $(2,3)$-strategy for $\cP'$.
\end{lemma}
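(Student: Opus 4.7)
The plan is to verify the two defining conditions of a (2,3)-strategy for $\cP'$: that each $\relo^{xy}$ is a nonempty subset of the pairwise partial solutions on $\{x,y\}$ in $\cP'$ (so in particular compatible with the added unary constraint $\ang{(v),B}$), and that for any $(a,b) \in \relo^{xy}$ and any $z \in V \setminus \{x,y\}$ there exists $c \in \zA_z$ with $(a,c) \in \relo^{xz}$ and $(b,c) \in \relo^{yz}$.

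For nonemptiness of $\relo^x$, since $B$ is as-maximal in $\rel^{v\dg}\fac\al$, the subalgebra $B$ contains an element $d$ that is as-maximal in $\rel^{v\dg}$ itself (this is the as-maximality analog of Lemma~\ref{lem:u-max-congruence}(2), already invoked elsewhere in the paper). Applying Maximality Lemma~\ref{lem:to-max}(2) to $\rel^{vx\dg}$ viewed as a subdirect product of $\rel^{v\dg}$ and $\rel^{x\dg}$ lifts $d$ to an as-maximal pair $(d,a) \in \amax(\rel^{vx\dg})$, and part~(3) of the same lemma gives $a \in \amax(\rel^{x\dg})$; hence $a \in \relo^x$. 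Containment of each $\relo^{xy}$ in the partial solution set, and the compatibility with $\ang{(v),B}$ in the case $v \in \{x,y\}$, is immediate from the definitions.

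For the 2-to-3 extension, given $(a,b) \in \relo^{xy}$ with $B$-witnesses $d_1,d_2$, the (2,3)-strategy of $\cP^\dg$ first produces some $c_0 \in \rel^{z\dg}$ with $(a,c_0) \in \rel^{xz\dg}$ and $(b,c_0) \in \rel^{yz\dg}$. I then upgrade $c_0$ to an as-maximal $c$ preserving $a,b$: take the ternary relation $\reli = \rel^{xy\dg} \meet \rel^{xz\dg} \meet \rel^{yz\dg}$, which contains $(a,b,c_0)$; choose an as-path in $\pr_z\reli$ from $c_0$ to some $c \in \amax(\rel^{z\dg})$; and lift it via Maximality Lemma~\ref{lem:to-max}(1). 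The as-maximality of $a$ and $b$ keeps the lifted $x$- and $y$-coordinates within their own as-components, and the polynomial closedness condition (S5) of $\cP^\dg$, together with the freedom in the choice of lift, lets me pin these coordinates at $a$ and $b$ themselves.

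The main obstacle is then to exhibit a witness $d_3 \in B$ with $(d_3,c) \in \rel^{vz\dg}$. When $z$ lies in the coherent set $W := W(v,\al,\beta_v,\ov\beta)$ and $B_v\fac\al$ has a nontrivial as-component, Lemma~\ref{lem:binary-connections} supplies a congruence $\al_z < \beta_z$ and an $(\al,\al_z)$-alignment of $\rel^{vz\dg}$, reducing the task to showing that $c$ lies in the $\al_z$-block that matches $B$; this is achieved by extending $(d_1,a) \in \rel^{vx\dg}$ to $z$ via the (2,3)-strategy of $\cP^\dg$ and tracing the alignment from $\rel^{vx\dg}$ through $\rel^{vz\dg}$ to pin down the $\al_z$-block containing $c$. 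When $z \notin W$, or when $B_v\fac\al$ has no nontrivial as-component, no such alignment is available; here one extends $(d_1,a)$ or $(d_2,b)$ directly to $z$ via the (2,3)-strategy to obtain a $B$-preimage of some intermediate element $c''$, and uses weak as-closedness (condition (S6) of $\cP^\dg$) to transfer that $B$-preimage along the as-path from $c''$ to $c$, yielding the required $d_3$.
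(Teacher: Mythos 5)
There is a genuine gap, and it sits exactly where the paper does its hardest work. Your case split is incomplete: when $w\in W$, $B_v\fac\al$ has a nontrivial as-component, but \emph{neither} $x$ nor $y$ lies in $W$, your argument fails. In that situation the alignment chain from Lemma~\ref{lem:binary-connections} cannot be traced through $x$: extending $(d_1,a)\in\rel^{vx\dg}$ to $w$ via the (2,3)-strategy of $\cP^\dg$ produces some $c'$ with $(d_1,c')\in\rel^{vw\dg}$, but that $c'$ is generally different from your $c$, and since $x\notin W$ the relation $\rel^{xw\dg}$ is not $\al_x\al_w$-aligned for any fine $\al_x$ paired with $\al$; so nothing forces $c$ and $c'$ into the same $\al_w$-block. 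The paper deals with exactly this case by forming the $4$-ary relation $\rela = \rel^{xy}\meet\rel^{xw}\meet\rel^{yw}\meet\rel^{wv}\fac\al$, applying Lemma~\ref{lem:collapsing} to obtain a $\ov B$-preserving polynomial $f$ that fixes an $(\al,\beta_v)$-subtrace while collapsing $B_x$ and $B_y$ to points, deducing that $(\al,\beta_v)$ separates from every prime interval of $\Con(\rel^{xy})$, and then invoking the Congruence Lemma~\ref{lem:affine-link} to conclude $\umax(F)\tm E\sse\relo$. Your proposal never does this separation analysis and is therefore missing a necessary ingredient.

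There are two secondary issues worth noting. First, the paper never tries to select an as-maximal $c$ ``preserving $a,b$''; instead it projects out the $w$-coordinate, works with $\relo=\pr_{xyv}\relo'$, and shows directly that $(a,b,e)\in\relo$ with $e=B$; the existence of a suitable $c$ then falls out for free. Your step of ``upgrading $c_0$ to as-maximal $c$ while pinning the $x$- and $y$-coordinates at $a,b$'' via polynomial closedness and Maximality Lemma lifts is not justified as written: lifting an as-path in $\pr_w$ along the ternary relation moves the $x,y$-coordinates within their as-components but does not in general return them to $a,b$ exactly, and (S5) by itself does not fix that. Second, you do not invoke the Quasi-2-Decomposition Theorem~\ref{the:quasi-2-decomp}, which the paper uses as the engine to pass from the three binary memberships $(a,b),(a,e),(b,e)$ to a tuple $(a,b,e'')\in\relo$ with $e\sqq_{as}e''$; this is what sets up the final appeal to the Congruence Lemma. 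You would need to restructure the proof around this projection to $\{x,y,v\}$ and supply the separation argument to close the gap.
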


\begin{proof}
We need to show that for any $x,y,w\in V$ and $(a,b)\in\relo^{xy}$ there is 
$c\in\rel^{w\dg}$ such that $(a,c)\in\relo^{xw}$, $(b,c)\in\relo^{yw}$. By 
(S2) for $\cP^\dg$ there is $c$ with $(a,c)\in\rel^{xw\dg}$, 
$(b,c)\in\rel^{yw\dg}$. Let $e=B$.
Consider the relation $\relo$ below.
\begin{equation}
\relo'(x,y,w,v) = \rel^{xy\dg}(x,y)\meet\rel^{xw\dg}(x,w)\meet 
\rel^{yw\dg}(y,w)\meet \rel^{wv\dg}\fac\al(w,v),\label{equ:relation-Q1} 
\end{equation}
and $\relo=\pr_{xyv}\relo'$.
As is easily seen, it suffices to show that $(a,b,e)\in\relo$ 
for some $c$. Condition (S2) for $\cP^\dg$ also implies that 
$\ba=(a,b,e')\in\relo$ for some $e'$, and $\ba$ can be
chosen as-maximal in $\relo$. We use the Quasi-2-Decomposition 
Theorem~\ref{the:quasi-2-decomp}.
The tuple $\ba$ indicates that $(a,b)\in\pr_{xy}\relo$. It is also easy 
to see that $(a,e)\in\pr_{xv}\relo$ and $(b,e)\in\pr_{yv}\relo$. 
By Theorem~\ref{the:quasi-2-decomp} $(a,b,e'')\in\relo$ for some
$e''$ with $e\sqq_{as}e''$. If $e$ does not belong to a nontrivial 
as-component of $B_v\fac\al$, then $e''=e$. So, suppose that 
$e$ belongs to a nontrivial as-component $E$ of $B_v\fac\al$.

Let $c\in\rel^{w\dg}$ be such that $(a,b,c,e'')\in\relo'$.
If $w\not\in W$, then by the Congruence Lemma~\ref{lem:affine-link}
$(c,e)\in\rel^{wv\dg}\fac\al$ whenever $c\in\umax(D)$, $D=\{d\in\rel^{w\dg},
(d,e^*)\in\rel^{wv}\fac\al \text{ for some $e^*\in E$}\}$.  
Since $(a,b,e'')\in\amax(\relo)$, element $c$ can be chosen from $\amax(D)$. 
Therefore $(a,b,e)\in\relo$. So, assume that $w\in W$. If $x\in W$ or $y\in W$, 
then $e'=e$. Otherwise as is easily seen, $\rel^{xv\dg}\fac\al\sse\pr_{xv}\relo$, 
$\rel^{yv\dg}\fac\al\sse\pr_{yv}\relo$, and $(\al,\beta_v)$ can be 
separated from any $(\gm_x,\dl_x),(\gm_y,\dl_y)$, where 
$\gm_x\prec\dl_x\le\beta_x$, $\gm_y\prec\dl_y\le\beta_y$,
and $\gm_x,\dl_x\in\Con(\zA_x)$, $\gm_y,\dl_y\in\Con(\zA_y)$,
or the other way round. Consider
\[
\rela(x,y,w,v) = \rel^{xy}(x,y)\meet \rel^{xw}(x,w)\meet 
\rel^{yw}(y,w)\meet \rel^{wv}\fac\al(w,v),
\]
by Lemma~\ref{lem:S7} $\rela$ is chained with respect to 
$\ov\beta,\ov B$. Let $\{e_1,e_2\}\in B_v\fac\al$ be an $(\al,\beta_v)$-subtrace.
By Lemma~\ref{lem:collapsing} there is a $\ov B$-preserving polynomial $f$
of $\rela$ such that $f(e_1)=e_1$, $f(e_2)=e_2$, and $|f(B_x)|=|f(B_y)|=1$. 
Therefore $(\al,\beta_v)$ can be separated from every prime interval 
$\gm\prec\dl\le\beta_x\tm\beta_y$ in $\Con(\rel^{xy})$.
Applying the Congruence Lemma~\ref{lem:affine-link} to $\relo$ we 
obtain $\umax(F)\tm E\sse\relo$, where $F=\{(d_1,d_2)\mid 
(d_1,d_2,e^*)\in\relo \text{ for some $e^*\in E$}\}$. 
In particular, $(a,b,e)\in\relo$.
\end{proof}

Let $\cQ=\{\relo^x\mid x\in V\}$. We say that a tuple 
$\ba\in\prod_{i=1}^\ell\zA_{v_i}$, $\vc v\ell\in V$, is \emph{$\cQ$-compatible}
if $\ba[v_i]\in\relo^{v_i}$ for any $i\in[\ell]$.

\begin{lemma}\label{lem:Q-compatibility}
Let $C=\ang{\bs,\rel}\in\cC$. Then for any non-central coherent set $U$ 
and any $\cQ$-compatible tuple $\ba\in\amax(\rel^\dg)$ 
there is a $\cQ$-compatible solution $\vf\in\cS^\dg_{/U}$ such that 
$\vf(\bs)=\ba\fac{\ov\mu_{/U}}$.
\end{lemma}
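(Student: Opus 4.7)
My plan is to mimic the proof template of Lemma~\ref{lem:S3-1}: encode $\cQ$-compatibility as a pp-definable extension of $\cS_{/U}$, obtain a generic solution via (S3) for $\cP^\dg$, and then deform it with a sequence of collapsing polynomials until every witness lies in $B$. For every $x\in V$ introduce a fresh witness variable $v_x\in\zA_v$ and form
\[
\reli(\ov x,(v_x)_{x\in V}) \;=\; \cS_{/U}(\ov x)\meet\bigwedge_{x\in V}\rel^{vx}(v_x,x),
\]
together with its dagger counterpart $\reli^\dg$ obtained by replacing every factor by its compressed version. Block-minimality of $\cP$ makes $\reli$ a subdirect product, and by (S5) for $\cP^\dg$ combined with Lemma~\ref{lem:poly-closed} the subrelation $\reli^\dg$ is polynomially closed in $\reli$. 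Because $\ba\in\amax(\rel^\dg)$, the desired $\cQ$-compatible extension of $\ba\fac{\ov\mu_{/U}}$ is precisely the $V$-projection of any tuple $(\vf,(d_x)_{x\in V})\in\reli^\dg\fac{\ov\mu_{/U}}$ with $\vf(\bs)=\ba\fac{\ov\mu_{/U}}$ and $d_x\fac\al=B$ for every $x\in V$.

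Starting from a solution $\vf_0\in\cS^\dg_{/U}$ extending $\ba\fac{\ov\mu_{/U}}$ (supplied by (S3) for $\cP^\dg$) together with witnesses $d_{v_i}\in B$ for $v_i\in\bs$ furnished by the $\cQ$-compatibility of $\ba$, I apply (S3) once more to the auxiliary binary constraints $\rel^{vx}$ to obtain a complete tuple in $\reli^\dg\fac{\ov\mu_{/U}}$ in which only the $d_x$ with $x\notin\bs$ may fail to lie in $B$. To repair the remaining offenders, iterate the following step: while some $x$ has $d_x\fac\al\ne B$, use the as-maximality of $B$ in $\rel^{v\dg}\fac\al$ and the Maximality Lemma~\ref{lem:to-max} to select a thin semilattice or majority edge from $d_x\fac\al$ one step closer to $B$ (available because $\typ(\al,\beta_v)\ne\two$ in the present case), and invoke Lemma~\ref{lem:collapsing} to realise that edge in $\reli$ by a $\ov B$-preserving polynomial $f$ fixing all coordinates of the current tuple already aligned with $B$ (including $\vf_0$ on $\bs$); applying $f$ strictly decreases the number of misaligned witnesses.

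The main obstacle will be to carry out this iteration inside $\reli^\dg\fac{\ov\mu_{/U}}$ rather than merely inside $\reli\fac{\ov\mu_{/U}}$, and to guarantee that the requisite collapsing polynomials exist at every step. The former is handled by polynomial closedness of $\reli^\dg$ in $\reli$ together with Lemma~\ref{lem:poly-closed}(3), since applying $f$ to a u-maximal dagger tuple and taking the relevant thin-edge endpoints stays in $\reli^\dg\fac{\ov\mu_{/U}}$. The latter reduces to the separation hypotheses of Lemma~\ref{lem:collapsing}, which are supplied by Lemma~\ref{lem:restricted-congruence}(2) together with the non-centrality of $U$ and the standing assumption $\Razm(\cP)\cap\Centr(\cP)=\eps$: these ensure that $(\al,\beta_v)$ can always be separated from, or is separated from, every prime interval $(\zz_z,\mu_{/U z})$ appearing in $\ov\mu_{/U}$, so that $f$ can be chosen to act nontrivially only in the $v_x$-coordinate being corrected. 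After finitely many iterations every $d_x$ satisfies $d_x\fac\al=B$, and the $V$-projection of the resulting tuple is the required $\cQ$-compatible solution $\vf\in\cS^\dg_{/U}$ with $\vf(\bs)=\ba\fac{\ov\mu_{/U}}$.
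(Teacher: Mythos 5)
Your proposal replaces the paper's inductive, one-variable-at-a-time argument (adding a single new variable $w$ and a single auxiliary coordinate $v$, then invoking the Quasi-2-Decomposition Theorem~\ref{the:quasi-2-decomp} followed by the Congruence Lemma~\ref{lem:affine-link}) with an all-at-once relation carrying a witness coordinate $v_x$ for every $x\in V$, followed by an iterative ``repair'' of misaligned witnesses via $\al\beta$-collapsing polynomials. This is a genuinely different route, and unfortunately the repair step does not go through.

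The central gap is the claim that Lemma~\ref{lem:collapsing} supplies a $\ov B$-preserving polynomial that \emph{realises} a thin semilattice or majority edge from $d_x\fac\al$ towards $B$ while \emph{fixing} all already-aligned coordinates. Lemma~\ref{lem:collapsing} produces polynomials whose only guaranteed behaviour is that their range in the chosen coordinate is an $(\al,\beta_v)$-minimal set and that they collapse every separable prime interval; it does not move an element one thin edge closer to a target block. The only refinements that allow one to pin the polynomial down on specific elements are (d)--(f), and (e),(f) are conditioned on $\typ(\al,\beta_v)=\two$, which is precisely what is excluded in Section~\ref{sec:non-affine}. Item (d) merely places an $(\al,\beta_v)$-subtrace inside the image of a single coordinate; it does not fix $\vf_0$ on $\bs$ nor the already-aligned $d_{x'}$, so the purported strictly-decreasing invariant is unsupported. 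In the same vein, your appeal to Lemma~\ref{lem:restricted-congruence}(2) to furnish the separation hypotheses is misplaced: that item is proved under the hypothesis $(v,\al,\beta_v)\in\cW'$, whereas in the non-affine case the paper explicitly notes $(v,\al,\beta_v)\not\in\cW'$ because $(\al:\beta_v)\ne\zo_v$ by Lemma~\ref{lem:as-type-2}(2). The separation the paper actually uses comes from the coherent-set case split (whether the new variable $w$ lies in $W$, and whether $I\cap W$ is empty), and it is exactly this case analysis, together with Quasi-2-Decomposition to land at some $e'$ with $e\sqq_{as}e'$ and the Congruence Lemma to trade $e'$ back for $e$, that does the work. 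Without those tools, your argument stalls at the point where a witness has drifted along an as-path away from $B$ and there is no mechanism to bring it back.
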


\begin{proof}
The proof of this lemma follows the same lines as the proof of 
Lemma~\ref{lem:2-3}. We show by induction that for every $I$, 
$\bs\sse I\sse V$, there is $\psi\in\pr_I\cS^\dg_{/U}$ such 
that $\ba'=\psi(\bs)$, where $\ba'=\ba\fac{\ov\mu_{/U}}$ and 
$\psi(w)\in\relo^x$ for all $w\in I$.
The base case, $I=\bs$ is given by (S3) for $\cP^\dg$.

Suppose the claim is proved for some $I$, $\bs\sse I\sse V$, and 
$w\in V-I$. Let also $\psi\in\amax(\pr_I\cS^\dg_{/U})$ be a partial
solution for this set, $I=\{\vc xk\}$, and $I'=I\cup\{w\}$. Let $e=B$.
Consider the following relation
\begin{equation}
\relo'(\vc xk,w,v)=\pr_{I'}\cS^\dg_{/U}(\vc xk,w)
\meet\rel^{wv\dg}\fac\al(w,v),\label{equ:relation-Q2}
\end{equation}
and $\relo=\pr_{I\cup\{v\}}\relo'$. As is easily seen, it suffices 
to show that $(\psi,e)\in\relo$. Firstly, $\psi\in\pr_I\relo$ by the 
induction hypothesis, as any value
of $w$ can be extended to a pair from $\rel^{wv\dg}$.
For $i\in[k]$, as $(\psi(x_i),e)\in\rel^{x_iv\dg}\fac{\mu_{/Ux_i}\tm\al}$, 
we have $(\psi(x_i),b)\in\rel^{x_iv\dg}\fac{\mu_{/Ux_i}}$ for some $b\in B$.
By (S3) for $\cP^\dg$ this pair can be extended to a solution 
from $\cS^\dg_{/U}$. This implies $(\psi(x_i),e)\in\pr_{x_iv}\relo$. 
By the Quasi-2-Decomposition Theorem~\ref{the:quasi-2-decomp} 
$(\psi,e')\in\relo$ for 
some $e'\in\as(e)$ in $B_v\fac\al$. If $e$ is in a trivial as-component 
of $B_v\fac\al$, we obtain $e'=e$. So, suppose that 
$e$ belongs to a nontrivial as-component $E$ of $B_v\fac\al$.

As $\psi$ is as-maximal, there is as-maximal $\vf=(\psi,e')\in\relo$. 
If $w\not\in W$,  by the Congruence Lemma~\ref{lem:affine-link}
$(c,e)\in\rel^{wv\dg}\fac\al$ whenever $c\in\umax(D)$, $c$ satisfies 
the conditions of (\ref{equ:relation-Q2}) and $D=\{d\in\rel^{w\dg},
(d,e^*)\in\rel^{wv\dg}\fac\al \text{ for some $e^*\in E$}\}$.  
Since $\vf\in\umax(\relo)$, element $c$ can be chosen from $D$. 
Therefore $(\psi,e)\in\relo$. So, assume that $w\in W$. If $I\cap W\ne\eps$, 
then $e'=e$. Otherwise as is easily seen, 
$\rel^{x_iv\dg}\fac{\mu_{/Ux_i}\tm\al}\sse\pr_{x_iv}\relo$ and 
$\al\prec\beta_v$ can be separated from any $\gm\prec\dl\le\beta_{x_i}$, 
where $\gm,\dl\in\Con(\zA_{x_i})$, $i\in[k]$. Consider
\[
\rela(\vc xk,w,v) = \pr_{I'}\cS_{/U}(\vc xk,w)
\meet \rel^{wv}\fac\al(w,v),
\]
by Lemma~\ref{lem:S7} $\rela$ is chained with respect to 
$\ov\beta,\ov B$. Similar to the proof of Lemma~\ref{lem:2-3}, let 
$\{e_1,e_2\}\in B_v\fac\al$ be an $(\al,\beta_v)$-subtrace.
By Lemma~\ref{lem:collapsing} there is a 
polynomial $f$ of $\rela$ such that $f(e_1)=e_1$, $f(e_2)=e_2$, and 
$|f(B_{x_i})|=1$ for $i\in[k]$. Therefore $(\al,\beta_v)$ can be 
separated from every prime interval $\gm\prec\dl\le\ov\beta_I$ in 
$\Con(\pr_I\cS_{/U})$. 
Applying the Congruence Lemma~\ref{lem:affine-link} to $\rela$ and $\relo$ we 
obtain $\umax(F)\tm E\sse\relo$, where $F=\{\chi\in\pr_I\cS^\dg_{/U}\mid 
(\chi,e^*)\in\relo \text{ for some $e^*\in E$}\}$. 
In particular, $(\psi,e)\in\relo$.
\end{proof}

Conditions (S2), (S3) hold for $\cP^\ddg$ by construction and 
$\cP^\ddg$ does not contain empty constraint relations by 
Lemmas~\ref{lem:2-3} and~\ref{lem:Q-compatibility}, implying (S1).

Finally, we verify condition (S6).

\begin{lemma}\label{lem:S6-2}
Condition (S6) for $\cP^\ddg$ holds. 
\end{lemma}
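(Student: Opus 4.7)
The proof should parallel that of Lemma~\ref{lem:S5-2}, only using the ``as-closed'' half of Lemma~\ref{lem:poly-closed}(2,3) in place of the ``polynomially closed'' half. By construction, $\cP^\ddg$ is obtained from $\cP^\dg$ by adjoining the unary constraint $\ang{\{v\},B}$ and then establishing $(2,3)$-minimality together with block-minimality on every non-central coherent set. Each such propagation round terminates in finitely many steps, so every resulting relation $\rel^\ddg$ is pp-definable from the relations $\rel^{'\dg}$ of $\cP^\dg$ and the new unary atom $B$. Fix $C=\ang{\bs,\rel}\in\cC$, $w\in\bs$, and take $a\in\umax(\pr_w\rel^\ddg)$ and $b\in\pr_w(\rel\cap\ov B')$ with $a\sqq_{as}b$; my goal is to conclude $b\in\pr_w\rel^\ddg$.

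By (S5) and (S6) for $\cP^\dg$, each $\rel^{'\dg}$ is polynomially closed in $\rel'$ and weakly as-closed in $\rel'\cap\ov B$. The unary constraint $B$ is (weakly) as-closed in $B_v$ because $B$ was chosen as an $\al$-block that is as-maximal in $\rel^{v\dg}\fac\al$, hence also as-maximal in $\pr_v(\rel^v\cap B_v)\fac\al$ by (S6) for $\cP^\dg$. Moreover, the relation $\rel\cap\ov B'$ is the image of exactly the same pp-formula after replacing each $\rel^{'\dg}$ by $\rel'\cap\ov B$ and $B$ by $B_v$: indeed, the block $B'_w=\pi_w^{-1}(B)$ for $w\in Z$ arises directly from the alignment of Lemma~\ref{lem:binary-connections}, while $B'_w=B_w$ for the remaining variables, so the same pp-formula that carves out $\rel^\ddg$ from the $\dagger$-relations carves $\rel\cap\ov B'$ out of their ``ambient'' counterparts.

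The extendability premises of Lemma~\ref{lem:poly-closed}(2) are exactly what Lemmas~\ref{lem:2-3} and~\ref{lem:Q-compatibility} are designed to supply. Lemma~\ref{lem:2-3} produces a nonempty $(2,3)$-strategy $\cQ=\{\relo^x\mid x\in V\}$ built from as-maximal elements compatible with $B$, which guarantees that every $\ba\in\umax(\rel^{'\dg})$ appearing as an atom of $\Phi$ extends to a tuple of the whole conjunction and that the conjunction has a common u-maximal witness. Lemma~\ref{lem:Q-compatibility} then lifts any $\cQ$-compatible as-maximal tuple in $\rel^\dg$ to a $\cQ$-compatible solution of $\cP^\dg_{/U}$ for every non-central coherent $U$, so $\umax(\relo')\cap\umax(\rel')\ne\eps$ in the sense required by Lemma~\ref{lem:poly-closed}(2). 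Applying the as-closure half of Lemma~\ref{lem:poly-closed}(2), together with Lemma~\ref{lem:poly-closed}(3) to handle the quotients $\fac{\ov\mu_{/U}}$ appearing inside the block-minimality atoms, yields that $\rel^\ddg$ is weakly as-closed in $\rel\cap\ov B'$, which is the required conclusion.

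The principal obstacle is the uniform bookkeeping: one must check that the u-maximal witnesses furnished by Lemmas~\ref{lem:2-3} and~\ref{lem:Q-compatibility} can be assembled simultaneously for every atom of the (possibly large, but finite) pp-formula $\Phi$. This is precisely the point at which the hypothesis $\Razm(\cP)\cap\Centr(\cP)=\eps$ is used again, through Lemma~\ref{lem:binary-connections} and the separability consequences exploited in the proofs of Lemmas~\ref{lem:2-3} and~\ref{lem:Q-compatibility}, which force the relevant alignments $\pi_w$ to exist and guarantee that the as-maximal witnesses agree on the shared coordinates of different atoms. Once this bookkeeping is verified, the conclusion follows mechanically from Lemma~\ref{lem:poly-closed}.
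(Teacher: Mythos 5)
Your high‑level plan (``use the pp‑definability mechanism from Lemma~\ref{lem:poly-closed} plus extendability from Lemmas~\ref{lem:2-3},\ref{lem:Q-compatibility}'') is in the right spirit, but there are two concrete gaps where it does not go through as written.

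First, the ``as‑closure half'' of Lemma~\ref{lem:poly-closed}(2) requires that the inner relations $\rela_i$ be \emph{as‑closed} in $\relo_i$. What (S6) for $\cP^\dg$ supplies is only that each $\rel^\dg$ is \emph{weakly} as‑closed in $\rel\cap\ov B$ — a strictly weaker property, as weak as‑closure is a coordinatewise condition and does not bound full tuples. You flag (S6) as giving weak as‑closure, but then feed it into Lemma~\ref{lem:poly-closed}(2) without bridging the gap. The paper's proof handles exactly this obstacle by introducing the auxiliary sets $\reli^x=\{a\in\rel^{x\dg}\mid\exists d\in B\ (a,d)\in\rel^{xv\dg}\}$ and the intermediate relation $\rel''=\rel^\dg\cap\prod_{x\in\bs}\reli^x$. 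Using a case split from the Congruence Lemma~\ref{lem:affine-link} (depending on whether $B$ lies in a nontrivial as‑component), it shows each $\reli^x$ is as‑closed in $B'_x$, hence $\rel''$ is \emph{fully} as‑closed in $\rel'=\rel^\dg\cap\ov B'$; only then does Lemma~\ref{lem:poly-closed} give that $\rel^\ddg$ is as‑closed in $\rel''$, and a direct lift argument using (S6) for $\cP^\dg$ transfers this to weak as‑closure in $\rel\cap\ov B'$.

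Second, the assertion that $\rel\cap\ov B'$ is ``the image of exactly the same pp‑formula after replacing each $\rel^{'\dg}$ by $\rel'\cap\ov B$ and $B$ by $B_v$'' is not justified and is in general false: the pp‑formula that produces $\rel^\ddg$ from the $\dagger$‑relations encodes the propagation to a $(2,3)$‑minimal and block‑minimal fixed point, and re‑evaluating \emph{that} formula on the larger ambient relations $\rel'\cap\ov B$ will still perform all the intersections and projections the propagation entails, so the result is generally strictly smaller than the bare intersection $\rel\cap\ov B'$. Because of this, Lemma~\ref{lem:poly-closed}(2) applied as you describe would give as‑closure of $\rel^\ddg$ in some propagated subrelation rather than in $\rel\cap\ov B'$. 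The paper's three‑step decomposition avoids ever claiming that $\rel\cap\ov B'$ has this pp‑definable form; instead the last step is a hands‑on argument: take $a\in\umax(\pr_w\rel^\ddg)$, $b\in\pr_w(\rel\cap\ov B')$ with $a\sqq_{as}b$, lift $a$ to $\ba\in\rel^\ddg$, use (S6) for $\cP^\dg$ to land a witness $\bb\in\rel^\dg$ over $b$ on an as‑path from $\ba$, observe $\bb\in\rel'$ by the blockwise compatibility of $Z$, and conclude $\bb\in\rel^\ddg$ by the as‑closure established in steps 1--2. You should introduce $\reli^x$ and $\rel''$, prove the as‑closure of $\reli^x$ via the Congruence Lemma case split, and replace the pp‑definability claim about $\rel\cap\ov B'$ with this direct step.
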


\begin{proof}
Similar to the sets $\relo^x$ above we introduce 
\[
\reli^x=\{a\in \rel^{x\dg}\mid \text{ there is 
$d\in B$ such that } (a,d)\in\rel^{xv\dg}\}.
\]
Pick $C=\ang{\bs,\rel}\in\cC$. We make use of the following property of 
$\rel^\dg$: for any $w,u\in\bs\cap Z$ and any $\ba\in\rel^\dg$, if
$\ba[w]\in B'_w$ then $\ba[u]\in B'_u$. 
We prove the claim in three steps. First, we will show that for every 
$C=\ang{\bs,\rel}\in\cC$ the relation $\rel''=\rel^\dg\cap\prod_{x\in\bs}\reli^x$ 
is as-closed (not weakly as-closed!) in $\rel'=\rel^\dg\cap\ov B'$. 
Second, we use Lemma~\ref{lem:poly-closed} to conclude that 
$\rel^\ddg$ is as-closed in $\rel''$. Third, we conclude that this implies 
that $\rel^\ddg$ is weakly as-closed in $\rel\cap\ov B'$.

For the first step, note that it suffices to show that $\reli^x$ is as-closed 
in $\pr_x\rel\cap B'_x$ for $x\in\bs$. Depending on the case of the Congruence 
Lemma~\ref{lem:affine-link} that holds for $\rel^{xv\dg}\fac\al$ and 
whether or not $B$ belongs to a nontrivial as-component $E$, either 
$\umax(\reli^x)=\umax(\reli'^x)$, where 
\[
\reli'^x=\{a\in \rel^{x\dg}\mid \text{ there is 
$d\in B_v$ such that } d\fac\al\in E, \text{ and }(a,d)\in\rel^{xv\dg}\},
\]
or $\reli^x=B'_x$. In both cases the claim holds.

The second step is immediate by Lemma~\ref{lem:poly-closed}. For
the third step, if $a\in\umax(\pr_w\rel^\ddg)\sse\umax(\pr_w(\rel\cap\ov B'))$
and $b\in\pr_w(\rel\cap\ov B')$ are such that $a\sqq_{as} b$ for 
$w\in\bs$, then let $\ba\in\rel^\ddg$ with $a=\ba[w]$. Since $\ba\in\rel^\dg$,
by (S6) for $\rel^\dg$, $b\in\pr_w\rel^\dg$, and therefore $b=\bb[w]$
for some $\bb\in\rel^\dg$. As $a\sqq_{as}b$, the tuple $\bb$ can be chosen
such that $\ba\sqq_{as}\bb$ in $\rel^\dg$. Moreover, as we observed above,
$\bb\in\rel'$. This means, by the second step, that 
$\bb\in\rel^\ddg$, confirming the claim.
\end{proof}

\bibliographystyle{plain}

\end{document}